\theoremstyle{plain}
\newtheorem{proposition}{Proposition}
\newtheorem{assumption}{Assumption}
\newtheorem{lemma}{Lemma}
\DeclareMathOperator*{\plim}{\mathrm{plim}}
\newcommand{\rowsep}[1]{\vspace*{#1 em}}
\newcommand{\bs}[1]{\boldsymbol{#1}}
\newcommand{\lleft}{\left}
\newcommand{\rrvert}{\vert}
\newcommand{\rright}{\right}
\newcommand{\rrVert}{\Vert}
\newcommand{\llvert}{\vert}
\newcommand{\llVert}{\Vert}
\begin{document}

\title{Forecasting with panel data: Estimation uncertainty versus parameter
heterogeneity\thanks{%
We thank the three anonymous reviewers and the Co-editor, Stephane Bonhomme, for their helpful and
constructive comments. We also thank Laura Liu, Mahrad Sharifvaghefi, Ron
Smith, Cynthia Yang, Liying Yang, and seminar participants at the University of Pittsburgh, the Board of the Federal Reserve, ESEM, VTSS, IAAE, and at 2024
NBER-NSF Time Series Conference held at University of Pennsylvania for
helpful comments.}}

\author{M. Hashem Pesaran\thanks{%
University of Cambridge, UK, and University of Southern California, USA
Email: mhp1@cam.ac.uk} \and Andreas Pick\thanks{%
Erasmus University Rotterdam, Erasmus School of Economics, Burgemeester
Oudlaan 50, 3000DR Rotterdam, and Tinbergen Institute. Email:
andreas.pick@cantab.net} \and Allan Timmermann\thanks{%
UC San Diego, Rady School of Management, 9500 Gilman Drive, La Jolla CA
92093-0553. Email: atimmermann@ucsd.edu.}}
\maketitle

\begin{abstract}
We provide a comprehensive examination of the predictive accuracy of panel
forecasting methods based on individual, pooling, fixed effects, and empirical
Bayes estimation, and propose optimal weights for forecast combination
schemes. We consider linear panel data models, allowing for weakly exogenous
regressors and correlated heterogeneity. We quantify the gains from exploiting
panel data and demonstrate how forecasting performance depends on the degree
of parameter heterogeneity, whether such heterogeneity is correlated with
the regressors, the goodness-of-fit of the model, and the dimensions of
the data. Monte Carlo simulations and empirical applications to house prices
and CPI inflation show that empirical Bayes and forecast combination methods
perform best overall and rarely produce the least accurate forecasts for
individual series. \enlargethispage{3em} \newline
\emph{JEL codes: C33, C53}\newline
\emph{Keywords: Forecasting, Panel data, Heterogeneity, Pooled estimation,
Empirical Bayes; Forecast combination.}
\end{abstract}

\thispagestyle{empty}

\newpage

\setcounter{page}{1}

\section{Introduction}

Panel data sets on economic and financial variables are widely available
at individual, firm, industry, regional, and country granularities and
have been extensively used for estimation and inference. Yet, panel estimation
methods have had a comparatively lower impact on common practices in economic
forecasting, which remain dominated by unit-specific forecasting models
or low-dimensional multivariate models such as vector autoregressions \citep{Hsi2022}. 
The relative shortage of panel applications in the economic forecasting
literature is, in part, a result of the absence of a deeper understanding
of the determinants of forecasting performance for different panel estimation
methods and the absence of guidelines on which methods work well in different
settings.

In this paper, we examine existing approaches and develop novel forecast
combination methods for panel data with possibly correlated heterogeneous
parameters. We conduct a systematic comparison of their predictive accuracy
in settings with different cross-sectional ($N$) and time ($T$) dimensions
and varying degrees of parameter heterogeneity, whether correlated or not.
Our analysis provides a deeper understanding of the determinants of the
performance of these methods across a variety of settings chosen for their
relevance to economic forecasting problems. This includes the important
choice of whether to use pooled versus individual estimates, or perhaps
a combination of the two approaches, with a focus on forecasting rather
than parameter estimation and inference.

We begin by exploring analytically the bias-variance trade-off between
individual, fixed effects (FE), and pooled estimation for forecasting.
Our analysis is conducted in a general setting that allows for weakly exogenous
regressors and correlated heterogeneity, consistent with the type of dynamic
panel models commonly used in empirical applications. We show how such
effects contribute to the mean squared forecast error (MSFE) of forecasts
based on individual, FE, and pooled estimates.

We next examine forecast combination methods. Estimation errors are well
known to lead to imprecisely estimated combination weights for data with
a small time-series dimension. Our main combination scheme assumes homogeneous
weights across individual variables. This allows us to use cross-sectional
information to reduce the effect of estimation error on the combination
weights compared to the conventional combination scheme that lets the weights
be individual-specific, which we also consider. To handle cases where the
pooling estimator imposes too much homogeneity, we also consider combinations
based on forecasts from the individual-specific and fixed effect estimators.

Our theoretical analysis of the individual and pooled estimation schemes
focuses on the case with finite $T$ and $N\rightarrow \infty $ and does
not require that $\sqrt{N}/T\rightarrow 0$ as $N$ and
$T\rightarrow \infty $, jointly, which is often assumed in the literature.
The estimation of the combination weights, however, requires $T$
$\rightarrow \infty $, but at a much slower rate compared to $N$.

Finally, we consider forecasts based on the empirical Bayesian (EB) approach
of \cite{Hsietal1999}. These are related to forecast combination and we
show for the empirical Bayes estimator that it can be thought of as a weighted
average of an estimator that allows for full heterogeneity and a pooled
mean group estimator. The empirical Bayes scheme assigns greater weight
to the pooled estimator, the lower the estimated degree of parameter heterogeneity
and so adapts to the degree of parameter heterogeneity characterizing a
given data set.\footnote{In the Supplemental Appendix, we
also report results based on the hierarchical Bayesian approach of \citet{LinSmi1972},
\cite{LeeGri1979}, and \cite{Madetal1997}.}

We evaluate the predictive accuracy of these alternative panel forecasting
methods through Monte Carlo simulations. The simulations explore the importance
to forecasting performance of the degree of parameter heterogeneity, how
correlated it is with the regressors, whether it affects intercepts or slopes, the value of the regressors in the forecast period, and dimensions of $N$ and $T$. In
the scenario with homogeneous parameters, forecasts based on pooled estimates
are most accurate. Forecasts based on fixed or random effect estimates
perform well, relative to other methods, when parameter heterogeneity is
confined to the intercepts and does not affect slopes. Outside these cases,
empirical Bayes and forecast combinations produce the most accurate forecasts
and are better able to handle parameter heterogeneity, whether correlated
or not, while being more robust in cases with a small $T$ than the individual-specific
approach.

Next, we consider two empirical applications selected to represent varying
degrees of heterogeneity and predictive power of the underlying forecasting
models. Our first application considers predictability of house prices
across 362 US metropolitan statistical areas (MSAs). In this application,
individual-specific forecasts perform quite poorly, producing the highest
MSFE values among all methods for more than 50\% of the MSAs. Forecasts
based on pooled estimates perform notably better and, across all forecasts,
reduce the average MSFE value by 8\% relative to the forecasts based on
individual estimates, though this gets reversed if the regressor set is
close to the sample average. Empirical Bayes and forecast combinations
work even better in this application, beating forecasts based on individual
estimates for over 90\% of MSAs while almost never generating the least
accurate forecasts for individual series.

Our second application considers forecasts for a panel containing 187 subcategories
of CPI inflation. In this application, forecasts based on individual estimates
generate the highest MSFE values for 44\% of the series. Forecasts based
on pooled estimates produce the highest MSFE values for 40\% of the individual
series but, conversely, generate the lowest MSFE-values for 20\% of the
series. Combination forecasts produce lower MSFE values for 73--97\% of
the individual CPI series than the individual forecasts while almost never
generating the largest MSFE value. Even better inflation forecasts are
produced by the empirical Bayes method, which is more accurate (in the
MSFE sense) than the individual forecasts for 98\% of the series and generates
the lowest MSFE values for 39\% of the individual variables while never
generating the largest MSFE value.

Overall, forecasts that use only the information on a given unit tend to
have loss distributions with wide dispersion across units. Their associated
forecasts are therefore sometimes the best but far more often the worst,
and their distribution of MSFE performance is often shifted to the right,
implying larger losses on average than for other methods. Forecasts based
on pooling, random effects, or fixed effects estimation tend to perform
better, on average, than the individual forecasts whose accuracy they beat
for the majority of series. However, relative to the individual-specific
forecasts, these approaches also tend to have a right-skewed MSFE distribution,
suggesting a high risk of poor forecasting performance for individual series
whose model parameters are very different from the average. Combinations
and empirical Bayes forecasts have narrower MSFE distributions across units,
often shifted to the left as they are centered around a smaller average
loss, and rarely produce the largest squared forecast error among all methods
we consider.

\paragraph*{Related literature} The review articles by \citeauthor{Bal2008} (\citeyear{Bal2008,Bal2013})
consider the forecasting performance of the best linear unbiased predictor
(BLUP) of \citet{Gol1962} in models with either fixed effects or random
effects. The BLUP estimator gives rise to a generalized least squares (GLS)
predictor, which Baltagi compares to models that allow for autoregressive
moving average (ARMA) dynamics in innovations as well as models with spatial
dependencies in the errors. \citet{TraUrg2009} use Monte Carlo simulations
to assess the forecasting performance of pooled, individual, and shrinkage
estimators and find that parameter heterogeneity is a key determinant of
the accuracy of different forecasts. \citet{BruSil2006}
consider a similar group of methods to forecast migration data and find
that fixed effects and shrinkage estimators perform best; see \citet{PicTim2024} for a review of the literature.

\cite{Wanetal2019} also propose forecast combination methods. However,
their analysis does not allow for correlation of regressors and parameters
or dynamics in the model. Additionally, their combination weights are determined
from in-sample test statistics rather than the expected out-of-sample performance
that we propose. In this sense, our approach is closer to the forecast-based
test for a structural break of \cite{Pesetal2013} and \citet{BooPic2020},
where the target is also significant improvements in forecast accuracy
rather than a significant change in parameters.

\citet{Liuetal2020} study forecasting for dynamic panel data
models with a short time-series dimension. Though $T$ exceeds the number
of parameters that have to be estimated for each series, such estimates
are typically very noisy and not consistent under large $N$, fixed
$T$ asymptotics. To handle estimation noise, they adopt a nonparametric
Bayesian approach that shrinks the heterogeneous parameters towards local
patterns in the distribution. This is closely related to the idea of using
forecast combinations to reduce the effect on the forecasts of noisy estimates
of individual-specific parameters.

\paragraph*{Outline} The rest of the paper is organized as follows. Section~\ref{SETUP} introduces the model setup and our assumptions, while Section~\ref{sec:theory_msfe} derives
analytical results on the predictive accuracy of individual, pooled, and
FE forecasting schemes. Section~\ref{sec:combinations} introduces our forecast combination schemes.
Section~\ref{sec:MC} describes the empirical Bayes estimator. Section~\ref{sec:applications} presents Monte
Carlo experiments, Section~7 reports our empirical applications,
and Section~8 concludes. Technical details are provided in appendices at the end of
the paper and in the Supplemental Appendix.


\section{Setup and assumptions\label{SETUP}}

We begin by describing the panel regression setup and assumptions used
in our analysis.

\subsection{Panel regression model\label{PanelReg}}

Our analysis considers the following linear panel regression model:
\begin{equation}
y_{it}=\alpha _{i}+\boldsymbol{\beta
}_{i}^{\prime }\boldsymbol{x}%
_{it}+
\varepsilon _{it}=\boldsymbol{\theta }_{i}^{\prime }
\boldsymbol{w}%
_{it}+\varepsilon _{it},\quad
\varepsilon _{it}\sim \bigl(0,\sigma _{i}^{2}
\bigr), \label{eq:model1}
\end{equation}
where $i=1,2,\ldots ,N$ refers to the individual units and
$t=1,2,\ldots ,T$ refers to the time period, $y_{it}$ is the outcome of
unit $i$ at time $t$, $\boldsymbol{x}_{it}$ is a $k\times 1$ vector of 
regressors---or predictors---used
to forecast $y_{it}$ (including, possibly, latent factors),
$\boldsymbol{\beta }_{i}$ is the associated vector of regression coefficients,
and $\varepsilon _{it}$ is the disturbance of unit $i$ in period $t$. The
second equality in (\ref{eq:model1}) introduces the notation
$\boldsymbol{\theta }_{i}=(\alpha _{i},\boldsymbol{\beta }%
_{i}^{\prime })^{\prime }$ and
$\boldsymbol{w}_{it}=(1,\boldsymbol{x}%
_{it}^{\prime })^{\prime }$, which have dimensions $K\times 1$, with
$K=k+1$. For simplicity, we use the time subscript $t$ for
$\boldsymbol{x}_{it}$ and $%
\boldsymbol{w}_{it}$, but it is important to emphasize that this refers
to the predicted time for the outcome variable, $y_{it}$. For a forecast
horizon of $h$ periods, all variables in $\boldsymbol{x}_{it}$ must therefore
be known at time $t-h$. Our notation avoids explicitly referring to
$h$ everywhere, but it should be recalled throughout the analysis that
$\boldsymbol{x}_{it}$ includes suitably lagged predictors. We will focus
on the case of $h=1$ but extensions to larger $h$ are straightforward.


\paragraph*{Notation} Stacking the time series of outcomes, regressors, and
disturbances, define
$\boldsymbol{y}_{i}=(y_{i1},y_{i2},\ldots ,y_{iT})^{\prime }$,
$\boldsymbol{X}_{i}=(\boldsymbol{x}_{i1},%
\boldsymbol{x}_{i2},\ldots ,\boldsymbol{x}_{iT})^{
\prime }$,
$\boldsymbol{W}_{i}=  ( \boldsymbol{\tau }_{T},\boldsymbol{X}%
_{i}  ) $, where $\boldsymbol{\tau }_{T}$ is a $T\times 1$ vector
of ones, and
$\boldsymbol{\varepsilon }_{i}=(\varepsilon _{i1},\varepsilon _{i2},
\ldots ,\varepsilon _{iT})^{\prime }$. Further, let
$\boldsymbol{y}=(%
\boldsymbol{y}_{1}^{\prime },\boldsymbol{y}_{2}^{\prime },\ldots ,%
\boldsymbol{y}_{N}^{\prime })^{\prime }$,
$\boldsymbol{X}=(\boldsymbol{X}%
_{1}^{\prime },\boldsymbol{X}_{2}^{\prime },\ldots ,\boldsymbol{X}%
_{N}^{\prime })^{\prime }$,
$\boldsymbol{W}=(\boldsymbol{W}%
_{1}^{\prime },\boldsymbol{W}_{2}^{\prime },\ldots ,\boldsymbol{W}%
_{N}^{\prime })^{\prime }$, and
$\boldsymbol{\varepsilon }=(\boldsymbol{%
\varepsilon }_{1}^{\prime },\boldsymbol{\varepsilon }_{2}^{\prime },
\ldots ,\boldsymbol{\varepsilon }_{N}^{\prime })^{\prime }$. Generic positive finite
constants are denoted by $C$ when large and $c$ when small. They can take
different values at different instances.
$\lambda _{\max }  ( \boldsymbol{%
A}  ) $ and $\lambda _{\min }  ( \boldsymbol{A}  ) $ denote
the maximum and minimum eigenvalues of matrix $\boldsymbol{A}$.
$\boldsymbol{A}%
\succ \boldsymbol{0}$ and $\boldsymbol{A}\succeq \boldsymbol{0}$ denote
that $\boldsymbol{A}$ is a positive definite and a nonnegative definite
matrix, respectively.
$ \llVert  \boldsymbol{A} \rrVert  =\lambda _{\max }^{1/2}(%
\boldsymbol{A}^{\prime }\boldsymbol{A)}$ and
$ \llVert  \boldsymbol{A}%
 \rrVert  _{1}$ denote the spectral and column sum norms of matrix
$%
\boldsymbol{A}$, respectively,
$ \llVert  \boldsymbol{x} \rrVert  _{p}=%
  [ \mathrm{E}  (  \llVert  \boldsymbol{x} \rrVert  ^{p}
  )   ] ^{1/p}$. If
$  \{ f_{n}  \} _{n=1}^{\infty }$ is any real sequence and
$  \{ g_{n}  \} _{n=1}^{\infty }$ is a sequence of positive real
numbers, then $f_{n}=O(g_{n})$, if there exists a $C$ such that
$ \llvert  f_{n} \rrvert  /g_{n}\leq C$ for all $n$ and
$f_{n}=o(g_{n})$ if $f_{n}/g_{n}\rightarrow 0$ as
$n\rightarrow \infty $. Similarly, $f_{n}=O_{p}(g_{n})$ if
$f_{n}/g_{n}$ is stochastically bounded and $f_{n}=o_{p}(g_{n})$ if
$f_{n}/g_{n}\overset{p}{\rightarrow}0$. The operator
$\overset{p}{\rightarrow}$ denotes convergence in probability, and
$\overset{d}{\rightarrow}$ denotes convergence in distribution.

\subsection{Assumptions}

Our theoretical analysis builds on a set of standard assumptions about
the underlying data generating process.

\begin{assumption}
\label{ass:1}
$\varepsilon _{it}$ is serially independent with mean zero, a fixed variance
$\sigma _{i}^{2}$ $(0<c<\sigma _{i}^{2}<C<\infty)$, and with
$\sup_{i,t}\mathrm{E} \llvert  \varepsilon _{it} \rrvert  ^{4}<C<
\infty $.
\end{assumption}

\begin{assumption}
\label{ass:weak_exogeneity_a}
$  \{ \varepsilon _{it}  \} $ for $i$ $%
=1,2,\ldots ,N$ are martingale difference processes with respect to the
filtration,
$\mathcal{I}_{it}=  ( \boldsymbol{w}_{it},\boldsymbol{w}%
_{i,t-1},\ldots   ) $, so that
\begin{equation*}
\mathrm{E}\lleft( \varepsilon _{it}\vert \boldsymbol{w}_{is}
\rright. ) =0,\quad \text{for }t\geq s,\text{ for }t=1,2,\dots ,T,T+1.
\end{equation*}
\end{assumption}

\begin{assumption}
\label{ass:3}
(a) $  \{ \boldsymbol{w}_{it}  \} $ for $i=1,2,\ldots ,N $ are
covariance stationary with
$\mathrm{E}(\boldsymbol{w}_{it}%
\boldsymbol{w}_{it}^{\prime })=\boldsymbol{Q}_{i}$,
$\sup_{i,t=  \{ 1,2,\ldots ,T  \} }\mathrm{E} \llVert
\boldsymbol{w}_{it} \rrVert  ^{4}<C$,
$\sup_{i,T} \llVert  \boldsymbol{w}_{i,T+1} \rrVert  <C$, and
%
\begin{equation}
\sup_{i}\mathrm{\lambda }_{\max } (
\boldsymbol{Q}_{i } ) <C<\infty,\quad \text{and}\quad \sup
_{i}\mathrm{\lambda }_{\max } \bigl(
\boldsymbol{Q}%
_{i }^{-1} \bigr) <C<\infty
. \label{CQi}
\end{equation}%
(b) The sample covariance matrices
$\boldsymbol{Q}_{iT }=T^{-1}\boldsymbol{W%
}_{i}^{\prime }\boldsymbol{W}_{i}=T^{-1}\sum_{t=1}^{T}\boldsymbol{w}_{it}%
\boldsymbol{w}_{it}^{\prime }$, for $i=1,2,\ldots ,N$, satisfy the conditions
$\sup_{i}\mathrm{\lambda }_{\max }  ( \boldsymbol{Q}_{iT
}  ) <C<\infty $, and
$\sup_{i}\mathrm{\lambda }_{\max }  ( \boldsymbol{Q}_{iT }^{-1}
  ) <C<\infty $.
\end{assumption}

\begin{assumption}
\label{ass:weak_exogeneity_b}
There exists a fixed $T_{0}$ such that for all $T>T_{0}$,%
%
\begin{align}
\sup_{i}\mathrm{E} \bigl\llVert T^{-1/2}
\boldsymbol{W}_{i}^{\prime } \boldsymbol{%
\varepsilon
}_{i} \bigr\rrVert ^{4}&<C<\infty , \label{CWe}%
\\
\sup_{i}\mathrm{E} \bigl[ \lambda _{\max }^{4}
( \boldsymbol{Q}_{iT
} ) \bigr] &<C<\infty, \quad \text{and}\quad \sup
_{i}\mathrm{E} \bigl[ \lambda _{\max }^{4}
\bigl( \boldsymbol{Q}_{iT }^{-1} \bigr) \bigr] <C<\infty .
\label{CQiT}%
\end{align}
\end{assumption}

Under Assumption~\ref{ass:1}, the optimal forecast of $y_{i,T+1}$, in a
mean squared error sense, is given by
$\mathrm{E}  ( y_{i,T+1} \llvert  \boldsymbol{w}_{i,T+1},
\boldsymbol{W}_{i}    ) =\boldsymbol{\theta }_{i}^{
\prime }\boldsymbol{w}_{i.T+1}$. Note that $\boldsymbol{w}_{i.T+1}$ is
known at time $T$, and is bounded under Assumption~\ref{ass:3}. Assumption~\ref{ass:weak_exogeneity_a} allows the regressors to be weakly exogenous
with respect to $\boldsymbol{\varepsilon }_{i}$ and, therefore, permits
the inclusion of lagged dependent variables such as $y_{i,T}$ in
$%
\boldsymbol{w}_{i,T+1}$. Part (a) of Assumption~\ref{ass:3} is standard
in the forecasting literature and requires the regressors to be stationary.
Part (b) is an identification assumption that allows estimation of individual
slope coefficients, $\boldsymbol{\theta }_{i}$, by least squares. Assumption~\ref{ass:weak_exogeneity_b} is required when we compare average MSFEs based
on individual and pooled estimators. It provides sufficient conditions
under which (see Lemma~\ref{Lemma_2_VTEX1})
%
\begin{equation}
\mathrm{E} \bigl\llVert \sqrt{T} ( \hat{\boldsymbol{\theta }_{i}}-%
\boldsymbol{\theta }_{i} ) \bigr\rrVert ^{2}=\mathrm{E}
\bigl\llVert \boldsymbol{Q}_{iT }^{-1} \bigl(
T^{-1/2}\boldsymbol{W}_{i}^{
\prime }%
\boldsymbol{\varepsilon }_{i} \bigr) \bigr\rrVert ^{2}<C<
\infty , \label{Momentconthetai}
\end{equation}%
where
$\hat{\boldsymbol{\theta }_{i}}=  ( \boldsymbol{W}_{i}^{\prime }%
\boldsymbol{W}_{i}  ) ^{-1}\boldsymbol{W}_{i}^{\prime }
\boldsymbol{y}%
_{i} $ is the least squares estimator of $\boldsymbol{\theta }_{i}$. The
moment conditions in Assumption~\ref{ass:weak_exogeneity_b} can be relaxed
when $\boldsymbol{w}_{it}$ is strictly exogenous.

Under weakly exogenous regressors the least squares estimator has a small
$T$ bias, and
$\mathrm{E}  ( \hat{\boldsymbol{\theta }}_{i}-
\boldsymbol{%
\theta }_{i}  ) =O  ( T^{-1}  ) $. Under strictly exogenous
regressors, in contrast,
$\mathrm{E}  ( \hat{\boldsymbol{\theta }_{i}}-%
\boldsymbol{\theta }_{i}  ) =\mathbf{0}$. We also note that, under
Assumptions \ref{ass:3} and \ref{ass:weak_exogeneity_b},
$ \llVert  \boldsymbol{Q}_{iT }-\boldsymbol{Q}_{i} \rrVert  =O_{p}(T^{-1/2})$,
and
$%
 \llVert  \boldsymbol{Q}_{iT }^{-1}-\boldsymbol{Q}_{i}^{-1} \rrVert  =O_{p}(T^{-1/2})$. These results, which hold for each $i$, are used
in the implementation of our combination forecasts below. For proof of
consistency of the weights in the combined forecasts discussed in Section~\ref{sec:combinations}
below, we need the stronger conditions
%
\begin{equation}
\sup_{i} \llVert \boldsymbol{Q}_{iT }-
\boldsymbol{Q}_{i} \rrVert =O_{p} \biggl( \frac{\ln (N)}{
\sqrt{T}} \biggr) ,\quad  \text{and}\quad  \sup_{i} \bigl\llVert
\boldsymbol{Q}_{iT }^{-1}-\boldsymbol{Q}_{i}^{-1}
\bigr\rrVert =O_{p} \biggl( \frac{\ln (N)}{\sqrt{T}} \biggr) ,
\label{supQinv}
\end{equation}%
still allowing $N$ to rise much faster than $T$.\footnote{As noted by
 \citeauthor{Fanetal2015} (\citeyear{Fanetal2015}, Section~\ref{ForecastsIP}), this stronger condition is
typically satisfied for strictly stationary data that satisfy strong mixing
conditions.}

Finally, let
$\boldsymbol{g}_{it}=\boldsymbol{w}_{it}\varepsilon _{it}$, and note that
$T^{-1/2}\boldsymbol{W}_{i}^{\prime }\boldsymbol{\varepsilon }%
_{i}=T^{-1/2}\sum_{t=1}^{T}\boldsymbol{g}_{it}$. Also, under Assumption~\ref{ass:weak_exogeneity_a}
$\boldsymbol{g}_{it}$ is a martingale difference process with respect to
$\mathcal{I}_{it}=  ( \boldsymbol{w}_{it},%
\boldsymbol{w}_{i,t-1},\ldots   ) $, and we have
$\mathrm{E}  ( \boldsymbol{g}_{it}  ) =\mathbf{0}$,
\begin{equation*}
\operatorname{Var} \Biggl( T^{-1/2}\sum_{t=1}^{T}
\boldsymbol{g}_{it} \Biggr) =T^{-1} \sum
_{t=1}^{T}\mathrm{E} \bigl( \boldsymbol{g}_{it}
\boldsymbol{g}%
_{it}^{\prime } \bigr)
=T^{-1}\mathrm{E} \bigl( \boldsymbol{W}_{i}^{
\prime }%
\boldsymbol{\varepsilon }_{i}\boldsymbol{\varepsilon
}_{i}^{\prime }%
\boldsymbol{W}_{i}
\bigr) =T^{-1}\sum_{t=1}^{T}
\sigma _{i}^{2} \mathrm{E}%
 \bigl(
\boldsymbol{w}_{it}\boldsymbol{w}_{it}^{\prime }
\bigr) .
\end{equation*}%
Further, under Assumption~\ref{ass:3},
$\mathrm{E}  ( \boldsymbol{w}_{it}%
\boldsymbol{w}_{it}^{\prime }  ) =\boldsymbol{Q}_{i}$, and it follows
that
%
\begin{equation}
\mathrm{E} \bigl( T^{-1}\boldsymbol{W}_{i}^{\prime }
\boldsymbol{\varepsilon }%
_{i}\boldsymbol{\varepsilon
}_{i}^{\prime }\boldsymbol{W}_{i} \bigr) = \sigma
_{i}^{2}\boldsymbol{Q}_{i}.
\label{Gi}
\end{equation}

We next introduce assumptions that are required primarily for establishing
the properties of pooled and fixed effects predictors.

\begin{assumption}
\label{ass:4}
(a)
$\boldsymbol{\theta}_{i}=\boldsymbol{\theta}+\boldsymbol{%
\eta }_{i}$ with $ \llVert  \boldsymbol{\theta } \rrVert  <C$,
$\mathrm{E
} \llVert  \boldsymbol{\eta }_{i} \rrVert  <C$,
$\mathrm{E}  ( \boldsymbol{\eta }_{i}  ) =0$,
$\mathrm{E}  ( \boldsymbol{\eta }_{i}%
\boldsymbol{\eta }_{i}^{\prime }  ) =$
$\boldsymbol{\Omega }_{\eta }$, and
$ \llVert  \boldsymbol{\Omega }_{\eta } \rrVert  <C$. (b) Let
$%
\boldsymbol{q}_{it}=\boldsymbol{w}_{it}\boldsymbol{w}_{it}^{\prime }%
\boldsymbol{\eta }_{i}$, then
$\mathrm{E}  ( \boldsymbol{q}_{it}  ) =%
\boldsymbol{q}_{i}$ (fixed),
$\sup_{i} \llVert  \boldsymbol{q}%
_{i} \rrVert  <C$,
$\sup_{i,t}\mathrm{E} \llVert  \boldsymbol{q}%
_{it} \rrVert  ^{2}<C$, and
$\sup_{i}\mathrm{E } \llVert  \boldsymbol{w}%
_{i,T+1}^{\prime }\boldsymbol{\eta }_{i} \rrVert  ^{2}<C$.
\end{assumption}

\begin{assumption}
\label{ass:5}
$\boldsymbol{\eta }_{i}$ is distributed independently of
$%
\boldsymbol{\varepsilon }_{i}$, for all $i$.
\end{assumption}

\begin{assumption}
\label{ass:2.2}
$ \bar{\boldsymbol{\xi }}_{NT}=N^{-1}\sum_{i=1}^{N}%
\boldsymbol{\xi }_{iT}=O_{p}  ( N^{-1/2}T^{-1/2}  ) $, where
$%
\boldsymbol{\xi }_{iT }=T^{-1}\boldsymbol{W}_{i}^{\prime }
\boldsymbol{%
\varepsilon }_{i}=T^{-1}\sum_{t=1}^{T}\boldsymbol{w}_{it}
\varepsilon _{it}=O_{p}(T^{-1/2})$.
\end{assumption}

\begin{assumption}
\label{ass:3b}%
There exists a fixed $T_{0}$ such that for all $T>T_{0}$ and
$%
N=1,2,\ldots $, the pooled covariance matrices
$\boldsymbol{\bar{Q}}_{NT}$ and $\boldsymbol{\bar{Q}}_{N}$, defined in
terms of
$\boldsymbol{Q}_{iT
}=T^{-1}\boldsymbol{W}_{i}^{\prime }\boldsymbol{W}_{i}$ and
$\boldsymbol{Q}%
_{i}=\mathrm{E}  ( \boldsymbol{Q}_{iT }  ) $,
%
\begin{equation}
\boldsymbol{\bar{Q}}_{NT}=N^{-1}\sum
_{i=1}^{N}\boldsymbol{Q}_{iT}
, \quad \text{and}\quad \boldsymbol{\bar{Q}}_{N}=\mathrm{E} ( \boldsymbol{
\bar{Q}}_{NT} ) =N^{-1}\sum_{i=1}^{N}
\boldsymbol{Q}_{i}, \label{Qbar}
\end{equation}%
are positive definite,
$ \llVert  \boldsymbol{\bar{Q}}_{N}^{-1} \rrVert  <C$, and
\begin{equation*}
\sup_{N,T}\mathrm{E} \bigl[ \lambda _{\max }^{2}
( \boldsymbol{\bar{Q}}%
_{NT} ) \bigr] <C<\infty
,\quad  \text{and}\quad \sup_{N,T}\mathrm{E} \bigl[ \lambda
_{\max }^{2} \bigl( \boldsymbol{\bar{Q}}_{NT}^{-1}
\bigr) \bigr] <C<\infty .
\end{equation*}
\end{assumption}

\begin{assumption}
\label{ass:6}
$(\boldsymbol{\varepsilon }_{i},\boldsymbol{W}_{i},
\boldsymbol{%
\eta }_{i})$ are distributed independently over $i$.
\end{assumption}

For pooled estimation of $\boldsymbol{\theta }$, the conditions on
$%
\boldsymbol{Q}_{iT}$ can be relaxed and it is sufficient that
$\bar{%
\boldsymbol{Q}}_{NT}$ is positive definite, and
$\sup_{N,T}\mathrm{E}%
 \llVert  \boldsymbol{Q}_{NT}^{-1} \rrVert  ^{2}<C$. Assumptions
\ref{ass:4} and \ref{ass:5} identify the population mean of
$\boldsymbol{\theta }%
_{i}$ denoted by $\boldsymbol{\theta }$, but allow for correlated heterogeneity.\footnote{We simplify the notation and use $\boldsymbol{\theta }$, rather than
$%
\boldsymbol{\theta }_{0}$, to denote the population mean, which is technically
more appropriate.} The degree of parameter heterogeneity is measured by
the norm of $\boldsymbol{\Omega }_{\eta }$, and the extent to which heterogeneity
is correlated is measured by the norm of $\boldsymbol{q}%
_{i}$.\footnote{Under Assumption~\ref{ass:weak_exogeneity_a},
$\mathrm{E}  ( \boldsymbol{%
\xi }_{iT}  ) =T^{-1}\sum_{t=1}^{T}\mathrm{E}  (
\boldsymbol{w}%
_{it}\varepsilon _{it}  ) =\boldsymbol{0}$, and
$\mathrm{E}  ( \boldsymbol{\xi }_{NT}  ) =\boldsymbol{0}$. Note
that $\varepsilon _{it}$ and $\boldsymbol{w}_{it}$ are uncorrelated but
not independently distributed. Under Assumption~\ref{ass:3},
$ \llVert  \boldsymbol{\bar{Q}}%
_{NT} \rrVert  \leq \sup_{i} \llVert  \boldsymbol{Q}_{iT}
 \rrVert  <C$, and
$ \llVert  \boldsymbol{\bar{Q}}_{N} \rrVert  \leq \sup_{i}
 \llVert  \boldsymbol{Q}_{i} \rrVert  <C$.}

Assumptions \ref{ass:4}--\ref{ass:3b} are not required for forecasts based
on the individual estimates and the associated MSFE. Assumption~\ref{ass:6} of cross-sectional independence for $\varepsilon _{it}$ (or
$\boldsymbol{w}_{it}$) is not needed to establish results on the MSFE of individual
forecasts. However, we do require some degree of uncorrelatedness over
$i$ when the objective is to compute the MSFE averaged across all
$N$ units under consideration or over a subgroup of the units. In particular,
to ensure that the cross-sectional average MSFE tends to a nonrandom limit,
the units under consideration must satisfy the law of large numbers. To
this end, we need the units to be cross-sectionally weakly correlated,
possibly conditional on known (or estimated) common factors. The situation
is different when we consider pooled or Bayesian forecasts. Optimality
of these forecasts \textit{does} depend on the assumption of cross-sectional
independence, or at least some form of weak cross-sectional dependence.
A comprehensive analysis of the implications of cross-sectional dependence
for forecast combinations and comparisons of predictive accuracy are beyond
the scope of the present paper, however.\footnote{Cross-sectional dependence in forecast errors can be exploited by using
interactive time effects (latent factors) or spatial (network) effects;
see, for example, \cite{Chuetal2016}.}

We measure the degree of correlated heterogeneity for unit $i$ at time
$t$ by
$\boldsymbol{q}_{i}=\mathrm{E}  ( \boldsymbol{w}_{it}
\boldsymbol{w}%
_{it}^{\prime }\boldsymbol{\eta }_{i}  ) $ and, on average, by%
%
\begin{equation}
\boldsymbol{\bar{q}}_{NT}=N^{-1}T^{-1}\sum
_{i=1}^{N}\boldsymbol{W}%
_{i}^{\prime }
\boldsymbol{W}_{i}\boldsymbol{\eta }_{i}=N^{-1}T^{-1}%
\sum_{i=1}^{N}\sum
_{t=1}^{T}\boldsymbol{w}_{it}
\boldsymbol{w}_{it}^{
\prime }%
\boldsymbol{\eta
}_{i}. \label{qbarNT}
\end{equation}%
Taking expectations,%
%
\begin{equation}
\mathrm{E} ( \boldsymbol{\bar{q}}_{NT} ) = \boldsymbol{
\bar{q}}%
_{N}=N^{-1}\sum
_{i=1}^{N}\boldsymbol{q}_{i}.
\label{qbar}
\end{equation}%
Assumptions \ref{ass:4} and \ref{ass:5} accommodate correlated heterogeneity
and allow for nonzero values of
$\mathrm{E}  ( \boldsymbol{W}%
_{i}^{\prime }\boldsymbol{W}_{i}\boldsymbol{\eta }_{i}  ) $. In the
context of fixed effects models, the intercepts $\alpha _{i}$ in (\ref{eq:model1})
are allowed to have nonzero correlation with the regressors, but optimality
of forecasts based on pooled estimates of $\boldsymbol{\beta } $ requires
Assumption~\ref{ass:5} and the condition
$\lim_{n\rightarrow \infty }n^{-1}\*\sum_{i=1}^{n}\mathrm{E}  (
\boldsymbol{X}_{i}^{\prime }%
\boldsymbol{M}_{T}\boldsymbol{X}_{i}\boldsymbol{\eta }_{i\beta }
  ) =%
\boldsymbol{0}$, where
$\boldsymbol{\eta }_{i\beta }=\boldsymbol{\beta }_{i}-%
\boldsymbol{\beta }$,
$\boldsymbol{M}_{T}=\boldsymbol{I}_{T}-\boldsymbol{%
\tau }_{T}  ( \boldsymbol{\tau }_{T}^{\prime }
\boldsymbol{\tau }%
_{T}  ) ^{-1}\boldsymbol{\tau }_{T}^{\prime }$,
$\boldsymbol{\tau }_{T}$ is a $T\times 1$ vector of ones, and
$\boldsymbol{I}_{T}$ is a $T\times T$ identity matrix.\footnote{See \citet{Pesetal2024b}. Note that
$\mathrm{E}  ( \boldsymbol{X}%
_{i}^{\prime }\boldsymbol{M}_{T}\boldsymbol{X}_{i}\boldsymbol{\eta }_{i
\beta }  ) =\boldsymbol{0}$ is sufficient but not necessary for the
validity of fixed effects estimation. This condition is not met if
$\boldsymbol{x}%
_{it}$ includes lagged values of $y_{it}$, even if
$T\rightarrow \infty $.}

\section{Theoretical results on forecasting performance}

\label{sec:theory_msfe}

We next use the setup and assumptions from Section~\ref{SETUP} to establish
theoretical results on the forecasting performance of different modeling
approaches. Section~\ref{ForecastsIP} discusses forecasts based on individual
and pooled estimation, and building on this, Section~\ref{sec:FE_theory}
covers fixed effects forecasts.

Note that our theoretical framework can be equally applied to forecasts
across groups instead of individuals, when there are \textit{a priori} known groups
such as industries or states within a given country. Pooled regressions
can be applied to any given, \textit{a priori} known group, so long as the number
of units within the group is sufficiently large and the cross-sectional
dependence of units within the group is sufficiently weak. Failure of the
latter assumption implies that there are missing pervasive (strong) common
factors that must also be taken into account but such an extension lies beyond the scope
of the present paper.

\subsection{Forecasts based on individual and pooled estimation\label%
{ForecastsIP}}

We are interested in forecasting $y_{i,T+1}$ conditional on the information
known at time $T$, which we denote by $\boldsymbol{w}_{i,T+1}$ to clarify
the correspondence to $y_{i,T+1}$. Without loss of generality, given the
conditional nature of the forecasting exercise, we assume that $%
\sup_{i,T}\left\Vert \boldsymbol{w}_{i,T+1}\right\Vert <C$.\footnote{%
See part (a) of Assumption \ref{ass:3}.} Forecasts based on individual
estimators take the form 
\begin{equation}
\hat{y}_{i,T+1}=\hat{\boldsymbol{\theta }}_{i}^{\prime }\boldsymbol{w}%
_{i,T+1},\text{ }i=1,2,\ldots ,N,  \label{eq:indFore}
\end{equation}%
where $\hat{\boldsymbol{\theta }}_{i}=(\boldsymbol{W}_{i}^{\prime }%
\boldsymbol{W}_{i})^{-1}\boldsymbol{W}_{i}^{\prime }\boldsymbol{y}_{i},$ is
the least squares estimator of $\boldsymbol{\theta }_{i}$. Similarly,
forecasts based on the pooled estimator are given by%
\begin{equation}
\tilde{y}_{i,T+1}=\tilde{\boldsymbol{\theta }}^{\prime }\boldsymbol{w}%
_{i,T+1},\text{ }i=1,2,\ldots ,N,  \label{eq:poolFore}
\end{equation}%
where $\tilde{\boldsymbol{\theta }}=(\boldsymbol{W}^{\prime }\boldsymbol{W}%
)^{-1}\boldsymbol{W}^{\prime }\boldsymbol{y}$. Using (\ref{Qbar}), (\ref%
{qbarNT}) and the definition of $\boldsymbol{\bar{\xi}}_{NT}$ in Assumption %
\ref{ass:2.2}, 
\begin{equation}
\tilde{\boldsymbol{\theta }}-\boldsymbol{\theta }_{i}=-\boldsymbol{\eta }%
_{i}+\boldsymbol{\bar{Q}}_{NT}^{-1}\boldsymbol{\bar{q}}_{NT}+\boldsymbol{%
\bar{Q}}_{NT}^{-1}\boldsymbol{\bar{\xi}}_{NT}.  \label{bpooled}
\end{equation}%
Forecast errors from these schemes take the form%
\begin{eqnarray}
\hat{e}_{i,T+1} &=&y_{iT+1}-\hat{y}_{i,T+1}=\varepsilon _{i,T+1}-(\hat{%
\boldsymbol{\theta }_{i}}-\boldsymbol{\theta }_{i})^{\prime }\boldsymbol{w}%
_{i,T+1},  \label{ehat} \\
\tilde{e}_{i,T+1} &=&y_{iT+1}-\tilde{y}_{i,T+1}=\varepsilon _{i,T+1}-(\tilde{%
\boldsymbol{\theta }}-\boldsymbol{\theta }_{i})^{\prime }\boldsymbol{w}%
_{i,T+1}.  \label{etilda}
\end{eqnarray}

\subsubsection*{Forecasts based on individual estimation}

Noting that
$(\hat{\boldsymbol{\theta }}_{i}-\boldsymbol{\theta }%
_{i})^{\prime }\boldsymbol{w}_{i,T+1}=\boldsymbol{\varepsilon }_{i}^{
\prime }%
\boldsymbol{W}_{i}(\boldsymbol{W}_{i}^{\prime }\*\boldsymbol{W}_{i})^{-1}%
\boldsymbol{w}_{i,T+1}$, it is easily seen that the forecasts based on
the individual estimates generate the following average MSFE:%
%
\begin{equation}
N^{-1}\sum_{i=1}^{N}
\hat{e}_{i,T+1}^{2}=N^{-1}\sum
_{i=1}^{N} \varepsilon _{i,T+1}^{2}+T^{-1}S_{NT}-2R_{NT},
\label{MSFE_i}
\end{equation}%
where $S_{NT}=N^{-1}\sum_{i=1}^{N}s_{iT}$,
$R_{NT}=N^{-1}\sum_{i=1}^{N}r_{iT%
 }$, with elements%
%
\begin{equation}
r_{iT }= \bigl( \boldsymbol{\varepsilon }_{i}^{\prime }
\boldsymbol{W}_{i}\bigl(%
\boldsymbol{W}_{i}^{\prime }
\boldsymbol{W}_{i}\bigr)^{-1}\boldsymbol{w}%
_{i,T+1}
\bigr) \varepsilon _{i,T+1},  \label{r_iT}%
\end{equation}
and
\begin{equation}
s_{iT}=\boldsymbol{w}_{i,T+1}^{\prime }
\boldsymbol{Q}_{iT}^{-1} \bigl( T^{-1}%
\boldsymbol{W}_{i}^{\prime }\boldsymbol{\varepsilon
}_{i} \boldsymbol{%
\varepsilon }_{i}^{\prime }
\boldsymbol{W}_{i} \bigr) \boldsymbol{Q}_{iT}^{-1}%
\boldsymbol{w}_{i,T+1}. \label{s_iT}%
\end{equation}
Under Assumptions \ref{ass:1} and \ref{ass:3},
$\mathrm{E}  ( r_{iT
}  ) =0$ and
$\sup_{i,T}\mathrm{E} \llvert  r_{iT} \rrvert  <C$, and under cross-sectional
independence (Assumption~\ref{ass:6}) we have
$%
R_{NT}=O_{p}(N^{-1/2})$. Similarly,
$\sup_{i,T}\mathrm{E} \llvert  s_{iT} \rrvert  <C$,
\begin{equation*}
\mathrm{E} ( s_{iT } ) =\mathrm{E} \biggl[ \boldsymbol{w}%
_{i,T+1}^{\prime }
\boldsymbol{Q}_{iT}^{-1} \biggl( \frac{
\boldsymbol{W}%
_{i}^{\prime }\boldsymbol{\varepsilon
}_{i}\boldsymbol{\varepsilon }%
_{i}^{\prime }
\boldsymbol{W}_{i}}{T} \biggr) \boldsymbol{Q}_{iT}^{-1}%
\boldsymbol{w}_{i,T+1} \biggr] ,
\end{equation*}%
$S_{NT}=\mathrm{E}  ( S_{NT}  ) +O_{p}(N^{-1/2})$, and we obtain
the results summarized in the following proposition for the average MSFE
of the forecasts based on the individual estimates (for a detailed proof,
see Section~\ref{Proof_individual} of the Appendix):

\begin{proposition}
\label{prop:individual}
\begin{enumerate}
\item[(a)] Suppose that Assumptions \ref{ass:1}--\ref{ass:weak_exogeneity_b} and
\ref{ass:6} hold. Then, for a fixed $T_{0}$ such that $T>T_{0}$, the average
MSFE resulting from individual-specific estimation of the parameters, given
by (\ref{MSFE_i}), has the following representation:
%
\begin{equation}
N^{-1}\sum_{i=1}^{N}
\hat{e}_{i,T+1}^{2}=N^{-1}\sum
_{i=1}^{N} \varepsilon _{i,T+1}^{2}+T^{-1}h_{NT}+O_{p}
\bigl(N^{-1/2}\bigr), \label{MSFEI}
\end{equation}%
where
%
\begin{equation}
h_{NT}=N^{-1}\sum_{i=1}^{N}
\mathrm{E} \biggl[ \boldsymbol{w}_{i,T+1}^{
\prime }%
\boldsymbol{Q}_{iT}^{-1} \biggl( \frac{
\boldsymbol{W}_{i}^{\prime }\boldsymbol{%
\varepsilon
}_{i}\boldsymbol{\varepsilon }_{i}^{\prime }
\boldsymbol{W}_{i}}{T%
} \biggr) \boldsymbol{Q}_{iT}^{-1}
\boldsymbol{w}_{i,T+1} \biggr] , \label{h_NT}
\end{equation}%
$\boldsymbol{Q}_{iT}=T^{-1}\boldsymbol{W}_{i}^{\prime }\boldsymbol{W}_{i}$,
$%
h_{NT}>0$, and $h_{NT}=O(1)$.\vadjust{\goodbreak}

\item[(b)] If $\boldsymbol{W}_{i}$ is strictly exogenous, $h_{NT}$ simplifies
to
$ h_{NT}=N^{-1}\sum_{i=1}^{N}\sigma _{i}^{2}\mathrm{E}  (
\boldsymbol{w}%
_{i,T+1}^{\prime }\boldsymbol{Q}_{iT}^{-1}\*\boldsymbol{w}_{i,T+1}
  )  $.
\end{enumerate}
\end{proposition}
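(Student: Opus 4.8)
The plan is to take the exact finite-sample decomposition (\ref{MSFE_i}) as the starting point, since it already separates the benchmark term $N^{-1}\sum_{i=1}^{N}\varepsilon_{i,T+1}^{2}$ from the two estimation-error pieces $T^{-1}S_{NT}$ and $-2R_{NT}$. It then suffices to show that the cross term $R_{NT}$ is asymptotically negligible and that the quadratic term $S_{NT}$ concentrates on its mean $h_{NT}=\mathrm{E}(S_{NT})=N^{-1}\sum_{i=1}^{N}\mathrm{E}(s_{iT})$, both at rate $O_{p}(N^{-1/2})$; substituting these into (\ref{MSFE_i}) and using $T^{-1}=O(1)$ yields (\ref{MSFEI}). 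The sign and order of $h_{NT}$, as well as the strictly exogenous simplification in part (b), are then read off directly.

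For the cross term I would first rewrite $r_{iT}=(\hat{\boldsymbol{\theta}}_{i}-\boldsymbol{\theta}_{i})^{\prime}\boldsymbol{w}_{i,T+1}\,\varepsilon_{i,T+1}$, where the leading factor is a function of $(\boldsymbol{W}_{i},\boldsymbol{\varepsilon}_{i})$ and of $\boldsymbol{w}_{i,T+1}$, all measurable with respect to the time-$T$ information set. Because $\varepsilon_{i,T+1}$ is an innovation relative to that set --- a consequence of serial independence (Assumption \ref{ass:1}) together with weak exogeneity (Assumption \ref{ass:weak_exogeneity_a}) --- the law of iterated expectations gives $\mathrm{E}(r_{iT})=0$. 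Under cross-sectional independence (Assumption \ref{ass:6}) the $r_{iT}$ are then mean-zero and independent across $i$, so $\operatorname{Var}(R_{NT})=N^{-2}\sum_{i=1}^{N}\operatorname{Var}(r_{iT})$ and Chebyshev's inequality gives $R_{NT}=O_{p}(N^{-1/2})$ as soon as $\sup_{i,T}\mathrm{E}(r_{iT}^{2})<C$. The same independence-and-Chebyshev device applied to $S_{NT}-\mathrm{E}(S_{NT})$ gives $S_{NT}=h_{NT}+O_{p}(N^{-1/2})$ provided $\sup_{i,T}\mathrm{E}(s_{iT}^{2})<C$.

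Establishing these two uniform second-moment bounds is the part I expect to require the most care, and it is the step most sensitive to which version of the assumptions is invoked. Writing $\sqrt{T}(\hat{\boldsymbol{\theta}}_{i}-\boldsymbol{\theta}_{i})=\boldsymbol{Q}_{iT}^{-1}(T^{-1/2}\boldsymbol{W}_{i}^{\prime}\boldsymbol{\varepsilon}_{i})$, one has $s_{iT}=\bigl(\sqrt{T}(\hat{\boldsymbol{\theta}}_{i}-\boldsymbol{\theta}_{i})^{\prime}\boldsymbol{w}_{i,T+1}\bigr)^{2}\geq 0$, and, after factoring $\varepsilon_{i,T+1}$ out of $r_{iT}$ and applying Cauchy--Schwarz with $\mathrm{E}\,\varepsilon_{i,T+1}^{4}<C$, both $\mathrm{E}(s_{iT}^{2})$ and $\mathrm{E}(r_{iT}^{2})$ reduce to controlling $\mathrm{E}\bigl\llVert\sqrt{T}(\hat{\boldsymbol{\theta}}_{i}-\boldsymbol{\theta}_{i})\bigr\rrVert^{4}$. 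Here the fourth-moment score bound (\ref{CWe}) alone is insufficient; one must combine it with the (deterministic) eigenvalue bound $\sup_{i}\lambda_{\max}(\boldsymbol{Q}_{iT}^{-1})<C$ of Assumption \ref{ass:3}(b) --- its expectation counterpart in Assumption \ref{ass:weak_exogeneity_b} secures only the second-moment bound of Lemma \ref{Lemma_2_VTEX1}, not the fourth. With $\sup_{i,T}\llVert\boldsymbol{w}_{i,T+1}\rrVert<C$ this yields the required bounds $\sup_{i,T}\mathrm{E}(s_{iT}^{2})<C$ and $\sup_{i,T}\mathrm{E}(r_{iT}^{2})<C$.

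It remains to verify the properties of $h_{NT}$ and the simplification in part (b). Since each summand $\mathrm{E}(s_{iT})$ is the expectation of a squared scalar it is nonnegative, and it is strictly positive because $\sigma_{i}^{2}>c>0$ renders that scalar nondegenerate while $\boldsymbol{w}_{i,T+1}$ is nonzero (it contains a leading one); hence $h_{NT}>0$, and $h_{NT}\leq\sup_{i,T}\mathrm{E}(s_{iT})<C$ gives $h_{NT}=O(1)$. For part (b), strict exogeneity makes $\boldsymbol{\varepsilon}_{i}$ independent of $(\boldsymbol{W}_{i},\boldsymbol{w}_{i,T+1})$, so conditioning on the regressors replaces $T^{-1}\boldsymbol{W}_{i}^{\prime}\boldsymbol{\varepsilon}_{i}\boldsymbol{\varepsilon}_{i}^{\prime}\boldsymbol{W}_{i}$ by its conditional mean $\sigma_{i}^{2}\boldsymbol{Q}_{iT}$; one factor of $\boldsymbol{Q}_{iT}$ then cancels against an outer $\boldsymbol{Q}_{iT}^{-1}$, collapsing the sandwiched form to $\sigma_{i}^{2}\,\boldsymbol{w}_{i,T+1}^{\prime}\boldsymbol{Q}_{iT}^{-1}\boldsymbol{w}_{i,T+1}$ and giving the stated expression after averaging over $i$. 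The conceptually subtle point throughout is that under weak exogeneity $\hat{\boldsymbol{\theta}}_{i}$ is biased and correlated with $\boldsymbol{\varepsilon}_{i}$, so $\mathrm{E}(s_{iT})$ cannot be evaluated in closed form as in part (b) and $h_{NT}$ must be left as the expectation of the sandwiched quadratic form; nevertheless the cross-term mean still vanishes precisely because $\varepsilon_{i,T+1}$ is a genuine innovation.
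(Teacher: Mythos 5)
Your proposal is correct and follows essentially the same route as the paper's proof in Appendix Section~\ref{Proof_individual}: start from the exact decomposition (\ref{MSFE_i}), kill the cross term $R_{NT}$ by conditioning on $(\boldsymbol{\varepsilon}_{i},\boldsymbol{W}_{i},\boldsymbol{w}_{i,T+1})$ so that $\mathrm{E}(r_{iT})=0$ under Assumptions~\ref{ass:1}--\ref{ass:weak_exogeneity_a}, and let $S_{NT}$ concentrate on $h_{NT}$ using cross-sectional independence (Assumption~\ref{ass:6}). The one place you go beyond the paper is the moment bookkeeping: the paper establishes only $\sup_{i,T}\mathrm{E}\llvert r_{iT}\rrvert<C$ and $\sup_{i,T}\mathrm{E}\llvert s_{iT}\rrvert<C$ and then invokes a law of large numbers to assert the $O_{p}(N^{-1/2})$ rate, whereas your Chebyshev argument supplies the uniform second moments (via the deterministic eigenvalue bound in Assumption~\ref{ass:3}(b) combined with (\ref{CWe})) that this rate actually requires, and you also spell out part (b) and the sign of $h_{NT}$, which the paper's appendix leaves implicit.
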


The $h_{NT}$ term captures the cost associated with the error in estimation
of $\hat{\boldsymbol{\theta }}_{i}$. For typical panel data sets,
$T$ is not large and parameter estimation uncertainty captured by the
$O  ( T^{-1}  ) $ term $T^{-1}h_{NT}$ in (\ref{MSFEI}) can therefore
be important. Parameter heterogeneity, in contrast, does not affect the
accuracy of the forecasts in (\ref{MSFEI}). The magnitude of $h_{NT}$ plays
an important role in the comparisons of forecasts based on individual and
pooled estimates and depends on how far the predictors are from their mean.
For example, when $\boldsymbol{w}_{it}=(1,x_{it})^{\prime }$ and
$x_{it}$ is strictly exogenous,
\begin{equation*}
h_{NT}=\bar{\sigma}_{N}^{2}+N^{-1}
\sum_{i=1}^{N}\sigma
_{i}^{2} \text{E} \biggl[ \frac{ (
x_{i,T+1}-\bar{x}_{iT} ) ^{2}}{s_{iT}^{2}}
\biggr] ,
\end{equation*}%
where $\bar{\sigma}_{N}^{2}=N^{-1}\sum_{i=1}^{N}\sigma _{i}^{2}$,
$%
s_{iT}^{2}=T^{-1}\sum_{t=1}^{T}(x_{it}-\bar{x}_{iT})^{2}$, and
$\bar{x}%
_{iT}=T^{-1}\sum_{t=1}^{T}x_{it}$. Hence, $h_{NT}$ is minimized when
$%
x_{i,T+1}=\bar{x}_{iT}$, for all $i$. When
$x_{i,T+1}\neq \bar{x}_{iT}$ for most $i$, $T$ must be sufficiently large
such that $\sup_{i} \mathrm{E}  [   ( x_{i,T+1}-\bar{x}_{iT}  ) ^{2}/s_{iT}^{2}
  ] <C$.

\subsubsection*{Forecasts based on pooled estimation}

While the forecast accuracy results for the individual regressions do not
depend on the degree of parameter heterogeneity, whether correlated or
not, the degree of correlated heterogeneity does matter for consistency
of the pooled estimator. Using (\ref{bpooled}) in (\ref{etilda}), we can
express the squared forecast error when pooled estimates are used as follows:
\begin{equation*}
\tilde{e}_{i,T+1}^{2}=\varepsilon _{i,T+1}^{2}+
\boldsymbol{w}%
_{i,T+1}^{\prime }\boldsymbol{d}_{i,NT}
\boldsymbol{d}_{i,NT}^{\prime }%
\boldsymbol{w}_{i,T+1}-2
\boldsymbol{d}_{i,NT}^{\prime } \boldsymbol{w}%
_{i,T+1}
\varepsilon _{i,T+1},
\end{equation*}%
where
$\boldsymbol{d}_{i,NT}=-\boldsymbol{\eta }_{i}+\boldsymbol{\bar{Q}}%
_{NT}^{-1}\boldsymbol{\bar{q}}_{NT}+\boldsymbol{\bar{Q}}_{NT}^{-1}%
\boldsymbol{\bar{\xi}}_{NT}$, $\boldsymbol{\bar{Q}}_{NT}$, and
$\boldsymbol{%
\bar{q}}_{NT}$ are defined by (\ref{Qbar}) and (\ref{qbarNT}), and
$%
\boldsymbol{\bar{\xi}}_{NT}$ is defined under Assumption~\ref{ass:2.2}. After some algebra, and averaging over $i$, we have
%
\begin{equation}
N^{-1}\sum_{i=1}^{N}
\tilde{e}_{i,T+1}^{2} = N^{-1}\sum
_{i=1}^{N} \varepsilon _{i,T+1}^{2}+N^{-1}
\sum_{i=1}^{N}\boldsymbol{w}_{i,T+1}^{
\prime }
\boldsymbol{%
\eta }_{i}\boldsymbol{\eta
}_{i}^{\prime }\boldsymbol{w}_{i,T+1}
+\tilde{S}_{N,T+1}+2\tilde{R}_{N,T+1},\label{MSFEPool}
\end{equation}%
where $\tilde{S}_{N,T+1}$, and $\tilde{R}_{N,T+1}$ are defined by equations
(%
\ref{stilda}) and (\ref{rtilda}) in Section~\ref{Proof_pooled} of the Appendix.
It can be shown that $\tilde{R}_{N,T+1}=O_{p}(N^{-1/2})$, and
$%
\tilde{S}_{N,T+1}=-\boldsymbol{\bar{q}}_{N}^{\prime }
\boldsymbol{\bar{Q}}%
_{N}^{-1}\boldsymbol{\bar{q}}_{N}+O_{p}  ( N^{-1/2}  ) $. 

The limiting properties of the average MSFE based on pooled estimates are summarized
in the following proposition.

\begin{proposition}
\label{prop:pooled}
\begin{enumerate}
\item[(a)] Under Assumptions \ref{ass:1}--\ref{ass:6}, the MSFE for the forecasts
based on pooled estimation of the parameters, given by (\ref{MSFEPool}),
is%
%
\begin{equation}
N^{-1}\sum_{i=1}^{N}
\tilde{e}_{i,T+1}^{2}=N^{-1}\sum
_{i=1}^{N} \varepsilon _{i,T+1}^{2}+
\Delta _{NT}+O_{p}\bigl(N^{-1/2}\bigr),
\label{MSFEP1}
\end{equation}%
where%
%
\begin{equation}
\Delta _{NT}=N^{-1}\sum_{i=1}^{N}
\mathrm{E} \bigl( \boldsymbol{w}%
_{i,T+1}^{\prime }
\boldsymbol{\eta }_{i}\boldsymbol{\eta }_{i}^{
\prime }%
\boldsymbol{w}_{i,T+1} \bigr) -\boldsymbol{\bar{q}}_{N}^{\prime }
\boldsymbol{%
\bar{Q}}_{N}^{-1}\boldsymbol{
\bar{q}}_{N}. \label{Delta_NT}
\end{equation}%
\item[(b)] Parameter heterogeneity (whether correlated or uncorrelated) increases
the MSFE of the forecasts based on the pooled estimator, namely
$\Delta _{NT}>0$.
\end{enumerate}\end{proposition}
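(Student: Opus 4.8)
The plan is to treat part~(a) as bookkeeping on the decomposition (\ref{MSFEPool}), and to obtain the sign result in part~(b) from a completion-of-squares identity that exhibits $\Delta_{NT}$ as a minimized quadratic form.

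For part~(a), I would start from (\ref{MSFEPool}) and dispose of its four terms in turn. The leading term $N^{-1}\sum_{i=1}^{N}\varepsilon_{i,T+1}^{2}$ passes through unchanged. For the pure-heterogeneity term I would apply a cross-sectional law of large numbers: under Assumptions~\ref{ass:4} and \ref{ass:6} the summands $\boldsymbol{w}_{i,T+1}^{\prime}\boldsymbol{\eta}_{i}\boldsymbol{\eta}_{i}^{\prime}\boldsymbol{w}_{i,T+1}$ are independent across $i$ with uniformly bounded second moments---the boundedness $\sup_{i,T}\llVert\boldsymbol{w}_{i,T+1}\rrVert<C$ from Assumption~\ref{ass:3} dominates each quadratic form by a multiple of $\llVert\boldsymbol{\eta}_{i}\rrVert^{2}$---so that $N^{-1}\sum_{i=1}^{N}\boldsymbol{w}_{i,T+1}^{\prime}\boldsymbol{\eta}_{i}\boldsymbol{\eta}_{i}^{\prime}\boldsymbol{w}_{i,T+1}=N^{-1}\sum_{i=1}^{N}\mathrm{E}(\boldsymbol{w}_{i,T+1}^{\prime}\boldsymbol{\eta}_{i}\boldsymbol{\eta}_{i}^{\prime}\boldsymbol{w}_{i,T+1})+O_{p}(N^{-1/2})$ by Chebyshev's inequality. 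The remaining two terms are exactly those stated just before the proposition, $\tilde{R}_{N,T+1}=O_{p}(N^{-1/2})$ and $\tilde{S}_{N,T+1}=-\boldsymbol{\bar{q}}_{N}^{\prime}\boldsymbol{\bar{Q}}_{N}^{-1}\boldsymbol{\bar{q}}_{N}+O_{p}(N^{-1/2})$; collecting the three $O_{p}(N^{-1/2})$ remainders and the two surviving non-$\varepsilon$ terms yields (\ref{MSFEP1}) with $\Delta_{NT}$ as in (\ref{Delta_NT}).

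Part~(b) carries the substance. The key step is to recognise $\Delta_{NT}$ as a minimized quadratic. For an arbitrary fixed $K\times1$ vector $\boldsymbol{c}$, define $f(\boldsymbol{c})=N^{-1}\sum_{i=1}^{N}\mathrm{E}[((\boldsymbol{c}-\boldsymbol{\eta}_{i})^{\prime}\boldsymbol{w}_{i,T+1})^{2}]\ge0$, nonnegative as an average of expected squares. Expanding and using covariance stationarity to write $\mathrm{E}(\boldsymbol{w}_{i,T+1}\boldsymbol{w}_{i,T+1}^{\prime})=\boldsymbol{Q}_{i}$ and $\mathrm{E}(\boldsymbol{w}_{i,T+1}\boldsymbol{w}_{i,T+1}^{\prime}\boldsymbol{\eta}_{i})=\boldsymbol{q}_{i}$ gives
\[
f(\boldsymbol{c})=\boldsymbol{c}^{\prime}\boldsymbol{\bar{Q}}_{N}\boldsymbol{c}-2\boldsymbol{c}^{\prime}\boldsymbol{\bar{q}}_{N}+N^{-1}\sum_{i=1}^{N}\mathrm{E}\bigl(\boldsymbol{w}_{i,T+1}^{\prime}\boldsymbol{\eta}_{i}\boldsymbol{\eta}_{i}^{\prime}\boldsymbol{w}_{i,T+1}\bigr).
\]
Since $\boldsymbol{\bar{Q}}_{N}\succ\boldsymbol{0}$ (Assumption~\ref{ass:3b}), this quadratic has the unique minimizer $\boldsymbol{c}^{\ast}=\boldsymbol{\bar{Q}}_{N}^{-1}\boldsymbol{\bar{q}}_{N}$ and minimum value $f(\boldsymbol{c}^{\ast})=\Delta_{NT}$ by (\ref{Delta_NT}). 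Because $f\ge0$ pointwise, $\Delta_{NT}=f(\boldsymbol{c}^{\ast})\ge0$. This is natural in hindsight: $\boldsymbol{c}^{\ast}$ is the probability limit of the pooling bias in (\ref{bpooled}), so $\Delta_{NT}$ is literally the cross-sectional average of the squared prediction penalty $\mathrm{E}[((\boldsymbol{\theta}^{\ast}-\boldsymbol{\theta}_{i})^{\prime}\boldsymbol{w}_{i,T+1})^{2}]$ of the pooled limit $\boldsymbol{\theta}^{\ast}=\boldsymbol{\theta}+\boldsymbol{c}^{\ast}$.

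For the strict inequality, $\Delta_{NT}=0$ would force $(\boldsymbol{c}^{\ast}-\boldsymbol{\eta}_{i})^{\prime}\boldsymbol{w}_{i,T+1}=0$ almost surely for every $i$. Conditioning on $\boldsymbol{\eta}_{i}$ and using that the regressor second moment stays nondegenerate given $\boldsymbol{\eta}_{i}$ (the conditional counterpart of $\boldsymbol{Q}_{i}\succ\boldsymbol{0}$), this requires $\boldsymbol{\eta}_{i}=\boldsymbol{c}^{\ast}$ a.s., i.e.\ a degenerate $\boldsymbol{\eta}_{i}$; together with $\mathrm{E}(\boldsymbol{\eta}_{i})=\boldsymbol{0}$ this gives $\boldsymbol{\eta}_{i}=\boldsymbol{0}$ for all $i$ and hence $\boldsymbol{\Omega}_{\eta}=\boldsymbol{0}$, i.e.\ no heterogeneity. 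Contrapositively, any heterogeneity ($\boldsymbol{\Omega}_{\eta}\neq\boldsymbol{0}$) yields $\Delta_{NT}>0$, covering both regimes flagged in the statement: under uncorrelated heterogeneity the subtracted term vanishes ($\boldsymbol{\bar{q}}_{N}=\boldsymbol{0}$) and $\Delta_{NT}=N^{-1}\sum_{i=1}^{N}\mathrm{E}(\boldsymbol{w}_{i,T+1}^{\prime}\boldsymbol{\eta}_{i}\boldsymbol{\eta}_{i}^{\prime}\boldsymbol{w}_{i,T+1})>0$ at once, while under correlated heterogeneity the projection $\boldsymbol{\bar{q}}_{N}^{\prime}\boldsymbol{\bar{Q}}_{N}^{-1}\boldsymbol{\bar{q}}_{N}$ is the gain from optimally choosing $\boldsymbol{c}$, which by the identity never drives $f$ below zero. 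I expect two places to demand care. In part~(a), the collapse of $\tilde{S}_{N,T+1}$ to $-\boldsymbol{\bar{q}}_{N}^{\prime}\boldsymbol{\bar{Q}}_{N}^{-1}\boldsymbol{\bar{q}}_{N}$ hinges on replacing the sample pooled moments $\boldsymbol{\bar{Q}}_{NT}^{-1},\boldsymbol{\bar{q}}_{NT}$ by their population limits uniformly and on bounding the $\boldsymbol{\bar{\xi}}_{NT}=O_{p}(N^{-1/2}T^{-1/2})$ cross terms, which rests on $\llVert\boldsymbol{\bar{Q}}_{NT}-\boldsymbol{\bar{Q}}_{N}\rrVert=O_{p}(N^{-1/2})$ and the eigenvalue bounds in Assumption~\ref{ass:3b}. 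In part~(b), the genuine obstacle is the strict-positivity step: the weak inequality is immediate from the projection identity, but excluding the degenerate case cleanly is where the nondegeneracy of $\boldsymbol{w}_{i,T+1}$ conditional on $\boldsymbol{\eta}_{i}$ is indispensable.
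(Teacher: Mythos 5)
Your part~(a) is essentially the paper's own argument: the same decomposition (\ref{MSFEPool}), the same treatment of $\tilde{R}_{N,T+1}$ and $\tilde{S}_{N,T+1}$ via $\boldsymbol{\bar{Q}}_{NT}^{-1}-\boldsymbol{\bar{Q}}_{N}^{-1}=O_{p}(N^{-1/2})$, $\boldsymbol{\bar{q}}_{NT}-\boldsymbol{\bar{q}}_{N}=O_{p}(N^{-1/2})$ (the content of Lemma~\ref{Lemma_1_VTEX1}), and a cross-sectional LLN for the heterogeneity term under Assumptions~\ref{ass:4} and \ref{ass:6}. Nothing to add there.

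For part~(b) your route differs in a useful way. The paper signs $\Delta_{NT}$ by exhibiting its \emph{sample} analogue $\mathring{\Delta}_{NT}$ as the residual sum of squares from the pooled regression of $\boldsymbol{\eta}_{i}^{\prime}\boldsymbol{w}_{it}$ on $\boldsymbol{w}_{it}$, which is nonnegative by construction, and then passes to the limit via Slutsky. You instead write $\Delta_{NT}$ directly as the minimum over $\boldsymbol{c}$ of the population objective $N^{-1}\sum_{i}\mathrm{E}[((\boldsymbol{c}-\boldsymbol{\eta}_{i})^{\prime}\boldsymbol{w}_{i,T+1})^{2}]$, attained at $\boldsymbol{c}^{\ast}=\boldsymbol{\bar{Q}}_{N}^{-1}\boldsymbol{\bar{q}}_{N}$. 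The two arguments are the same projection idea, but your population-level version avoids the detour through the sample regression and its limit, and it makes transparent the interpretation of $\boldsymbol{c}^{\ast}$ as the probability limit of the pooling bias. One caveat: your strict-positivity step goes beyond what the paper actually proves. The paper's proof only establishes $\Delta_{NT}\geq 0$ (the nonnegativity of a residual sum of squares), even though the proposition is stated with a strict inequality. Your argument for $\Delta_{NT}>0$ under $\boldsymbol{\Omega}_{\eta}\neq\boldsymbol{0}$ is sound, but, as you yourself flag, it relies on nondegeneracy of $\mathrm{E}(\boldsymbol{w}_{i,T+1}\boldsymbol{w}_{i,T+1}^{\prime}\,\vert\,\boldsymbol{\eta}_{i})$, which is not among Assumptions~\ref{ass:1}--\ref{ass:6}; Assumption~\ref{ass:3} only bounds the unconditional $\boldsymbol{Q}_{i}$ away from singularity. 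So treat the strict inequality as requiring that additional condition (or read the claim as $\geq 0$, as the paper's own proof effectively does).
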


Note that the impact on the MSFE from neglected heterogeneity,
$\Delta _{NT}$%
, does not vanish even if both $N$ and $T\rightarrow \infty $, which is
similar to the finding by \citet{PesSmi1995} for heterogeneous dynamic
panels since heterogeneity is always correlated in dynamic panels.\footnote{This latter property is illustrated by a simple panel AR(1) model with
heterogeneous AR coefficients in Section~\ref{PanelAR} of the Appendix.
 See also \citet{Pesetal2024a} where estimation of such models with
short $T$ panels is considered.}

\subsubsection*{A comparison of forecasts based on individual and pooled
estimates}
Next, we consider the difference in the average MSFE performance of the
forecasts based on the pooled versus individual parameter estimates. Proposition~\ref{prop:individual}
shows that the MSFE from the forecasts based on the individual estimates
will be affected by an estimation error term of the form
\begin{equation*}
h_{NT}=N^{-1}\sum_{i=1}^{N}
\mathrm{E} \biggl[ \boldsymbol{w}_{i,T+1}^{
\prime }%
\boldsymbol{Q}_{iT}^{-1} \biggl( \frac{
\boldsymbol{W}_{i}^{\prime }\boldsymbol{%
\varepsilon
}_{i}\boldsymbol{\varepsilon }_{i}^{\prime }
\boldsymbol{W}_{i}}{T%
} \biggr) \boldsymbol{Q}_{iT}^{-1}
\boldsymbol{w}_{i,T+1} \biggr] >0.
\end{equation*}%
While the forecasts from the pooled estimates are more robust to estimation
errors, they are in turn affected by correlated and uncorrelated heterogeneity
as captured by the term
\begin{equation*}
\Delta _{NT}=N^{-1}\sum_{i=1}^{N}
\mathrm{E} \bigl( \boldsymbol{w}%
_{i,T+1}^{\prime }
\boldsymbol{\eta }_{i}\boldsymbol{\eta }_{i}^{
\prime }%
\boldsymbol{w}_{i,T+1} \bigr) -\boldsymbol{\bar{q}}_{N}^{\prime }
\boldsymbol{%
\bar{Q}}_{N}^{-1}\boldsymbol{
\bar{q}}_{N}.
\end{equation*}%
We compare the difference in the average MSFE of the forecasts from the
pooled versus individual estimates as a ratio measured relative to the
MSFE of the forecasts from the individual estimates:
\begin{equation*}
\frac{N^{-1}\sum_{i=1}^{N}
\tilde{e}_{i,T+1}^{2}-N^{-1}\sum
_{i=1}^{N} \hat{e}_{i,T+1}^{2}}{N^{-1}
\sum_{i=1}^{N}\hat{e}_{i,T+1}^{2}}=
\frac{\Delta _{NT} -T^{-1}h_{NT}+O_{p}
\bigl(N^{-1/2}\bigr)}{N^{-1}\sum
_{i=1}^{N}\varepsilon _{i,T+1}^{2}
+T^{-1}h_{NT}+O_{p}\bigl(N^{-1/2}
\bigr)}.
\end{equation*}%
Hence, there exists a $T_{0}$ such that, for a fixed $T>T_{0}$, and as
$N\rightarrow \infty  $,
%
\begin{equation}
\frac{N^{-1}\sum_{i=1}^{N}
\tilde{e}_{i,T+1}^{2}-N^{-1}\sum
_{i=1}^{N}\hat{e}%
_{i,T+1}^{2}}{N^{-1}
\sum_{i=1}^{N}\hat{e}_{i,T+1}^{2}}
\overset{p}{%
\rightarrow }\frac{\Delta -T^{-1}h_{T}}{
\bar{\sigma}^{2}+T^{-1}h_{T}},
\label{finiteT}
\end{equation}%
where $h_{T}=\lim_{N\rightarrow \infty }h_{NT}\geq 0$,
$\Delta =\lim_{N\rightarrow \infty }\Delta _{N}\geq 0$, and
$\bar{\sigma}%
^{2}=\lim_{N\rightarrow \infty }N^{-1}\sum_{i=1}^{N}\sigma _{i}^{2}>0$%
\textbf{.} It follows that when $T$ is fixed and $N$ is large, the ranking
of the two forecasting schemes will depend on the sign and magnitude of
$\Delta -T^{-1}h_{T}$.\footnote{In comparing $\Delta _{T}$ with $T^{-1}h_{T}$, it is also important to
bear in mind that $h_{T}$ is well-defined if moments of
$\boldsymbol{\hat{\theta}}%
_{i}$ (at least up to second order) exist (see the moment condition (\ref{Momentconthetai})).
This in turn requires that $T>T_{0}$ for some finite $%
T_{0}$. The value of $T_{0}$ depends on the nature of the
$(\boldsymbol{w}%
_{it},\varepsilon _{it}$) process and its distributional properties.}

For large values of $T$, however, the individual forecasts generate the
lowest MSFE values. Specifically, for a fixed $N$ and as
$T\rightarrow \infty  $,
\begin{equation*}
\frac{N^{-1}\sum_{i=1}^{N}
\tilde{e}_{i,T+1}^{2}-N^{-1}\sum
_{i=1}^{N}\hat{e}%
_{i,T+1}^{2}}{N^{-1}
\sum_{i=1}^{N}\hat{e}_{i,T+1}^{2}}
\overset{p}{%
\rightarrow }\frac{\Delta _{N}}{\bar{
\sigma}^{2}}+O_{p}\bigl(N^{-1/2}\bigr).
\end{equation*}%
Similarly, when both $N$ and $T\rightarrow \infty $ (in any order)
\begin{equation*}
\frac{N^{-1}\sum_{i=1}^{N}
\tilde{e}_{i,T+1}^{2}-N^{-1}\sum
_{i=1}^{N}\hat{e}%
_{i,T+1}^{2}}{N^{-1}
\sum_{i=1}^{N}\hat{e}_{i,T+1}^{2}}
\overset{p}{%
\rightarrow }\Delta /\bar{\sigma}^{2}\geq 0,
\end{equation*}
where $\Delta =\lim_{T\rightarrow \infty }(\Delta _{T})$. Therefore, vanishing
estimation uncertainty implied by large $T$ means that, on average, individual
forecasts are at least as precise as pooled forecasts irrespective of
$N$.

\subsection{Forecasts based on fixed effects estimation\label{sec:FE_theory}}
The comparison of forecasts based on individual or pooled estimates can
be extended to intermediate cases where a subset of the parameters are
allowed to vary across units. A prominent example is the FE forecast%
%
\begin{equation}
\hat{y}_{i,T+1}^{\text{FE}}=\hat{\alpha}_{i,\text{FE}}+
\boldsymbol{\hat{\beta%
}}_{\text{FE}}^{\prime }
\boldsymbol{x}_{i,T+1}, \label{ForFE}
\end{equation}%
where
$\hat{\alpha}_{i,\text{FE}}=\boldsymbol{\tau }_{T}^{\prime }(%
\boldsymbol{y}_{i}-\boldsymbol{\hat{\beta}}_{\text{FE}}^{\prime }
\boldsymbol{%
X}_{i})/T$ and
$\boldsymbol{\hat{\beta}}_{\text{FE}}=  ( \sum_{i=1}^{N}%
\boldsymbol{X}_{i}^{\prime }\boldsymbol{M}_{T}\boldsymbol{X}_{i}
  ) ^{-1}\sum_{i=1}^{N}\boldsymbol{X}_{i}^{\prime }
\boldsymbol{M}_{T}\boldsymbol{%
y}_{i}$. The associated FE forecast error is given by
%
\begin{equation}
\hat{e}_{i,T+1}^{\text{FE}}=\bar{\bar{\varepsilon}}_{i,T+1}-(
\hat{%
\boldsymbol{\beta }}_{\text{FE}}-\boldsymbol{\beta
}_{i})^{\prime } \bar{\bar{%
\boldsymbol{x}}}_{i,T+1},
\label{eiFE}
\end{equation}%
where
$\bar{\bar{\varepsilon}}_{i,T+1}=\varepsilon _{i,T+1}-
\bar{\varepsilon}%
_{iT}$,
$\bar{\bar{\boldsymbol{x}}}_{i,T+1}=\boldsymbol{x}_{i,T+1}-%
\boldsymbol{\bar{x}}_{iT}$, $\bar{\varepsilon}_{iT}$
$=T^{-1}\sum_{t=1}^{T}%
\varepsilon _{it}$, and
$\bar{\boldsymbol{x}}_{iT}=T^{-1}\*\sum_{t=1}^{T}%
\boldsymbol{x}_{it}$. Section S.5 in the Online Supplement provides details
of the derivation of the MSFE under fixed effects estimation:
%
\begin{equation}
N^{-1}\sum_{i=1}^{N}
\bigl( \hat{e}_{i,T+1}^{\text{FE}} \bigr) ^{2}=N^{-1}
\sum_{i=1}^{N}\bar{\bar{
\varepsilon}}_{i,T+1}^{2}+\Delta _{NT}^{%
\text{FE}}-2c_{NT}^{\text{FE}}+O_{p}
\bigl(N^{-1/2}\bigr), \label{FEmsfe}
\end{equation}%
where%
%
\begin{equation}
\Delta _{NT}^{\text{FE}}=N^{-1}\sum
_{i=1}^{N}\mathrm{E}\bigl( \bar{\bar{%
\boldsymbol{x}}}_{i,T+1}^{\prime }\boldsymbol{\eta
}_{i,\beta } \boldsymbol{%
\eta }_{i,\beta }^{\prime }
\bar{\bar{\boldsymbol{x}}}_{i,T+1}\bigr)- \bar{%
\boldsymbol{q}}_{N,\beta }^{\prime }\bar{\boldsymbol{Q}}_{N,\beta }^{-1}
\bar{%
\boldsymbol{q}}_{N,\beta }, \label{eq:Delta_FE}
\end{equation}%
$\boldsymbol{\eta }_{i,\beta }=\boldsymbol{\beta }_{i}-
\boldsymbol{\beta }$,
$\bar{\boldsymbol{\xi }}_{NT,\beta }=N^{-1}\sum_{i=1}^{N}T^{-1}
\boldsymbol{X}%
_{i}^{\prime }\boldsymbol{M}_{T}\boldsymbol{\varepsilon }_{i}$, $\bar{%
\boldsymbol{Q}}_{NT,\beta }=N^{-1}\sum_{i=1}^{N}T^{-1}
\boldsymbol{X}%
_{i}^{\prime }\boldsymbol{M}_{T}\boldsymbol{X}_{i}$,
$\bar{\boldsymbol{q}}%
_{NT,\beta }=N^{-1}\sum_{i=1}^{N}  ( T^{-1}\boldsymbol{X}_{i}^{
\prime }%
\boldsymbol{M}_{T}\boldsymbol{X}_{i}  ) \boldsymbol{\eta }_{i,
\beta }$ and
%
\begin{equation}
c_{NT}^{\text{FE}}=-N^{-1}\sum
_{i=1}^{N}\mathrm{E} \bigl( \boldsymbol{\eta
}%
_{i,\beta }^{\prime }\bar{\bar{\boldsymbol{x}}}_{i,T+1}
\bar{\varepsilon}%
_{iT} \bigr) +\bar{\boldsymbol{q}}_{N,\beta }^{\prime }
\bar{\boldsymbol{Q}}%
_{N,\beta }^{-1} \Biggl[
N^{-1}\sum_{i=1}^{N}
\mathrm{E} ( \bar{\boldsymbol{%
x}}_{iT}\bar{
\varepsilon}_{iT} ) \Biggr] . \label{cNT-FE}
\end{equation}%
$c_{NT}^{\text{FE}}$ tends to zero for $T$ sufficiently large or if
$%
\boldsymbol{x}_{it}$ is strictly exogenous.

Similar to the case of the individual and pooled forecasts, for $T$ finite
and $N$ large, the ranking of the individual and FE forecasts will depend
on the relative magnitudes of estimation error and parameter heterogeneity.
Precise expressions can be found in the Supplemental Appendix. For
$T\rightarrow \infty $ the individual forecasts will be more precise than
the FE forecasts.

\section{Forecast combinations}\label{sec:combinations}

We next consider approaches that combine the forecasts from Section~\ref{sec:theory_msfe}
to minimize the MSFE.

\subsection{Combinations of individual and pooled forecasts}\label%
{com_Ind_pool}

Given the MSFE trade-off associated with the forecasts in (\ref{eq:indFore})
and (\ref{eq:poolFore}), combining the forecasts based on the individual
and pooled estimates, $\hat{y}_{i,T+1}$ and $\tilde{y}_{i,T+1}$, may be
desirable. As noted in the literature (e.g., \citet{Tim2006}), forecast
combinations tend to perform particularly well, relative to the underlying
forecasts, if the forecast errors are weakly correlated and have MSFE values
of a similar magnitude. Correlations between forecast errors based on the
individual and pooled estimation schemes tend to be lower for (i) greater
differences in the estimates of $\boldsymbol{\theta }_{i}$ resulting from
larger estimation errors (small $T$); (ii) greater heterogeneity (large
$%
 \llVert  \boldsymbol{\Omega }_{\eta } \rrVert  $), and (iii) greater
bias of the pooled estimator due to correlated heterogeneity.

If the level of parameter heterogeneity is either very large or very small,
one of the individual or pooled estimation approaches will be dominant,
reducing potential gains from forecast combination. Similarly, if
$T$ is very small but $N$ is large and there is little parameter heterogeneity,
we would expect pooled estimation to dominate individual estimation by
a sufficiently large margin that forecast combination offers small, if
any, gains. Conversely, if $T$ is very large, forecasts using individual
estimates will dominate forecasts based on pooled estimates by a sufficient
margin that renders forecast combination less attractive. Building on these
observations, we combine the two forecasts $\hat{y}_{i,T+1}$ and
$\tilde{y}%
_{i,T+1}$ using common weights, $\omega $, to obtain\footnote{We focus here on a simple constant-coefficient linear combination scheme.
\cite{Lahetal2017} discuss a broader range of combination methods and
\citet{Ell2017} provides an analysis of the effect on the combination weights
and forecasting performance from having a large common component in the
forecast errors.}
%
\begin{equation}
y_{i,T+1}^{\ast }(\omega )=\omega \hat{y}_{i,T+1}+(1-
\omega ) \tilde{y}%
_{i,T+1}, \label{combination}
\end{equation}%
with associated forecast error
$e_{i,T+1}^{\ast }(\omega )=\omega \hat{e}%
_{i,T+1}+(1-\omega )\tilde{e}_{i,T+1}$. The average MSFE of the combined
forecast is given by
\begin{eqnarray*}
N^{-1}\sum_{i=1}^{N}e_{i,T+1}^{\ast 2}(
\omega ) &=&\omega ^{2} \Biggl( N^{-1}\sum
_{i=1}^{N}\hat{e}_{i,T+1}^{2}
\Biggr) +(1-\omega )^{2} \Biggl( N^{-1}\sum
_{i=1}^{N}\tilde{e}_{i,T+1}^{2}
\Biggr)
\\
&&{}+2\omega (1-\omega ) \Biggl( N^{-1}\sum
_{i=1}^{N}\hat{e}_{i,T+1}
\tilde{e}%
_{i,T+1} \Biggr) .
\end{eqnarray*}%
The value of $\omega $ that minimizes the average MSFE is therefore given
by%
%
\begin{equation}
\omega _{NT}^{\ast }= \frac{N^{-1}\sum
_{i=1}^{N}\tilde{e}_{i,T+1}^{2}-
\Biggl( N^{-1}\sum_{i=1}^{N}
\hat{e}_{i,T+1}\tilde{e}_{i,T+1} \Biggr) }{ \Biggl(
N^{-1}\sum_{i=1}^{N}
\hat{e}_{i,T+1}^{2} \Biggr) + \Biggl( N^{-1}\sum
_{i=1}^{N}%
\tilde{e}_{i,T+1}^{2} \Biggr) -2 \Biggl( N^{-1}
\sum_{i=1}^{N}\hat{e}_{i,T+1}%
\tilde{e}_{i,T+1} \Biggr) }. \label{wstar}
\end{equation}%
Expressions for $N^{-1}\sum_{i=1}^{N}\hat{e}_{i,T+1}^{2}$ and
$%
N^{-1}\sum_{i=1}^{N}\tilde{e}_{i,T+1}^{2}$ are given by (\ref{MSFEI})
and (%
\ref{MSFEP1}), respectively. We obtain a similar expression for
$%
N^{-1}\sum_{i=1}^{N}\hat{e}_{i,T+1}\tilde{e}_{i,T+1}$, with
$%
N^{-1}\sum_{i=1}^{N}\varepsilon _{i,T+1}^{2}$ canceling out from
$\omega _{NT}^{\ast }$. The result is summarized in the following proposition
(proven in Appendix Section~\ref{ProofCombined}).

\begin{proposition}
\label{prop:combined}
\begin{enumerate}
\item[(a)] Under Assumptions \ref{ass:1}--\ref{ass:6}, and for a given value of
$\boldsymbol{w}_{i,T+1}$, the optimal combination weight that minimizes
the MSFE of the forecast combination in (\ref{combination}) is given by
%
\begin{equation}
\omega _{NT}^{\ast }= \frac{\Delta _{NT}-T^{-1}{
\psi }_{NT}}{\Delta _{NT}+T^{-1}h_{NT}-2T^{-1}{
\psi }_{NT}}+O_{p}\bigl(N^{-1/2}\bigr),
\label{w*NT}
\end{equation}%
where%
%
\begin{align}
h_{NT}&=N^{-1}\sum_{i=1}^{N}
\mathrm{E} \biggl[ \boldsymbol{w}_{i,T+1}^{
\prime }%
\boldsymbol{Q}_{iT}^{-1} \biggl( \frac{
\boldsymbol{W}_{i}^{\prime }\boldsymbol{%
\varepsilon
}_{i}\boldsymbol{\varepsilon }_{i}^{\prime }
\boldsymbol{W}_{i}}{T%
} \biggr) \boldsymbol{Q}_{iT}^{-1}
\boldsymbol{w}_{i,T+1} \biggr] >0, \label{hNT}%
\\
\Delta _{NT}&=N^{-1}\sum_{i=1}^{N}
\mathrm{E} \bigl( \boldsymbol{w}%
_{i,T+1}^{\prime }
\boldsymbol{\eta }_{i}\boldsymbol{\eta }_{i}^{
\prime }%
\boldsymbol{w}_{i,T+1} \bigr) -\boldsymbol{\bar{q}}_{N}^{\prime }
\boldsymbol{%
\bar{Q}}_{N}^{-1}\boldsymbol{
\bar{q}}_{N}>0, \label{DeltaNT}%
\end{align}
and
%
\begin{align}
\psi _{NT} ={}&TN^{-1}\sum_{i=1}^{N}
\mathrm{E} \bigl[ \boldsymbol{\varepsilon }%
_{i}^{\prime }
\boldsymbol{W}_{i}\bigl(\boldsymbol{W}_{i}^{\prime }
\boldsymbol{W}%
_{i}\bigr)^{-1}
\boldsymbol{w}_{i,T+1}\boldsymbol{w}_{i,T+1}^{\prime }
\bigr] \boldsymbol{\bar{Q}}_{N}^{-1}\boldsymbol{
\bar{q}}_{N} \notag
\\
& {}-TN^{-1}\sum_{i=1}^{N}
\mathrm{E} \bigl[ \boldsymbol{\varepsilon }%
_{i}^{\prime }
\boldsymbol{W}_{i}\bigl(\boldsymbol{W}_{i}^{\prime }
\boldsymbol{W}%
_{i}\bigr)^{-1}
\boldsymbol{w}_{i,T+1}\boldsymbol{w}_{i,T+1}^{\prime }
\boldsymbol{%
\eta }_{i} \bigr] .\label{epsi_NT}
\end{align}%

\item[(b)] Under strict exogeneity, irrespective of whether heterogeneity is correlated,
we have $\psi _{NT}=0$,
$h_{NT}=N^{-1}%
\sum_{i=1}^{N}\sigma ^{2}_{i}\mathrm{E}  ( \boldsymbol{w}_{i,T+1}^{
\prime }\boldsymbol{Q}%
_{iT}^{-1}\boldsymbol{w}_{i,T+1}  ) $, and
\[
\Delta _{NT}=N^{-1}\sum_{i=1}^{N}\mathrm{E}  \bigl( \boldsymbol{w}_{i,T+1}^{
\prime }
\boldsymbol{\Omega }_{\eta }\boldsymbol{w}_{i,T+1}  \bigr) .
\]
\end{enumerate}\end{proposition}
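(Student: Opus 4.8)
The plan is to start from the exact expression (\ref{wstar}) for $\omega_{NT}^{\ast}$ and replace each of the three sample averages appearing in it by its leading-order representation. Two of these are already available: $N^{-1}\sum_{i=1}^{N}\hat{e}_{i,T+1}^{2}$ is given by Proposition~\ref{prop:individual}, equation (\ref{MSFEI}), and $N^{-1}\sum_{i=1}^{N}\tilde{e}_{i,T+1}^{2}$ by Proposition~\ref{prop:pooled}, equation (\ref{MSFEP1}). The only genuinely new object is the cross-moment $N^{-1}\sum_{i=1}^{N}\hat{e}_{i,T+1}\tilde{e}_{i,T+1}$, so the bulk of the work is to show, under Assumptions~\ref{ass:1}--\ref{ass:6}, that $N^{-1}\sum_{i=1}^{N}\hat{e}_{i,T+1}\tilde{e}_{i,T+1}=N^{-1}\sum_{i=1}^{N}\varepsilon_{i,T+1}^{2}+T^{-1}\psi_{NT}+O_{p}(N^{-1/2})$, with $\psi_{NT}$ as in (\ref{epsi_NT}).

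To obtain this, I would substitute the forecast-error forms (\ref{ehat}) and (\ref{etilda}) together with the pooled-deviation decomposition (\ref{bpooled}), $\tilde{\boldsymbol{\theta}}-\boldsymbol{\theta}_{i}=\boldsymbol{d}_{i,NT}=-\boldsymbol{\eta}_{i}+\bar{\boldsymbol{Q}}_{NT}^{-1}\bar{\boldsymbol{q}}_{NT}+\bar{\boldsymbol{Q}}_{NT}^{-1}\bar{\boldsymbol{\xi}}_{NT}$, and expand the product into four pieces. The $\varepsilon_{i,T+1}^{2}$ piece averages to the common noise term $N^{-1}\sum_{i}\varepsilon_{i,T+1}^{2}$, which will cancel in the ratio. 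The two pieces linear in $\varepsilon_{i,T+1}$ both have conditional mean zero, since $\varepsilon_{i,T+1}$ is a martingale-difference innovation (Assumption~\ref{ass:weak_exogeneity_a}) while $\boldsymbol{w}_{i,T+1}$, $\hat{\boldsymbol{\theta}}_{i}$ and $\boldsymbol{d}_{i,NT}$ are all measurable with respect to time-$T$ information; by cross-sectional independence (Assumption~\ref{ass:6}) their $i$-averages are then $O_{p}(N^{-1/2})$, one of them being exactly $-R_{NT}$ from (\ref{r_iT}).

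The decisive piece is the interaction term $N^{-1}\sum_{i}(\hat{\boldsymbol{\theta}}_{i}-\boldsymbol{\theta}_{i})'\boldsymbol{w}_{i,T+1}\boldsymbol{w}_{i,T+1}'\boldsymbol{d}_{i,NT}$. Writing $(\hat{\boldsymbol{\theta}}_{i}-\boldsymbol{\theta}_{i})'\boldsymbol{w}_{i,T+1}=\boldsymbol{\xi}_{iT}'\boldsymbol{Q}_{iT}^{-1}\boldsymbol{w}_{i,T+1}$ and splitting $\boldsymbol{d}_{i,NT}$ into its three summands, the $-\boldsymbol{\eta}_{i}$ and $\bar{\boldsymbol{Q}}_{NT}^{-1}\bar{\boldsymbol{q}}_{NT}$ summands produce, after averaging and taking expectations, precisely the two terms of $T^{-1}\psi_{NT}$ in (\ref{epsi_NT}). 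This step relies on (i) a law-of-large-numbers argument under Assumption~\ref{ass:6} that replaces the $i$-averages by their expectations with an $O_{p}(N^{-1/2})$ fluctuation, and (ii) replacing the sample pooled quantities by their population counterparts via $\bar{\boldsymbol{Q}}_{NT}-\bar{\boldsymbol{Q}}_{N}=O_{p}(N^{-1/2}T^{-1/2})$ and $\bar{\boldsymbol{q}}_{NT}-\bar{\boldsymbol{q}}_{N}=O_{p}(N^{-1/2})$, which follow from Assumptions~\ref{ass:4} and \ref{ass:3b}. The remaining summand $\bar{\boldsymbol{Q}}_{NT}^{-1}\bar{\boldsymbol{\xi}}_{NT}$ is negligible: since $\bar{\boldsymbol{\xi}}_{NT}=O_{p}(N^{-1/2}T^{-1/2})$ (Assumption~\ref{ass:2.2}) and the $i$-average multiplying it is $O_{p}(T^{-1/2})$, it contributes only $O_{p}(N^{-1/2})$.

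Assembling the pieces and substituting the three expansions into (\ref{wstar}), the common term $N^{-1}\sum_{i}\varepsilon_{i,T+1}^{2}$ cancels from numerator and denominator, leaving numerator $\Delta_{NT}-T^{-1}\psi_{NT}$ and denominator $\Delta_{NT}+T^{-1}h_{NT}-2T^{-1}\psi_{NT}$, up to $O_{p}(N^{-1/2})$; since the denominator equals the limit of $N^{-1}\sum_{i}(\hat{e}_{i,T+1}-\tilde{e}_{i,T+1})^{2}$ and is bounded away from zero (as $h_{NT}>0$, $\Delta_{NT}>0$ and the two forecasts are not asymptotically identical), the remainder can be pulled outside the ratio to give (\ref{w*NT}). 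For part (b), I would substitute strict exogeneity into (\ref{hNT})--(\ref{epsi_NT}): $\mathrm{E}(\boldsymbol{\varepsilon}_{i}\mid\boldsymbol{W}_{i},\boldsymbol{w}_{i,T+1})=\boldsymbol{0}$ together with Assumption~\ref{ass:5} kills both terms of $\psi_{NT}$; $\mathrm{E}(\boldsymbol{\varepsilon}_{i}\boldsymbol{\varepsilon}_{i}'\mid\boldsymbol{W}_{i})=\sigma_{i}^{2}\boldsymbol{I}_{T}$ collapses $h_{NT}$ to $N^{-1}\sum_{i}\sigma_{i}^{2}\mathrm{E}(\boldsymbol{w}_{i,T+1}'\boldsymbol{Q}_{iT}^{-1}\boldsymbol{w}_{i,T+1})$; and for $\Delta_{NT}$, uncorrelated heterogeneity gives $\bar{\boldsymbol{q}}_{N}=\boldsymbol{0}$ and $\boldsymbol{\eta}_{i}$ independent of $\boldsymbol{w}_{i,T+1}$, so that $\mathrm{E}(\boldsymbol{w}_{i,T+1}'\boldsymbol{\eta}_{i}\boldsymbol{\eta}_{i}'\boldsymbol{w}_{i,T+1})=\mathrm{E}(\boldsymbol{w}_{i,T+1}'\boldsymbol{\Omega}_{\eta}\boldsymbol{w}_{i,T+1})$. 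The main obstacle is the interaction term: its leading contribution comes from the $O(T^{-1})$ weak-exogeneity bias in $\mathrm{E}[\boldsymbol{\varepsilon}_{i}'\boldsymbol{W}_{i}(\boldsymbol{W}_{i}'\boldsymbol{W}_{i})^{-1}\boldsymbol{w}_{i,T+1}\boldsymbol{w}_{i,T+1}']$, which becomes $O(1)$ only after the factor $T$ in the definition of $\psi_{NT}$, so one must track this scaling carefully and verify that every replacement error (sample versus population $\bar{\boldsymbol{Q}}$ and $\bar{\boldsymbol{q}}$, the $\bar{\boldsymbol{\xi}}_{NT}$ contribution, and the law-of-large-numbers fluctuations) is uniformly of smaller order, namely $O_{p}(N^{-1/2})$.
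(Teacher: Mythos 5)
Your proposal follows essentially the same route as the paper's proof in Appendix Section~\ref{ProofCombined}: expand $N^{-1}\sum_{i}\hat{e}_{i,T+1}\tilde{e}_{i,T+1}$ via (\ref{ehat}), (\ref{etilda}) and (\ref{bpooled}), dispose of the two terms linear in $\varepsilon_{i,T+1}$ as $O_{p}(N^{-1/2})$ (one being $-R_{NT}$), extract $T^{-1}\psi_{NT}$ from the interaction term by splitting $\tilde{\boldsymbol{\theta}}-\boldsymbol{\theta}_{i}$ into its three summands and invoking Lemma~\ref{Lemma_1_VTEX1} plus the cross-sectional LLN, and then substitute into (\ref{wstar}) so the common noise term cancels. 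The argument is correct and matches the paper's, with only an immaterial difference in your quoted rate for $\bar{\boldsymbol{Q}}_{NT}-\bar{\boldsymbol{Q}}_{N}$ (the paper's Lemma gives $O_{p}(N^{-1/2})$, which is all that is needed).
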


For small to moderate values of $T$ and large $N $, we expect
$\omega _{NT}^{\ast }<1$, with a nonzero weight placed on forecasts
based on the pooled estimate.

\subsubsection*{Forecast combinations with individual weights}

\cite{Pesetal2022} show that, under strict exogeneity of the regressors
and uncorrelated heterogeneity, optimal weights can be obtained that are
specific to the individual unit. The combination of individual and pooled
forecast is then
\begin{equation*}
y_{i,T+1}^{\ast }=\omega _{i}
\hat{y}_{i,T+1}+(1-\omega _{i}) \tilde{y}%
_{i,T+1},
\end{equation*}%
where the optimal value of $\omega _{i}$ is given by
%
\begin{equation}
\omega _{i}^{\ast }= \frac{\boldsymbol{w}_{i,T+1}^{\prime }
\boldsymbol{\Omega }_{\eta }\boldsymbol{w}_{i,T+1}}{
\boldsymbol{w}_{i,T+1}^{\prime }\bigl(T^{-1}\sigma
_{i}^{2}\boldsymbol{Q}_{iT}^{-1}+
\boldsymbol{\Omega }_{\eta }\bigr)%
\boldsymbol{w}_{i,T+1}}.
\label{eq:weights_per_i}
\end{equation}%
The weights again depend on the variances and covariances of the underlying
forecast errors. Related to this, \cite{Giaetal2023} develop a random
effects approach for linear panels that similarly combines univariate and
pooled forecasts in a way that minimizes minimax-regret and MSFE.

\subsection{Combining individual and fixed effect forecasts}
\label{com_Ind_FE}

Combination weights can also be determined for the case where the pooled
forecast is replaced with the FE forecasts. In this case, the combined
forecast is given by%
%
\begin{equation}
y_{i,T+1}^{\ast }(\omega _{\text{FE}})=\omega
_{\text{FE}}\hat{y}%
_{i,T+1}+(1-\omega _{\text{FE}})
\hat{y}_{i,T+1,\text{FE}}, \label{eq:combinationFE}
\end{equation}%
yielding the optimal pooled weight
%
\begin{equation}
\omega _{\text{FE},NT}^{\ast }= \frac{N^{-1}\sum
_{i=1}^{N} \bigl( \hat{e}%
_{i,T+1}^{\text{FE}}
\bigr) ^{2}- \Biggl( N^{-1}\sum
_{i=1}^{N}\hat{e}_{i,T+1}^{%
\text{FE}}
\hat{e}_{i,T+1} \Biggr) }{ \Biggl( N^{-1}\sum
_{i=1}^{N}\hat{e}%
_{i,T+1}^{2}
\Biggr) +N^{-1}\sum_{i=1}^{N}
\bigl( \hat{e}_{i,T+1}^{\text{FE}%
} \bigr) ^{2}-2
\Biggl( N^{-1}\sum_{i=1}^{N}
\hat{e}_{i,T+1}^{\text{FE}}\hat{e}%
_{i,T+1}
\Biggr) }. \label{wNT-FE}
\end{equation}%
The expressions for
$N^{-1}\sum_{i=1}^{N}  ( \hat{e}_{i,T+1}^{\text{FE}%
}  ) ^{2}$ and $N^{-1}\sum_{i=1}^{N}\hat{e}_{i,T+1}^{2}$ are given
by (%
\ref{FEmsfe}) and (\ref{MSFEI}), respectively, and the expression for
$%
N^{-1}\sum_{i=1}^{N}\hat{e}_{i,T+1}^{\text{FE}}\hat{e}_{i,T+1}$ can be
similarly obtained. In this case, the shared term
$\sum_{i=1}^{N}(%
\varepsilon _{i,T+1}-\bar{\varepsilon}_{iT})^{2}/N$ cancels out and we
have the result summarized in the following proposition with proofs provided
in Section~\ref{ProofCombinedFE} of the Appendix.

\begin{proposition}
\label{prop:combinedFE}
\begin{enumerate}
\item[(a)] Under Assumptions \ref{ass:1}--\ref{ass:6}, the optimal combination
weight that minimizes the MSFE of the forecast combination in (\ref{eq:combinationFE})
is given by%
%
\begin{equation}
\omega _{\text{FE},NT}^{\ast }= \frac{\Delta _{NT}^{\text{FE}}-T^{-1}
\psi _{NT}^{\text{FE}}- \bigl( c_{NT}^{\text{FE}}-c_{NT,\beta }
\bigr) }{\Delta _{NT}^{\text{FE}}+T^{-1}h_{NT,\beta }-2T^{-1}
\psi _{NT}^{\text{FE}}}%
+O_{p}
\bigl(N^{-1/2}\bigr), \label{wFE}
\end{equation}%
where $\Delta _{NT}^{\text{FE}}$ and $c_{NT}^{\text{FE}}$ are defined in (%
\ref{eq:Delta_FE}) and (\ref{cNT-FE}), respectively,
\begin{eqnarray*}
h_{NT,\beta }&=&N^{-1}\sum_{i=1}^{N}
\mathrm{E} \biggl[ \bar{\bar{\boldsymbol{x}}}%
_{i,T+1}^{\prime }
\boldsymbol{Q}_{iT,\beta }^{-1} \biggl( \frac{
\boldsymbol{X}%
_{i}^{\prime }\boldsymbol{M}_{T}
\boldsymbol{\varepsilon }_{i}\boldsymbol{%
\varepsilon
}_{i}^{\prime }\boldsymbol{M}_{T}
\boldsymbol{X}_{i}}{T} \biggr) \boldsymbol{Q}_{iT,\beta }^{-1}
\bar{\bar{\boldsymbol{x}}}_{i,T+1} \biggr] , \label{eq:hNT_FEapp2}
\\
\psi _{NT}^{\text{FE}} &=&TN^{-1}\sum
_{i=1}^{N}\mathrm{E} \bigl[ (\hat{%
\boldsymbol{\beta}}_{\text{FE}}-\boldsymbol{\beta}_{i}
)^{
\prime }\bar{\bar{%
\boldsymbol{x}}}_{i,T+1}\bar{\bar{
\boldsymbol{x}}}_{i,T+1}^{\prime } \bigr] \bar{
\boldsymbol{Q}}_{N,\beta }^{-1}\bar{\boldsymbol{q}}_{N,
\beta }
\notag
\\
&&{}-TN^{-1}\sum_{i=1}^{N}
\mathrm{E} \bigl[ ( \hat{\boldsymbol{\beta}}_{%
\text{FE}}-\boldsymbol{
\beta}_{i} )^{\prime } \bar{\bar{\boldsymbol{x}}}%
_{i,T+1}
\bar{\bar{\boldsymbol{x}}}_{i,T+1}^{\prime } \boldsymbol{\eta
}%
_{i,\beta } \bigr], \label{FEepsi}
\end{eqnarray*}
and
\begin{equation*}
c_{NT,\beta }=N^{-1}\sum_{i=1}^{N}
\mathrm{E} \bigl[ \bar{\bar{\boldsymbol{x}}}%
_{i,T+1}^{\prime }
\bigl(\boldsymbol{X}_{i}^{\prime }\boldsymbol{M}_{T}
\boldsymbol{%
X}_{i}\bigr)^{-1}
\boldsymbol{X}_{i}^{\prime }\boldsymbol{M}_{T}
\boldsymbol{%
\varepsilon }_{i}\bar{\varepsilon}_{iT}
\bigr] . \label{cNTbeta}
\end{equation*}

\item[(b)] Under uncorrelated heterogeneity, $\psi _{NT}^{\text{FE}}=0$, and
$%
\Delta _{NT}^{\text{FE}}$ and $h_{NT,\beta}$ will be affected accordingly.
\end{enumerate}
\end{proposition}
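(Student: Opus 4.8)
The plan is to start from the MSFE-minimizing weight already displayed in (\ref{wNT-FE}) and reduce its three building blocks---$N^{-1}\sum_i(\hat{e}_{i,T+1}^{\text{FE}})^2$, $N^{-1}\sum_i\hat{e}_{i,T+1}^2$, and $N^{-1}\sum_i\hat{e}_{i,T+1}^{\text{FE}}\hat{e}_{i,T+1}$---to the quantities appearing in (\ref{wFE}). The crucial preliminary step is to rewrite the individual forecast error in ``within'' form. Using the unit-$i$ OLS normal equations (equivalently, Frisch--Waugh--Lovell), the individual slope estimator equals the within estimator $\hat{\boldsymbol{\beta}}_i=(\boldsymbol{X}_i^{\prime}\boldsymbol{M}_T\boldsymbol{X}_i)^{-1}\boldsymbol{X}_i^{\prime}\boldsymbol{M}_T\boldsymbol{y}_i$, so that $\hat{e}_{i,T+1}=\bar{\bar{\varepsilon}}_{i,T+1}-(\hat{\boldsymbol{\beta}}_i-\boldsymbol{\beta}_i)^{\prime}\bar{\bar{\boldsymbol{x}}}_{i,T+1}$, exactly mirroring the FE error (\ref{eiFE}). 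Writing $u_i=(\hat{\boldsymbol{\beta}}_i-\boldsymbol{\beta}_i)^{\prime}\bar{\bar{\boldsymbol{x}}}_{i,T+1}$ and $v_i=(\hat{\boldsymbol{\beta}}_{\text{FE}}-\boldsymbol{\beta}_i)^{\prime}\bar{\bar{\boldsymbol{x}}}_{i,T+1}$, both forecast errors now carry the \emph{same} innovation $\bar{\bar{\varepsilon}}_{i,T+1}$, which is what makes the shared term cancel.

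With $\hat{e}_{i,T+1}=\bar{\bar{\varepsilon}}_{i,T+1}-u_i$ and $\hat{e}_{i,T+1}^{\text{FE}}=\bar{\bar{\varepsilon}}_{i,T+1}-v_i$, I would expand the three averages. The denominator of (\ref{wNT-FE}) collapses to $N^{-1}\sum_i(u_i-v_i)^2$ and the numerator to $N^{-1}\sum_i v_i^2-N^{-1}\sum_i u_iv_i+N^{-1}\sum_i\bar{\bar{\varepsilon}}_{i,T+1}(u_i-v_i)$, because the $\bar{\bar{\varepsilon}}_{i,T+1}^2$ contributions enter with coefficients $1+1-2=0$ and $1-1=0$, respectively; this is the cancellation of the shared term $N^{-1}\sum_i(\varepsilon_{i,T+1}-\bar{\varepsilon}_{iT})^2$ noted in the text. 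I would then identify the resulting cross-moments with the objects in (\ref{wFE}): $N^{-1}\sum_i v_i^2$ and $N^{-1}\sum_i\bar{\bar{\varepsilon}}_{i,T+1}v_i$ are read off as $\Delta_{NT}^{\text{FE}}$ and $c_{NT}^{\text{FE}}$ of (\ref{eq:Delta_FE}) and (\ref{cNT-FE}), exactly as in the derivation of the FE MSFE (\ref{FEmsfe}); $N^{-1}\sum_i u_i^2$ reproduces $T^{-1}h_{NT,\beta}$; $N^{-1}\sum_i\bar{\bar{\varepsilon}}_{i,T+1}u_i$ reproduces $c_{NT,\beta}$; and the new cross term $N^{-1}\sum_i u_iv_i$ reproduces $T^{-1}\psi_{NT}^{\text{FE}}$.

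To evaluate the cross-moments I would substitute the pooled-type decomposition $\hat{\boldsymbol{\beta}}_{\text{FE}}-\boldsymbol{\beta}_i=-\boldsymbol{\eta}_{i,\beta}+\bar{\boldsymbol{Q}}_{NT,\beta}^{-1}\bar{\boldsymbol{q}}_{NT,\beta}+\bar{\boldsymbol{Q}}_{NT,\beta}^{-1}\bar{\boldsymbol{\xi}}_{NT,\beta}$ (the within analogue of (\ref{bpooled})) together with $\hat{\boldsymbol{\beta}}_i-\boldsymbol{\beta}_i=(\boldsymbol{X}_i^{\prime}\boldsymbol{M}_T\boldsymbol{X}_i)^{-1}\boldsymbol{X}_i^{\prime}\boldsymbol{M}_T\boldsymbol{\varepsilon}_i$, take expectations term by term, replace $\bar{\boldsymbol{Q}}_{NT,\beta}$ and $\bar{\boldsymbol{q}}_{NT,\beta}$ by their limits $\bar{\boldsymbol{Q}}_{N,\beta}$ and $\bar{\boldsymbol{q}}_{N,\beta}$, and collect terms. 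In the $\bar{\bar{\varepsilon}}_{i,T+1}$-moments the future innovation $\varepsilon_{i,T+1}$ drops out because $\mathrm{E}(\varepsilon_{i,T+1}\mid\mathcal{I}_{i,T+1})=0$ under Assumption~\ref{ass:weak_exogeneity_a}, leaving only the $-\bar{\varepsilon}_{iT}$ pieces that generate the $c$-terms. Finally I would control the remainders: the $\bar{\boldsymbol{\xi}}_{NT,\beta}$ contributions are $O_p(N^{-1/2})$ (they correlate with unit $i$'s error only through the $i$th summand), and replacing each sample average by its expectation is $O_p(N^{-1/2})$ by the law of large numbers under cross-sectional independence (Assumption~\ref{ass:6}) and the moment bounds of Assumptions~\ref{ass:1}--\ref{ass:4}. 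Part~(b) then follows by setting $\bar{\boldsymbol{q}}_{N,\beta}=\boldsymbol{0}$ under uncorrelated heterogeneity, which annihilates $\psi_{NT}^{\text{FE}}$.

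The main obstacle is the third step: the cross-covariance terms $c_{NT}^{\text{FE}}$, $c_{NT,\beta}$, and $\psi_{NT}^{\text{FE}}$ are pure weak-exogeneity artifacts that vanish under strict exogeneity, so establishing them requires carefully tracking the $O(T^{-1})$ conditional bias of the within estimator and the correlation between the forecast-period regressor $\boldsymbol{x}_{i,T+1}$ and the in-sample shocks $\bar{\varepsilon}_{iT}$. Getting the bookkeeping right---in particular the signs, the distinction between $\bar{\boldsymbol{x}}_{iT}$ and $\bar{\bar{\boldsymbol{x}}}_{i,T+1}$, and verifying that each retained term is genuinely $O(1)$ after multiplication by $T$ while the discarded ones are $O_p(N^{-1/2})$ or smaller---is the delicate part of the argument.
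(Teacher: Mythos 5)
Your proposal is correct and follows essentially the same route as the paper's Appendix proof: rewrite the individual forecast error in within form so both errors share the innovation $\bar{\bar{\varepsilon}}_{i,T+1}$, expand the three sample averages in (\ref{wNT-FE}) so that term cancels, identify the surviving cross-moments with $\Delta _{NT}^{\text{FE}}$, $h_{NT,\beta }$, $\psi _{NT}^{\text{FE}}$, $c_{NT}^{\text{FE}}$, and $c_{NT,\beta }$ via the decomposition $\hat{\boldsymbol{\beta }}_{\text{FE}}-\boldsymbol{\beta }_{i}=-\boldsymbol{\eta }_{i,\beta }+\bar{\boldsymbol{Q}}_{NT,\beta }^{-1}\bar{\boldsymbol{q}}_{NT,\beta }+\bar{\boldsymbol{Q}}_{NT,\beta }^{-1}\bar{\boldsymbol{\xi }}_{NT,\beta }$, and control remainders by Lemma~\ref{Lemma_1_VTEX1} and cross-sectional independence; your $u_i,v_i$ bookkeeping reproduces the signs in (\ref{wFE}) exactly. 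The only small addition worth making is in part (b): killing the second term of $\psi _{NT}^{\text{FE}}$ uses the independence of $\boldsymbol{\eta }_{i,\beta }$ from $(\boldsymbol{\varepsilon }_{i},\boldsymbol{X}_{i})$ (Assumption~\ref{ass:5}), not only $\bar{\boldsymbol{q}}_{N,\beta }=\boldsymbol{0}$.
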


\subsection{Estimation of combination weights}
\label{comW}

Estimates of the weights for the forecast combination in Proposition~\ref{prop:combined}
require estimates of $\Delta _{NT}$, $h_{NT}$, and $\psi _{NT}$. Under
Assumption~\ref{ass:6}, these terms can be estimated by their sample means
with unknown parameters replaced by their estimates. We summarize the estimators
here with details in Appendix~\ref{app:estimationOfWeights}%
. Using (\ref{hNT}) and (\ref{DeltaNT}), the estimators of
$\Delta _{NT}$ and $h_{NT}$ are given by
%
\begin{equation}
\hat{\Delta}_{NT}=N^{-1}\sum
_{i=1}^{N}\boldsymbol{w}_{i,T+1}^{
\prime }
\tilde{%
\boldsymbol{\eta }}_{i}\tilde{\boldsymbol{\eta
}}_{i}^{\prime } \boldsymbol{w}%
_{i,T+1},
\label{eq:Deltahat}
\end{equation}%
where
$\tilde{\boldsymbol{\eta }}_{i}=\tilde{\boldsymbol{\theta }}-%
\boldsymbol{\hat{\theta}}_{i}$, and
%
\begin{equation}
\hat{h}_{NT}=N^{-1}\sum_{i=1}^{N}
\boldsymbol{w}_{i,T+1}^{\prime } \boldsymbol{%
Q}_{iT}^{-1}
\hat{\boldsymbol{H}}_{iT}\boldsymbol{Q}_{iT}^{-1}
\boldsymbol{w}%
_{i,T+1}, \label{eq:hhat}
\end{equation}%
where
$\hat{\boldsymbol{H}}_{iT}=\hat{\sigma}_{i}^{2}T^{-1}\sum_{t=1}^{T}%
\boldsymbol{w}_{it}\boldsymbol{w}_{it}^{\prime }$,
$\hat{\sigma}_{i}^{2}=$
$%
\sum_{t=1}^{T}\hat{\varepsilon}_{it}^{2}/(T-K)$, and
$\hat{\varepsilon}%
_{it}=y_{it}-\hat{\boldsymbol{\theta }}_{i}^{\prime }\boldsymbol{w}_{it}$.
We show in Appendix~\ref{app:estimationOfWeights} that
\begin{eqnarray*}
\hat{\Delta}_{NT}-\Delta _{NT} &=&O_{p}
\bigl( N^{-1/2} \bigr) +O_{p}\bigl(T^{-1}\bigr),
\\
\hat{h}_{NT}-h_{NT} &=&O_{p}
\bigl(N^{-1/2}\bigr)+O_{p} \biggl( \frac{\ln (N)}{
\sqrt{T}}%
 \biggr) .
\end{eqnarray*}%
In the case of strictly exogenous regressors, $\hat{\Delta}_{NT}$ is a
consistent estimator of $\Delta _{NT}$ for fixed $T$ as
$N\rightarrow \infty $.

Consider now $\psi _{NT}$, given by (\ref{epsi_NT}), and recall that
$\psi _{NT}=0$ under uncorrelated heterogeneity. To estimate
$\psi _{NT}$ under correlated heterogeneity, we first note that the approach
of replacing expectations by sample moments and then estimating
$\boldsymbol{\varepsilon}_{i}$ from the OLS residuals,
$\boldsymbol{\hat{\varepsilon}}_{i}=  (\mathbf{y}_{i}-\mathbf{W}_{i}
\hat{\boldsymbol{\theta }}_{i}  ) $ will not work in the case of
$\psi _{NT}$, since
$\mathbf{W}_{i}^{\prime} \hat{\boldsymbol{\varepsilon}}_{i} = \mathbf{W}_{i}^{
\prime }  (\mathbf{y}_{i} -\mathbf{W}_{i}
\hat{\boldsymbol{\theta }}_{i}  ) =\mathbf{0}$ for all $i$. If used
in (\ref{epsi_NT}), this results in $\hat{\psi}_{NT}=0$, which is not a
consistent estimator of $\psi _{NT}$ under correlated heterogeneity. To
overcome this problem, we replace
$\boldsymbol{\varepsilon }_{i}^{\prime} \boldsymbol{W}_{i}(
\boldsymbol{W}_{i}^{\prime }\boldsymbol{W}_{i})^{-1}$ by
$  ( \hat{\boldsymbol{\theta }}_{i}-\boldsymbol{\theta }_{i}
  )^{\prime}$ and note that $\psi _{NT}$ can be written equivalently
as (noting that $\boldsymbol{w}_{i,T+1} $ for $i=1,2,\ldots ,N$ are given)
%
\begin{align}
\psi _{NT} ={}&N^{-1}\sum_{i=1}^{N}
\mathrm{E} \bigl[ T ( \hat{\boldsymbol{%
\theta }}_{i}-
\boldsymbol{\theta }_{i} ) ^{\prime } \bigr]
\mathrm{E}%
 \bigl( \boldsymbol{w}_{i,T+1}\boldsymbol{w}_{i,T+1}^{\prime }
\bigr) \boldsymbol{\bar{Q}}_{N}^{-1}\boldsymbol{
\bar{q}}_{N} \notag
\\
&{} -N^{-1}\sum_{i=1}^{N}
\mathrm{E} \bigl[ T ( \hat{\boldsymbol{\theta }}%
_{i}-
\boldsymbol{\theta }_{i} ) ^{\prime } \bigr] \mathrm{E} \bigl(
\boldsymbol{w}_{i,T+1}\boldsymbol{w}_{i,T+1}^{\prime }
\boldsymbol{\eta }%
_{i} \bigr) .\label{epsi_NT2}
\end{align}
We now employ a half-jackknife estimator of
$\boldsymbol{\theta }_{i}$ (\citet{DhaJoc2015}, \cite{Chuetal2018})
to estimate
$T\mathrm{E}%
  ( \hat{\boldsymbol{\theta }}_{i}-\boldsymbol{\theta }_{i}
  ) $, which is the small sample bias of
$\hat{\boldsymbol{\theta }}_{i}$ in the case of weakly exogenous regressors.
The half-jackknife estimator of $%
\boldsymbol{\theta }_{i}$ is defined by
$\hat{\boldsymbol{\theta }}_{i,\mathit{JK}}=2%
\hat{\boldsymbol{\theta }}_{i}-\frac{1}{2}  (
\hat{\boldsymbol{\theta }}%
_{ia}+\hat{\boldsymbol{\theta }}_{ib}  ) $, where
$\hat{\boldsymbol{%
\theta }}_{ia}$ and $\hat{\boldsymbol{\theta }}_{ib}$ are least squares
estimators of $\boldsymbol{\theta }_{i}$ based on two equal halves of the
sample of size $T_{h}=T/2$ (omitting an observation in the case of uneven
$T$%
), namely
$\hat{\boldsymbol{\theta }}_{ia}=  ( \sum_{t=1}^{T_{h}}%
\boldsymbol{w}_{it}\boldsymbol{w}_{it}^{\prime }  ) ^{-1}\sum_{t=1}^{T_{h}}
\boldsymbol{w}_{it}y_{it}$ and
$\hat{\boldsymbol{%
\theta }}_{ib}=  ( \sum_{t=T_{h}+1}^{T}\boldsymbol{w}_{it}
\boldsymbol{w}%
_{it}^{\prime }  ) ^{-1}\sum_{t=T_{h}+1}^{T}\boldsymbol{w}_{it}y_{it}$.
Then
$\mathrm{E}  [ T  ( \hat{\boldsymbol{\theta }}_{i}-
\boldsymbol{%
\theta }_{i}  )   ] $ can be estimated by
$T  [ \frac{1}{2}  ( \hat{\boldsymbol{\theta }}_{ia}+
\hat{\boldsymbol{\theta }}_{ib}  ) -\hat{%
\boldsymbol{\theta }}_{i}  ] $ and $\psi _{NT}$ by%
%
\begin{eqnarray}
{\hat{\psi}_{NT}} &=& \Biggl[ TN^{-1}\sum
_{i=1}^{N} \biggl[ \frac{1}{2} ( \hat{
\boldsymbol{\theta }}_{ia}+ \hat{\boldsymbol{\theta
}}_{ib} ) -\hat{%
\boldsymbol{\theta }}_{i}
\biggr] ^{\prime }\boldsymbol{w}_{i,T+1} \boldsymbol{%
w}_{i,T+1}^{\prime }
\Biggr] \boldsymbol{\bar{Q}}_{NT}^{-1} \boldsymbol{
\bar{q}%
}_{NT} ( \hat{\boldsymbol{\eta }} )\notag
\\
&&{}-TN^{-1}\sum_{i=1}^{N}
\biggl[ \frac{1}{2} ( \hat{\boldsymbol{\theta }}%
_{ia}+
\hat{\boldsymbol{\theta }}_{ib} ) - \hat{\boldsymbol{\theta
}}_{i}%
 \biggr] ^{\prime }\boldsymbol{w}_{i,T+1}
\boldsymbol{w}_{i,T+1}^{
\prime }\hat{%
\boldsymbol{\eta
}}_{i},\label{eq:psihat}
\end{eqnarray}%
where
$\hat{\boldsymbol{\eta }}_{i}=\hat{\boldsymbol{\theta }}%
_{i}-N^{-1}\sum_{i=1}^{N}\hat{\boldsymbol{\theta }}_{i}$, and
$\boldsymbol{%
\bar{q}}_{NT}  ( \hat{\boldsymbol{\eta }}  ) =(NT)^{-1}
\sum_{i=1}^{N}\boldsymbol{W}_{i}^{\prime }\boldsymbol{W}_{i}
\hat{%
\boldsymbol{\eta }}_{i}$. Consistency of ${\hat{\psi}_{NT}}$ as an estimator
of $\psi _{NT}$ is established as $N,T\rightarrow \infty $, since
$T\mathrm{E%
}  ( \hat{\boldsymbol{\theta }}_{i,\mathit{JK}}-
\boldsymbol{\theta }_{i}  ) =O  ( T^{-1}  ) $.\footnote{Note that ${\hat{\psi}_{NT}-\psi _{NT}}$ depends on
$\mathrm{E}  [   ( \hat{\boldsymbol{\theta }}_{i}-
\boldsymbol{\theta }%
_{i}  ) -  ( \frac{1}{2}  ( \hat{\boldsymbol{\theta }}_{ia}+
\hat{%
\boldsymbol{\theta }}_{ib}  ) -\hat{\boldsymbol{\theta }}_{i}
  )   ] =\mathrm{E}  ( \hat{\boldsymbol{\theta }}_{i,
\mathit{JK}}-\boldsymbol{%
\theta }_{i}  )$.} Thus, to use the half-jackknife method for models
with weakly exogenous regressors we need $T$ large, although it is not
required that $\sqrt{T}/N$ tends to zero, as it is in the case of large
$N$ and $T$ asymptotics.

The components of the weights in Proposition~\ref{prop:combinedFE} that
combine individual and fixed effects forecasts can be estimated in a similar
fashion, with details provided in Appendix~\ref{app:estimationOfWeights}.

\subsection{Empirical Bayes forecasts}

Bayesian panel forecasts are becoming increasingly common in empirical
applications and constitute an alternative approach to the frequentist
forecasts discussed so far. Due to their resemblance to our forecast combination
schemes and their recent popularity (e.g., \cite{Armetal2022} and
\citet{Efr2016}), we focus on empirical Bayes (EB) methods. The EB forecast
uses the estimator of \cite{Hsietal1999} and takes the form
$ \hat{y}_{i,T+1}^{\mathit{EB}}=\hat{\boldsymbol{\theta }}_{i,
\mathit{EB}%
}^{\prime }\boldsymbol{w}_{i,T+1}$, where
%
\begin{equation}
\hat{\boldsymbol{\theta }}_{i,\mathit{EB}}=\bigl(\hat{\sigma}_{i}^{-2}
\boldsymbol{%
W}_{i}^{\prime }\boldsymbol{W}_{i}+
\boldsymbol{\hat{\Omega}}_{\eta }^{-1}\bigr)^{-1}
\bigl( \hat{\sigma}_{i}^{-2}\boldsymbol{W}_{i}^{\prime }
\boldsymbol{y}%
_{i}+\boldsymbol{\hat{\Omega}}_{\eta }^{-1}
\bar{\hat{\boldsymbol{%
\theta }}}\bigr), \label{EB}
\end{equation}%
$%
\bar{\hat{\boldsymbol{\theta }}}=N^{-1}\sum_{i=1}^{N}
\hat{\boldsymbol{\theta }}_{i}$, $  \hat{\sigma}_{i}^{2}=(T-K)^{-1}
\hat{\boldsymbol{\varepsilon }}%
_{i}^{\prime }\hat{\boldsymbol{\varepsilon }}_{i}$,  and
$\boldsymbol{\hat{\Omega}}_{\eta }=\frac{1}{N}\sum_{i=1}^{N}(
\hat{%
\boldsymbol{\theta }}_{i}-\bar{\hat{\boldsymbol{\theta }}})(
\hat{\boldsymbol{%
\theta }}_{i}-\bar{\hat{\boldsymbol{\theta }}})^{\prime }$, where
$\hat{%
\boldsymbol{\varepsilon }}_{i}=\boldsymbol{y}_{i}-\boldsymbol{W}_{i}
\hat{%
\boldsymbol{\theta }}_{i}$, and
$\hat{\boldsymbol{\theta }}_{i}=  ( \boldsymbol{W}_{i}^{\prime }
\boldsymbol{W}_{i}  ) ^{-1}\boldsymbol{W}%
_{i}^{\prime }\boldsymbol{y}_{i}$.\footnote{It is necessary that $N>T$ for $\boldsymbol{\hat{\Omega}}_{\eta }$ to be
positive definite.} $\hat{\boldsymbol{\theta }}_{i,\mathit{EB}}$ can also
be written as a weighted average of
$\hat{\boldsymbol{\theta }}_{i}$, which allows for full heterogeneity,
and the mean group estimator, $\bar{\hat{%
\boldsymbol{\theta }}}$, namely
$\hat{\boldsymbol{\theta }}_{i,\mathit{EB}}=%
\boldsymbol{\mathcal{W}}_{iT}\hat{\boldsymbol{\theta }}_{i}+  (
\boldsymbol{I}_{k}-\boldsymbol{\mathcal{W}}_{iT}  )
\bar{\hat{%
\boldsymbol{\theta }}}$, with the weight matrix
$\boldsymbol{\mathcal{W}}%
_{iT}$ given by%
%
\begin{equation}
\boldsymbol{\mathcal{W}}_{iT}= \bigl( \boldsymbol{I}_{k}+T^{-1}
\hat{\sigma}%
_{i}^{2}\boldsymbol{Q}_{iT }^{-1}
\boldsymbol{\hat{\Omega}}_{\eta }^{-1} \bigr)
^{-1}, \label{eq:empBayweights}
\end{equation}
recalling that
$\boldsymbol{Q}_{iT }=T^{-1}\boldsymbol{W}_{i}^{\prime }%
\boldsymbol{W}_{i}$ is invertible under Assumption~\ref{ass:3}. The weights
on the heterogeneous estimates are larger, the greater the degree of heterogeneity,
as measured by the norm of $\boldsymbol{\hat{\Omega}}_{\eta }$%
, with
$\hat{\boldsymbol{\theta }}_{i,\mathit{EB}}\rightarrow
\hat{%
\boldsymbol{\theta }}_{i}$ as
$ \llVert  \boldsymbol{\hat{\Omega}}_{\eta } \rrVert
\rightarrow \infty $. Also, since
$\hat{\sigma}_{i}^{2}%
\boldsymbol{Q}_{iT }^{-1}\boldsymbol{\hat{\Omega}}_{\eta }^{-1}$ is bounded
in $T$, $\hat{\boldsymbol{\theta }}_{i,\mathit{EB}}$ converges
\textit{%
numerically} to $\hat{\boldsymbol{\theta }}_{i}$, as
$T\rightarrow \infty $. Hence, one would expect the EB estimator to perform
well even when $T$ is relatively small and the degree of heterogeneity
is not too large. For large $T$, EB and individual forecasts coincide and
both methods will work well.

The EB weights do \textit{not} depend on $\boldsymbol{w}_{i,T+1}$ and are
derived assuming uncorrelated heterogeneity and strictly exogenous regressors.
They have the desirable feature of placing more weights on individual estimates
if they are precisely estimated relative to the degree of parameter heterogeneity
measured by $\boldsymbol{\hat{\Omega}}_{\eta }$. The individual optimum
weights in (\ref{eq:weights_per_i}) fall somewhere between the common optimal
weights and the EB weights.\footnote{While the EB estimator in (\ref{EB}) is fully parametric, other studies
pursue a nonparametric approach to the distribution of
$\hat{\boldsymbol{%
\theta }}_{i}$; see, for example, \citet{BroGre2009} and \citet{GuKoe2017}, and more recently, \citet{Liu2023}
 and \cite{Liuetal2023}.} Like
the EB weights, consistent estimation of individual weights require strict
exogeneity and uncorrelated heterogeneity.

The EB weights are comparable to the unit-specific weights given by (\ref{eq:weights_per_i})
and the two sets of weights coincide only when
$\mathbf{w%
}_{i,T+1}$ is an scalar. To see this, note that the estimates of the unit
specific weights can be written as
%
\begin{equation}
\hat{\omega}_{iT}^{\ast }= \biggl[ 1+\hat{
\sigma}_{i}^{2}T^{-1} \frac{
\mathbf{w}%
_{i,T+1}^{\prime }\mathbf{Q}_{iT}^{-1}
\mathbf{w}_{i,T+1}}{\mathbf{w}%
_{i,T+1}^{\prime }
\hat{\boldsymbol{\Omega}}_{\eta }\mathbf{w}_{i,T+1}}
\biggr] ^{-1}, \label{eq:indOptimalWeights}
\end{equation}%
and reduces to the EB weights only when $K=1$, and
$\hat{\omega}_{iT}^{\ast }$ no longer depend on $\mathbf{w}_{i,T+1}$. But
in general the estimates of the unit-specific weights differ from the EB
weights.

An alternative to the EB forecast is a hierarchical Bayesian approach as
proposed by \citet{LinSmi1972} and further explored by \citet{Geletal1996}. 
The full Bayesian treatment would
require choices of the priors of each component, including the parameter
covariance matrix. In the Supplemental Appendix, we provide Monte Carlo
results that shows that the resulting forecast performance is highly sensitive
to the choice of priors.

\section{Monte Carlo experiments}

\label{sec:MC}

We examine the finite-sample performance of the panel forecasting schemes in
the context of a dynamic heterogeneous panel data model using Monte Carlo
experiments.\footnote{%
Further analytical results for a simple panel AR(1) model are provided in
Section~\ref{PanelAR} of the Appendix.} We allow for dynamics, parameter
heterogeneity, and correlations between the regressors and coefficients. The
forecasting methods are: (1) individual estimation which serves as the
benchmark against which other methods are compared, (2) pooled estimation,
(3) random effects, (4) fixed effects, (5) combination of individual and
pooled forecasts using the weights in~(\ref{w*NT}), (6) combination of
individual and FE forecasts using the weights in~(\ref{wFE}), (7) individual
forecast combination weights, and (8) EB forecasts.\footnote{%
Additional results for equal weighted combinations and oracle weights are in
Section~S.4 of the Supplemental Appendix.}

Results do not vary greatly along the $N$ dimension, so we focus on the case
with $N=100$ and provide results for $N=1000$ in the Supplemental Appendix. The $%
T$ dimension of the panel is more important, so we consider three different
values, $T=\{20,50,100\}$. The values of the parameters used in the
simulations are reported in Table~S.1 in Appendix~S.2.

\subsection{Data generating process}
\label{sec:MCDesignARX}

Our DGP augments a panel AR(1) model with an additional regressor,
%
\begin{equation}
y_{it}=\alpha _{i}+\beta _{i}y_{i,t-1}+
\gamma _{i}x_{it}+\varepsilon _{it},
\label{eq:MC1}
\end{equation}%
where $\varepsilon _{it}=\sigma _{i}(z_{it}^{2}-1)/\sqrt{2}$  with
$%
z_{it}\sim \mathit{iid}\mathrm{N}(0,1)$,
$\sigma _{i}^{2}\sim \mathrm{iid}%
   ( 1+\chi _{1}^{2}  ) /2$, and $x_{it}$ is generated as
%
\begin{equation}
x_{it}=\mu _{xi}+\xi _{it},
\label{xit}
\end{equation}%
where
$%
\xi _{it}=\rho _{xi}\xi _{i,t-1}+\sigma _{xi}  ( 1-\rho _{xi}^{2}
  ) ^{1/2}\nu _{it}$, $  \nu _{it}\sim \mathrm{iidN}  ( 0,1
  ) $,  $\mu _{xi}=(z_{i}^{2}-1)/\sqrt{2}$,
$z_{i}\sim \mathrm{iidN}  ( 0,1  ) $, and
$\sigma _{xi}^{2}\sim \mathrm{iid}   ( 1+\chi _{1}^{2}  ) /2$,
for individual units $i=1,2,\ldots ,N$, and observation periods
$t=1,2,\ldots ,T$. The autocorrelation coefficient of $x_{it}$ is
$%
\rho _{xi}\sim \mathrm{iid} \text{Uniform}(0,0.95)$, allowing for a high
degree of dynamic heterogeneity in the regressors.

The coefficients of the lagged dependent variables, $y_{i,t-1}$, are generated
as $ \beta _{i}=\beta _{0}+\eta _{i\beta }$, with
$ \eta _{i\beta }\sim \mathrm{iid}\operatorname{Uniform} (-a_{\beta }/2,a_{\beta }/2)$
and $0\leq a_{\beta }<2(1- \llvert  \beta _{0} \rrvert  )$.

To allow for correlated heterogeneity, we set
%
\begin{equation}
\alpha _{i}=\alpha _{0i}+\phi \mu _{xi}+
\sigma _{\eta }\eta _{i},\quad  \text{and}\quad %
\gamma
_{i}=\gamma _{0i}+\pi \mu _{xi}+\sigma
_{\zeta }\zeta _{i}, \label{aci}
\end{equation}%
where $\eta _{i},\zeta _{i}\sim \mathrm{iidN}(0,1)$ and
$\alpha _{0}=\mathrm{%
E}  ( \alpha _{i}  ) =\alpha _{0i}+\phi \mathrm{E}  (
\mu _{xi}  ) =\alpha _{0i}$. We examine three settings:
\begin{itemize}
\item $\alpha _{0i}=2/3$ if $i\leq N/2$, $\alpha _{0i}=4/3$ if
$i>N/2$, $%
\sigma _{\alpha }^{2}=0.5$, $\gamma _{0i}=0.1$, and
$\sigma _{\gamma }^{2}=a_{\beta }=0$
\item $\alpha _{0i}=2/3$ if $i\leq N/2$, $\alpha _{0i}=4/3$ if
$i>N/2$, $%
\sigma _{\alpha }^{2}=0.5$, $\gamma _{0i}=0.2/3$ if $i\leq N/2$,
$\gamma _{0i}=0.4/3$ if $i>N/2$, $\sigma _{\gamma }^{2}=0.1$, and
$a_{\beta }=0.5$
\item $\alpha _{0i}=2/3$ if $i\leq N/2$, $\alpha _{0i}=4/3$ if
$i>N/2$, $%
\sigma _{\alpha }^{2}=1$, $\gamma _{0i}=0.2/3$ if $i\leq N/2$,
$\gamma _{0i}=0.4/3$ if $i>N/2$, $\sigma _{\gamma }^{2}=0.2$, and
$a_{\beta }=1$
\end{itemize}

\noindent Note that nonzero correlations need not bias the pooled estimates. What
matters for pooled estimates is the correlation between
$y_{i,t-1}^{2}, x_{it}^{2}$ and the individual coefficients.

Using (\ref{xit}) and (\ref{aci}), we have
\begin{align*}
\mathrm{E} \bigl[ x_{it} ( \gamma _{i}-\gamma
_{0} ) \bigr] &=%
\mathrm{E} \bigl[ (\mu _{xi}+
\xi _{it}) ( \pi \mu _{xi}+\sigma _{
\zeta }\zeta
_{i} ) \bigr] =\pi \mathrm{E} \bigl( \mu _{xi}^{2}
\bigr) \neq 0,%
\\
\mathrm{E} \bigl[ x_{it}^{2} ( \gamma _{i}-
\gamma _{0} ) \bigr] &=%
\mathrm{E} \bigl[ (\mu
_{xi}+\xi _{it})^{2} ( \pi \mu
_{xi}+ \sigma _{\zeta }\zeta _{i} ) \bigr] =
\pi \mathrm{E} \bigl( \mu _{xi}^{3} \bigr) .%
\end{align*}Therefore,
$\mathrm{E}  [ x_{i,t-1}^{2}  ( \gamma _{i}-\gamma _{0}
  )   ] =0$ if $\mu _{xi}$ are draws from a symmetric distribution
around $0$. To rule out this possibility, we draw $\mu _{xi}$ from a chi-square
distribution. To control the degree of correlated heterogeneity, we first
note that (taking expectations with respect to both $%
i$ and $t$)
\begin{align*}
\mathrm{E} ( \gamma _{i} ) & =\gamma _{0},\qquad
\operatorname{Var}(\gamma _{i})=\pi ^{2}+\sigma
_{\zeta }^{2},
\\
\mathrm{E} ( x_{it} ) & =\mathrm{E} ( \mu _{xi}+\xi
_{it} ) =0, \qquad  \operatorname{Var} ( x_{it} ) =\mathrm{E} (
x_{it}-\mu _{xi} ) ^{2}=\sigma
_{xi}^{2},
\end{align*}%
and
$\mathrm{E}  [ \operatorname{Var}  ( x_{it}  )   ] =
\mathrm{E}%
  ( 1+\chi _{1}^{2}  ) /2=1$. Also, since $\nu _{it}$ is distributed
independently of $\eta _{j}$ and $\zeta _{j}$ for all $t$, $i$,  and
$j$, $%
\operatorname{Cov}  ( \gamma _{i},x_{it}  ) =\pi $ and
$\operatorname{Corr}  ( \gamma _{i},x_{it}  ) =\pi   ( \sigma _{
\zeta }^{2}+\pi ^{2}  ) ^{-1/2}$. While heterogeneity is generally
correlated in AR panel models (\citet{PesSmi1995}), this setup allows
us to study further the role of correlated heterogeneity by varying the
correlation between the coefficient $%
\gamma _{i}$ and $x_{it}$ as measured by $\rho _{\gamma x}$. To achieve
a given level of $\operatorname{Corr}   ( \gamma _{i},x_{it}  ) =\rho _{
\gamma x}$, we set
%
\begin{equation}
\pi = \frac{\rho _{\gamma x}\sigma _{\zeta }}{ \bigl( 1-\rho
_{\gamma x}^{2} \bigr) ^{1/2}}. \label{eq:pi}
\end{equation}%
Similarly, to achieve $\operatorname{Corr}   ( \alpha _{i},x_{i,t-1}  ) =
\rho _{\alpha x}$, we set
%
\begin{equation}
\phi = \frac{\rho _{\alpha x}\sigma _{\eta }}{ \bigl( 1-\rho
_{\alpha x}^{2} \bigr) ^{1/2}}. \label{eq:phi}
\end{equation}%
Defining
$\sigma _{\gamma }^{2}=\operatorname{Var}(\gamma _{i})=\pi ^{2}+\sigma _{
\zeta }^{2}$, we can use (\ref{eq:pi}) to see that
$\pi =\rho _{\gamma x}\sigma _{\gamma }$. An equivalent result emerges
for $\phi $ where, for
$%
\sigma _{\alpha }^{2}=\operatorname{Var}(\alpha _{i})$, we have
$\phi =\rho _{\alpha x}\sigma _{\alpha }$. We thus use the parameters
$\sigma _{\alpha }^{2}$, $\sigma _{\gamma }^{2}$, and $a_{\beta }$ to vary
the degree of parameter heterogeneity in $\alpha _{i}$,
$\gamma _{i}$, and $\beta _{i}$, respectively.

We set $\xi _{i0}=0$ and initialize $y_{i0}$ as
$y_{i0}\sim \mathrm{iidN}%
  ( \mu _{iy0},\sigma _{iy0}^{2}  ) $ with
$%
\mu _{iy0}=\frac{\alpha _{i}+\gamma _{i}\mu _{xi}}{1-\beta _{i}^{2}}$, $
 \sigma _{iy0}^{2}=
\frac{\gamma _{i}^{2}\sigma _{xi}^{2}+\sigma _{i}^{2}}{%
1-\beta _{i}^{2}}$,  We also experimented with initialization schemes that
started the DGP on values away from the long run equilibrium, which did
not change the results qualitatively.

Since the forecast combinations use
$\boldsymbol{w}%
_{i,T+1}=(1,y_{iT},x_{i,T})^{\prime }$ as an input, in the simulations
we set $\boldsymbol{w}_{i,T+1}$ as
$\boldsymbol{w}_{i,T+1}=  ( 1,E  ( y_{it}  ) +\kappa _{i}
\sqrt{\operatorname{Var}(y_{it})},\mu _{xi}+\kappa _{i}\sigma _{xi}  ) ^{
\prime }$, where $E  ( y_{it}  ) $ and
$%
\operatorname{Var}(y_{it})$ are derived by assuming $y_{it}$ is stationary and
conditional on the model's parameters.\footnote{It is easily established that
$\mathrm{E}  ( y_{it}  ) =
\frac{\alpha _{i}+\beta _{i}\mu _{xi}}{1-\beta _{i}}$ and
$\operatorname{Var}(y_{it})=\frac{%
\sigma _{i}^{2}}{1-\beta _{i}^{2}}+  (
\frac{\gamma _{i}^{2}\sigma _{xi}^{2}}{1-\beta _{i}^{2}}  )
  ( 1+\frac{2\beta _{i}\rho _{xi}}{%
1-\beta _{i}\rho _{xi}}  ) $.}

The panel forecasts are evaluated using the ratio of the average MSFE of
method $j$ (pooled, fixed effects, random effects, empirical Bayes, and
the forecast combinations) measured relative to that of the reference individual
forecasts
\begin{equation*}
\text{rMSFE}_{j}= \frac{\frac{1}{NR}\sum
_{i=1}^{N}\sum_{r=1}^{R}(y_{i,T+1,r}-%
\hat{y}_{i,T+1,j,r})^{2}}{\frac{1}{NR}\sum
_{i=1}^{N}%
\sum
_{r=1}^{R}(y_{i,T+1,r}-
\hat{y}_{i,T+1,b,r})^{2}},
\end{equation*}
where $b$ denotes the benchmark forecast, which is the individual forecast.
Replications are denoted by $r=1,2,\ldots ,R$, where $R=10{,}000$.

\subsection{Simulation results}

Monte Carlo simulation results are reported in Table~\ref{tbl:MC_ARX_N100}.
Our theoretical analysis shows that the term $h_{NT}$ that adversely affects
forecasts from the individual estimates depends on the value of
$ \llVert  \boldsymbol{w}_{i,T+1}-\mathrm{E}[\boldsymbol{w}_ {i,T+1}]
 \rrVert  $, with small values of these deviations leading to better
forecasting performance for the individual estimates. To examine this effect,
we present two sets of conditional forecasting performance results, namely
for $\kappa _{i}=0$, that is, when $\boldsymbol{w}_{i,T+1}$ is set to its
mean
$\mathrm{E}%
(\boldsymbol{w}_{it})=  ( 1,\mathrm{E}  ( y_ {it}  ) ,
\mu _{xi}+\kappa _{i}\sigma _{xi}  ) $ in the top panel and when
$\boldsymbol{%
w}_{i,T+1}$ deviates from its mean by generating forecasts conditional
on
$%
\boldsymbol{w}_{i,T+1}=  ( 1,\mathrm{E}  ( y_ {it}  ) +
\kappa _{i}%
\sqrt{\operatorname{Var}(y_{it})},\mu _{xi}+\kappa _ {i}\sigma _{xi}  )
^{\prime }$ in the bottom panel. We set $\kappa _ {i}=1$ for
$i\leq N/2$, and $\kappa _{i}=-1$, for $i>N/2$.

We vary the parameter that controls the degree of correlated heterogeneity ($
\rho _{\gamma x}$) across three blocks of results and examine different
combinations of the two hyperparameters that determine the degree of heterogeneity,
$%
a_{\beta }$ and $\sigma _{\alpha }^{2}$. Finally, we vary the time-series
dimension ($T$) along the columns.

\begin{sidewaystable}\thisfloatpagestyle{empty}
\caption{Monte Carlo results}
\label{tbl:MC_ARX_N100} 
\hspace{-3em} 
\resizebox{23cm}{!}{
\setlength\tabcolsep{3pt}
\begin{tabular}{lllllllllllllllllllllllllllllll}
\hline\hline
$a_\beta$&$\sigma^2_\alpha$  &&
\multicolumn{3}{c}{Pooled}&&\multicolumn{3}{c}{RE}&&\multicolumn{3}{c}{FE} &&
\multicolumn{3}{c}{Empirical Bayes}&&
\multicolumn{3}{c}{Comb.\ (pool)} &&
\multicolumn{3}{c}{Comb.\ (FE)}&& 
\multicolumn{3}{c}{Comb.\ $\omega_i^*$}\\
\cline{1-2}\cline{4-6}\cline{8-10}\cline{12-14}\cline{16-18}\cline{20-22}\cline{24-26}\cline{28-30}
&$T$&&\multicolumn{1}{c}{20}&\multicolumn{1}{c}{50}&\multicolumn{1}{c}{100}&&\multicolumn{1}{c}{20}&\multicolumn{1}{c}{50}&\multicolumn{1}{c}{100}&&\multicolumn{1}{c}{20}&\multicolumn{1}{c}{50}&\multicolumn{1}{c}{100}&&\multicolumn{1}{c}{20}&\multicolumn{1}{c}{50}&\multicolumn{1}{c}{100}&&\multicolumn{1}{c}{20}&\multicolumn{1}{c}{50}&\multicolumn{1}{c}{100}&&\multicolumn{1}{c}{20}&\multicolumn{1}{c}{50}&\multicolumn{1}{c}{100}&&\multicolumn{1}{c}{20}&\multicolumn{1}{c}{50}&\multicolumn{1}{c}{100}\\
\hline
\rowsep{-0.975}\\
\multicolumn{30}{c}{Conditional on $\kappa_i=0$}\\
\hline
\multicolumn{30}{c}{$\rho_{\gamma x}=0$}\\
0.0&0.5&&0.864&0.985&1.010&&0.911&0.985&0.996&&0.923&0.987&0.997&&0.935&0.989&0.997&&0.913&0.982&0.995&&0.955&0.992&0.998&&0.947&0.994&0.999\\
0.5&0.5&&0.860&0.981&1.006&&0.953&1.004&1.005&&0.978&1.009&1.007&&0.944&0.992&0.998&&0.911&0.981&0.995&&0.979&0.999&1.000&&0.939&0.994&0.999\\
1.0&1.0&&0.819&0.964&0.994&&1.089&1.096&1.061&&1.167&1.119&1.068&&0.937&0.990&0.998&&0.895&0.975&0.992&&1.022&1.006&1.000&&0.900&0.986&0.998%
\rowsep{0.1}\\
\multicolumn{30}{c}{$\rho_{\gamma x}=0.5$}\\
0.0&0.5&&0.862&0.982&1.007&&0.910&0.985&0.996&&0.923&0.987&0.997&&0.937&0.989&0.997&&0.912&0.981&0.995&&0.955&0.992&0.998&&0.950&0.994&0.999\\
0.5&0.5&&0.856&0.977&1.001&&0.950&1.003&1.005&&0.977&1.009&1.007&&0.946&0.993&0.998&&0.910&0.980&0.994&&0.979&0.999&1.000&&0.942&0.994&0.999\\
1.0&1.0&&0.820&0.964&0.994&&1.098&1.103&1.065&&1.174&1.125&1.073&&0.941&0.991&0.998&&0.895&0.975&0.992&&1.024&1.006&1.000&&0.900&0.986&0.998%
\rowsep{0.25}\\
\multicolumn{30}{c}{Conditional on $\kappa_i=\pm 1$}\\
\hline
\multicolumn{30}{c}{$\rho_{\gamma x}=0$}\\
0.0&0.5&&0.744&0.997&1.065&&0.733&0.924&0.971&&0.752&0.928&0.972&&0.804&0.945&0.979&&0.806&0.948&0.985&&0.842&0.951&0.980&&0.800&0.957&0.989\\
0.5&0.5&&0.867&1.163&1.243&&0.910&1.113&1.162&&0.949&1.122&1.164&&0.849&0.969&0.992&&0.832&0.964&0.991&&0.913&0.985&0.996&&0.808&0.965&0.992\\
1.0&1.0&&1.049&1.455&1.572&&1.274&1.504&1.529&&1.388&1.540&1.540&&0.881&0.977&0.995&&0.855&0.975&0.995&&0.983&0.999&1.000&&0.786&0.963&0.993\rowsep{0.1}\\
\multicolumn{30}{c}{$\rho_{\gamma x}=0.5$}\\
0.0&0.5&&0.764&1.023&1.093&&0.731&0.923&0.971&&0.752&0.928&0.972&&0.803&0.945&0.979&&0.809&0.950&0.986&&0.842&0.951&0.980&&0.801&0.958&0.989\\
0.5&0.5&&0.894&1.196&1.278&&0.912&1.120&1.168&&0.955&1.130&1.171&&0.846&0.968&0.991&&0.836&0.966&0.992&&0.914&0.986&0.997&&0.809&0.965&0.992\\
1.0&1.0&&1.068&1.479&1.597&&1.287&1.515&1.535&&1.401&1.552&1.547&&0.878&0.974&0.993&&0.856&0.975&0.995&&0.986&0.999&1.000&&0.785&0.960&0.992\\
\hline\hline
\multicolumn{30}{p{25.5cm}}{\footnotesize{Notes:
The table reports the ratio of average MSFE for a given forecasting method over the average MSFE of the forecasts based on 
individual estimates.
The forecasts are:
`Pooled' based on pooled estimation, `RE' based on the random effects estimation,
`FE' based on the fixed effects estimation, `Empirical Bayes' based on the empirical Bayes estimation,
`Comb.~(pool)' refers to the combination
of forecasts based on individual and pooled estimation, 
`Comb.~(FE)' the combination of forecasts based on
 individual and fixed effects estimation, 
 and `Comb.\ $\omega_i^*$' the combination forecasts using the individual weights of Pesaran et al.~(2022).
  The parameters
$a_\beta$ and $\sigma_\alpha^2$ determine the heterogeneity of the slope coefficient and the intercept.
The results in the upper panel are for $\kappa_i = 0$ where $\bs w_{i,T+1}$ equals the expected values of the regressors.
The results in the lower panel are for $\kappa_i=\pm 1$ where $\bs w_{i,T+1}$ equals the expected values of the regressors plus or minus one standard deviation.
Results are for \textit{PR}$^2$ of approximately 0.6 and  $N=100$.
The DGP is set out in Section~\ref{sec:MCDesignARX}.}}
\end{tabular}
}
\end{sidewaystable}  

With little heterogeneity and a small time-series dimension, $T=20$, consistent
with Propositions \ref{prop:individual} and \ref{prop:pooled}, pooling
yields an MSFE up to 25\% lower than the individual forecast with the gain
being largest when the predictor is far from its mean ($\kappa _{i}=
\pm 1$). However, the advantage of the pooled forecasts over the individual
forecasts vanishes quickly for the two larger values of $T$ and turns to
distinctly worse performance under larger parameter heterogeneity---particularly
when the predictors are away from their means.

The RE estimator produces the most accurate forecasts when parameter heterogeneity
is limited to the intercept ($a_{\beta }=0$, $\sigma _{\alpha }^{2}=0.5$)
and the predictor is far from its mean. When slope coefficients are heterogeneous,
this method yields quite poor forecasting performance that deteriorates
with $T$. Similar findings hold for the forecasts based on the FE method.
Forecast accuracy for both RE and FE methods tend to worsen (relative to the benchmark
forecasts) under correlated heterogeneity.

Regardless of the level of heterogeneity in parameters (whether correlated
or not), the empirical Bayes forecasts perform very well particularly for
the smallest sample size ($T=20$). Unlike forecasts based on the pooled,
RE or FE estimators, the empirical Bayes forecasts have the attractive
feature that they never perform worse, on average, than the benchmark.
These forecasts perform particularly well when the predictor is away from
its mean value.

Among the three forecast combinations, the cross-sectional averaging scheme
that combines the pooled and individual forecasts generally performs better
than the fixed effect combination scheme and also, in some cases, improves
on the EB forecasts. When $\boldsymbol{w}_{i,T+1}$ is far away from its
mean, $T$ is small, and parameter heterogeneity is high, the combination
scheme with individual weights performs particularly well, including relative
to the EB forecast.

\section{Empirical applications}

\label{sec:applications}

We next apply our set of panel forecasting methods to two empirical applications
on house price inflation in U.S. metropolitan areas and inflation in CPI
subindices. These applications represent quite different levels of in-sample
fit: For the CPI data, the pooled $\mathrm{R}^{2}$ ($%
\mathrm{PR}^{2}$) of our models is around 0.2 while for house prices it
exceeds 0.8.

\subsection{Measures of forecasting performance}

Our empirical applications compute the out-of-sample MSFE as
$\mathrm{MSFE}%
_{ij}=(T-T_{1})^{-1}\*\sum_{t=T_{1}}^{T-1}(y_{i,t+1}-\hat{y}_{i,j,t+1})^{2}$,
where $\hat{y}_{i,j,t+1}$ is the forecast of $y_{i,t+1}$ using method
$j$ and information known at time $t$. Each forecast in the test sample,
$\hat{y}%
_{i,j,t+1}$, is generated using a rolling estimation window of observations
$%
t-w+1,t-w,\ldots ,t$, where $w$ is the length of the rolling window, which
we set to $w=60$ in both applications. As in the simulations, we report
the ratio of the average MSFE of method $j$ relative to the average MSFE
for the benchmark forecasts ($b$) from the individual-specific model
$\mathrm{rMSFE}%
_{j}=  ( N^{-1}\sum_{i=1}^{N}\mathrm{MSFE}_{ij}  ) /  ( N^{-1}
\sum_{i=1}^{N}\mathrm{MSFE}_{{ib}}  ) $. We also report the proportion
of units in the cross-section for which each method produces a smaller
MSFE than the benchmark along with the proportion of units in the cross-section
for which each method has the smallest or largest MSFE value.

Similar to the simulation study, we distinguish between forecasts where
the regressors are close to their means and when they are one standard deviation away from their means. Unlike in the Monte Carlo experiments, the parameters are unknown
in the two applications, and we therefore select forecasts based on
$d_{i,T+1}=%
\hat{\boldsymbol{\theta }}_{i}^{\prime }\boldsymbol{w}_{i,T+1}$. Regressors
are said to be in the neighborhood of the mean of $d_{it}$ when
$|d_{i,T+1}-%
\bar{d}_{i}-\kappa _{i}s_{d}|<c\sigma _{d}$, where $\bar{d}_{i}$ is the
mean and $s_{d}$ the standard deviation of $d_{it}$ in the estimation sample,
$%
t=1,2,\ldots ,T$, and $c=0.1$. $\kappa _{i}=0$ then gives the results where
the predictors are close to their mean and $\kappa _{i}=\pm 1$ shows the
results when the predictors are one standard deviation away from the mean.
Additionally, we report results for all forecasts.

We examine the significance of any differences in forecast accuracy using
the \citet{DieMar1995} (DM) test of predictive accuracy both for
the panel as a whole and for the individual series. First, we use the panel
version of the DM test proposed by \cite{Pesetal2013}, which tests the
null that the MSFE generated by the individual forecasts, averaged both
across time and units, is equal in expectation to the equivalent MSFE generated
by the panel models.\footnote{The panel DM test first computes the difference between the cross-sectional
average squared forecast error at a given point in time for the benchmark
versus competing model. It then uses the time series of these average squared
forecast errors to compute Newey--West HAC standard errors that account
for serial dependencies.} Second, we apply the DM test to the $N$ forecasts
for individual units in the sample and report the number of significant
values in either direction and the number of insignificant test statistics.
The tests are set up so that negative values indicate that the panel forecasts
are more accurate than the individual forecasts, while positive values
of the DM tests indicate that the individual forecasts are more accurate.
For simplicity, we report results for all forecasts.

\subsection{U.S.\ house prices}

Our first application uses quarterly data on real house price inflation
in 377 U.S. Metropolitan Statistical Areas (MSAs) from the first quarter
of 1975 to the first quarter of 2023, which we obtain from the Freddie
Mac website.\footnote{For each MSA, house prices are calculated by deflating the Freddie Mac
house price index by the CPI.} Our forecasts target the one-quarter-ahead
MSA-level rate of house price log changes. After accounting for the necessary
presample and the estimation window, the first forecast is for 1991Q2 and
the last for 2023Q1, a total of 128 forecasts per MSA.

Our prediction model for the house price inflation rate in quarter
$t$ for MSA $i$, $y_{it}$, takes the form
%
\begin{equation}
y_{it}=\alpha _{i}+\beta _{i}y_{i,t-1}+
\beta _{i}^{\ast }y_{i,t-1}^{
\ast }+\gamma
_{Ri}\bar{y}_{i,t-1}^{(R)}+\gamma
_{Ci}\bar{y}_{t-1}^{(C)}+%
\varepsilon _{it}, \label{eq:spatial_model}
\end{equation}%
where $i=1,2,\ldots ,N$ denotes individual MSAs and
$t=1,2,\ldots ,T$  refers to the time period,
$y_{it}^{\ast }=\sum_{k=1,k\neq i}^{N}\omega _{ik}^{s}y_{kt}$ is the spatial
effect for a set of spatial weights $\omega _{ik}^{s}$,
$\bar{y}_{it}^{(R)}$ is the average house price inflation in the region
of unit $i$, and $\bar{y}_{t}^{(C)}$ is the countrywide average house price
inflation. The weights, $\omega _{ik}$, measure the spatial effect of house
prices in MSA $k$ on house prices in MSA $i$ and are based on geographic
distance, that is, $\omega _{ik}^{s}=v_{ik}/\sum_{k=1}^{N}v_{ik} $ and
$v_{ik}=1$ if MSAs $  ( i,k  ) $ are at most 100 miles apart and
is zero otherwise. We obtain the weights from the data set of \citet{Yan2021}
and exclude MSAs without neighbors within 100 miles, which leaves 362 MSAs
in our sample.

The top panel in Table~\ref{tbl:applications_msfe} reports the results.
The column labeled ``all'' shows results
averaged across the full test sample, while columns labeled
$\kappa _{i}=0$ and $\kappa _{i}=\pm 1$ show results for subsamples in
which the predictor vector is close to the mean and one standard deviation
away from the mean, respectively. In the first three columns, the first
row shows the cross-sectional average MSFE value for the forecasts based
on individual estimates. Subsequent rows report ratios of the mean of the
individual MSFE for the respective methods relative to the benchmark forecasts.
Values below unity show that the ratio of average MSFE performance (across
MSAs) is better for the method listed in the row than for the benchmark
while values above unity indicate the opposite. The next three columns
headed ``freq. beating benchmark'' report the proportion of MSAs for which the respective
methods have a smaller MSFE than the benchmark, while the columns headed
``freq. smallest MSFE'' and ``freq. largest MSFE'' show the proportion
of MSAs for which the respective methods have the smallest or largest MSFE
among all forecasting methods.

\begin{sidewaystable}\thisfloatpagestyle{empty}
\caption{Results for the applications}
\label{tbl:applications_msfe}
\centering
{\footnotesize
\hspace*{-1.5cm}
\begin{tabular}{llllllllllllllll}
\hline\hline
 & \multicolumn{3}{l}{Ratio of} && \multicolumn{3}{l}{Freq.\ beating} &&\multicolumn{3}{l}{Freq.\ smallest}&&\multicolumn{3}{l}{Freq.\ largest}\\
 & \multicolumn{3}{l}{ave.~MSFE} &&\multicolumn{3}{l}{benchmark} &&\multicolumn{3}{l}{MSFE}&&\multicolumn{3}{l}{MSFE}\\
\cline{2-4}\cline{6-8}\cline{10-12}\cline{14-16}
Observations             & all & $\kappa_i=0$ & $\kappa_i=\pm 1$ && all & $\kappa_i=0$ & $\kappa_i=\pm 1$&& all & $\kappa_i=0$ & $\kappa_i=\pm 1$ && all& $\kappa_i=0$ & $\kappa_i=\pm 1$ \\
\hline
\rowsep{-.5}\\
\multicolumn{16}{l}{House price inflation forecasts}\\
\hline
 Individual                 & 2.822&2.520 & 3.542 && --   & --   & --    && 0.008&0.273 & 0.146 && 0.569 &0.282 & 0.420\rowsep{.2}\\
 Pooled                     & 0.920&1.162 & 0.947 && 0.613&0.381 & 0.536 && 0.116&0.171 & 0.249 && 0.157 &0.188 & 0.160\\
 RE                         & 0.924&1.166 & 0.960 && 0.619&0.376 & 0.528 && 0.108&0.022 & 0.047 && 0.003 &0.019 & 0.008\\
 FE                         & 0.936&1.186 & 0.980 && 0.591&0.381 & 0.517 && 0.055&0.108 & 0.105 && 0.251 &0.376 & 0.296\\
 Emp.Bayes                  & 0.901&0.955 & 0.881 && 0.942&0.519 & 0.652 && 0.185&0.157 & 0.127 && 0.003 &0.064 & 0.052\\
 Comb.\ (pool)              & 0.920&0.961 & 0.932 && 0.939&0.522 & 0.688 && 0.157&0.077 & 0.099 && 0.000 &0.011 & 0.017\\
 Comb.\ (FE)                & 0.937&0.977 & 0.940 && 0.917&0.494 & 0.677 && 0.019&0.072 & 0.069 && 0.011 &0.039 & 0.044\\
 Comb.\ ($\omega^*_i$)      & 0.921&0.957 & 0.909 && 0.936&0.541 & 0.713 && 0.044&0.028 & 0.052 && 0.006 &0.003 & 0.006\rowsep{.5}\\
\multicolumn{16}{l}{CPI inflation forecasts}\\
\hline
 Individual                 &15.501&10.451 &11.295&& --   & --   &  --   && 0.005&0.134 & 0.070 && 0.439&0.214 & 0.316\rowsep{.2}\\
 Pooled                     & 0.878& 1.013& 0.971 && 0.444&0.374 & 0.417 && 0.203&0.118 & 0.112 && 0.396&0.406 & 0.380\\
 RE                         & 0.880& 1.001& 0.957 && 0.508&0.390 & 0.401 && 0.016&0.070 & 0.064 && 0.000&0.037 & 0.032\\
 FE                         & 0.883& 0.992& 0.959 && 0.508&0.401 & 0.401 && 0.000&0.086 & 0.102 && 0.166&0.230 & 0.225\\
 Emp.Bayes                  & 0.892& 0.991& 0.926 && 0.984&0.652 & 0.818 && 0.390&0.278 & 0.278 && 0.000&0.070 & 0.011\\
 Comb.\ (pool)              & 0.930& 0.987& 0.953 && 0.733&0.481 & 0.572 && 0.128&0.091 & 0.123 && 0.000&0.027 & 0.016\\
 Comb.\ (FE)                & 0.935& 0.980& 0.967 && 0.791&0.524 & 0.583 && 0.053&0.064 & 0.070 && 0.000&0.016 & 0.021\\
 Comb.\ ($\omega^*_i$)      & 0.897& 0.972& 0.931 && 0.973&0.695 & 0.813 && 0.203&0.160 & 0.182 && 0.000&0.000 & 0.000\\
 \hline\hline
\multicolumn{16}{p{19.3cm}}{\footnotesize{
 results for the house price application and the bottom panel reports the
 results for the CPI subindices application. The first three columns report
 the ratio of average MSFE of the respective method in the row relative to
 that of the individual forecast. The exception is the individual forecast,
 which reports the average MSFE (times $10^5$ in the case of CPI). The second
 three columns report the proportion of cross-section units for which the
 respective method in the rows have a lower MSFE than the individual
 forecast. The third three columns report the proportion of cross-section
 units for which the respective method in the row has the lowest MSFE. The
 last three columns report the proportion of units for which the respective
 method in the row has the highest MSFE. In each block, the first column
 averages over all forecasts, the second over the forecasts for which $d_
 {i,T+1}=\hat{\boldsymbol{\theta }}_{i}^{\prime }\boldsymbol{w}_{i,T+1}$ is
 close to its mean in the estimation sample. The third column averages over
 the forecast for which $d_{i,T+1}$ is close to plus or minus one standard
 deviation from its mean in the estimation sample. The methods in the rows
 are listed in the footnote of Table~\ref{tbl:MC_ARX_N100}.}}
\end{tabular}
}
\end{sidewaystable}


Across the full sample, the average MSFE ratio below one for the pooled,
RE, and FE forecasts. However, these methods do notably worse than the
forecasts based on individual estimates when the predictors are close to
their mean ($%
\kappa _{i}=0$). Empirical Bayes forecast produce the best overall MSFE
performance, reducing the MSFE of the benchmark by 10\%, followed by reductions
of 6--8\% among the three forecast combination schemes. The EB forecasts
perform particularly well when the predictors are far away from their mean.

While the proportional reductions in MSFE ratios may not seem very large,
they translate into very high frequencies of beating the benchmark. The
EB forecasts produce lower MSFE values than the benchmark for 94\% of the
housing price series followed by 92--94\% for the forecast combinations
but only 59--62\% for the pooled, RE, and FE forecasts.

Turning to evidence of individual forecasts being ``best'' or ``worst,'' for the full test sample
the benchmark forecasts only produce the smallest MSFE for 1\% of the variables
versus 19\% for the EB and 16\% for pooled forecast combination schemes.
Using this metric, again the benchmark forecasts perform much better when
the predictors are close to their sample mean for which they are most accurate
for 27\% of the MSAs versus 16\% and 8\% for the EB and pooled combinations,
respectively. Conversely, forecasts based on individual estimates are worst
overall for 57\% of the variables versus 1\% or less for the EB and forecast
combination schemes.

These results show that the EB and combination approaches offer the attractive
feature of not only improving on the MSFE values of the baseline
``on average'' but, equally importantly,
rarely producing markedly worse forecasts than the baseline and often generating
substantially better results. Interestingly, the risk of producing the
highest MSFE value is notably lower for the pooled combination and individual
weighted combination than for the EB forecasts when the predictors are
close to their mean.\footnote{Equal-weighted combinations also performs quite well in both of the empirical
applications, which is a known feature in the forecast combination literature.}

Figure~\ref{fig:densities} summarizes our findings visually through density
plots fitted to the cross-sectional distribution of MSFE ratios for our
forecasting methods.\footnote{To reduce the number of lines, we do not plot the densities for the FE
and RE approaches, which are very similar to those from pooling.} MSFE
ratios have a widely dispersed, right-skewed distribution for the pooled
forecasts compared to the Bayesian and combination approaches whose distributions
are far more peaked and centered just below unity. This feature is highly
undesirable as it raises the likelihood of very poor forecasts for an individual
housing price series compared with that of the Bayesian and combination
approaches.\footnote{The impressive performance of the EB approach for the tail groups is consistent
with \citet{Efr2011}.}

\begin{figure}[tbp]
\caption{Distributions of ratios of MSFEs}
\label{fig:densities}\centering
\hspace*{1.5cm}{\footnotesize {House price forecasts \hspace{3cm} CPI
subindices forecasts}\newline
\hspace*{.3cm} 
\includegraphics[scale=.4,trim={2.1cm 7.5cm 2.1cm
7.5cm},clip]{./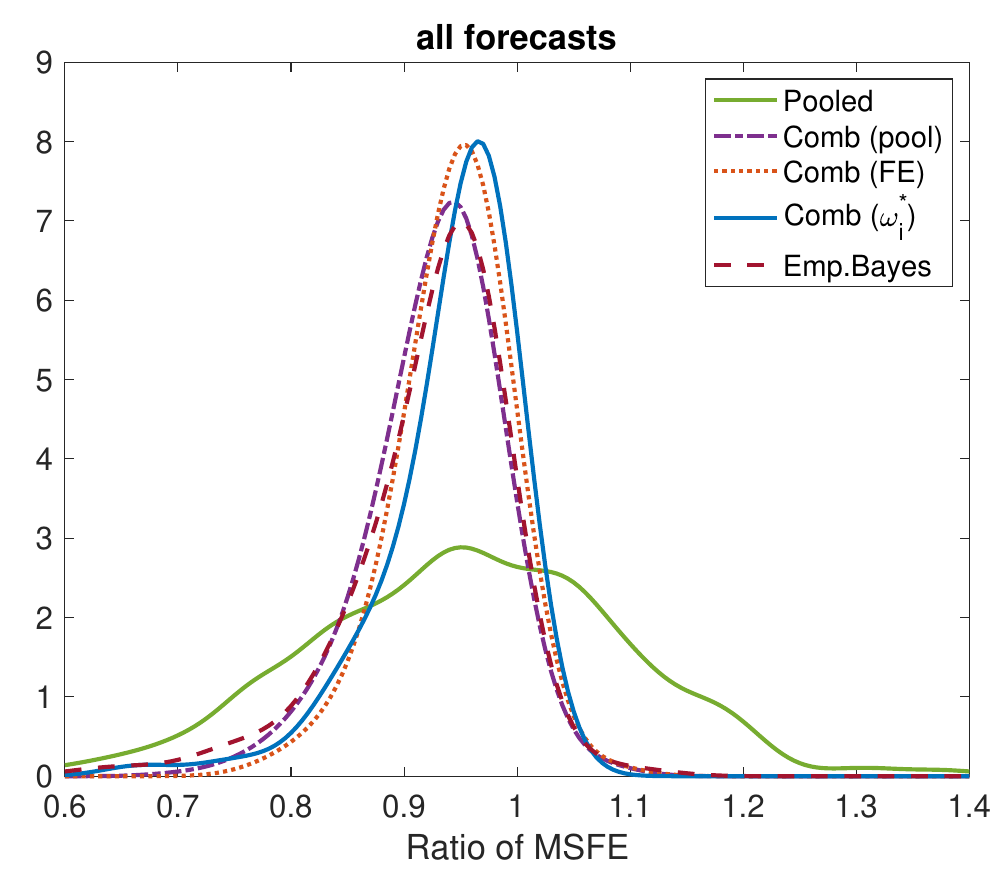}%
\includegraphics[scale=.4,trim={2.1cm 7.5cm 2.1cm
7.5cm},clip]{./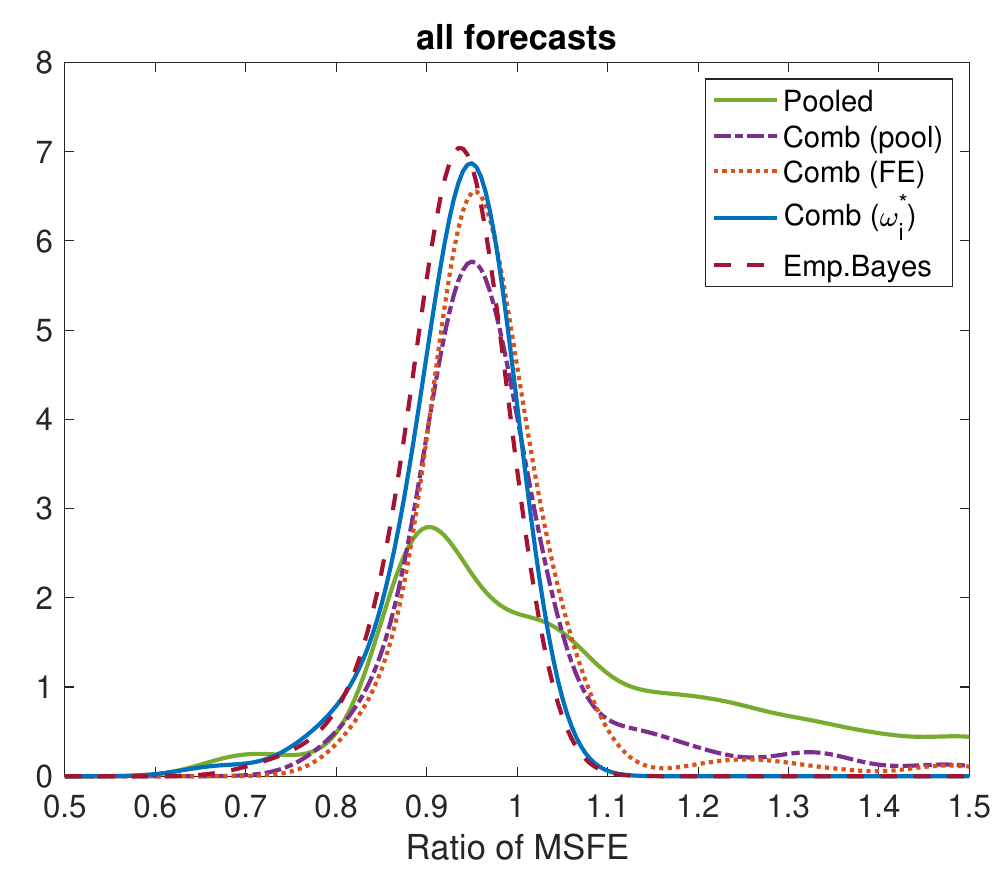}\newline
\hspace*{.3cm} 
\includegraphics[scale=.4,trim={2.1cm
7.5cm 2.1cm 7.5cm},clip]{./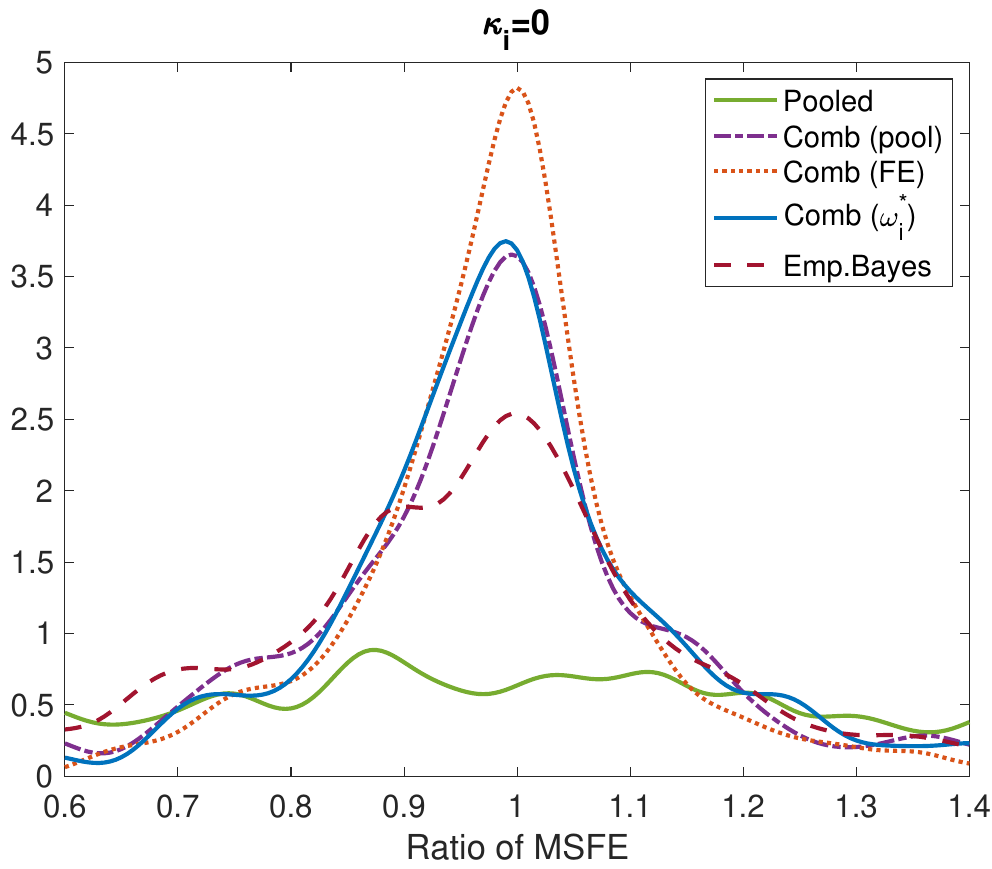}%
\includegraphics[scale=.4,trim={2.1cm 7.5cm 2.1cm
7.5cm},clip]{./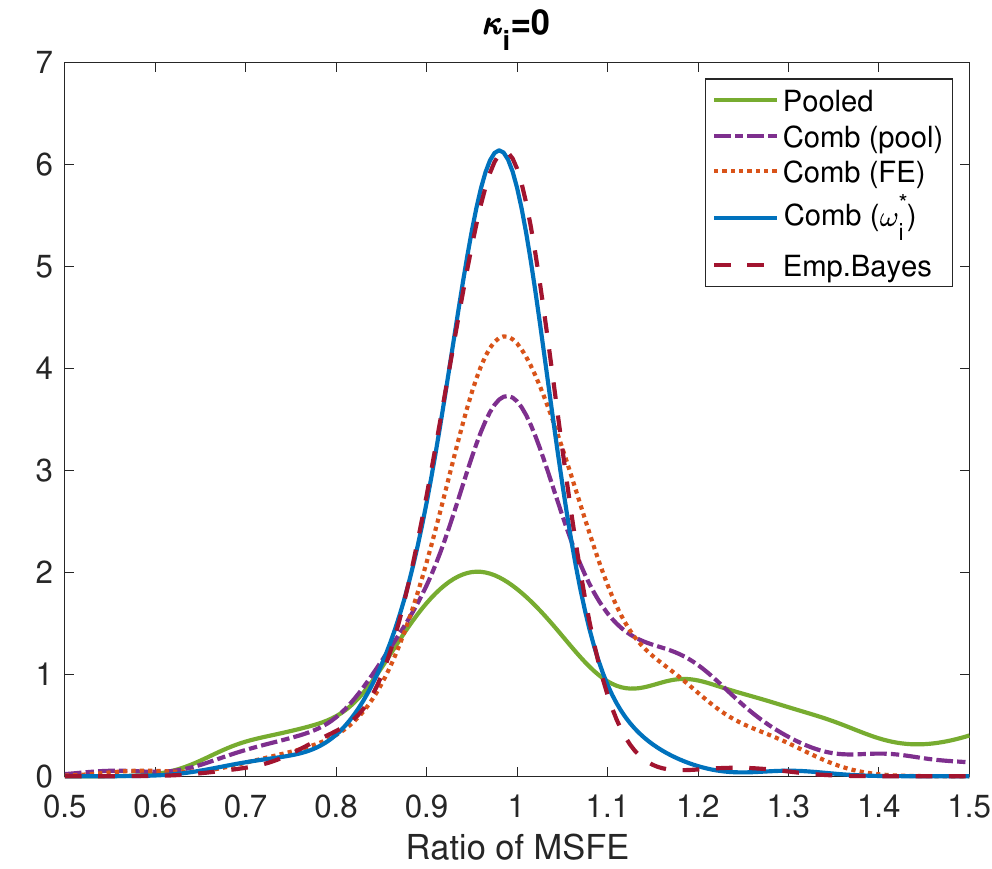}\newline
\hspace*{.3cm} 
\includegraphics[scale=.4,trim={2.1cm 7.2cm 2.1cm
7.5cm},clip]{./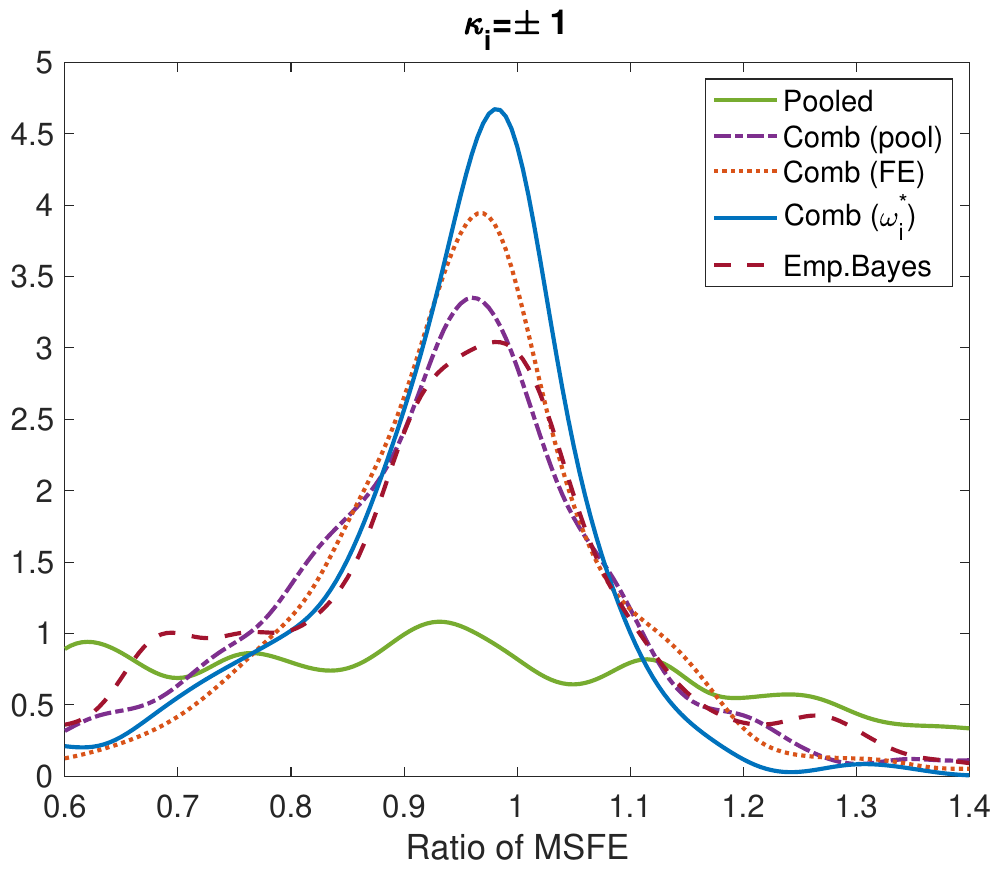}%
\includegraphics[scale=.4,trim={2.1cm 7.2cm 2.1cm
7.5cm},clip]{./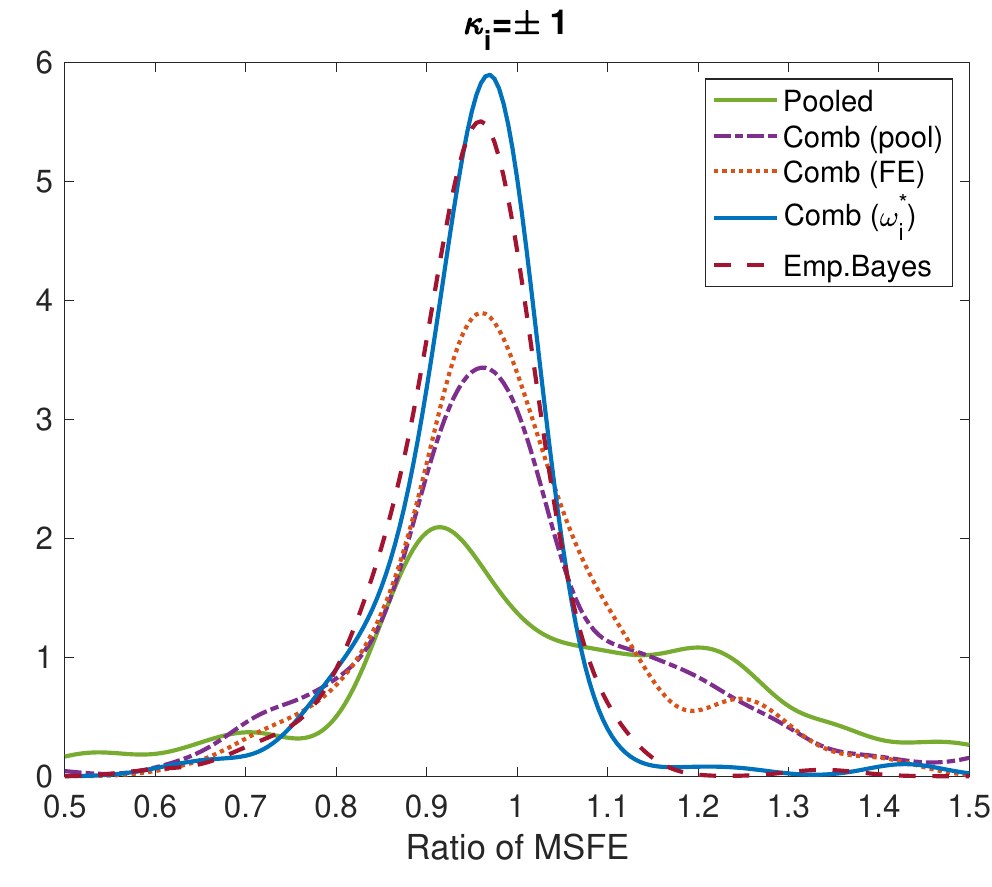}\newline
} 
{\footnotesize \vspace{1em} 
\parbox{13cm}{\footnotesize{Notes:
The graphs show density plots of the ratios of MSFEs for the house price
application in the left column and those for the CPI subindices application
in the right column. In the first row are the density plots for the MSFEs
from all forecasts, in the second row for the forecasts for which 
$d_{i,T+1} = \hat{\boldsymbol{\theta}}_{i}^{\prime}\boldsymbol{w}_{i,T+1}$ 
is close to its mean in the estimation sample, and in
the third row for the forecast for which $d_{i,T+1}$ is close to plus or minus one
standard deviation from its mean in the estimation sample. The
density estimates use a normal kernel with a bandwidth 0.04. The forecasting
methods are listed in the footnote of Tables~\ref{tbl:MC_ARX_N100}.
}} }
\end{figure}

The first and second rows of Table~\ref{tbl:DM} reports panel DM test statistics
and the number of cross-sectional units with a DM test below
$%
-1.96$ (panel forecasts are significantly more accurate) or above 1.96
(individual-specific forecasts are significantly more accurate), respectively,
for each application.





\begin{table}
\caption{Diebold-Mariano test statistics for equal predictive accuracy}
\label{tbl:DM}\centering
\resizebox{17cm}{!}{
\begin{tabular}{lrrrrrrr}
\hline\hline
 & Pooled & RE & FE  & Emp.Bay.& Comb(pool)& Comb(FE) & Comb($\omega^*_i$) \\
 \hline
\multicolumn{8}{l}{House Prices: all forecasts}\\
\hline
Panel DM   &  $-$9.45 &  $-$9.11  & $-$7.57& $-$24.63 & $-$22.93 & $-$21.19 &$-$27.59\\
$\text{DM}<-1.96$/$\text{DM}>1.96$  &  60/6      &  62/6    &  57/8 & 209/2&  189/0   &  169/0 & 240/0
\rowsep{0.25}\\
\multicolumn{8}{l}{CPI: all forecasts}\\
\hline
Panel DM                           & $-$7.95& $-$7.78& $-$7.56& $-$11.25&$-$11.67& $-$10.59 & $-$11.48\\
$\text{DM}<-1.96$/$\text{DM}>1.96$ & 35/60 & 33/45 & 32/42 & 134/0 & 56/23 & 50/10 & 137/0\\
\hline\hline
\multicolumn{8}{p{17.3cm}}{\footnotesize{Notes:
The row ``Panel DM'' reports the results of the panel version of the
Diebold-Mariano test of Pesaran et al.~(2013). The second row report unit by
unit Diebold-Mariano test results: ``$\text{DM}<-1.96$'' reports the number of units
with a DM test statistic smaller than $-1.96$ and ``$\text{DM}>1.96$'' shows the
number of units whose test statistic exceeds 1.96. The remaining units have
insignificant DM test statistics. In total the house prices panel consist of
362 units and the CPI panel of 187 units. Each test is for the null
hypothesis that the forecasting method in the columns has equal forecast
accuracy as the forecasts based on individual estimates. The forecasting
methods are listed in the footnote of Table~\ref{tbl:MC_ARX_N100}.
}}
\end{tabular}}
\end{table}


The panel DM tests show that the EB and combination forecasts are significantly
more accurate than the individual forecasts ``on average''
as well as for a large portion of the individual series (between 169 and
240 MSAs), while the opposite only happens for two individual MSAs in the
case of the EB forecasts. Pooled, RE and FE panel forecasts are also significantly
more accurate than the individual forecasts on average as well as for between
57 and 62 of the individual MSAs and significantly less accurate for very
few MSAs.


\subsection{CPI inflation of sub-indices}

Our second application covers inflation rates for up to 187 subindices
of the U.S. consumer price index (CPI) obtained from the FRED database.
The data is measured at the monthly frequency and spans the period from
January 1967 to December 2022. Again, we use rolling estimation windows
with 60 observations and require each estimation sample to be balanced,
excluding individual series without a complete set of observations in a
given window. After accounting for the necessary pre-samples, we generate
up to 599 forecasts for each series, with the first forecast computed for
February 1973.

We consider an autoregressive forecasting specification with lags 1, 2,
and 12 augmented with lagged values of the first principal component of
the data, the default yield and term spread.

The bottom panel of Table~\ref{tbl:applications_msfe} shows that, for the
full test sample, all forecasting methods produce lower MSFE values than
the benchmark. The pooled, RE, FE, and EB forecasts reduce the average
MSFE of the benchmark by around 12\%, while the forecast combination methods
reduce it by 7--10\%. Interestingly, when the predictors are close to their
sample mean, the lowest MSFE ratios are produced by the three forecast
combination methods, while conversely the EB scheme performs best when
the predictors are further removed from their mean.

The EB forecasting scheme performs particularly well overall, beating the
benchmark model's accuracy for 98\% of the variables followed by 97\% for
the individual weights, 73--79\% for the pooled and FE combinations and
around 50\% for the RE and FE schemes. As in the first application, these
percentages are notably lower for predictors close to their mean and higher
further away.

The EB forecasts also produce by far the highest frequency with the smallest
MSFE values overall (39\%) followed by 20\% for the pooled and individual
forecast combination scheme. This is matched by very low probabilities
of producing the worst forecast, which never occurs in our sample for the
EB method or any of the three forecast combination schemes but is far more
likely to occur for the benchmark (43.9\%) and pooled forecasts (39.6\%).

Our evidence is summarized by the probability density plots for the MSFE ratios
in the right panels of Figure~\ref{fig:densities}. The figure clearly highlights
the pronounced dispersion and thick right tails of the MSFE-ratio distribution
for the pooled forecasts. The distributions of MSFE ratios of the EB and
combination approaches are far more concentrated and less asymmetrical.
For values of the predictors farther away from the mean, the tails of the
densities are somewhat thicker, with the EB approach standing out as having
the thinnest right tail, and hence, the lowest probability of generating
forecasts less accurate than those from the individual-specific benchmark.

Turning to the DM test results for the CPI inflation data in Table~\ref{tbl:DM}, all panel models generate significantly negative DM panel
test statistics and so their associated forecasts are significantly more
accurate, on average, than the individual forecasts. The pooled, RE, and
FE models perform somewhat worse in this application, as the number of
individual CPI series for which their forecasts are significantly more
accurate than the individual-specific forecasts is smaller than those for
which the opposite holds. Conversely, the EB and combination forecasts
continue to be significantly more accurate than the benchmark forecasts
for between 50 and 137 of the individual CPI series and are only significantly
less accurate for between zero and 23 series. The EB and individual combination
approaches perform particularly well in this application.

\section{Conclusion}

We provide a comprehensive examination of the out-of-sample predictive
accuracy of a large set of novel and existing panel forecasting methods,
including individual estimation, pooled estimation, random effects, fixed
effects, empirical Bayes, and forecast combinations.

Our main findings can be summarized in three points. First, we find that
many panel forecasting approaches perform systematically better than forecasts
based on individual estimates. For panels with a small or medium-sized
time-series dimension $T$---a setting relevant to many empirical applications
in economics---our Monte Carlo simulations and empirical applications demonstrate
sizeable gains both on average and for the majority of individual units
from exploiting panel information.

Second, our analytical results and Monte Carlo simulations show that one
should not expect a single forecasting approach to be uniformly dominant
across applications that differ in terms of the cross-sectional and time-series
dimensions, strength of predictive power, and degree of heterogeneity in
intercept and slope coefficients along with how correlated this heterogeneity
is.

Forecasts based on pooled estimates are most accurate only in situations
with little or no parameter heterogeneity and a small $T$ dimension, while
forecasts based on FE and RE estimates perform relatively well mainly when
heterogeneity is confined to model intercepts and $T$ is small. Neither
of these approaches perform well in settings with high levels of heterogeneity
where individual-specific forecasts tend to perform better, particularly
if $%
T$ is relatively large. By overweighting forecasts that perform well and
underweighting forecasts that perform poorly, forecast combination and
empirical Bayes methods manage to produce the most accurate forecasts across
a broad range of settings.

Third, the panel forecasting methods differ in terms of their ability to
reduce the probability of generating very poor forecasts for individual
units in a cross-section. While the individual, pooled, random and fixed
effect estimation methods perform poorly in some of the simulations and
empirical applications, the forecast combination and empirical Bayes methods
rarely generate the least accurate forecasts for individual units and retain
some probability of being the best forecasting method. These panel forecasting
approaches therefore come out on top of our analysis.

In a nutshell, our simulations and empirical applications suggest that
forecast combinations and Bayesian panel methods offer insurance against
poor performance. Compared to the alternative forecasting methods we consider,
this better ``risk-return'' trade-off makes the combination and Bayes methods attractive in forecast
applications with panel data.

%
\newpage

\appendix

\section*{Mathematical appendix\label{A}}

\bigskip

\hrule

\bigskip

\section{Lemmas}

\begin{lemma}
\label{Lemma_1_VTEX1}%
Suppose that Assumptions \ref{ass:3b} and \ref{ass:6} hold, then for a
fixed $T>T_{0}$ we have
%
\begin{equation}
\boldsymbol{\bar{Q}}_{NT}-\mathrm{E} ( \boldsymbol{
\bar{Q}}_{NT} ) =O_{p}\bigl(N^{-1/2}
\bigr),\quad  \text{and}\quad \boldsymbol{\bar{q}}_{NT}- \mathrm{E} (
\boldsymbol{\bar{q}}_{NT} ) =O_{p} \bigl(
N^{-1/2} \bigr) , \label{Qqnorms}
\end{equation}%
and%
%
\begin{equation}
\boldsymbol{\bar{Q}}_{NT}^{-1}-\mathrm{E} ( \boldsymbol{
\bar{Q}}%
_{NT} ) ^{-1}=O_{p}
\bigl(N^{-1/2}\bigr), \label{Qbarinv}
\end{equation}%
where
$\boldsymbol{\bar{Q}}_{NT}=N^{-1}\sum_{i=1}^{N}\boldsymbol{Q}_{iT}$,
$%
\boldsymbol{\bar{q}}_{NT}=N^{-1}\sum_{i=1}^{N}\boldsymbol{q}_{iT}$,
$%
\boldsymbol{Q}_{iT}=T^{-1}\sum_{t=1}^{T}\boldsymbol{w}_{it}
\boldsymbol{w}%
_{it}^{\prime }$, and
$\boldsymbol{q}_{iT}=T^{-1}\sum_{t=1}^{T}\boldsymbol{w}%
_{it}\boldsymbol{w}_{it}^{\prime }\boldsymbol{\eta }_{i}$. Further, under
Assumptions \ref{ass:3} and \ref{ass:4},%
%
\begin{equation}
\mathrm{E} ( \boldsymbol{\bar{Q}}_{NT} ) = \boldsymbol{
\bar{Q}}_{N}%
, \quad \text{and}\quad \mathrm{E} ( \boldsymbol{
\bar{q}}_{NT} ) = \boldsymbol{%
\bar{q}}_{N},
\label{Qqstat}
\end{equation}%
where
$\boldsymbol{\bar{Q}}_{N}=N^{-1}\sum_{i=1}^{N}\boldsymbol{Q}_{i}$,
$%
\boldsymbol{\bar{q}}_{N}=N^{-1}\sum_{i=1}^{N}\boldsymbol{q}_{i}$,
$%
\boldsymbol{Q}_{i}$ $\boldsymbol{=}$ E$  ( \boldsymbol{w}_{it}
\boldsymbol{%
w}_{it}^{\prime }  ) $, and $\boldsymbol{q}_{i}=\mathrm{E}  (
\boldsymbol{w}%
_{it}\boldsymbol{w}_{it}^{\prime }\boldsymbol{\eta }_{i}  ) $.
\end{lemma}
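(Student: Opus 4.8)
The plan is to establish the three displays in order of logical dependence: the mean identities in \eqref{Qqstat} first, then the $O_p(N^{-1/2})$ concentration of the level matrices in \eqref{Qqnorms}, and finally the concentration of the inverse in \eqref{Qbarinv}, which builds on both. For \eqref{Qqstat} I would take expectations term by term. Since $\boldsymbol{Q}_{iT}=T^{-1}\sum_{t=1}^{T}\boldsymbol{w}_{it}\boldsymbol{w}_{it}^{\prime}$ and the covariance stationarity in Assumption~\ref{ass:3}(a) gives $\mathrm{E}(\boldsymbol{w}_{it}\boldsymbol{w}_{it}^{\prime})=\boldsymbol{Q}_i$ for every $t$, linearity of expectation yields $\mathrm{E}(\boldsymbol{Q}_{iT})=\boldsymbol{Q}_i$ and hence $\mathrm{E}(\boldsymbol{\bar{Q}}_{NT})=N^{-1}\sum_{i=1}^N\boldsymbol{Q}_i=\boldsymbol{\bar{Q}}_N$. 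The identical computation applied to $\boldsymbol{q}_{iT}=T^{-1}\sum_{t=1}^{T}\boldsymbol{w}_{it}\boldsymbol{w}_{it}^{\prime}\boldsymbol{\eta}_i$, using $\mathrm{E}(\boldsymbol{w}_{it}\boldsymbol{w}_{it}^{\prime}\boldsymbol{\eta}_i)=\boldsymbol{q}_i$ (constant in $t$) from Assumption~\ref{ass:4}(b), gives $\mathrm{E}(\boldsymbol{q}_{iT})=\boldsymbol{q}_i$ and $\mathrm{E}(\boldsymbol{\bar{q}}_{NT})=\boldsymbol{\bar{q}}_N$.

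For \eqref{Qqnorms} I would write the centered object as the cross-sectional average $\boldsymbol{\bar{Q}}_{NT}-\mathrm{E}(\boldsymbol{\bar{Q}}_{NT})=N^{-1}\sum_{i=1}^N(\boldsymbol{Q}_{iT}-\mathrm{E}\,\boldsymbol{Q}_{iT})$ of mean-zero summands that are independent across $i$ by Assumption~\ref{ass:6}. Working entry by entry, the fourth-moment bound $\sup_{i,t}\mathrm{E}\Vert\boldsymbol{w}_{it}\Vert^4<C$ of Assumption~\ref{ass:3}(a) gives $\sup_i\mathrm{E}\Vert\boldsymbol{Q}_{iT}\Vert^2<C$, so each entry of the centered average has variance $O(N^{-1})$; Chebyshev's inequality then gives $O_p(N^{-1/2})$ entrywise, and since the matrices have fixed dimension $K$ this transfers to the spectral norm. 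The same argument handles $\boldsymbol{\bar{q}}_{NT}$: the summands $\boldsymbol{q}_{iT}$ are functions of $(\boldsymbol{W}_i,\boldsymbol{\eta}_i)$ and hence independent over $i$ under Assumption~\ref{ass:6}, and are uniformly bounded in second moment by $\sup_{i,t}\mathrm{E}\Vert\boldsymbol{q}_{it}\Vert^2<C$ from Assumption~\ref{ass:4}(b).

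Finally, for \eqref{Qbarinv} I would use the resolvent identity
\[
\boldsymbol{\bar{Q}}_{NT}^{-1}-\boldsymbol{\bar{Q}}_N^{-1}=-\boldsymbol{\bar{Q}}_{NT}^{-1}\bigl(\boldsymbol{\bar{Q}}_{NT}-\boldsymbol{\bar{Q}}_N\bigr)\boldsymbol{\bar{Q}}_N^{-1},
\]
bounding its spectral norm by $\Vert\boldsymbol{\bar{Q}}_{NT}^{-1}\Vert\,\Vert\boldsymbol{\bar{Q}}_{NT}-\boldsymbol{\bar{Q}}_N\Vert\,\Vert\boldsymbol{\bar{Q}}_N^{-1}\Vert$. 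Here $\Vert\boldsymbol{\bar{Q}}_N^{-1}\Vert<C$ is supplied directly by Assumption~\ref{ass:3b}, and the middle factor is $O_p(N^{-1/2})$ by \eqref{Qqnorms} together with $\boldsymbol{\bar{Q}}_N=\mathrm{E}(\boldsymbol{\bar{Q}}_{NT})$. The main obstacle is the first factor: Assumption~\ref{ass:3b} controls $\lambda_{\max}(\boldsymbol{\bar{Q}}_{NT}^{-1})=\Vert\boldsymbol{\bar{Q}}_{NT}^{-1}\Vert$ only in a second-moment sense, so it cannot be bounded deterministically. The key step is therefore to invoke $\sup_{N,T}\mathrm{E}[\lambda_{\max}^2(\boldsymbol{\bar{Q}}_{NT}^{-1})]<C$ and Markov's inequality to conclude $\Vert\boldsymbol{\bar{Q}}_{NT}^{-1}\Vert=O_p(1)$; the product of an $O_p(1)$ factor and an $O_p(N^{-1/2})$ factor is $O_p(N^{-1/2})$, which is the claim. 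This last combination of cross-sectional independence with the moment control in Assumption~\ref{ass:3b} is the only genuinely delicate point in the argument.
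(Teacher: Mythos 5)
Your proposal is correct and follows essentially the same route as the paper: the same decomposition of $\boldsymbol{\bar{Q}}_{NT}-\mathrm{E}(\boldsymbol{\bar{Q}}_{NT})$ and $\boldsymbol{\bar{q}}_{NT}-\mathrm{E}(\boldsymbol{\bar{q}}_{NT})$ into cross-sectionally independent, mean-zero, finite-variance summands, the same resolvent identity $\boldsymbol{\bar{Q}}_{NT}^{-1}-\boldsymbol{\bar{Q}}_{N}^{-1}=-\boldsymbol{\bar{Q}}_{NT}^{-1}(\boldsymbol{\bar{Q}}_{NT}-\boldsymbol{\bar{Q}}_{N})\boldsymbol{\bar{Q}}_{N}^{-1}$, and the same appeal to stationarity for \eqref{Qqstat}. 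Your one refinement---obtaining $\Vert\boldsymbol{\bar{Q}}_{NT}^{-1}\Vert=O_{p}(1)$ from $\sup_{N,T}\mathrm{E}[\lambda_{\max}^{2}(\boldsymbol{\bar{Q}}_{NT}^{-1})]<C$ via Markov's inequality rather than asserting a deterministic bound---is if anything a more faithful use of Assumption~\ref{ass:3b} than the paper's own wording, and yields the same conclusion.
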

\begin{proof} Note that
\begin{align*}
\boldsymbol{\bar{Q}}_{NT}-\mathrm{E} ( \boldsymbol{
\bar{Q}}_{NT} ) &=N^{-1}\sum_{i=1}^{N}
\bigl[ \boldsymbol{Q}_{iT}-E ( \boldsymbol{Q}%
_{iT}
) \bigr] ,\quad \text{and}
\\
 \boldsymbol{\bar{q}}_{NT}- \mathrm{E} (
\boldsymbol{\bar{q}}_{NT} ) &=N^{-1}\sum
_{i=1}^{N} \bigl[ \boldsymbol{q}%
_{iT}-E
( \boldsymbol{q}_{iT} ) \bigr] .
\end{align*}%
Under Assumptions \ref{ass:3} and \ref{ass:6}, the elements of
$\boldsymbol{Q%
}_{iT}-\mathrm{E}  ( \boldsymbol{Q}_{iT}  ) $ and
$\boldsymbol{q}%
_{iT}-\mathrm{E}  ( \boldsymbol{q}_{iT}  ) $ are independently
distributed with mean zero and finite variances. Therefore, (\ref{Qqnorms})
follows. Also,
\begin{eqnarray*}
\bigl\llVert \boldsymbol{\bar{Q}}_{NT}^{-1}-\mathrm{E} (
\boldsymbol{\bar{Q%
}}_{NT} ) ^{-1} \bigr\rrVert
&=& \bigl\llVert \boldsymbol{\bar{Q}}_{NT}^{-1}%
 \bigl[ \boldsymbol{\bar{Q}}_{NT}-\mathrm{E} ( \boldsymbol{
\bar{Q}}%
_{NT} ) \bigr] \mathrm{E} ( \boldsymbol{
\bar{Q}}_{NT} ) ^{-1} \bigr\rrVert
\\
&\leq & \bigl\llVert \boldsymbol{\bar{Q}}_{NT}^{-1}
\bigr\rrVert \bigl\llVert \boldsymbol{\bar{Q}}_{NT}-\mathrm{E} (
\boldsymbol{\bar{Q}}_{NT} ) \bigr\rrVert \bigl\llVert \mathrm{E} (
\boldsymbol{\bar{Q}}_{NT} ) ^{-1} \bigr\rrVert ,
\end{eqnarray*}%
and, by Assumption~\ref{ass:3b},
$ \llVert  \boldsymbol{\bar{Q}}%
_{NT}^{-1} \rrVert  =\lambda _{\max }  (
\boldsymbol{\bar{Q}}%
_{NT}^{-1}  ) <C$, and
$ \llVert  \mathrm{E}  ( \boldsymbol{\bar{Q}}%
_{NT}  ) ^{-1} \rrVert  = \llVert  \boldsymbol{\bar{Q}}%
_{N}^{-1} \rrVert  =O(1)$. Hence,
$ \llVert  \boldsymbol{\bar{Q}}%
_{NT}^{-1}-\mathrm{E}  ( \boldsymbol{\bar{Q}}_{NT}  ) ^{-1}
 \rrVert  $ has the same order as
$ \llVert  \boldsymbol{\bar{Q}}%
_{NT}^{-1}-\mathrm{E}  ( \boldsymbol{\bar{Q}}_{NT}  ) ^{-1}
 \rrVert  =O_{p}(N^{-1/2})$, as required. Result (\ref{Qqstat}) follows
from the stationarity properties, $\boldsymbol{Q}_{i}$
$\boldsymbol{=%
}$ E$  ( \boldsymbol{w}_{it}\boldsymbol{w}_{it}^{\prime }  ) $
and $%
\boldsymbol{q}_{i}=\mathrm{E}  ( \boldsymbol{w}_{it}\boldsymbol{w}_{it}^{
\prime }%
\boldsymbol{\eta }_{i}  ) $.
\end{proof}
\begin{lemma}  %
\label{Lemma_2_VTEX1}%
Under Assumptions \ref{ass:1}--\ref{ass:6},
%
\begin{equation}
\sup_{i,T}\mathrm{E} \bigl\llVert \sqrt{T} ( \hat{
\boldsymbol{\theta}}_{i}-%
\boldsymbol{\theta
}_{i} ) \bigr\rrVert ^{s}<C, \quad s=1,2,
\label{normi}
\end{equation}%
where
$\hat{\boldsymbol{\theta}}_{i}-\boldsymbol{\theta }_{i}=(
\boldsymbol{W}%
_{i}^{\prime }\boldsymbol{W}_{i})^{-1}\boldsymbol{W}_{i}^{\prime }%
\boldsymbol{\varepsilon }_{i}$, and
%
\begin{equation}
\sup_{i,T}\mathrm{E} \llVert \tilde{\boldsymbol{\theta }}-
\boldsymbol{%
\theta }_{i} \rrVert <C, \label{normNT}
\end{equation}%
$\tilde{\boldsymbol{\theta }}-\boldsymbol{\theta }_{i}=$
$-\boldsymbol{\eta }%
_{i}+\boldsymbol{\bar{Q}}_{NT}^{-1}\boldsymbol{\bar{q}}_{NT}+
\boldsymbol{%
\bar{Q}}_{NT}^{-1}\boldsymbol{\bar{\xi}}_{NT}$.
\end{lemma}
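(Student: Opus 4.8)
The plan is to prove the two bounds separately, in each case reducing to the moment conditions already imposed in Assumptions~\ref{ass:1}--\ref{ass:6} through submultiplicativity of the spectral norm combined with the Cauchy--Schwarz inequality. For \eqref{normi} I would first rewrite the scaled estimation error in the convenient form $\sqrt{T}(\hat{\boldsymbol{\theta}}_{i}-\boldsymbol{\theta}_{i})=\boldsymbol{Q}_{iT}^{-1}\bigl(T^{-1/2}\boldsymbol{W}_{i}^{\prime}\boldsymbol{\varepsilon}_{i}\bigr)$, using $\boldsymbol{W}_{i}^{\prime}\boldsymbol{W}_{i}=T\boldsymbol{Q}_{iT}$. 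Submultiplicativity then gives $\Vert\sqrt{T}(\hat{\boldsymbol{\theta}}_{i}-\boldsymbol{\theta}_{i})\Vert\le\lambda_{\max}(\boldsymbol{Q}_{iT}^{-1})\,\Vert T^{-1/2}\boldsymbol{W}_{i}^{\prime}\boldsymbol{\varepsilon}_{i}\Vert$, where I use $\Vert\boldsymbol{Q}_{iT}^{-1}\Vert=\lambda_{\max}(\boldsymbol{Q}_{iT}^{-1})$ since $\boldsymbol{Q}_{iT}^{-1}\succ\boldsymbol{0}$ under Assumption~\ref{ass:3}. Squaring and applying Cauchy--Schwarz to the resulting product yields $\mathrm{E}\Vert\sqrt{T}(\hat{\boldsymbol{\theta}}_{i}-\boldsymbol{\theta}_{i})\Vert^{2}\le\bigl(\mathrm{E}[\lambda_{\max}^{4}(\boldsymbol{Q}_{iT}^{-1})]\bigr)^{1/2}\bigl(\mathrm{E}\Vert T^{-1/2}\boldsymbol{W}_{i}^{\prime}\boldsymbol{\varepsilon}_{i}\Vert^{4}\bigr)^{1/2}$, which is uniformly bounded in $i$ and $T$ by \eqref{CWe} and \eqref{CQiT} of Assumption~\ref{ass:weak_exogeneity_b}. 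The case $s=1$ then follows from the case $s=2$ by Lyapunov's inequality.

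For \eqref{normNT} I would decompose $\tilde{\boldsymbol{\theta}}-\boldsymbol{\theta}_{i}=-\boldsymbol{\eta}_{i}+\boldsymbol{\bar{Q}}_{NT}^{-1}\boldsymbol{\bar{q}}_{NT}+\boldsymbol{\bar{Q}}_{NT}^{-1}\boldsymbol{\bar{\xi}}_{NT}$ and bound the three terms via the triangle inequality. The first term is immediate, $\mathrm{E}\Vert\boldsymbol{\eta}_{i}\Vert<C$ by Assumption~\ref{ass:4}(a). For the second, submultiplicativity and Cauchy--Schwarz give $\mathrm{E}[\Vert\boldsymbol{\bar{Q}}_{NT}^{-1}\Vert\,\Vert\boldsymbol{\bar{q}}_{NT}\Vert]\le\bigl(\mathrm{E}[\lambda_{\max}^{2}(\boldsymbol{\bar{Q}}_{NT}^{-1})]\bigr)^{1/2}\bigl(\mathrm{E}\Vert\boldsymbol{\bar{q}}_{NT}\Vert^{2}\bigr)^{1/2}$; the first factor is bounded by Assumption~\ref{ass:3b}, and for the second I would apply Minkowski's inequality to the double average $\boldsymbol{\bar{q}}_{NT}=(NT)^{-1}\sum_{i}\sum_{t}\boldsymbol{q}_{it}$, giving $(\mathrm{E}\Vert\boldsymbol{\bar{q}}_{NT}\Vert^{2})^{1/2}\le\sup_{i,t}(\mathrm{E}\Vert\boldsymbol{q}_{it}\Vert^{2})^{1/2}$, which is finite by Assumption~\ref{ass:4}(b).

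The third term is the one requiring the most care, and is the main obstacle: since $\boldsymbol{\bar{Q}}_{NT}^{-1}$ and $\boldsymbol{\bar{\xi}}_{NT}$ are both measurable functions of $\{\boldsymbol{W}_{i}\}$ (and $\boldsymbol{\bar{\xi}}_{NT}$ also of $\{\boldsymbol{\varepsilon}_{i}\}$), the expectation cannot be factored, so I would again invoke Cauchy--Schwarz, $\mathrm{E}[\Vert\boldsymbol{\bar{Q}}_{NT}^{-1}\Vert\,\Vert\boldsymbol{\bar{\xi}}_{NT}\Vert]\le\bigl(\mathrm{E}[\lambda_{\max}^{2}(\boldsymbol{\bar{Q}}_{NT}^{-1})]\bigr)^{1/2}\bigl(\mathrm{E}\Vert\boldsymbol{\bar{\xi}}_{NT}\Vert^{2}\bigr)^{1/2}$, controlling the inverse moment by Assumption~\ref{ass:3b} rather than by a direct factorization. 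The key step is then $\mathrm{E}\Vert\boldsymbol{\bar{\xi}}_{NT}\Vert^{2}$: writing $\boldsymbol{\bar{\xi}}_{NT}=N^{-1}\sum_{i}\boldsymbol{\xi}_{iT}$ with $\mathrm{E}(\boldsymbol{\xi}_{iT})=\boldsymbol{0}$ (Assumption~\ref{ass:weak_exogeneity_a}) and the $\boldsymbol{\xi}_{iT}$ independent across $i$ (Assumption~\ref{ass:6}), the cross terms vanish, so $\mathrm{E}\Vert\boldsymbol{\bar{\xi}}_{NT}\Vert^{2}=N^{-2}\sum_{i}\mathrm{E}\Vert\boldsymbol{\xi}_{iT}\Vert^{2}=O(N^{-1}T^{-1})$, where I use $\mathrm{E}\Vert\boldsymbol{\xi}_{iT}\Vert^{2}=T^{-1}\mathrm{E}\Vert T^{-1/2}\boldsymbol{W}_{i}^{\prime}\boldsymbol{\varepsilon}_{i}\Vert^{2}=O(T^{-1})$ by \eqref{CWe}. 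Combining the three bounds, each of which is uniform in $i$, $N$, and $T>T_{0}$, delivers \eqref{normNT}.
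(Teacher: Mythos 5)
Your proposal is correct and follows essentially the same route as the paper's proof: the same submultiplicativity-plus-Cauchy--Schwarz reduction to the moment bounds of Assumption~\ref{ass:weak_exogeneity_b} for \eqref{normi} (with $s=1$ obtained from $s=2$), and the same three-term triangle-inequality decomposition with Minkowski's inequality for $\boldsymbol{\bar{q}}_{NT}$ for \eqref{normNT}. The only (immaterial) difference is in the last term, where you exploit cross-sectional independence and the martingale-difference property to obtain the sharper bound $\mathrm{E}\llVert \boldsymbol{\bar{\xi}}_{NT}\rrVert^{2}=O(N^{-1}T^{-1})$, whereas the paper is content with the uniform bound $\mathrm{E}\llVert \boldsymbol{\bar{\xi}}_{NT}\rrVert^{2}<C$ obtained via Minkowski's inequality; either suffices.
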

\begin{proof} Since
$ \llVert  \sqrt{T}  ( \hat{\boldsymbol{\theta}}_{i}-%
\boldsymbol{\theta }_{i}  )  \rrVert  \leq  \llVert
\boldsymbol{Q}%
_{iT}^{-1} \rrVert   \llVert  T^{-1/2}\boldsymbol{W}_{i}^{
\prime }%
\boldsymbol{\varepsilon }_{i} \rrVert  $, then
\begin{equation*}
\bigl\llVert \sqrt{T} ( \hat{\boldsymbol{\theta}}_{i}- \boldsymbol{
\theta }%
_{i} ) \bigr\rrVert ^{2}\leq \bigl
\llVert \boldsymbol{Q}%
_{iT}^{-1} \bigr\rrVert
^{2} \bigl\llVert T^{-1/2}\boldsymbol{W}_{i}^{
\prime }%
\boldsymbol{\varepsilon }_{i} \bigr\rrVert ^{2},
\end{equation*}%
and by the Cauchy--Schwarz inequality
\begin{eqnarray*}
\sup_{i,T}\mathrm{E} \bigl\llVert \sqrt{T} ( \hat{
\boldsymbol{\theta}}_{i}-%
\boldsymbol{\theta
}_{i} ) \bigr\rrVert ^{2} &\leq & \Bigl( \sup
_{i,T}%
\mathrm{E} \bigl\llVert \boldsymbol{Q}_{iT}^{-1}
\bigr\rrVert ^{4} \Bigr) ^{1/2} \Bigl( \sup
_{i,T}\mathrm{E} \bigl\llVert T^{-1/2}
\boldsymbol{W}%
_{i}^{\prime }\boldsymbol{\varepsilon
}_{i} \bigr\rrVert ^{4} \Bigr) ^{1/2}
\\
&=& \Bigl\{ \sup_{i,T}\mathrm{E} \bigl[ \lambda
_{\max }^{4} \bigl( \boldsymbol{%
Q}_{iT}^{-1}
\bigr) \bigr] \Bigr\} ^{1/2} \Bigl( \sup_{i,T}
\mathrm{E}%
 \bigl\llVert T^{-1/2}\boldsymbol{W}_{i}^{\prime }
\boldsymbol{\varepsilon }%
_{i} \bigr\rrVert
^{4} \Bigr) ^{1/2}.
\end{eqnarray*}%
Both of the terms on the right-hand side of the above are bounded under
Assumption~\ref{ass:weak_exogeneity_b}, and we have $\sup_{i,T}\mathrm{E}
 \llVert  \sqrt{T}  ( \hat{\boldsymbol{\theta}}_{i}-
\boldsymbol{\theta }_{i}  )  \rrVert  ^{2}<C$. This result in turn
implies $\sup_{i,T}\mathrm{E} \llVert  \sqrt{T}  (
\hat{\boldsymbol{\theta}}_{i}-\boldsymbol{\theta }_{i}  )
 \rrVert  <C$, and result (\ref{normi}) follows. Regarding
$\tilde{%
\boldsymbol{\theta }}-\boldsymbol{\theta }_{i}$, we first note that
\begin{equation*}
\llVert \tilde{\boldsymbol{\theta }}-\boldsymbol{\theta }_{i}
\rrVert \leq \llVert \boldsymbol{\eta }_{i} \rrVert + \bigl\llVert
\boldsymbol{%
\bar{Q}}_{NT}^{-1} \bigr\rrVert
\llVert \boldsymbol{\bar{q}}%
_{NT} \rrVert + \bigl\llVert
\boldsymbol{\bar{Q}}_{NT}^{-1} \bigr\rrVert \llVert
\boldsymbol{\bar{\xi}}_{NT} \rrVert ,
\end{equation*}%
and
%
\begin{equation}
\mathrm{E} \llVert \tilde{\boldsymbol{\theta }}- \boldsymbol{\theta
}%
_{i} \rrVert \leq \mathrm{E} \llVert \boldsymbol{\eta
}_{i} \rrVert + \bigl( \mathrm{E} \bigl\llVert \boldsymbol{
\bar{Q}}_{NT}^{-1} \bigr\rrVert ^{2} \bigr)
^{1/2} \bigl( \mathrm{E} \llVert \boldsymbol{\bar{q}}%
_{NT}
\rrVert ^{2} \bigr) ^{1/2}+ \bigl( \mathrm{E} \bigl\llVert
\boldsymbol{%
\bar{Q}}_{NT}^{-1} \bigr\rrVert
^{2} \bigr) ^{1/2} \bigl( \mathrm{E} \llVert \boldsymbol{
\bar{\xi}}_{NT} \rrVert ^{2} \bigr) ^{1/2}.
\label{normNT1}
\end{equation}%
Under Assumption~\ref{ass:4}, E$ \llVert  \boldsymbol{\eta }_{i}
 \rrVert  <C$ and $\sup_{i,t}\mathrm{E} \llVert  \boldsymbol{w}_{it}
\boldsymbol{w}%
_{it}^{\prime }\boldsymbol{\eta }_{i} \rrVert  ^{2}<C$. Also, by the
Cauchy--Schwarz inequality,
\begin{equation*}
\mathrm{E} \llVert \boldsymbol{w}_{it}\varepsilon _{it}
\rrVert ^{2}\leq \bigl( \mathrm{E} \llVert \boldsymbol{w}_{it}
\rrVert ^{4} \bigr) ^{1/2} \bigl( \llvert \varepsilon
_{it} \rrvert ^{4} \bigr) ^{1/2},
\end{equation*}%
and under Assumptions \ref{ass:1} and \ref{ass:3}, we have
$\sup_{i,t}\mathrm{E}%
 \llVert  \boldsymbol{w}_{it}\varepsilon _{it} \rrVert  ^{2}<C$. Then,
applying Minkows\-ki's inequality to
$\boldsymbol{\bar{\xi}}%
_{NT}=N^{-1}T^{-1}\sum_{i=1}^{N}\sum_{t=1}^{T}\boldsymbol{w}_{it}
\varepsilon _{it}$,
\begin{equation*}
\mathrm{E} \llVert \boldsymbol{\bar{\xi}}_{NT} \rrVert
_{2}= \bigl( \mathrm{E} \llVert \boldsymbol{\bar{
\xi}}_{NT} \rrVert ^{2} \bigr) ^{1/2}\leq
N^{-1}T^{-1}\sum_{i=1}^{N}
\sum_{t=1}^{T} \mathrm{E} \llVert
\boldsymbol{w}_{it}\varepsilon _{it} \rrVert
_{2}\leq \sup_{i,t} \bigl( \mathrm{E} \llVert
\boldsymbol{w}_{it}\varepsilon _{it} \rrVert
^{2} \bigr) ^{1/2},
\end{equation*}%
and it follows that E$ \llVert  \boldsymbol{\bar{\xi}}_{NT} \rrVert  ^{2}<C$. Similarly, since
$\boldsymbol{\bar{q}}_{NT}=N^{-1}T^{-1}%
\sum_{i=1}^{N}\sum_{t=1}^{T}\boldsymbol{w}_{it}\boldsymbol{w}_{it}^{
\prime }%
\boldsymbol{\eta }_{i}$ and
$\sup_{i,t}\mathrm{E} \llVert  \boldsymbol{w}%
_{it}\boldsymbol{w}_{it}^{\prime }\boldsymbol{\eta }_{i} \rrVert  ^{2}<C$,
then
$\mathrm{E} \llVert  \boldsymbol{\bar{q}}_{NT} \rrVert  ^{2}<C$. Also,
by Assumption~\ref{ass:3b},
$ \llVert  \boldsymbol{\bar{Q}}%
_{NT}^{-1} \rrVert  ^{2}=\lambda _{\max }  (
\boldsymbol{\bar{Q}}%
_{NT}^{-2}  ) <C$. Using these results in (\ref{normNT1}) now yields
(%
\ref{normNT}).
\end{proof}

\section{Proofs of the propositions}\label{Proofs}

\subsection{Proof of Proposition~\protect\ref{prop:individual}}\label%
{Proof_individual}

Let
$\boldsymbol{P}_{i}=\boldsymbol{W}_{i}(\boldsymbol{W}_{i}^{\prime }%
\boldsymbol{W}_{i})^{-1}$. Using (\ref{r_iT}), note that
\begin{equation*}
\mathrm{E}\lleft( r_{iT}\vert \boldsymbol{\varepsilon
}_{i}, \boldsymbol{W%
}_{i},
\boldsymbol{w}_{i,T+1}\rright. ) = \bigl( \boldsymbol{\varepsilon
}%
_{i}^{\prime }\boldsymbol{P}_{i}
\boldsymbol{w}_{i,T+1} \bigr) \mathrm{E}%
\lleft( \varepsilon
_{i,T+1}\vert \boldsymbol{\varepsilon }_{i},%
\boldsymbol{W}_{i},\boldsymbol{w}_{i,T+1}\rright. ) ,
\end{equation*}%
and, under Assumptions \ref{ass:1} and \ref{ass:weak_exogeneity_a},
$\mathrm{%
E}  ( \varepsilon _{i,T+1} \llvert  \boldsymbol{\varepsilon }_{i},%
\boldsymbol{W}_{i},\boldsymbol{w}_{i,T+1}    ) =0$, for all
$i$. Hence, unconditionally $\mathrm{E}  ( r_{iT }  ) =0$. Furthermore,
$%
 \llvert  r_{iT} \rrvert  \leq  \llVert
\boldsymbol{\varepsilon }%
_{i}^{\prime }\boldsymbol{P}_{i} \rrVert   \llVert
\boldsymbol{w}%
_{i,T+1} \rrVert   \llvert  \varepsilon _{i,T+1} \rrvert  $ and
$%
 \llvert  \varepsilon _{i,T+1} \rrvert  $ is distributed independently
of $\boldsymbol{w}_{i,T+1}$ and
$T^{-1}\boldsymbol{\varepsilon }_{i}^{\prime }\boldsymbol{P}_{i}$. Hence,
by the Cauchy--Schwarz inequality
\begin{equation*}
\mathrm{E} \llvert r_{iT} \rrvert \leq \bigl[ \mathrm{E} \bigl
\llVert \boldsymbol{\varepsilon }_{i}^{\prime }
\boldsymbol{P}_{i} \bigr\rrVert ^{2}%
 \bigr]
^{1/2} \bigl( \mathrm{E} \llVert \boldsymbol{w}_{i,T+1}
\rrVert ^{2} \bigr) ^{1/2}\mathrm{E} \llvert \varepsilon
_{i,T+1} \rrvert .
\end{equation*}%
Again, under Assumption~\ref{ass:1},
$\sup_{i,T}\mathrm{E} \llvert  \varepsilon _{i,T+1} \rrvert  <C$
and
$\sup_{i,T}\mathrm{E} \llVert  \boldsymbol{w}_{i,T+1} \rrVert  ^{2}<C$.
Also, since $\boldsymbol{Q}%
_{iT}^{-1}$ is symmetric,
$ \llVert  \boldsymbol{Q}_{iT}^{-1} \rrVert  ^{2}=\lambda _{
\max }^{2}  ( \boldsymbol{Q}_{iT}^{-1}  )$ and we have
%
\begin{eqnarray}
\bigl\llVert \boldsymbol{\varepsilon }_{i}^{\prime }
\boldsymbol{P}%
_{i} \bigr\rrVert ^{2} &=& \bigl
\llVert T^{-1} \boldsymbol{\varepsilon }%
_{i}^{\prime }
\boldsymbol{W}_{i}\bigl(T^{-1}\boldsymbol{W}_{i}^{\prime }%
\boldsymbol{W}_{i}\bigr)^{-1} \bigr\rrVert
^{2}\nonumber
\\
&\leq& T^{-1} \bigl\llVert \boldsymbol{Q}%
_{iT}^{-1}
\bigr\rrVert ^{2} \bigl\llVert T^{-1/2}
\boldsymbol{W}_{i}^{
\prime }%
\boldsymbol{\varepsilon
}_{i} \bigr\rrVert ^{2}\notag
\\
&\leq &\lambda _{\max }^{2} \bigl( \boldsymbol{Q}_{iT}^{-1}
\bigr) \bigl\llVert T^{-1}\boldsymbol{W}_{i}^{\prime }
\boldsymbol{\varepsilon }_{i} \bigr\rrVert ^{2}. \label{Pi}
\end{eqnarray}%
By the Cauchy--Schwarz inequality and under Assumption~\ref{ass:weak_exogeneity_b},
\begin{equation*}
\sup_{i,T}\mathrm{E} \bigl\llVert \boldsymbol{\varepsilon
}_{i}^{
\prime }%
\boldsymbol{P}_{i}
\bigr\rrVert ^{2}\leq \Bigl\{ \sup_{i,T}
\mathrm{E} \bigl[ \lambda _{\max }^{2} \bigl(
\boldsymbol{Q}_{iT}^{-1} \bigr) \bigr] \Bigr\}
^{1/2} \Bigl[ \sup_{i,T} \bigl\llVert
T^{-1} \boldsymbol{W}_{i}^{\prime }%
\boldsymbol{\varepsilon }_{i} \bigr\rrVert ^{4} \Bigr]
^{1/2}<C,
\end{equation*}%
and $\sup_{i,T}\mathrm{E} \llvert  r_{iT} \rrvert  <C$. Finally,
under Assumption~\ref{ass:6}, $r_{iT}\boldsymbol{ }$ are independently
distributed over $i$. Then, by the law of large numbers for independently
distributed processes with zero means, we have
%
\begin{equation}
R_{NT}=O_{p}\bigl(N^{-1/2}\bigr).
\label{RNT}
\end{equation}%
Consider now $S_{NT}$ and note that
\begin{equation*}
S_{NT}=N^{-1}\sum_{i=1}^{N}
\mathrm{E}(s_{iT})+N^{-1}\sum
_{i=1}^{N} \bigl[ s_{iT}-
\mathrm{E}(s_{iT}) \bigr] ,
\end{equation*}%
where $s_{iT}$ is given by (\ref{s_iT}). Under Assumption~\ref{ass:6},
$%
s_{iT}$ is distributed independently across $i$, and the second term of
$%
S_{NT}$ will be $O_{p}(N^{-1/2})$ if
$\sup_{i,T}\mathrm{E} \llvert  s_{iT} \rrvert  <C$. Also,
\begin{equation*}
\llvert s_{iT} \rrvert \leq \llVert \boldsymbol{w}%
_{i,T+1}
\rrVert ^{2} \bigl\llVert \boldsymbol{Q}_{iT}^{-1}
\bigr\rrVert ^{2} \bigl\llVert T^{-1/2}
\boldsymbol{W}_{i}^{\prime } \boldsymbol{\varepsilon
}%
_{i} \bigr\rrVert ^{2},
\end{equation*}%
and $\sup_{i,T} \llVert  \boldsymbol{w}_{i,T+1} \rrVert  <C$. Hence,
$%
\sup_{i,T}\mathrm{E} \llvert  s_{iT} \rrvert  <C$ follows if
\begin{equation*}
\sup_{i,T}\mathrm{E} \bigl[ \bigl\llVert
\boldsymbol{Q}_{iT}^{-1} \bigr\rrVert ^{2}
\bigl\llVert T^{-1/2}\boldsymbol{W}_{i}^{\prime }
\boldsymbol{\varepsilon }%
_{i} \bigr\rrVert
^{2} \bigr] <C.
\end{equation*}%
This condition is satisfied by Assumptions \ref{ass:1} and \ref{ass:weak_exogeneity_b},
noting that by the Cauchy--Schwarz inequality
\begin{equation*}
\mathrm{E} \bigl[ \bigl\llVert \boldsymbol{Q}_{iT}^{-1}
\bigr\rrVert ^{2} \bigl\llVert T^{-1/2}\boldsymbol{W}_{i}^{\prime }
\boldsymbol{\varepsilon }%
_{i} \bigr\rrVert
^{2} \bigr] \leq \bigl[ \mathrm{E} \bigl\llVert \boldsymbol{Q}%
_{iT}^{-1}
\bigr\rrVert ^{4} \bigr] ^{1/2} \bigl[ \mathrm{E} \bigl
\llVert T^{-1/2}%
\boldsymbol{W}_{i}^{\prime }
\boldsymbol{\varepsilon }_{i} \bigr\rrVert ^{4}%
 \bigr] ^{1/2},
\end{equation*}%
and
$ \llVert  \boldsymbol{Q}_{iT}^{-1} \rrVert  ^{4}=\lambda _{
\max }^{4}  ( \boldsymbol{Q}_{iT }^{-1}  ) $. Therefore,
$S_{NT}=\mathrm{E%
}  ( S_{NT}  ) +O_{p}(N^{-1/2})$, where
$\mathrm{E}  ( S_{NT}  ) =N^{-1}\sum_{i=1}^{N}\mathrm{E}(s_{iT})=h_{NT}$,
and the result in equation (\ref{MSFEI}) follows, with $h_{NT}$ given by
(\ref{h_NT}).

\subsection{Proof of Proposition~\protect\ref{prop:pooled}}\label%
{Proof_pooled}

The average MSFE of forecasts based on pooled estimates is given by (\ref{MSFEPool}),
which we reproduce here for convenience:
%
\begin{equation}
N^{-1}\sum_{i=1}^{N}
\tilde{e}_{i,T+1}^{2}=N^{-1}\sum
_{i=1}^{N} \varepsilon _{i,T+1}^{2}+N^{-1}
\sum_{i=1}^{N}\boldsymbol{w}_{i,T+1}^{
\prime }
\boldsymbol{%
\eta }_{i}\boldsymbol{\eta
}_{i}^{\prime }\boldsymbol{w}_{i,T+1}+
\tilde{S}%
_{N,T+1}+2\tilde{R}_{N,T+1},
\label{MSFEp}
\end{equation}%
where
%
\begin{eqnarray}
\tilde{S}_{N,T+1} &=&\boldsymbol{\bar{q}}_{NT}^{\prime }
\boldsymbol{\bar{Q}}%
_{NT}^{-1}\boldsymbol{
\bar{Q}}_{N,T+1}\boldsymbol{\bar{Q}}_{NT}^{-1}%
\boldsymbol{\bar{q}}_{NT}+\boldsymbol{\bar{\xi}}_{NT}^{\prime }
\boldsymbol{%
\bar{Q}}_{NT}^{-1}\boldsymbol{
\bar{Q}}_{N,T+1}\boldsymbol{\bar{Q}}_{NT}^{-1}%
\boldsymbol{\bar{\xi}}_{NT}\nonumber
\\
&&{}-2\boldsymbol{\bar{q}}_{NT}^{\prime }\boldsymbol{
\bar{Q}}_{NT}^{-1}  %
\boldsymbol{\bar{q}}_{N,T+1}-2\boldsymbol{\bar{\xi}}_{NT}^{\prime }%
\boldsymbol{\bar{Q}}_{NT}^{-1}\boldsymbol{
\bar{q}}_{N,T+1}\nonumber
\\
&&{}+2 \boldsymbol{\bar{%
\xi}}_{NT}^{\prime }
\boldsymbol{\bar{Q}}_{NT}^{-1} \boldsymbol{
\bar{Q}}%
_{N,T+1}\boldsymbol{\bar{Q}}_{NT}^{-1}
\boldsymbol{\bar{q}}_{NT}, \label{stilda}
\end{eqnarray}%
%
\begin{equation}
\tilde{R}_{N,T+1}=N^{-1}\sum_{i=1}^{N}
\boldsymbol{\eta }_{i}^{
\prime }%
\boldsymbol{w}_{i,T+1}\varepsilon _{i,T+1}- \bigl( \boldsymbol{
\bar{q}}%
_{NT}^{\prime }\boldsymbol{
\bar{Q}}_{NT}^{-1}+ \boldsymbol{\bar{\xi}}%
_{NT}^{\prime }
\boldsymbol{\bar{Q}}_{NT}^{-1} \bigr) \Biggl(
N^{-1} \sum_{i=1}^{N}
\boldsymbol{w}_{i,T+1}\varepsilon _{i,T+1} \Biggr) ,
\label{rtilda}
\end{equation}%
and%
%
\begin{equation}
\boldsymbol{\bar{Q}}_{N,T+1}=N^{-1}\sum
_{i=1}^{N}\boldsymbol{w}_{i,T+1}%
\boldsymbol{w}_{i,T+1}^{\prime },\quad \text{and}\quad \boldsymbol{
\bar{q}}%
_{N,T+1}=N^{-1}\sum
_{i=1}^{N}\boldsymbol{w}_{i,T+1}
\boldsymbol{w}%
_{i,T+1}^{\prime }\boldsymbol{\eta
}_{i}. \label{QqT+1}
\end{equation}%
Under Assumption~\ref{ass:2.2},
$\bar{\boldsymbol{\xi }}_{NT}=O_{p}  ( N^{-1/2}  ) $. Using Lemma~\ref{Lemma_1_VTEX1}, we have
$\boldsymbol{\bar{Q}}%
_{NT}^{-1}=\boldsymbol{\bar{Q}}_{N}^{-1}+O_{p}  ( N^{-1/2}  )
=O_{p}(1)$, and similarly $\boldsymbol{\bar{q}}_{NT}=O_{p}(1)$ and
$%
\boldsymbol{\bar{q}}_{N,T+1}=O_{p}(1)$. Using these results in (\ref{stilda}%
), we now have%
%
\begin{equation}
\tilde{S}_{N,T+1}=\boldsymbol{\bar{q}}_{NT}^{\prime }
\boldsymbol{\bar{Q}}%
_{NT}^{-1}\boldsymbol{
\bar{Q}}_{N,T+1}\boldsymbol{\bar{Q}}_{NT}^{-1}%
\boldsymbol{\bar{q}}_{NT}-2\boldsymbol{\bar{q}}_{NT}^{\prime }
\boldsymbol{%
\bar{Q}}_{NT}^{-1}\boldsymbol{
\bar{q}}_{N,T+1}+O_{p} \bigl( N^{-1/2} \bigr) .
\label{stilda1}
\end{equation}%
Note that under stationarity (see Assumptions \ref{ass:3} and
\ref{ass:4}),
$%
\mathrm{E}  ( \boldsymbol{w}_{i,T+1}\boldsymbol{w}_{i,T+1}^{
\prime }%
\boldsymbol{\eta }_{i}  ) =\boldsymbol{q}_{i}$,\break
$\mathrm{E}  ( \boldsymbol{w}_{i,T+1}\boldsymbol{w}_{i,T+1}^{
\prime }  ) =\boldsymbol{Q}%
_{i}$, and consider
\begin{align*}
\boldsymbol{\bar{q}}_{NT}^{\prime }\boldsymbol{
\bar{Q}}_{NT}^{-1} \boldsymbol{%
\bar{Q}}_{N,T+1}\boldsymbol{\bar{Q}}_{NT}^{-1}
\boldsymbol{\bar{q}}_{NT}={}& ( \boldsymbol{\Delta }_{q,NT}+
\boldsymbol{\bar{q}}_{N} ) ^{
\prime } \bigl( \boldsymbol{\Delta
}_{Q,NT}+\boldsymbol{\bar{Q}}_{N}^{-1} \bigr)
( \boldsymbol{\bar{Q}}_{N,T+1}-\boldsymbol{\bar{Q}}_{N}+
\boldsymbol{%
\bar{Q}}_{N} )
\\
& {}\times \bigl( \boldsymbol{\Delta }_{Q,NT}+\boldsymbol{
\bar{Q}}%
_{N}^{-1} \bigr) ( \boldsymbol{
\bar{q}}_{N,T+1}- \boldsymbol{\bar{q}}%
_{N}+
\boldsymbol{\bar{q}}_{N} ) ,
\end{align*}%
where (by Lemma~\ref{Lemma_1_VTEX1})
\begin{equation*}
\boldsymbol{\Delta }_{Q,NT}=\boldsymbol{\bar{Q}}_{NT}^{-1}-
\boldsymbol{\bar{Q%
}}_{N}^{-1}=O_{p}
\bigl(N^{-1/2}\bigr),\qquad \boldsymbol{\Delta
}_{q,NT}= \boldsymbol{%
\bar{q}}_{NT}-
\boldsymbol{\bar{q}}_{N}=O_{p}\bigl(N^{-1/2}
\bigr),
\end{equation*}%
and $\boldsymbol{\bar{Q}}_{N}$ and $\boldsymbol{\bar{q}}_{N}$ are defined
by (\ref{Qbar}) and (\ref{qbar}), respectively. Also, note that
\begin{eqnarray*}
\boldsymbol{\bar{Q}}_{N,T+1} &=&N^{-1}\sum
_{i=1}^{N}\mathrm{E} \bigl( \boldsymbol{w}_{i,T+1}
\boldsymbol{w}_{i,T+1}^{\prime } \bigr) +O_{p}
\bigl(N^{-1/2}\bigr)= \boldsymbol{\bar{Q}}_{N}+O_{p}
\bigl(N^{-1/2}\bigr),
\\
\boldsymbol{\bar{q}}_{N,T+1} &=&N^{-1}\sum
_{i=1}^{N}\mathrm{E} \bigl( \boldsymbol{w}_{i,T+1}
\boldsymbol{w}_{i,T+1}^{\prime } \boldsymbol{\eta
}%
_{i} \bigr) +O_{p}\bigl(N^{-1/2}
\bigr)=\boldsymbol{\bar{q}}_{N}+O_{p}
\bigl(N^{-1/2}\bigr).
\end{eqnarray*}%
Hence, it readily follows that
%
\begin{equation}
\boldsymbol{\bar{q}}_{NT}^{\prime }\boldsymbol{
\bar{Q}}_{NT}^{-1} \boldsymbol{%
\bar{Q}}_{N,T+1}\boldsymbol{\bar{Q}}_{NT}^{-1}
\boldsymbol{\bar{q}}_{NT}=%
\boldsymbol{\bar{q}}_{N}^{\prime }
\boldsymbol{\bar{Q}}_{N}^{-1} \boldsymbol{%
\bar{q}}_{N}+O_{p} \bigl( N^{-1/2} \bigr) .
\label{stilda2}
\end{equation}%
Similarly,
$\boldsymbol{\bar{q}}_{NT}^{\prime }\boldsymbol{\bar{Q}}_{NT}^{-1}%
\boldsymbol{\bar{q}}_{N,T+1}=\boldsymbol{\bar{q}}_{N}^{\prime }
\boldsymbol{%
\bar{Q}}_{N}^{-1}\boldsymbol{\bar{q}}_{N}+O_{p}  ( N^{-1/2}
  ) $, and as a result
\begin{equation*}
\tilde{S}_{N,T+1}=-\boldsymbol{\bar{q}}_{N}^{\prime }
\boldsymbol{\bar{Q}}%
_{N}^{-1}\boldsymbol{
\bar{q}}_{N}+O_{p} \bigl( N^{-1/2} \bigr) .
\end{equation*}%
Finally, since $\varepsilon _{i,T+1}$ (which has zero mean) is distributed
independently of $\boldsymbol{w}_{i,T+1}\boldsymbol{ }$ and
$\boldsymbol{%
\eta }_{i}$, under Assumption~\ref{ass:6},
\begin{equation*}
N^{-1}\sum_{i=1}^{N}
\boldsymbol{\eta }_{i}^{\prime }\boldsymbol{w}%
_{i,T+1}
\varepsilon _{i,T+1}=O_{p}\bigl(N^{-1/2}
\bigr),\quad \text{and}\quad N^{-1}\sum_{i=1}^{N}%
\boldsymbol{w}_{i,T+1}\varepsilon _{i,T+1}=O_{p}
\bigl(N^{-1/2}\bigr),
\end{equation*}%
and $\tilde{R}_{N,T+1}=O_{p}(N^{-1/2})$, noting that
$  ( \boldsymbol{%
\bar{q}}_{NT}^{\prime }\boldsymbol{\bar{Q}}_{NT}^{-1}+
\boldsymbol{\bar{\xi}}%
_{NT}^{\prime }\boldsymbol{\bar{Q}}_{NT}^{-1}  ) =O_{p}(1)$. Using
this result and (\ref{stilda2}) in (\ref{MSFEPool}) now yields
%
\begin{eqnarray}
N^{-1}\sum_{i=1}^{N}
\tilde{e}_{i,T+1}^{2}&=&N^{-1}\sum
_{i=1}^{N} \varepsilon _{i,T+1}^{2}+N^{-1}
\sum_{i=1}^{N}\boldsymbol{w}_{i,T+1}^{
\prime }
\boldsymbol{%
\eta }_{i}\boldsymbol{\eta
}_{i}^{\prime }\boldsymbol{w}_{i,T+1}\nonumber
\\
&&{}-
\boldsymbol{%
\bar{q}}_{N}^{\prime }\boldsymbol{
\bar{Q}}_{N}^{-1} \boldsymbol{\bar{q}}%
_{N}+O_{p}
\bigl( N^{-1/2} \bigr) . \label{MSFE_pooled2}
\end{eqnarray}%
Also, under Assumption~\ref{ass:6},
$\boldsymbol{w}_{i,T+1}^{\prime }%
\boldsymbol{\eta }_{i}\boldsymbol{\eta }_{i}^{\prime }\boldsymbol{w}_{i,T+1}$
is independently distributed over $i$ and we
have, noting that under Assumption~\ref{ass:4},
\begin{align*}
\sup_{i,T}\mathrm{E} \llvert  \boldsymbol{w}%
_{i,T+1}^{\prime }\boldsymbol{\eta }_{i}\boldsymbol{\eta }_{i}^{
\prime }%
\boldsymbol{w}_{i,T+1} \rrvert  &=\sup_{i,T}\mathrm{E} \llVert
\boldsymbol{w}%
_{i,T+1}^{\prime }\boldsymbol{\eta }_{i} \rrVert  ^{2}<C,
\\
N^{-1}\sum_{i=1}^{N}
\boldsymbol{w}_{i,T+1}^{\prime } \boldsymbol{\eta
}_{i}%
\boldsymbol{\eta }_{i}^{\prime }
\boldsymbol{w}_{i,T+1}&=N^{-1}\sum
_{i=1}^{N}%
\mathrm{E} \bigl(
\boldsymbol{w}_{i,T+1}^{\prime }\boldsymbol{\eta
}_{i}%
\boldsymbol{\eta }_{i}^{\prime }
\boldsymbol{w}_{i,T+1} \bigr) +O_{p} \bigl(
N^{-1/2} \bigr) .
\end{align*}%
Using this result in (\ref{MSFE_pooled2}) now yields equation (\ref{MSFEP1}%
).

To establish part (b) of Proposition~\ref{prop:pooled}, note that the first
term of $\Delta _{NT}$,
\[
N^{-1}\sum_{i=1}^{N}\mathrm{E}  ( \boldsymbol{w}%
_{i,T+1}^{\prime }\boldsymbol{\eta }_{i}\boldsymbol{\eta }_{i}^{
\prime }%
\boldsymbol{w}_{i,T+1}  ) =N^{-1}\sum_{i=1}^{N}\mathrm{E}  (
\boldsymbol{w}_{i,T+1}^{\prime }\boldsymbol{\eta }_{i}  ) ^{2}
\geq 0,
\]
arises irrespective of whether heterogeneity is correlated or
not. The second term,
$\boldsymbol{\bar{q}}_{N}\boldsymbol{\bar{Q}}_{N}^{-1}%
\boldsymbol{\bar{q}}_{N}^{\prime }$, enters only if heterogeneity is correlated.
The balance of the two terms ($\Delta _{NT}$) can be signed under stationarity
where
$\mathrm{E}  ( \boldsymbol{w}_{i,T+1}^{\prime }%
\boldsymbol{\eta }_{i}\boldsymbol{\eta }_{i}^{\prime }
\boldsymbol{w}%
_{i,T+1}  ) =\mathrm{E}  ( \boldsymbol{w}_{it}^{\prime }
\boldsymbol{%
\eta }_{i}\boldsymbol{\eta }_{i}^{\prime }\boldsymbol{w}_{it}  ) $.
In this case, we have
%
\begin{equation}
\Delta _{N}=N^{-1}\sum_{i=1}^{N}
\mathrm{E} \bigl( \boldsymbol{w}_{it}^{
\prime }\boldsymbol{\eta
}_{i}\boldsymbol{\eta }_{i}^{\prime }
\boldsymbol{w}%
_{it} \bigr) -\boldsymbol{
\bar{q}}_{N}^{\prime }\boldsymbol{\bar{Q}}_{N}^{-1}%
\boldsymbol{\bar{q}}_{N}. \label{DeltaSt_N}
\end{equation}%
To establish that the net effect of the two terms in $\Delta _{NT}$ is
nonnegative, we first show that the sample estimate of
$\Delta _{NT}$ can be obtained as the sum of squares of the residuals from
the pooled panel regression of
$\boldsymbol{\eta }_{i}^{\prime }\boldsymbol{w}_{it}$ on
$%
\boldsymbol{w}_{it}$. Consider the panel regression
$\boldsymbol{\eta }%
_{i}^{\prime }\boldsymbol{w}_{it}=\boldsymbol{\gamma }^{\prime }
\boldsymbol{w%
}_{it}+\nu _{it}$, and note that the pooled estimator of
$\boldsymbol{\gamma }$ is given by
\begin{equation*}
\boldsymbol{\hat{\gamma}}_{NT}\boldsymbol{=} \Biggl(
N^{-1}T^{-1}\sum_{i=1}^{N}
\sum_{t=1}^{T}\boldsymbol{w}_{it}
\boldsymbol{w}%
_{it}^{\prime } \Biggr)
^{-1}N^{-1}T^{-1}\sum
_{i=1}^{N}\sum_{t=1}^{T}%
\boldsymbol{w}_{it}\boldsymbol{w}_{it}^{\prime }
\boldsymbol{\eta }_{i}=%
\boldsymbol{\bar{Q}}_{NT}^{-1}
\boldsymbol{\bar{q}}_{NT},
\end{equation*}%
which yields the residual sum of squares
\begin{equation*}
N^{-1}T^{-1}\sum_{i=1}^{N}
\sum_{t=1}^{T}\hat{\nu}_{it}^{2}=N^{-1}T^{-1}%
\sum_{i=1}^{N}\sum
_{t=1}^{T} \bigl( \boldsymbol{\eta
}_{i}^{\prime }%
\boldsymbol{w}_{it}-
\boldsymbol{\hat{\gamma}}_{NT}^{\prime } \boldsymbol{w}%
_{it}
\bigr) ^{2}=\mathring{\Delta}_{NT}.
\end{equation*}%
By construction, $\mathring{\Delta}_{NT}$ is nonnegative and is given by
\begin{equation*}
\mathring{\Delta}_{NT}=T^{-1}N^{-1}\sum
_{t=1}^{T}\sum
_{i=1}^{N} \boldsymbol{w}%
_{it}^{\prime }
\boldsymbol{\eta }_{i}\boldsymbol{\eta }_{i}^{\prime }%
\boldsymbol{w}_{it}-\boldsymbol{\bar{q}}_{NT}^{\prime }
\boldsymbol{\bar{Q}}%
_{NT}^{-1}\boldsymbol{
\bar{q}}_{NT}\geq 0.
\end{equation*}%
This result also holds for a fixed $T$ and as $N\rightarrow \infty $ (applying
Slutsky's theorem to the second term):
\begin{equation*}
\lim_{N\rightarrow \infty }\mathring{\Delta}_{NT}=\plim
_{N
\rightarrow \infty }N^{-1}T^{-1}\sum
_{i=1}^{N}\sum_{t=1}^{T}
\hat{\nu}_{it}^{2}\geq 0.
\end{equation*}

\subsection{Proof of Proposition~\protect\ref{prop:combined}}\label%
{ProofCombined}

Using (\ref{ehat}) and (\ref{etilda}),
%
\begin{eqnarray}
N^{-1}\sum_{i=1}^{N}
\hat{e}_{i,T+1}\tilde{e}_{i,T+1} &=&N^{-1}\sum
_{i=1}^{N} \varepsilon
_{i,T+1}^{2}+N^{-1}\sum
_{i=1}^{N}( \hat{%
\boldsymbol{\theta
}_{i}}-\boldsymbol{\theta }_{i})^{\prime }
\boldsymbol{w}%
_{i,T+1}\boldsymbol{w}_{i,T+1}^{\prime }(
\tilde{\boldsymbol{\theta }}-%
\boldsymbol{\theta }_{i})
\nonumber
\\
&&{}-N^{-1}\sum_{i=1}^{N}(
\hat{\boldsymbol{\theta}}_{i}- \boldsymbol{\theta
}%
_{i})^{\prime }\boldsymbol{w}_{i,T+1}
\varepsilon _{i,T+1}\nonumber
\\
&&{}-N^{-1}\sum_{i=1}^{N}(
\tilde{\boldsymbol{\theta }}-\boldsymbol{%
\theta }_{i})^{\prime }
\boldsymbol{w}_{i,T+1}\varepsilon _{i,T+1},\label{ehattilda}
\end{eqnarray}%
where $\tilde{\boldsymbol{\theta }}-\boldsymbol{\theta }_{i}=$
$-\boldsymbol{%
\eta }_{i}+\boldsymbol{\bar{Q}}_{NT}^{-1}\boldsymbol{\bar{q}}_{NT}+%
\boldsymbol{\bar{Q}}_{NT}^{-1}\boldsymbol{\bar{\xi}}_{NT}$, and
$\hat{%
\boldsymbol{\theta}}_{i}-\boldsymbol{\theta }_{i}=(\boldsymbol{W}%
_{i}^{\prime }\boldsymbol{W}_{i})^{-1}\boldsymbol{W}_{i}^{\prime }%
\boldsymbol{\varepsilon }_{i}$. Noting that the third term in the above,
apart from the minus sign, is the same as $R_{NT}$ defined below (\ref{MSFE_i}),
by (\ref{RNT}) it follows that
%
\begin{equation}
N^{-1}\sum_{i=1}^{N}(\hat{
\boldsymbol{\theta}}_{i}- \boldsymbol{\theta }%
_{i})^{\prime }
\boldsymbol{w}_{i,T+1}\varepsilon _{i,T+1}=N^{-1}
\sum_{i=1}^{N}r_{iT}=R_{NT}=O_{p}
\bigl(N^{-1/2}\bigr) . \label{RNT1}
\end{equation}%
Further,
\begin{eqnarray*}
&&N^{-1}\sum_{i=1}^{N}(
\tilde{\boldsymbol{\theta }}- \boldsymbol{\theta }%
_{i})^{\prime }
\boldsymbol{w}_{i,T+1}\varepsilon _{i,T+1}
\\
 &&\quad =-N^{-1}
\sum_{i=1}^{N}\boldsymbol{\eta
}_{i}^{\prime }\boldsymbol{w}%
_{i,T+1}
\varepsilon _{i,T+1}+\boldsymbol{\bar{Q}}_{NT}^{-1}
\boldsymbol{\bar{q%
}}_{NT} \Biggl( N^{-1}\sum
_{i=1}^{N}\boldsymbol{w}_{i,T+1}
\varepsilon _{i,T+1} \Biggr)
\\
&&\qquad {}+\boldsymbol{\bar{Q}}_{NT}^{-1}\boldsymbol{\bar{
\xi}}_{NT} \Biggl( N^{-1} \sum
_{i=1}^{N}\boldsymbol{w}_{i,T+1}
\varepsilon _{i,T+1} \Biggr) .
\end{eqnarray*}%
By Lemma~\ref{Lemma_1_VTEX1}, $\boldsymbol{\bar{Q}}_{NT}^{-1}=O_{p}(1)$ and
$%
\boldsymbol{\bar{q}}_{NT}=O_{p}(1)$, and by Assumption~\ref{ass:2.2}
$,%
\boldsymbol{\bar{\xi}}_{NT}=O_{p}(N^{-1/2})$. Also, under Assumptions
\ref{ass:1} and \ref{ass:5},
$\boldsymbol{\eta }_{i}^{\prime }\boldsymbol{w}%
_{i,T+1}\varepsilon _{i,T+1}$ and
$\boldsymbol{w}_{i,T+1}\varepsilon _{i,T+1} $ have mean zero and first-order
moments. Hence, given Assumption~\ref{ass:6} we have%
%
\begin{equation}
N^{-1}\sum_{i=1}^{N}(
\tilde{\boldsymbol{\theta }}- \boldsymbol{\theta }%
_{i})^{\prime }
\boldsymbol{w}_{i,T+1}\varepsilon _{i,T+1}=O_{p}
\bigl( N^{-1/2} \bigr) . \label{SNT2}
\end{equation}%
Consider now the second term of (\ref{ehattilda}):
\begin{eqnarray*}
&&N^{-1}\sum_{i=1}^{N}(
\hat{\boldsymbol{\theta}}_{i}- \boldsymbol{\theta
}%
_{i})^{\prime }\boldsymbol{w}_{i,T+1}
\boldsymbol{w}_{i,T+1}^{\prime }( \tilde{%
\boldsymbol{
\theta }}-\boldsymbol{\theta }_{i})
\\
&&\quad =N^{-1}\sum_{i=1}^{N}
\bigl( -\boldsymbol{\eta }_{i}+ \boldsymbol{\bar{Q}}%
_{NT}^{-1}
\boldsymbol{\bar{q}}_{NT}+\boldsymbol{\bar{Q}}_{NT}^{-1}%
\boldsymbol{\bar{\xi}}_{NT} \bigr) ^{\prime }
\boldsymbol{w}_{i,T+1}%
\boldsymbol{w}_{i,T+1}^{\prime }
\bigl(\boldsymbol{W}_{i}^{\prime } \boldsymbol{W}%
_{i}
\bigr)^{-1}\boldsymbol{W}_{i}^{\prime }\boldsymbol{
\varepsilon }_{i}
\\
&&\quad =N^{-1}\sum_{i=1}^{N}
\bigl( -\boldsymbol{\eta }_{i}^{\prime }+ \boldsymbol{%
\bar{q}}_{NT}^{\prime }\boldsymbol{\bar{Q}}_{NT}^{-1}
\bigr) \boldsymbol{w}%
_{i,T+1}\boldsymbol{w}_{i,T+1}^{\prime }
\bigl(\boldsymbol{W}_{i}^{\prime }%
\boldsymbol{W}_{i}\bigr)^{-1}\boldsymbol{W}_{i}^{\prime }
\boldsymbol{\varepsilon }%
_{i}
\\
&&\qquad {}+\boldsymbol{\bar{\xi}}_{NT}^{\prime }\boldsymbol{
\bar{Q}}_{NT}^{-1} \Biggl[ N^{-1}\sum
_{i=1}^{N}\boldsymbol{w}_{i,T+1}
\boldsymbol{w}_{i,T+1}^{
\prime }\bigl(%
\boldsymbol{W}_{i}^{\prime }\boldsymbol{W}_{i}
\bigr)^{-1}\boldsymbol{W}%
_{i}^{\prime }
\boldsymbol{\varepsilon }_{i} \Biggr] ,
\end{eqnarray*}%
where, as noted above,
$\boldsymbol{\bar{\xi}}_{NT}^{\prime }\boldsymbol{%
\bar{Q}}_{NT}^{-1}=O_{p}  ( N^{-1/2}  ) $. Also, under stationarity
(Assumption~\ref{ass:3}) and using (\ref{Qqnorms}) and (\ref{Qbarinv})
(See Lemma~\ref{Lemma_1_VTEX1}),
$\boldsymbol{\bar{q}}_{NT}=\boldsymbol{\bar{q}}%
_{N}+O_{p}  ( N^{-1/2}  ) $ and
$\boldsymbol{\bar{Q}}_{NT}^{-1}=%
\boldsymbol{\bar{Q}}_{N}^{-1}+O_{p}  ( N^{-1/2}  ) $, and we have
\begin{equation*}
N^{-1}\sum_{i=1}^{N}(\hat{
\boldsymbol{\theta}}_{i}- \boldsymbol{\theta }%
_{i})^{\prime }
\boldsymbol{w}_{i,T+1}\boldsymbol{w}_{i,T+1}^{\prime }(
\tilde{%
\boldsymbol{\theta }}-\boldsymbol{\theta }_{i})= (
g_{1,nT}-g_{2,nT} ) +O_{p}\bigl(T^{-1/2}N^{-1/2}
\bigr),
\end{equation*}%
where
\begin{eqnarray*}
g_{1,NT} &=& \Biggl[ N^{-1}\sum
_{i=1}^{N}\boldsymbol{\varepsilon
}_{i}^{
\prime }\boldsymbol{W}_{i}\bigl(
\boldsymbol{W}_{i}^{\prime }\boldsymbol{W}_{i}
\bigr)^{-1}%
\boldsymbol{w}_{i,T+1}
\boldsymbol{w}_{i,T+1}^{\prime } \Biggr] \boldsymbol{%
\bar{Q}}_{N}^{-1}\boldsymbol{\bar{q}}_{N},
\\
g_{2,NT} &=&N^{-1}\sum_{i=1}^{N}
\boldsymbol{\varepsilon }_{i}^{
\prime }%
\boldsymbol{W}_{i}\bigl(\boldsymbol{W}_{i}^{\prime }
\boldsymbol{W}_{i}\bigr)^{-1}%
\boldsymbol{w}_{i,T+1}\boldsymbol{w}_{i,T+1}^{\prime }
\boldsymbol{\eta }_{i}.
\end{eqnarray*}%
We also note that under Assumptions \ref{ass:3},
\ref{ass:weak_exogeneity_b}%
, \ref{ass:3b}, and \ref{ass:6},
\begin{equation*}
g_{1,NT}=\mathrm{E} ( g_{1,nT} ) +O_{p} \bigl(
N^{-1/2} \bigr) ,%
\quad \text{and}\quad g_{2,NT}=
\mathrm{E} ( g_{2,NT} ) +O_{p} \bigl( N^{-1/2}
\bigr) .
\end{equation*}%
Hence,
\begin{align*}
&N^{-1}\sum_{i=1}^{N}(\hat{
\boldsymbol{\theta}}_{i}- \boldsymbol{\theta }%
_{i})^{\prime }
\boldsymbol{w}_{i,T+1}\boldsymbol{w}_{i,T+1}^{\prime }(
\tilde{%
\boldsymbol{\theta }}-\boldsymbol{\theta }_{i})
\\
&\quad =
\mathrm{E} ( g_{1,nT} ) -\text{E} ( g_{2,NT} )
+O_{p} \bigl( N^{-1/2} \bigr) +O_{p} \bigl(
T^{-1/2}N^{-1/2} \bigr) .
\end{align*}%
Substituting this result together with (\ref{RNT1}) and (\ref{SNT2}) in
(\ref{ehattilda}), we obtain%
%
\begin{equation}
N^{-1}\sum_{i=1}^{N}
\hat{e}_{i,T+1}\tilde{e}_{i,T+1}=N^{-1}\sum
_{i=1}^{N}%
\varepsilon
_{i,T+1}^{2}+T^{-1}\psi _{NT}+O_{p}
\bigl(N^{-1/2}\bigr)+O_{p} \bigl( T^{-1/2}N^{-1/2}
\bigr) , \label{MSF_PI}
\end{equation}%
where
$\psi _{NT}=T  [ \mathrm{E}  ( g_{1,nT}  ) -\mathrm{E}
  ( g_{2,NT}  )   ] $, or more specifically,%
%
\begin{eqnarray}
\psi _{NT} &=&TN^{-1}\sum_{i=1}^{N}\text{E}
\bigl[ \boldsymbol{\varepsilon }%
_{i}^{\prime }
\boldsymbol{W}_{i}\bigl(\boldsymbol{W}_{i}^{\prime }
\boldsymbol{W}%
_{i}\bigr)^{-1}
\boldsymbol{w}_{i,T+1}\boldsymbol{w}_{i,T+1}^{\prime }
\bigr] \boldsymbol{\bar{Q}}_{N}^{-1}\boldsymbol{
\bar{q}}_{N}\notag
\\
&&{}-TN^{-1}\sum_{i=1}^{N}
\mathrm{E} \bigl[ \boldsymbol{\varepsilon }%
_{i}^{\prime }
\boldsymbol{W}_{i}\bigl(\boldsymbol{W}_{i}^{\prime }
\boldsymbol{W}%
_{i}\bigr)^{-1}
\boldsymbol{w}_{i,T+1}\boldsymbol{w}_{i,T+1}^{\prime }
\boldsymbol{%
\eta }_{i} \bigr] . \label{epsiNT}
\end{eqnarray}
Finally, using (\ref{MSF_PI}), together with (\ref{MSFEI}) and (\ref{MSFEP1}%
), in (\ref{wstar}) now yields (\ref{w*NT}).

\subsection{Proof of Proposition~\protect\ref{prop:combinedFE}}\label%
{ProofCombinedFE}

To compare the FE forecast to the individual forecasts, rewrite (\ref{ehat})
as
$\hat{e}_{i,T+1}=\varepsilon _{i,T+1}-  ( \hat{\alpha}_{i}-
\alpha _{i}  ) -\boldsymbol{x}_{i,T+1}^{\prime }(
\boldsymbol{\hat{\beta}}_{i}-%
\boldsymbol{\beta }_{i})$, and note that
$\hat{\alpha}_{i}-\alpha _{i}=\bar{%
\varepsilon}_{iT}-\boldsymbol{\bar{x}}_{iT}^{\prime }  (
\boldsymbol{\hat{%
\beta}}_{i}-\boldsymbol{\beta }_{i}  ) $. Therefore,
$ \hat{e}_{i,T+1}=\bar{\bar{\varepsilon}}_{i,T+1}-
\bar{\bar{\boldsymbol{x}}}%
_{i,T+1}^{\prime }(\boldsymbol{\hat{\beta}}_{i}-\boldsymbol{\beta }_{i})
$. Furthermore,
$\hat{e}_{i,T+1}^{%
\text{FE}}=\bar{\bar{\varepsilon}}_{i,T+1}-(\hat{\boldsymbol{\beta }}_{
\text{%
FE}}-\boldsymbol{\beta }_{i})^{\prime }\bar{\bar{\boldsymbol{x}}}_{i,T+1}$,
where
$\bar{\bar{\varepsilon}}_{i,T+1}=\varepsilon _{i,T+1}-
\bar{\varepsilon}%
_{iT}$ and
$\bar{\bar{\boldsymbol{x}}}_{i,T+1}=\boldsymbol{x}_{iT+1}-
\bar{%
\boldsymbol{x}}_{iT}$,
\begin{eqnarray*}
\hat{\boldsymbol{\beta }}_{i}-\boldsymbol{\beta }_{i}
&=&\bigl( \boldsymbol{X}%
_{i}^{\prime }
\boldsymbol{M}_{T}\boldsymbol{X}_{i}
\bigr)^{-1} \boldsymbol{X}%
_{i}^{\prime }
\boldsymbol{M}_{T}\boldsymbol{\varepsilon }_{i}=
\boldsymbol{Q}%
_{iT,\beta }^{-1}\boldsymbol{\xi
}_{iT,\beta },
\\
\hat{\boldsymbol{\beta }}_{\text{FE}}-\boldsymbol{\beta }_{i}
&=&-%
\boldsymbol{\eta }_{i,\beta }+\bar{\boldsymbol{Q}}_{NT,\beta }^{-1}
\bar{%
\boldsymbol{q}}_{NT,\beta }+\bar{\boldsymbol{Q}}_{NT,\beta }^{-1}
\bar{%
\boldsymbol{\xi }}_{NT,\beta },
\\
\boldsymbol{Q}_{iT,\beta }&=&T^{-1}\boldsymbol{X}_{i}^{\prime }
\boldsymbol{M%
}_{T}\boldsymbol{X}_{i}, \quad \bar{\boldsymbol{Q}}_{NT} = N^{-1} \sum_{i=1}^N \boldsymbol{Q}_{iT,\beta},
\\
\boldsymbol{\xi }_{iT,\beta }&=&T^{-1}%
\boldsymbol{X}_{i}^{\prime }\boldsymbol{M}_{T}
\boldsymbol{\varepsilon }_{i}%
, \quad \text{and}\quad \bar{
\boldsymbol{\xi }}_{NT,\beta }=N^{-1}\sum
_{i=1}^{N}%
\boldsymbol{\xi
}_{iT,\beta }=O_{p}\bigl(N^{-1/2}T^{-1/2}\bigr).
\end{eqnarray*}%
Hence,%
%
\begin{eqnarray}
N^{-1}\sum_{i=1}^{N}
\hat{e}_{i,T+1}^{\text{FE}}\hat{e}_{i,T+1}
&=&N^{-1} \sum_{i=1}^{N}
\bar{\bar{\varepsilon}}_{i,T+1}^{2}\notag
\\
&&{}+N^{-1}\sum_{i=1}^{N}(
\hat{\boldsymbol{\beta }}_{i}- \boldsymbol{\beta }%
_{i})^{\prime }
\bar{\bar{\boldsymbol{x}}}_{i,T+1} \bar{\bar{\boldsymbol{x}}}%
_{i,T+1}^{\prime }(
\hat{\boldsymbol{\beta }}_{\text{FE}}- \boldsymbol{\beta }%
_{i})
\notag
\\
&&{}-N^{-1}\sum_{i=1}^{N}(
\varepsilon _{i,T+1}-\bar{\varepsilon}_{iT})
\bar{%
\bar{\boldsymbol{x}}}_{i,T+1}^{\prime } \bigl[ (
\hat{\boldsymbol{\beta }%
}_{\text{FE}}-\boldsymbol{\beta
}_{i} ) +( \hat{\boldsymbol{\beta }}_{i}-%
\boldsymbol{\beta }_{i}) \bigr] . \label{FEMSFE}
\end{eqnarray}%
Under Assumptions \ref{ass:weak_exogeneity_b} and \ref{ass:6}, we have
\begin{equation*}
\frac{1}{N}\sum_{i=1}^{N} ( \hat{
\boldsymbol{\beta }}_{
\text{FE}}-%
\boldsymbol{\beta
}_{i} ) ^{\prime }\bar{\bar{\boldsymbol{x}}}_{i,T+1}%
\bar{\varepsilon}_{iT}=c_{NT}^{\text{FE}}+O_{p}
\bigl(N^{-1/2}\bigr).
\end{equation*}%
Additionally,
\begin{equation*}
N^{-1}\sum_{i=1}^{N}\bar{
\bar{\boldsymbol{x}}}_{i,T+1}^{\prime }( \boldsymbol{%
\hat{\beta}}_{i}-\boldsymbol{\beta }_{i})\bar{\bar{
\varepsilon}}%
_{i,T+1}=c_{NT,\beta }+O_{p}
\bigl(N^{-1/2}\bigr),
\end{equation*}%
where
$ c_{NT,\beta }=N^{-1}\sum_{i=1}^{N}\mathrm{E}  [
\bar{\bar{\boldsymbol{x}}}%
_{i,T+1}^{\prime }(\boldsymbol{X}_{i}^{\prime }\boldsymbol{M}_{T}
\boldsymbol{%
X}_{i})^{-1}\boldsymbol{X}_{i}^{\prime }\boldsymbol{M}_{T}
\boldsymbol{%
\varepsilon }_{i}\bar{\varepsilon}_{iT+1}  ]  $. Details are in the
Supplemental Appendix, where we also show that (see (S.22))
\begin{equation}
N^{-1}\sum_{i=1}^{N}
\hat{e}_{i,T+1}^{2}=N^{-1}\sum
_{i=1}^{N} \bar{\bar{%
\varepsilon}}_{i,T+1}^{2}+T^{-1}h_{NT,\beta }-2c_{NT,\beta }+O_{p}
\bigl(N^{-1/2}\bigr), \label{MSFEI_FE}
\end{equation}%
where
$ h_{NT,\beta }=N^{-1}\sum_{i=1}^{N}\mathrm{E}  \bigl[
\bar{\bar{\boldsymbol{x}}}%
_{i,T+1}^{\prime }\boldsymbol{Q}_{iT,\beta }^{-1}  \bigl(
\frac{\boldsymbol{X}%
_{i}^{\prime }\boldsymbol{M}_{T}\boldsymbol{\varepsilon }_{i}\boldsymbol{%
\varepsilon }_{i}^{\prime }\boldsymbol{M}_{T}\boldsymbol{X}_{i}}{T}
  \bigr) \boldsymbol{Q}_{iT,\beta }^{-1}\bar{\bar{\boldsymbol{x}}}_{i,T+1}
  \bigr]  $.

Using this result, we have%
%
\begin{equation}
N^{-1}\sum_{i=1}^{N}\bar{
\bar{\varepsilon}}_{i,T+1} \bar{\bar{\boldsymbol{x}}}%
_{i,T+1}^{\prime }
\bigl[ ( \hat{\boldsymbol{\beta }}_{
\text{FE}}-%
\boldsymbol{\beta
}_{i} ) +(\hat{\boldsymbol{\beta }}_{i}-
\boldsymbol{%
\beta }_{i}) \bigr] =c_{NT}^{\text{FE}}+c_{NT,\beta }+O_{p}
\bigl(N^{-1/2}\bigr). \label{FEMSFE1}
\end{equation}%
Also,
\begin{eqnarray*}
&&N^{-1}\sum_{i=1}^{N}(
\hat{\boldsymbol{\beta }}_{i}- \boldsymbol{\beta }%
_{i})^{\prime }
\bigl( \bar{\bar{\boldsymbol{x}}}_{i,T+1} \bar{\bar{%
\boldsymbol{x}}}_{i,T+1}^{\prime } \bigr) (\hat{\boldsymbol{\beta
}}_{
\text{%
FE}}-\boldsymbol{\beta }_{i})
\\
&&\quad =T^{-1/2}N^{-1}\sum_{i=1}^{N}
\bigl( T^{-1/2} \boldsymbol{\varepsilon }%
_{i}^{\prime }
\boldsymbol{M}_{T}\boldsymbol{X}_{i} \bigr)
\\
&&{}\qquad \times
\boldsymbol{Q}%
_{iT,\beta }^{-1} \bigl( \bar{\bar{
\boldsymbol{x}}}_{i,T+1} \bar{\bar{%
\boldsymbol{x}}}_{i,T+1}^{\prime }
\bigr) \bigl( - \boldsymbol{\eta }%
_{i,\beta }+\bar{
\boldsymbol{Q}}_{NT,\beta }^{-1} \bar{\boldsymbol{q}}%
_{NT,\beta }+
\bar{\boldsymbol{Q}}_{NT,\beta }^{-1} \bar{\boldsymbol{\xi
}}%
_{NT,\beta } \bigr) .
\end{eqnarray*}%
$\bar{\boldsymbol{\xi }}_{NT,\beta }=O_{p}(N^{-1/2}T^{-1/2})$ and
$\bar{%
\boldsymbol{Q}}_{NT,\beta }^{-1}=\bar{\boldsymbol{Q}}_{N,\beta }^{-1}+O_{p}(N^{-1/2})$,
where
$\bar{\boldsymbol{Q}}_{N,\beta }=E  ( \bar{%
\boldsymbol{Q}}_{NT,\beta }  ) $, (see Lemma~\ref{Lemma_1_VTEX1}). Hence,
for a fixed $T>T_{0}$,
\begin{equation*}
\Biggl[ N^{-1}\sum_{i=1}^{N}
\bigl( T^{-1/2} \boldsymbol{\varepsilon }%
_{i}^{\prime }
\boldsymbol{M}_{T}\boldsymbol{X}_{i} \bigr)
\boldsymbol{Q}%
_{iT,\beta }^{-1} \bigl( \bar{\bar{
\boldsymbol{x}}}_{i,T+1} \bar{\bar{%
\boldsymbol{x}}}_{i,T+1}^{\prime }
\bigr) \Biggr] \bar{\boldsymbol{Q}}%
_{NT,\beta }^{-1}
\bar{\boldsymbol{\xi }}_{NT,\beta }=O_{p}\bigl(N^{-1/2}T^{-1/2}\bigr).
\end{equation*}%
Also, under Assumption~\ref{ass:6},
\begin{eqnarray*}
&&N^{-1}\sum_{i=1}^{N}
\bigl( T^{-1/2}\boldsymbol{\varepsilon }_{i}^{
\prime }%
\boldsymbol{M}_{T}\boldsymbol{X}_{i} \bigr)
\boldsymbol{Q}_{iT,
\beta }^{-1} \bigl( \bar{\bar{
\boldsymbol{x}}}_{i,T+1} \bar{\bar{\boldsymbol{x}}}%
_{i,T+1}^{\prime }
\bigr) \boldsymbol{\eta }_{i,\beta }
\\
&&\quad =N^{-1}\sum_{i=1}^{N}
\mathrm{E} \bigl[ \bigl( T^{-1/2} \boldsymbol{%
\varepsilon
}_{i}^{\prime }\boldsymbol{M}_{T}
\boldsymbol{X}_{i} \bigr) \boldsymbol{Q}_{iT,\beta }^{-1}
\bigl( \bar{\bar{\boldsymbol{x}}}_{i,T+1}\bar{%
\bar{
\boldsymbol{x}}}_{i,T+1}^{\prime } \bigr) \boldsymbol{\eta
}_{i,
\beta }%
 \bigr] +O_{p}\bigl(N^{-1/2}
\bigr),
\end{eqnarray*}%
and
\begin{eqnarray*}
&&N^{-1}\sum_{i=1}^{N}
\bigl( T^{-1/2}\boldsymbol{\varepsilon }_{i}^{
\prime }%
\boldsymbol{M}_{T}\boldsymbol{X}_{i} \bigr)
\boldsymbol{Q}_{iT,
\beta }^{-1} \bigl( \bar{\bar{
\boldsymbol{x}}}_{i,T+1} \bar{\bar{\boldsymbol{x}}}%
_{i,T+1}^{\prime }
\bigr)
\\
&&\quad =N^{-1}\sum_{i=1}^{N}
\mathrm{E} \bigl[ \bigl( T^{-1/2} \boldsymbol{%
\varepsilon
}_{i}^{\prime }\boldsymbol{M}_{T}
\boldsymbol{X}_{i} \bigr) \boldsymbol{Q}_{iT,\beta }^{-1}
\bigl( \bar{\bar{\boldsymbol{x}}}_{i,T+1}\bar{%
\bar{
\boldsymbol{x}}}_{i,T+1}^{\prime } \bigr) \bigr] +O_{p}
\bigl(N^{-1/2}\bigr).
\end{eqnarray*}%
Then%
%
\begin{equation}
N^{-1}\sum_{i=1}^{N}(\hat{
\boldsymbol{\beta }}_{i}- \boldsymbol{\beta }%
_{i})^{\prime }
\bar{\bar{\boldsymbol{x}}}_{i,T+1} \bar{\bar{\boldsymbol{x}}}%
_{i,T+1}^{\prime }(
\hat{\boldsymbol{\beta }}_{\text{FE}}- \boldsymbol{\beta }%
_{i})=T^{-1}
\psi _{NT}^{\text{FE}}+O_{p}\bigl(T^{-1/2}N^{-1/2}
\bigr), \label{FEMSFE2}
\end{equation}%
where
%
\begin{eqnarray}
\psi _{NT}^{\text{FE}} &=&TN^{-1}\sum
_{i=1}^{N}\mathrm{E} \bigl[ \bigl(
T^{-1}\boldsymbol{\varepsilon }_{i}^{\prime }
\boldsymbol{M}_{T} \boldsymbol{%
X}_{i} \bigr)
\boldsymbol{Q}_{iT,\beta }^{-1} \bigl( \bar{\bar{
\boldsymbol{x}}%
}_{i,T+1}\bar{\bar{\boldsymbol{x}}}_{i,T+1}^{\prime }
\bigr) \bigr] \bar{%
\boldsymbol{Q}}_{N,\beta }^{-1}
\bar{\boldsymbol{q}}_{N,\beta } \notag
\\
&&{}-TN^{-1}\sum_{i=1}^{N}
\mathrm{E} \bigl[ \bigl( T^{-1} \boldsymbol{%
\varepsilon
}_{i}^{\prime }\boldsymbol{M}_{T}
\boldsymbol{X}_{i} \bigr) \boldsymbol{Q}_{iT,\beta }^{-1}
\bigl( \bar{\bar{\boldsymbol{x}}}_{i,T+1}\bar{%
\bar{
\boldsymbol{x}}}_{i,T+1}^{\prime } \bigr) \boldsymbol{\eta
}_{i,
\beta }%
 \bigr] . \label{FEepsiAPP}
\end{eqnarray}%
Using (\ref{FEMSFE1}) and (\ref{FEMSFE2}) in (\ref{FEMSFE}) yields
\begin{eqnarray*}
N^{-1}\sum_{i=1}^{N}
\hat{e}_{i,T+1}^{\text{FE}}\hat{e}_{i,T+1}
&=&N^{-1} \sum_{i=1}^{N}
\bar{\bar{\varepsilon}}_{i,T+1}^{2} +T^{-1}\psi
_{NT}^{
\text{FE}}
\\
&&{}- \bigl( c_{NT}^{\text{FE}}+c_{NT,\beta } \bigr)
+O_{p}\bigl(N^{-1/2}\bigr)+O_{p}
\bigl(T^{-1/2}N^{-1/2}\bigr).
\end{eqnarray*}%
Substituting this result together with (\ref{FEmsfe}) and (\ref{MSFEI_FE}) in (\ref{wNT-FE}) now yields equation (\ref{wFE}).
This expression corresponds to (\ref{MSF_PI}) for comparison with pooled estimates. 

\section{Estimation of combination weights}\label{app:estimationOfWeights}

There are three components in the forecast combination weights, given by
(%
\ref{w*NT}), namely $\Delta _{NT}$, $h_{NT,}$ and ${\psi }_{NT}$. To establish
that
$\hat{\Delta}_{NT}  ( \tilde{\boldsymbol{\eta }}  ) =N^{-1}
\sum_{i=1}^{N}\boldsymbol{w}_{i,T+1}^{\prime }
\tilde{\boldsymbol{\eta }}_{i}\tilde{\boldsymbol{\eta }}_{i}^{\prime }
\boldsymbol{w}_{i,T+1}$ is a consistent estimator of $\Delta _{NT}$, recall
that
\begin{equation*}
\tilde{\boldsymbol{\eta }}_{i}=\boldsymbol{\eta }_{i}+
\boldsymbol{Q}%
_{iT}^{-1}\boldsymbol{\xi
}_{iT}-\boldsymbol{\bar{Q}}_{NT}^{-1}
\boldsymbol{%
\bar{q}}_{NT}-\boldsymbol{
\bar{Q}}_{NT}^{-1}\boldsymbol{\bar{\xi}}_{NT},
\end{equation*}%
where
\begin{equation*}
\begin{split}\boldsymbol{\xi }_{iT }&=T^{-1}\boldsymbol{W}_{i}^{\prime }
\boldsymbol{%
\varepsilon }_{i}=T^{-1}\sum
_{t=1}^{T}\boldsymbol{w}_{it}
\varepsilon _{it}=O_{p} \bigl( T^{-1/2} \bigr)
,  \quad \text{and }
\\
  \bar{%
\boldsymbol{\xi
}}_{NT}&=N^{-1}\sum_{i=1}^{N}
\boldsymbol{\xi }%
_{iT}=O_{p} \bigl(
N^{-1/2}T^{-1/2} \bigr) .
\end{split}\end{equation*}%
Then
\begin{eqnarray*}
\hat{\Delta}_{NT} ( \tilde{\boldsymbol{\eta }} )
&=&N^{-1} \sum_{i=1}^{N}
\boldsymbol{w}_{i,T+1}^{\prime } \tilde{\boldsymbol{%
\eta }}_{i}\tilde{\boldsymbol{\eta }}_{i}^{\prime }
\boldsymbol{w}_{i,T+1}
\\
&=&N^{-1}\sum_{i=1}^{N}
\boldsymbol{w}_{i,T+1}^{\prime } \bigl( \boldsymbol{%
\eta }_{i}+\boldsymbol{Q}_{iT}^{-1}
\boldsymbol{\xi }_{iT}- \boldsymbol{\bar{Q}%
}_{NT}^{-1}
\boldsymbol{\bar{q}}_{NT}-\boldsymbol{\bar{Q}}_{NT}^{-1}%
\boldsymbol{\bar{\xi}}_{NT} \bigr)
\\
&&{}\times \bigl( \boldsymbol{\eta }_{i}+\boldsymbol{Q}_{iT}^{-1}
\boldsymbol{%
\xi }_{iT}-\boldsymbol{\bar{Q}}_{NT}^{-1}
\boldsymbol{\bar{q}}_{NT}-%
\boldsymbol{
\bar{Q}}_{NT}^{-1}\boldsymbol{\bar{\xi}}_{NT}
\bigr) ^{
\prime }%
\boldsymbol{w}_{i,T+1}
\\
&=&N^{-1}\sum_{i=1}^{N}
\boldsymbol{w}_{i,T+1}^{\prime } \boldsymbol{\eta
}_{i}%
\boldsymbol{\eta }_{i}^{\prime }
\boldsymbol{w}_{i,T+1}+N^{-1}\sum
_{i=1}^{N}%
\boldsymbol{w}_{i,T+1}^{\prime }
\boldsymbol{Q}_{iT}^{-1} \boldsymbol{\xi
}%
_{iT}\boldsymbol{\xi }_{iT}^{\prime }
\boldsymbol{Q}_{iT}^{-1} \boldsymbol{w}%
_{i,T+1}
\\
&&{}+N^{-1}\sum_{i=1}^{N}
\boldsymbol{w}_{i,T+1}^{\prime } \boldsymbol{\bar{Q}}%
_{NT}^{-1}
\boldsymbol{\bar{q}}_{NT}\boldsymbol{\bar{q}}_{NT}
\boldsymbol{\bar{%
Q}}_{NT}^{-1}
\boldsymbol{w}_{i,T+1}
\\
&&{}+N^{-1}\sum
_{i=1}^{N} \boldsymbol{w}%
_{i,T+1}^{\prime }
\boldsymbol{\bar{Q}}_{NT}^{-1} \boldsymbol{\bar{
\xi}}_{NT}%
\boldsymbol{\bar{\xi}}_{NT}^{\prime }
\boldsymbol{\bar{Q}}_{NT}^{-1}%
\boldsymbol{w}_{i,T+1}
\\
&&{}+2N^{-1}\sum_{i=1}^{N}
\boldsymbol{w}_{i,T+1}^{\prime } \boldsymbol{\eta
}%
_{i}\boldsymbol{\xi }_{iT}^{\prime }
\boldsymbol{Q}_{iT}^{-1} \boldsymbol{w}%
_{i,T+1}-2N^{-1}
\sum_{i=1}^{N}\boldsymbol{w}_{i,T+1}^{\prime }
\boldsymbol{%
\eta }_{i}\boldsymbol{\bar{q}}_{NT}^{\prime }
\boldsymbol{\bar{Q}}_{NT}^{-1}%
\boldsymbol{w}_{i,T+1}
\\
&&{}-2N^{-1}\sum_{i=1}^{N}
\boldsymbol{w}_{i,T+1}^{\prime } \boldsymbol{\eta
}%
_{i}\boldsymbol{\bar{\xi}}_{NT}^{\prime }
\boldsymbol{\bar{Q}}_{NT}^{-1}%
\boldsymbol{w}_{i,T+1}
\\
&&{}-2N^{-1}\sum
_{i=1}^{N}\boldsymbol{w}_{i,T+1}^{
\prime }%
\boldsymbol{Q}_{iT}^{-1}\boldsymbol{\xi }_{iT}
\boldsymbol{\bar{q}}%
_{NT}^{\prime }\boldsymbol{
\bar{Q}}_{NT}^{-1}\boldsymbol{w}_{i,T+1}
\\
&&{}-N^{-1}\sum_{i=1}^{N}
\boldsymbol{w}_{i,T+1}^{\prime } \boldsymbol{Q}%
_{iT}^{-1}
\boldsymbol{\xi }_{iT}\boldsymbol{\bar{\xi}}_{NT}^{\prime }%
\boldsymbol{\bar{Q}}_{NT}^{-1}\boldsymbol{w}_{i,T+1}
\\
&&{}+2N^{-1}
\sum_{i=1}^{N}%
\boldsymbol{w}_{i,T+1}^{\prime }\boldsymbol{\bar{Q}}_{NT}^{-1}
\boldsymbol{%
\bar{q}}_{NT}\boldsymbol{\bar{
\xi}}_{NT}^{\prime } \boldsymbol{\bar{Q}}%
_{NT}^{-1}
\boldsymbol{w}_{i,T+1},
\end{eqnarray*}%
and we have that
\begin{eqnarray*}
N^{-1}\sum_{i=1}^{N}
\boldsymbol{w}_{i,T+1}^{\prime }\boldsymbol{Q}_{iT}^{-1}%
\boldsymbol{\xi }_{iT}\boldsymbol{\xi }_{iT}^{\prime }
\boldsymbol{Q}%
_{iT}^{-1}\boldsymbol{w}_{i,T+1}
&=&N^{-1}\sum_{i=1}^{N}
\boldsymbol{w}%
_{i,T+1}^{\prime }\mathrm{E} \bigl(
\boldsymbol{Q}_{iT}^{-1} \boldsymbol{\xi
}%
_{iT}\boldsymbol{\xi }_{iT}^{\prime }
\boldsymbol{Q}_{iT}^{-1} \bigr) \boldsymbol{w}_{i,T+1}
\\
&&{}+O_{p}
\bigl( N^{-1/2} \bigr),
\\
\mathrm{E} \bigl( \boldsymbol{Q}_{iT}^{-1}\boldsymbol{
\xi }_{iT} \boldsymbol{%
\xi }_{iT}^{\prime }
\boldsymbol{Q}_{iT}^{-1} \bigr) &=&T^{-2}
\mathrm{E}%
 \bigl( \boldsymbol{Q}_{iT}^{-1}
\boldsymbol{W}_{i}^{\prime } \boldsymbol{%
\varepsilon }_{i}\boldsymbol{\varepsilon }_{i}^{\prime }
\boldsymbol{W}%
_{i}^{\prime }\boldsymbol{Q}_{iT}^{-1}
\bigr) =T^{-1}\sigma _{i}^{2}%
\boldsymbol{Q}_{i}^{-1},
\\
N^{-1}\sum_{i=1}^{N}
\boldsymbol{w}_{i,T+1}^{\prime } \boldsymbol{\xi
}_{iT}%
\boldsymbol{\xi }_{iT}^{\prime }
\boldsymbol{w}_{i,T+1} &=&N^{-1}\sum
_{i=1}^{N} \boldsymbol{w}_{i,T+1}^{\prime }
\mathrm{E} \bigl( \boldsymbol{Q}_{iT}^{-1} \boldsymbol{\xi
}_{iT}\boldsymbol{\xi }_{iT}^{\prime }
\boldsymbol{Q}_{iT}^{-1} \bigr) \boldsymbol{w}_{i,T+1}
\\
&&{}+O_{p}
\bigl( N^{-1/2} \bigr)
\\
&=&O_{p} \bigl( N^{-1/2} \bigr) +O_{p}
\bigl(T^{-1}\bigr),
\\
N^{-1}\sum_{i=1}^{N}
\boldsymbol{w}_{i,T+1}^{\prime } \boldsymbol{\bar{Q}}%
_{NT}^{-1}
\boldsymbol{\bar{q}}_{NT}\boldsymbol{\bar{q}}_{NT}
\boldsymbol{\bar{%
Q}}_{NT}^{-1}
\boldsymbol{w}_{i,T+1} &=&\boldsymbol{\bar{q}}_{NT}
\boldsymbol{%
\bar{Q}}_{NT}^{-1} \Biggl(
N^{-1}\sum_{i=1}^{N}
\boldsymbol{w}_{i,T+1}%
\boldsymbol{w}_{i,T+1}^{\prime }
\Biggr) \boldsymbol{\bar{Q}}_{NT}^{-1}%
\boldsymbol{\bar{q}}_{NT}
\\
&=&\boldsymbol{\bar{q}}_{N}\boldsymbol{\bar{Q}}_{N}^{-1}
\boldsymbol{\bar{q}}%
_{N}+O_{p}
\bigl(N^{-1/2}\bigr),
\\
N^{-1}\sum_{i=1}^{N}
\boldsymbol{w}_{i,T+1}^{\prime } \boldsymbol{\bar{Q}}%
_{NT}^{-1}
\boldsymbol{\bar{\xi}}_{NT}\boldsymbol{\bar{\xi}}_{NT}^{
\prime }%
\boldsymbol{\bar{Q}}_{NT}^{-1}\boldsymbol{w}_{i,T+1}&=&
\boldsymbol{\bar{\xi}}%
_{NT}^{\prime }\boldsymbol{
\bar{Q}}_{NT}^{-1} \Biggl( N^{-1}\sum
_{i=1}^{N}%
\boldsymbol{w}_{i,T+1}
\boldsymbol{w}_{i,T+1}^{\prime } \Biggr) \boldsymbol{%
\bar{Q}}_{NT}^{-1}\boldsymbol{\bar{\xi}}_{NT}
\\
&=&O_{p}
\bigl(N^{-1}T^{-1}\bigr),
\end{eqnarray*}
and
\begin{eqnarray*}
N^{-1}\sum_{i=1}^{N}
\boldsymbol{w}_{i,T+1}^{\prime } \boldsymbol{\eta
}_{i}%
\boldsymbol{\bar{q}}_{NT}^{\prime }
\boldsymbol{\bar{Q}}_{NT}^{-1} \boldsymbol{%
w}_{i,T+1}&=&
\boldsymbol{\bar{q}}_{NT}^{\prime }\boldsymbol{
\bar{Q}}%
_{NT}^{-1} \Biggl( N^{-1}
\sum_{i=1}^{N}\boldsymbol{w}_{i,T+1}
\boldsymbol{w}%
_{i,T+1}^{\prime }\boldsymbol{\eta
}_{i} \Biggr)
\\
 &=& \boldsymbol{\bar{q}}_{N}%
\boldsymbol{\bar{Q}}_{N}^{-1}\boldsymbol{
\bar{q}}_{N}+O_{p}\bigl(N^{-1/2}\bigr).
\end{eqnarray*}%
Also, since E$  ( \boldsymbol{Q}_{iT}^{-1}\boldsymbol{\xi }_{iT}
 \llvert  \boldsymbol{w}_{i,T+1},\boldsymbol{\eta }_{i}
  ) =0$,
\begin{eqnarray*}
N^{-1}\sum_{i=1}^{N}
\boldsymbol{w}_{i,T+1}^{\prime } \boldsymbol{\eta
}_{i}%
\boldsymbol{\xi }_{iT}^{\prime }
\boldsymbol{Q}_{iT}^{-1} \boldsymbol{w}%
_{i,T+1}&=&N^{-1}
\sum_{i=1}^{N}\boldsymbol{\xi
}_{iT}^{\prime } \boldsymbol{Q}%
_{iT}^{-1}
\bigl( \boldsymbol{w}_{i,T+1}\boldsymbol{w}_{i,T+1}^{
\prime }%
\boldsymbol{\eta }_{i} \bigr) =O_{p} \bigl(
N^{-1/2} \bigr) ,%
\\
N^{-1}\sum_{i=1}^{N}
\boldsymbol{w}_{i,T+1}^{\prime } \boldsymbol{\eta
}_{i}%
\bar{\boldsymbol{\xi }}_{NT}^{\prime }
\bar{\boldsymbol{Q}}_{NT}^{-1}%
\boldsymbol{w}_{i,T+1}&=&N^{-3}\sum
_{i=1}^{N}\boldsymbol{\xi }_{iT}^{
\prime }%
\boldsymbol{Q}_{iT}^{-1} \bigl( \boldsymbol{w}_{i,T+1}
\boldsymbol{w}%
_{i,T+1}^{\prime }\boldsymbol{\eta
}_{i} \bigr) +O_{p} \bigl( N^{-3} \bigr)
\\
&=&O_{p} \bigl( N^{-5/2} \bigr) ,%
\\
N^{-1}\sum_{i=1}^{N}
\boldsymbol{w}_{i,T+1}^{\prime }\boldsymbol{Q}_{iT}^{-1}%
\boldsymbol{\xi }_{iT}\boldsymbol{\bar{q}}_{NT}^{\prime }
\boldsymbol{\bar{Q}}%
_{NT}^{-1}
\boldsymbol{w}_{i,T+1}&=&\boldsymbol{\bar{q}}_{NT}^{\prime }%
\boldsymbol{\bar{Q}}_{NT}^{-1} \Biggl( N^{-1}
\sum_{i=1}^{N} \bigl(
\boldsymbol{w%
}_{i,T+1}\boldsymbol{w}_{i,T+1}^{\prime }
\bigr) \boldsymbol{Q}_{iT}^{-1}%
\boldsymbol{
\xi }_{iT} \Biggr)
\\
 &=&O_{p} \bigl( N^{-1/2} \bigr)
,%
\\
N^{-1}\sum_{i=1}^{N}
\boldsymbol{w}_{i,T+1}^{\prime }\boldsymbol{Q}_{iT}^{-1}%
\boldsymbol{\xi }_{iT}\boldsymbol{\bar{\xi}}_{NT}^{\prime }
\boldsymbol{\bar{Q%
}}_{NT}^{-1}
\boldsymbol{w}_{i,T+1}&=&\boldsymbol{\bar{\xi}}_{NT}^{
\prime }%
\boldsymbol{\bar{Q}}_{NT}^{-1} \Biggl( N^{-1}
\sum_{i=1}^{N} \boldsymbol{w}%
_{i,T+1}
\boldsymbol{w}_{i,T+1}^{\prime }\boldsymbol{Q}_{iT}^{-1}
\boldsymbol{%
\xi }_{iT} \Biggr)
\\
 &=&O_{p}
\bigl(N^{-1}T^{-1/2}\bigr),%
\end{eqnarray*}
and
\begin{equation*}
\begin{split}N^{-1}\sum_{i=1}^{N}
\boldsymbol{w}_{i,T+1}^{\prime } \boldsymbol{\bar{Q}}%
_{NT}^{-1}
\boldsymbol{\bar{q}}_{NT}\boldsymbol{\bar{\xi}}_{NT}^{
\prime }%
\boldsymbol{\bar{Q}}_{NT}^{-1}\boldsymbol{w}_{i,T+1}&=
\boldsymbol{\bar{\xi}}%
_{NT}^{\prime }\boldsymbol{
\bar{Q}}_{NT}^{-1} \Biggl( N^{-1}\sum
_{i=1}^{N}%
\boldsymbol{w}_{i,T+1}
\boldsymbol{w}_{i,T+1}^{\prime } \Biggr) \boldsymbol{%
\bar{Q}}_{NT}^{-1}\boldsymbol{\bar{q}}_{NT}
\\
&=O_{p}
\bigl(T^{-1/2}N^{-1/2}\bigr).
\end{split}\end{equation*}%
Overall
\begin{equation*}
\hat{\Delta}_{NT} ( \tilde{\boldsymbol{\eta }} ) =N^{-1}
\sum_{i=1}^{N}\boldsymbol{w}_{i,T+1}^{\prime }
\boldsymbol{\eta }_{i}%
\boldsymbol{\eta
}_{i}^{\prime }\boldsymbol{w}_{i,T+1}- \boldsymbol{
\bar{q}}%
_{N}\boldsymbol{\bar{Q}}_{N}^{-1}
\boldsymbol{\bar{q}}_{N}+O_{p} \bigl( N^{-1/2}
\bigr) +O_{p}\bigl(T^{-1}\bigr),
\end{equation*}%
or equivalently
\begin{equation*}
=N^{-1}\sum_{i=1}^{N}
\mathrm{E} \bigl( \boldsymbol{w}_{i,T+1}^{
\prime }%
\boldsymbol{\eta }_{i}\boldsymbol{\eta }_{i}^{\prime }
\boldsymbol{w}%
_{i,T+1} \bigr) -\boldsymbol{
\bar{q}}_{N}\boldsymbol{\bar{Q}}_{N}^{-1}%
\boldsymbol{\bar{q}}_{N}+O_{p} \bigl( N^{-1/2}
\bigr) +O_{p}\bigl(T^{-1}\bigr).
\end{equation*}

In combination,
$\hat{\Delta}_{NT}  ( \tilde{\boldsymbol{\eta }}  ) $ is a consistent
estimator of $\Delta _{NT}$ for large $N$ and $T$. In the case of strictly
exogenous regressors,
$\hat{\Delta}_{NT}  ( \tilde{%
\boldsymbol{\eta }}  ) $ is a consistent estimator of
$\Delta _{NT}$ for a fixed $T$, so long as
$\mathrm E  ( \hat{\boldsymbol{\theta }}_{i}  ) $ exits, since
in that case
$\mathrm{E}  ( \hat{\boldsymbol{\theta }}_{j}-%
\boldsymbol{\theta }_{j}  ) =\mathbf{0}$ for all $j$.

Now consider the second component of the weights, namely $h_{NT}$. 
We will show that a consistent estimator of $h_{NT}$ is given by
\begin{equation*}
\hat{h}_{NT}=N^{-1}\sum_{i=1}^{N}
\boldsymbol{w}_{i,T+1}^{\prime } \boldsymbol{%
Q}_{iT}^{-1}
\boldsymbol{\hat{H}}_{iT}\boldsymbol{Q}_{iT}^{-1}
\boldsymbol{w}%
_{i,T+1}=h_{NT}+O_{p}
\bigl( N^{-1/2} \bigr) +O_{p} \biggl( \frac{\ln
(N)}{%
\sqrt{T}} \biggr) ,
\end{equation*}%
where
$\boldsymbol{\hat{H}}_{iT}=\hat{\sigma}_{i}^{2}T^{-1}\sum_{t=1}^{T}%
\boldsymbol{w}_{it}\boldsymbol{w}_{it}^{\prime } $, and
$\hat{\sigma}%
_{i}^{2}=\sum_{t=1}^{T}\hat{\varepsilon}_{it}^{2}/(T-K)$. Note that
\begin{equation*}
\hat{h}_{NT}-h_{NT}=N^{-1}\sum
_{i=1}^{N}\boldsymbol{w}_{i,T+1}^{
\prime }%
\boldsymbol{Q}_{iT}^{-1}\boldsymbol{\hat{H}}_{iT}
\boldsymbol{Q}_{iT}^{-1}%
\boldsymbol{w}_{i,T+1},-N^{-1}
\sum_{i=1}^{N}\mathrm{E} (
s_{iT
 } ) ,
\end{equation*}%
where $s_{iT }$ is defined by (\ref{s_iT}). Since
$N^{-1}\sum_{i=1}^{N}%
\mathrm{E}  ( s_{iT }  ) =N^{-1}\sum_{i=1}^{N}s_{iT }+O_{p}
  ( N^{-1/2}  ) $,
\begin{eqnarray*}
&&\hat{h}_{NT}-h_{NT}
\\
 &&\quad =N^{-1}\sum
_{i=1}^{N}\boldsymbol{w}_{i,T+1}^{
\prime }%
\boldsymbol{Q}_{iT}^{-1} \Biggl[ \hat{
\sigma}_{i}^{2}T^{-1}\sum
_{t=1}^{T}%
\boldsymbol{w}_{it}
\boldsymbol{w}_{it}^{\prime }- \frac{\boldsymbol{W}%
_{i}^{\prime }
\boldsymbol{\varepsilon }_{i}\boldsymbol{\varepsilon
}%
_{i}^{\prime }\boldsymbol{W}_{i}}{T}
\Biggr] \boldsymbol{Q}_{iT}^{-1}%
\boldsymbol{w}_{i,T+1}+O_{p} \bigl( N^{-1/2} \bigr)
\\
&&\quad =N^{-1}\sum_{i=1}^{N}
\hat{\sigma}_{i}^{2}\boldsymbol{w}_{i,T+1}^{
\prime }%
\boldsymbol{Q}_{iT}^{-1}\boldsymbol{w}_{i,T+1}
\\
&&\qquad {}-N^{-1}
\sum_{i=1}^{N}%
\boldsymbol{w}_{i,T+1}^{\prime }\boldsymbol{Q}_{iT}^{-1}
\biggl( \frac{%
\boldsymbol{W}_{i}^{\prime }
\boldsymbol{\varepsilon }_{i}\boldsymbol{%
\varepsilon
}_{i}^{\prime }\boldsymbol{W}_{i}}{T} \biggr)
\boldsymbol{Q}%
_{iT}^{-1}\boldsymbol{w}_{i,T+1}+O_{p}
\bigl( N^{-1/2} \bigr)
\\
&&\quad =D_{1,NT}-D_{2,NT}+O_{p} \bigl(
N^{-1/2} \bigr) .
\end{eqnarray*}%
Now consider the decomposition
\begin{eqnarray*}
D_{1,NT} &=&N^{-1}\sum_{i=1}^{N}
\hat{\sigma}_{i}^{2}\boldsymbol{w}%
_{i,T+1}^{\prime }
\boldsymbol{Q}_{iT}^{-1}\boldsymbol{w}_{i,T+1}
\\
&=&N^{-1}%
\sum_{i=1}^{N}\hat{\sigma}_{i}^{2}
\boldsymbol{w}_{i,T+1}^{\prime } \bigl( \boldsymbol{Q}_{iT}^{-1}-
\boldsymbol{Q}_{i}^{-1}+ \boldsymbol{Q}%
_{i}^{-1}
\bigr) \boldsymbol{w}_{i,T+1}
\\
&=&N^{-1}\sum_{i=1}^{N}
\hat{\sigma}_{i}^{2}\boldsymbol{w}_{i,T+1}^{
\prime }%
\boldsymbol{Q}_{i}^{-1}\boldsymbol{w}_{i,T+1}+N^{-1}
\sum_{i=1}^{N} \hat{
\sigma%
}_{i}^{2}\boldsymbol{w}_{i,T+1}^{\prime }
\bigl( \boldsymbol{Q}_{iT}^{-1}-%
\boldsymbol{Q}_{i}^{-1} \bigr) \boldsymbol{w}_{i,T+1}.
\end{eqnarray*}%
Also,
\begin{eqnarray*}
\Biggl\llVert N^{-1}\sum_{i=1}^{N}
\hat{\sigma}_{i}^{2}\boldsymbol{w}%
_{i,T+1}^{\prime }
\bigl( \boldsymbol{Q}_{iT}^{-1}-\boldsymbol{Q}%
_{i}^{-1}
\bigr) \boldsymbol{w}_{i,T+1} \Biggr\rrVert &\leq& \sup
_{i} \llVert \boldsymbol{w}_{i,T+1} \rrVert
^{2}\sup_{i} \hat{\sigma}%
_{i}^{2}
\sup_{i} \bigl\llVert \bigl( \boldsymbol{Q}_{iT}^{-1}-
\boldsymbol{Q}%
_{i}^{-1} \bigr) \bigr\rrVert
\\
&=&O_{p} \biggl( \frac{\ln (N)}{\sqrt{T}} \biggr) ,
\end{eqnarray*}%
and
$\sup_{i}  ( \hat{\sigma}_{i}^{2}-\sigma _{i}^{2}  ) =O_{p}
  ( \frac{\ln (N)}{\sqrt{T}}  ) $. Hence,
\begin{eqnarray*}
D_{1,NT} &=&N^{-1}\sum_{i=1}^{N}
\hat{\sigma}_{i}^{2}\boldsymbol{w}%
_{i,T+1}^{\prime }
\boldsymbol{Q}_{i}^{-1}\boldsymbol{w}_{i,T+1}+O_{p}
\biggl( \frac{\ln (N)}{\sqrt{T}} \biggr)
\\
&=&N^{-1}\sum_{i=1}^{N}
\sigma _{i}^{2}\boldsymbol{w}_{i,T+1}^{
\prime }%
\boldsymbol{Q}_{i}^{-1}\boldsymbol{w}_{i,T+1}+N^{-1}
\sum_{i=1}^{N} \bigl( \hat{
\sigma}_{i}^{2}-\sigma _{i}^{2}
\bigr) \boldsymbol{w}_{i,T+1}^{
\prime }%
\boldsymbol{Q}_{i}^{-1}\boldsymbol{w}_{i,T+1}
\\
&&{}+O_{p}
\biggl( \frac{\ln (N)}{%
\sqrt{T}} \biggr)
\\
&=&N^{-1}\sum_{i=1}^{N}
\sigma _{i}^{2}\boldsymbol{w}_{i,T+1}^{
\prime }%
\boldsymbol{Q}_{i}^{-1}\boldsymbol{w}_{i,T+1}+O_{p}
\biggl( \frac{\ln (N)}{%
\sqrt{T}} \biggr).
\end{eqnarray*}%
Similarly, and noting that
$\mathrm{E}  \biggl(
\frac{\boldsymbol{W}%
_{i}^{\prime }\boldsymbol{\varepsilon }_{i}\boldsymbol{\varepsilon }%
_{i}^{\prime }\boldsymbol{W}_{i}}{T}  \biggr) =\sigma _{i}^{2}
\boldsymbol{Q}%
_{i}$, we have
\begin{eqnarray*}
D_{2,NT} &=&N^{-1}\sum_{i=1}^{N}
\boldsymbol{w}_{i,T+1}^{\prime } \boldsymbol{Q%
}_{iT}^{-1}
\biggl( \frac{\boldsymbol{W}_{i}^{\prime }\boldsymbol{
\varepsilon }%
_{i}\boldsymbol{\varepsilon
}_{i}^{\prime }\boldsymbol{W}_{i}}{T} \biggr)
\boldsymbol{Q}_{iT}^{-1}\boldsymbol{w}_{i,T+1}
\\
&=&N^{-1}\sum_{i=1}^{N}
\boldsymbol{w}_{i,T+1}^{\prime } \boldsymbol{Q}%
_{i}^{-1}
\biggl( \frac{\boldsymbol{W}_{i}^{\prime }\boldsymbol{
\varepsilon }%
_{i}\boldsymbol{\varepsilon
}_{i}^{\prime }\boldsymbol{W}_{i}}{T} \biggr)
\boldsymbol{Q}_{i}^{-1}\boldsymbol{w}_{i,T+1}+O_{p}
\biggl( \frac{\ln (N)}{%
\sqrt{T}} \biggr)
\\
&=&N^{-1}\sum_{i=1}^{N}
\boldsymbol{w}_{i,T+1}^{\prime } \boldsymbol{Q}%
_{i}^{-1}
\mathrm{E} \biggl( \frac{\boldsymbol{W}_{i}^{\prime }
\boldsymbol{%
\varepsilon }_{i}\boldsymbol{\varepsilon
}_{i}^{\prime }\boldsymbol{W}_{i}}{T%
}
\biggr) \boldsymbol{Q}_{i}^{-1}\boldsymbol{w}_{i,T+1}+O_{p}
\bigl(N^{-1/2}\bigr)+O_{p}%
 \biggl( \frac{\ln
(N)}{\sqrt{T}} \biggr)
\\
&=&N^{-1}\sum_{i=1}^{N}
\sigma _{i}^{2}\boldsymbol{w}_{i,T+1}^{
\prime }%
\boldsymbol{Q}_{i}^{-1}\boldsymbol{w}_{i,T+1}+O_{p}
\bigl(N^{-1/2}\bigr)+O_{p} \biggl( \frac{\ln (N)}{\sqrt{T}}
\biggr) .
\end{eqnarray*}%
Hence,
\begin{equation*}
\hat{h}_{NT}-h_{NT}=O_{p}
\bigl(N^{-1/2}\bigr)+O_{p} \biggl( \frac{\ln (N)}{
\sqrt{T}}%
 \biggr) ,
\end{equation*}%
as desired.

Finally, turn to $\psi _{NT}$. The asymptotic bias,
$\hat{\boldsymbol{\theta }_{i}}-\boldsymbol{\theta }_{i}$, for each
$i$ can then be estimated using bootstrap or half-jackknifing. The sieve
bootstrap could be used for a pure panel AR model but generally not with
weakly exogenous regressors. However, the half-jackknife estimator can
work more generally. For a give $T$%
, split the sample in two equal parts, one observation is dropped if
$T$ is an odd number. Denote the estimators based on the two subsamples
by $\hat{%
\boldsymbol{\theta }}_{ia}$ and $\hat{\boldsymbol{\theta }}_{ib}$. Then
$%
\mathrm{E}  ( \hat{\boldsymbol{\theta}}_{i}-
\boldsymbol{\theta }%
_{i}  ) $ can be estimated by
\begin{equation*}
\hat{\boldsymbol{\theta }}_{i}- \biggl[ 2\hat{\boldsymbol{\theta
}}_{i}- \frac{1%
}{2} ( \hat{\boldsymbol{\theta
}}_{ia}+ \hat{\boldsymbol{\theta }}%
_{ib} )
\biggr] = \biggl[ \frac{1}{2} ( \hat{\boldsymbol{\theta }}%
_{ia}+
\hat{\boldsymbol{\theta }}_{ib} ) - \hat{\boldsymbol{\theta
}}_{i}%
 \biggr].
\end{equation*}%
A consistent estimator of $\psi _{NT}$ is then given by
\begin{eqnarray*}
{\hat{\psi}_{NT}} &=& \Biggl[ TN^{-1}\sum
_{i=1}^{N} \biggl[ \frac{1}{2} ( \hat{
\boldsymbol{\theta }}_{ia}+ \hat{\boldsymbol{\theta
}}_{ib} ) -\hat{%
\boldsymbol{\theta }}_{i}
\biggr] ^{\prime }\boldsymbol{w}_{i,T+1} \boldsymbol{%
w}_{i,T+1}^{\prime }
\Biggr] \boldsymbol{\bar{Q}}_{NT}^{-1} \boldsymbol{
\bar{q}%
}_{NT} ( \mathring{\boldsymbol{\eta}} )
\\
&&{}-TN^{-1}\sum_{i=1}^{N}
\biggl[ \frac{1}{2} ( \hat{\boldsymbol{\theta }}%
_{ia}+
\hat{\boldsymbol{\theta }}_{ib} ) - \hat{\boldsymbol{\theta
}}_{i}%
 \biggr] ^{\prime }\boldsymbol{w}_{i,T+1}
\boldsymbol{w}_{i,T+1}^{
\prime }%
\mathring{\boldsymbol{
\eta}}.
\end{eqnarray*}

\subsubsection*{Estimating the weights in combination of individual and
fixed effects forecasts}

Similar to the derivations above, it can be shown that the components of
the weights in Proposition~\ref{prop:combinedFE} can be estimated as follows.
First,
%
\begin{equation}
\label{DeltaFE} \hat{\Delta}_{NT}^{\text{FE}}=\frac{1}{N}\sum
_{i=1}^{N} \bar{\bar{%
\boldsymbol{x}}}_{i,T+1}^{\prime} \mathring{\boldsymbol{\eta
}}_{i,
\beta} \mathring{\boldsymbol{\eta}}_{i,\beta}^{\prime}
\bar{\bar{\boldsymbol{x}}}%
_{i,T+1} - \hat{\bar{
\boldsymbol{q}}}_{NT,\beta}^{\prime } ({ \boldsymbol{%
\mathring{\eta}}_{i,\beta}} ) \hat{\bar{\boldsymbol{Q}}}%
_{NT,\beta}^{-1}
\hat{\bar{\boldsymbol{q}}}_{NT,\beta} ({ \boldsymbol{%
\mathring{\eta}}_{i,\beta}} ),
\end{equation}%
$\mathring{\boldsymbol{\eta}}_{i,\beta}=\hat{\boldsymbol{\beta}}_{i} -
\frac {1}{N}\sum_{i=1}^{N}\hat{\boldsymbol{\beta }}_{i}$,  $
\bar{\bar{\boldsymbol{x}}}%
_{i,T+1}=\boldsymbol{x}_{iT+1}-\bar{\boldsymbol{x}}_{iT}$,  $
\hat{\bar{%
\boldsymbol{Q}}}_{NT,\beta} = N^{-1}T^{-1}\sum_{i=1}^{N}
\boldsymbol{X}%
_{i}^{\prime}\boldsymbol{M}_{T} \boldsymbol{X}_{i}$  and\break
$\hat{\bar{%
\boldsymbol{q}}}_{NT,\beta}  ({\boldsymbol{\mathring{\eta}}_{i,
\beta}}%
  ) = N^{-1}T^{-1}\sum_{i=1}^{N} \boldsymbol{X}_{i}^{\prime}
\boldsymbol{M}_{T} \boldsymbol{X}_{i} \mathring{\boldsymbol{\eta}}_{i,
\beta}.$ Next,
%
\begin{equation}
\hat{h}_{NT,\beta }=N^{-1}\sum_{i=1}^{N}
\bar{\bar{\boldsymbol{x}}}%
_{i,T+1}^{\prime }
\boldsymbol{Q}_{iT,\beta }^{-1} \mathring{\boldsymbol{H}}_{iT,\beta }%
\boldsymbol{Q}_{iT,\beta }^{-1}\bar{\bar{\boldsymbol{x}}}_{i,T+1},
\label{hhatFE}
\end{equation}%
$\mathring{\boldsymbol{H}}_{iT,\beta }=\hat{\sigma}_{i}^{2}
\frac{1}{T}\sum_{t=1}^{T}%
\bar{\bar{\boldsymbol{x}}}_{it}\bar{\bar{\boldsymbol{x}}}_{it}^{
\prime }$,
$%
\hat{\sigma}_{i}^{2}=\hat{\boldsymbol{\varepsilon }}_{i}^{\prime }
\hat{%
\boldsymbol{\varepsilon }}_{i}/(T-K)$, and
$\hat{\varepsilon}_{it}=y_{it}-%
\hat{\boldsymbol{\theta }}_{i}^{\prime }\boldsymbol{w}_{it}$. Furthermore,
%
\begin{eqnarray}
\hat{\psi}_{NT}^{\text{FE}} &=&TN^{-1}\sum
_{i=1}^{N} ( \hat{\boldsymbol{%
\beta
}}_{\text{FE}}-\hat{\boldsymbol{\beta }}_{\text{FEJK}} )
^{
\prime }\bar{\bar{\boldsymbol{x}}}_{i,T+1} \bar{\bar{
\boldsymbol{x}}}%
_{i,T+1}^{\prime }\bar{
\boldsymbol{Q}}_{N,\beta }^{-1} \bar{\boldsymbol{q}}%
_{N,\beta }(
\mathring{\boldsymbol{\eta }}_{i,\beta })\notag
\\
&&{}-TN^{-1}\sum_{i=1}^{N} (
\hat{\boldsymbol{\beta }}_{\text{FE}}- \hat{%
\boldsymbol{\beta
}}_{\text{FEJK}} ) ^{\prime } \bar{\bar{\boldsymbol{x}}%
}_{i,T+1}
\bar{\bar{\boldsymbol{x}}}_{i,T+1}^{\prime } \mathring{\boldsymbol{
\eta }}%
_{i,\beta }, \label{psihatFE}
\end{eqnarray}%
where
$\hat{\boldsymbol{\beta }}_{\text{FEJK}}=2\hat{\boldsymbol{\beta }}_{%
\text{FE}}-\frac{1}{2}  ( \hat{\boldsymbol{\beta }}_{\text{FE},a}+
\hat{%
\boldsymbol{\beta }}_{\text{FE},b}  ) $ is the half-jackknife estimator
of \cite{Chuetal2018}.

Finally, the weights include the difference between
\begin{equation*}
c_{NT}^{\text{FE}}=\frac{1}{N}\sum
_{i=1}^{N} ( \hat{\boldsymbol{\beta
}}%
_{\text{FE}}-\boldsymbol{\beta }_{i} )
^{\prime } \bar{\bar{\boldsymbol{x%
}}}_{i,T+1}\bar{
\hat{\varepsilon}}_{iT}  \quad \text{and}\quad
\hat{c}%
_{NT,\beta }=\frac{1}{N}\sum_{i=1}^{N}
( \hat{\boldsymbol{\beta }}_{i}-{%
\boldsymbol{\beta
}}_{i} ) ^{\prime }\bar{\bar{\boldsymbol{x}}}_{i,T+1}%
\bar{\hat{\varepsilon}}_{iT}.
\end{equation*}%
However,
\begin{equation*}
\hat{c}_{NT}^{\text{FE}}-\hat{c}_{NT,\beta }=\frac{1}{N}
\sum_{i=1}^{N} ( \hat{\boldsymbol{\beta
}}_{\text{FE}}- \hat{\boldsymbol{\beta }}_{i} )
^{\prime } \bar{\bar{\boldsymbol{x}}}_{i,T+1}\bar{\hat{
\varepsilon}}%
_{iT}=O_{p}\bigl(T^{-1}
\bigr).
\end{equation*}

\section{Panel AR(1): An example of correlated heterogeneity}\label{PanelAR}

Correlated heterogeneity can arise in many contexts. One important example
is dynamic panel data models where, barring special cases, heterogeneity
is correlated by design. As a simple example, consider the stationary panel
AR(1) case where $y_{it}=\beta _{i}y_{i,t-1}+\varepsilon _{it}$, for
$%
t=\cdots -2-1,0,1,\dots ,T,T+1,\dots $, and
$\sup_{i} \llvert  \beta _{i} \rrvert  \leq c$ for some positive
$c<1$, and $\beta _{i}$ follows a random coefficient model
$\beta _{i}=\beta _{0}+\eta _{i}$, where $\beta _{0}=E(\beta _{i})$, and
$\eta _{i}$ is suitably truncated such that the stationary condition
$\sup_{i} \llvert  \beta _{i} \rrvert  \leq c$ is met.

Suppose our objective is to forecast $y_{iT+1}$ based on the observations
$%
  \{ y_{it},t=0,1,2,\ldots ,T  \} $.\footnote{The assumption that the process for $y_{it}$ has started a long time prior
to date $0$, is equivalent to assuming that $y_{i0}$ is drawn from a distribution
with zero mean and variance $\sigma _{i}^{2}/(1-\beta _{i}^{2})$%
.} In the context of the general linear model analyzed in the paper,
$%
\boldsymbol{w}_{it}=y_{i,t-1}$ and
$\boldsymbol{\theta }_{i}=\beta _{i}$. It is easily verified that our Assumptions
\ref{ass:1}--\ref{ass:6} cover the dynamic case where one or more elements
of $\boldsymbol{w}_{it}$ are lagged values of $y_{it}$. Forecasts based
on pooled estimates, which incorrectly assume
$\beta _{i}=\beta _{0}$ generate a heterogeneity bias, $\Delta _{N}$, given
by (\ref{DeltaSt_N}). In the present example $q_{i}=$ E$  ( y_{i,t-1}^{2}
\eta _{i}  ) $, $Q_{i}=$ E$  ( y_{i,t-1}^{2}  ) $, and
\begin{equation*}
\Delta _{N}=N^{-1}\sum_{i=1}^{N}
\mathrm{E} \bigl( y_{it}^{2}\eta _{i}^{2}
\bigr) - \frac{ \Biggl[ N^{-1}\sum_{i=1}^{N}
\mathrm{E} \bigl( y_{i,t-1}^{2}\eta _{i} \bigr)
\Biggr] ^{2}}{N^{-1}\sum_{i=1}^{N}
\mathrm{E}%
 \bigl( y_{i,t-1}^{2} \bigr) },
\end{equation*}%
where $q_{i}$ measures the degree of correlated heterogeneity. To derive
$%
q_{i}$ for the AR model, note that
%
\begin{equation}
y_{it}=\sum_{s=0}^{\infty }\beta
_{i}^{s}\varepsilon _{i,t-s}=\sum
_{s=0}^{
\infty } ( \beta _{0}+\eta
_{i} ) ^{s}\varepsilon _{i,t-s},
\label{yitAR}
\end{equation}%
so $y_{it}$ is a nonlinear function of $\eta _{i}$, and, in general,
$q_{i}=%
\mathrm{E}  ( y_{i,t-1}^{2}\eta _{i}  ) \neq 0$. This shows that
heterogeneity in panel AR models generates correlated heterogeneity as
is also implicit in the analysis of \citet{PesSmi1995}. Using (\ref{yitAR}%
), we have
\begin{eqnarray*}
\mathrm{E}(y_{it})&=&0,
\\
  Q_{i}&=&\mathrm{E}
\bigl(y_{it}^{2}\bigr)=\mathrm{E} \bigl(
y_{i,t-1}^{2} \bigr) =\mathrm{E} \biggl( \frac{\sigma
_{i}^{2}}{1-\beta _{i}^{2}}
\biggr),\quad  \text{for all }t,%
\\
q_{i}&=&\mathrm{E} \bigl( y_{i,t-1}^{2}\eta
_{i} \bigr) =\sum_{s=0}^{
\infty }%
\mathrm{E} \bigl( \beta _{i}^{2s}\eta _{i}
\sigma _{i}^{2} \bigr) = \mathrm{E}%
 \biggl(
\frac{\eta _{i}\sigma _{i}^{2}}{1-\beta
_{i}^{2}} \biggr) ,\quad  \text{and}
\\
\mathrm{E} \bigl( y_{it}^{2}\eta _{i}^{2}
\bigr) &=&\mathrm{E} \biggl( \frac{\eta _{i}^{2}\sigma
_{i}^{2}}{1-\beta _{i}^{2}}
\biggr) .%
\end{eqnarray*}In this simple example, heterogeneity is uncorrelated only if
$\beta _{0}=0$ and $\eta _{i}$ is symmetrically distributed around
$0$. This follows since when $\beta _{0}=0$ we have
$q_{i}=\mathrm{E}  (
\frac{\eta _{i}\sigma _{i}^{2}}{1-\eta _{i}^{2}}  ) $ and under symmetry
$\eta _{i}\sigma _{i}^{2}/  ( 1-\eta _{i}^{2}  ) $ is an odd function
of $\eta _{i}$, which yields $q_{i}=0$. But when $\beta _{0}\neq 0$, then
$q_{i}\neq 0$ even if $\eta _{i}$ has a symmetric distribution. The expression
for $\Delta _{N}$ is strictly positive irrespective of whether
$q_{i}=0$ or not. Under stationarity, $\Delta _{N}$ simplifies to
%
\begin{eqnarray}
\Delta _{\text{AR}} &=&\mathrm{E} \bigl( y_{it}^{2}
\eta _{i}^{2} \bigr) -%
\frac{ \bigl[
\mathrm{E} \bigl( y_{i,t-1}^{2}\eta _{i} \bigr)
\bigr] ^{2}}{%
\mathrm{E} \bigl( y_{i,t-1}^{2}
\bigr) } \notag
\\
&=& \frac{\mathrm{E} \biggl( \frac{\eta _{i}^{2}\sigma
_{i}^{2}}{1-\beta _{i}^{2}}
\biggr) \mathrm{E} \biggl( \frac{\sigma _{i}^{2}}{1-\beta
_{i}^{2}}%
 \biggr) - \biggl[ \mathrm{E}
\biggl( \frac{\eta _{i}\sigma _{i}^{2}}{1-
\beta _{i}^{2}} \biggr) \biggr] ^{2}}{
\mathrm{E} \biggl( \frac{\sigma _{i}^{2}}{%
1-
\beta _{i}^{2}} \biggr) }. \label{deltaAR}
\end{eqnarray}%
Let $f_{i}=\sigma _{i}\eta _{i}/\sqrt{1-\beta _{i}^{2}}$ and
$g_{i}=\sigma _{i}/\sqrt{1-\beta _{i}^{2}}$, and note that the numerator
of $\Delta _{%
\text{AR}}$ can be written as E$(f_{i}^{2})$E($g_{i}^{2})-  [ E(f_{i}g_{i})
  ] ^{2}\geq 0$, which establishes that
$\Delta _{\text{AR}%
} \geq 0$, in line with part (c) of Proposition~\ref{prop:pooled}.

The magnitude of $\Delta _{\text{AR}} $ depends on the joint distribution
of $\beta _{i}$ and $\sigma _{i}^{2}$. As an example, consider the case
where $\sigma _{i}^{2}$ and $\beta _{i}$ are independently distributed,
E$%
(\sigma _{i}^{2})=\sigma ^{2}$ and
$\eta _{i}\sim \operatorname{Uniform}(-a/2,a/2)$%
, for $a>0$.\footnote{Note in this case $\eta _{i}$ is symmetrically distributed around
$0$.} Then
\begin{equation*}
q_{i}=\sigma ^{2}\mathrm{E} \biggl( \frac{\eta
_{i}}{1-\beta _{i}^{2}} \biggr)
=%
\frac{\sigma ^{2}}{2} \biggl[ \mathrm{E} \biggl( \frac{
\eta _{i}}{1-\beta _{0}-\eta _{i}} \biggr)
+\mathrm{E} \biggl( \frac{\eta _{i}}{1+\beta _{0}+\eta
_{i}} \biggr) \biggr].
\end{equation*}%
To derive the expectations in the above expression, note that for a given
$B$%
, such that $B^{2}-a^{2}/4>0$, we have
%
\begin{equation}
\mathrm{E} \biggl( \frac{\eta _{i}}{B+\eta _{i}}
\biggr) = \frac{1}{a}%
\int _{-a/2}^{a/2} \biggl( \frac{\eta }{B+\eta }
\biggr)\,d\eta =1- \biggl( \frac{B%
}{a} \biggr) \ln \biggl(
\frac{B+a/2}{B-a/2} \biggr) . \label{Eetainv}
\end{equation}%
Using this result, and setting $B=1+\beta _{0}$, we have, for
$(1+\beta _{0})^{2}>a^{2}/4$,
\begin{equation*}
\mathrm{E} \biggl( \frac{\eta _{i}}{1+\beta _{0}+\eta
_{i}} \biggr) =1- \biggl( \frac{1+\beta _{0}}{a}
\biggr) \ln \biggl( \frac{1+\beta _{0}+a/2}{1+\beta
_{0}-a/2} \biggr) .
\end{equation*}%
Similarly, again for $(\beta _{0}-1)^{2}>a^{2}/4$,
\begin{equation*}
\mathrm{E} \biggl( \frac{\eta _{i}}{1-\beta _{0}-\eta
_{i}} \biggr) =- \mathrm{E}%
 \biggl( \frac{\eta
_{i}}{\beta -1+\eta _{i}} \biggr) =- \biggl[ 1- \biggl(
\frac{%
\beta _{0}-1}{a} \biggr) \ln \biggl( \frac{\beta
_{0}-1+a/2}{\beta _{0}-1-a/2}%
 \biggr)
\biggr] .
\end{equation*}%
Overall, assuming that $a/2<1- \llvert  \beta _{0} \rrvert  $, we
have%
%
\begin{eqnarray}
\mathrm{E} \bigl( y_{i,t-1}^{2}\eta _{i}
\bigr) &=& \frac{\sigma ^{2}}{2} \biggl[ \biggl( \frac{\beta
_{0}-1}{a} \biggr) \ln \biggl( \frac{\beta _{0}-1+a/2}{%
\beta _{0}-1-a/2} \biggr) - \biggl( \frac{1+\beta
_{0}}{a} \biggr) \ln \biggl( \frac{1+\beta _{0}+a/2}{1+
\beta _{0}-a/2} \biggr) \biggr] \notag
\\
&=&\frac{\sigma ^{2}}{2a} \biggl[ - ( 1-\beta _{0} )
\ln \biggl( \frac{%
1-\beta _{0}-a/2}{1-\beta
_{0}+a/2} \biggr) - ( 1+\beta _{0} ) \ln \biggl(
\frac{1+\beta _{0}+a/2}{1+\beta _{0}-a/2} \biggr)
\biggr] . \label{y2eta}
\end{eqnarray}%
To ensure that
$ \llvert  \beta _{i} \rrvert  = \llvert  \beta _{0}+\eta _{i}
 \rrvert  <1$, we require that $a$ is sufficiently small relative to
$%
\beta _{0}$ and $ \llvert  \beta _{0} \rrvert  <1$. A sufficient condition
for this to hold is that
\begin{equation*}
\llvert \beta _{i} \rrvert = \llvert \beta _{0}+\eta
_{i} \rrvert \leq \llvert \beta _{0} \rrvert + \llvert
\eta _{i} \rrvert = \llvert \beta _{0} \rrvert +a/2<1.
\end{equation*}

\begin{table}[tbp]
\caption{Numerical values for $\mathrm{E}\left( y_{i,t-1}^{2}\protect\eta %
_{i}\right)$ and $\Delta_\text{AR}$ for the panel AR(1) model}
\label{tbl:numericalValuesAR1}\centering
\begin{tabular}{lcc}
\hline\hline
$\beta_{0}$ & $\mathrm{E}( y_{i,t-1}^{2}\eta _{i})$ & $\Delta_\mathrm{AR}$
\\ \hline
0.3 & 0.100 & 0.117\\
0.45 & 0.316 & 0.163\\
0.49 & 0.657 & 0.211\\
0.4999 & 1.783 & 0.328\\
\hline\hline
\multicolumn{3}{p{4.6cm}}{{\footnotesize {Note: The numerical values are
based on $a=\sigma^2=1$.}}}%
\end{tabular}%
\end{table}

In general, for $a>0$ and $ \llvert  \beta _{0} \rrvert  <1$,
$\mathrm{E}%
  ( y_{i,t-1}^{2}\eta _{i}  ) \neq 0$,
$\mathrm{E}  ( y_{i,t-1}^{2}\eta _{i}  ) \rightarrow 0$, only
if $a\rightarrow 0$. Since
$\eta _{i}\sim \mathit{iid}\operatorname{Uniform}(-a/2,a/2)$ is symmetrically
distributed, then
$\mathrm{E}  ( y_{i,t-1}^{2}\eta _{i}  ) =0$ for
$\beta _{0}=0$. But
$\operatorname{Cov}(y_{i,t-1}^{2},\eta _{i}^{2})\neq 0$, even under symmetry
and $y_{i,t-1}^{2}$ and $\eta _{i}$ are not independently distributed.
For example, when $\beta _{0}=0$, we have
\begin{equation*}
\mathrm{E}\bigl(y_{it}^{2}\eta _{i}^{2}
\bigr)=\sigma ^{2}\mathrm{E} \biggl( \frac{\eta _{i}^{2}}{1-
\eta _{i}^{2}} \biggr) \neq \mathrm{E}
\bigl(y_{it}^{2}\bigr) \mathrm{E}\bigl(\eta
_{i}^{2}\bigr)=\sigma ^{2}\mathrm{E} \biggl(
\frac{1}{1-\eta _{i}^{2}} \biggr) \mathrm{E}\bigl(\eta
_{i}^{2}\bigr).
\end{equation*}%
When $\beta _{i}$ and $\sigma _{i}^{2}$ are independently distributed,
using (\ref{deltaAR}), we have
\begin{equation*}
\sigma ^{-2}\Delta _{\text{AR}}= \frac{\mathrm{E} \biggl(
\frac{\eta _{i}^{2}}{%
1-\beta
_{i}^{2}} \biggr) \mathrm{E} \biggl( \frac{1}{1-\beta
_{i}^{2}} \biggr) -%
 \biggl[ \mathrm{E}
\biggl( \frac{\eta _{i}}{1-\beta _{i}^{2}}
\biggr) \biggr] ^{2}}{\mathrm{E} \biggl( \frac{1}{1-\beta
_{i}^{2}} \biggr) }.
\end{equation*}%
We can derive an analytical expression for
$\mathrm{E}  ( \frac{1}{%
1-\beta _{i}^{2}}  ) $, noting that
\begin{equation*}
\mathrm{E} \biggl( \frac{1}{B+\eta _{i}} \biggr) =
\frac{1}{a}%
\int _{-a/2}^{a/2} \biggl( \frac{1}{B+\eta }
\biggr)\,d\eta = \frac{1}{a}\ln \biggl( \frac{B+a/2}{B-a/2} \biggr) .
\end{equation*}%
Hence,
\begin{eqnarray*}
\mathrm{E} \biggl( \frac{1}{1-\beta _{i}^{2}} \biggr)
&=&\frac{1}{2} \biggl[ -%
\mathrm{E} \biggl( \frac{1}{-1+\beta
_{0}+\eta _{i}} \biggr) + \mathrm{E} \biggl(
\frac{1}{1+\beta _{0}+\eta _{i}} \biggr) \biggr]
\\
&=&-\frac{1}{2a}\ln \biggl( \frac{\beta _{0}-1+a/2}{\beta
_{0}-1-a/2} \biggr) +%
\frac{1}{2a}\ln \biggl( \frac{\beta
_{0}+1+a/2}{\beta _{0}+1-a/2} \biggr) ,
\end{eqnarray*}%
or%
%
\begin{equation}
\mathrm{E} \biggl( \frac{1}{1-\beta _{i}^{2}} \biggr) =
\frac{1}{2a} \biggl[ \ln \biggl( \frac{1+\beta _{0}+a/2}{1+\beta
_{0}-a/2} \biggr) - \ln \biggl( \frac{%
1-\beta
_{0}-a/2}{1-\beta _{0}+a/2} \biggr) \biggr] .
\label{Evar}
\end{equation}

Using (\ref{Evar}) and simulated values of
$\mathrm{E}  ( \frac{\eta _{i}%
}{1-\beta _{i}^{2}}  )$ and
$\mathrm{E}  ( \frac{\eta _{i}^{2}}{%
1-\beta _{i}^{2}}  ) $, we obtain the values of
$\Delta _{\text{AR}}$ for $%
\alpha =1$ and $\sigma ^{2}=1$ that are reported in Table~\ref{tbl:numericalValuesAR1}
for $10{,}000$ replications.

\newpage
\bibliographystyle{apalike}
\bibliography{panelForecasting}

\newpage \setcounter{section}{0} \setcounter{equation}{0} %
\setcounter{page}{1} \setcounter{table}{0} 
\renewcommand{\thesection}{S.\arabic{section}} \renewcommand{%
\theequation}{S.\arabic{equation}} \renewcommand{\thepage}{S-\arabic{page}} %
\renewcommand{\thetable}{S.\arabic{table}}

\begin{center}
{\LARGE Supplement to}

{\LARGE Forecasting with panel data: Estimation uncertainty versus parameter
heterogeneity}
 \\ \ \\

\bigskip

M. Hashem Pesaran\footnote{%
University of Cambridge, UK, and University of Southern California, USA
Email: mhp1@cam.ac.uk}

Andreas Pick\footnote{%
Erasmus University Rotterdam, Erasmus School of Economics, Burgemeester
Oudlaan 50, 3000DR Rotterdam, and Tinbergen Institute. Email:
andreas.pick@cantab.net}

and

Allan Timmermann\footnote{%
UC San Diego, Rady School of Management, 9500 Gilman Drive, La Jolla CA
92093-0553. Email: atimmermann@ucsd.edu.}

\bigskip

\end{center}
\bigskip

\hrule

\vspace{2em}\newpage

\section{Introduction}

This supplementary appendix provides additional material underpinning the
analysis in the main paper along with a set of extensions to the Monte Carlo
simulations and empirical results. We begin by deriving in Section \ref%
{app:R2} the pooled R-squared, $\mathit{PR}_{N}^{2}$, used in the Monte
Carlo simulations to target the predictive power of our panel forecasting
models. We characterize $\mathit{PR}_{N}^{2}$ as a function of the
underlying parameters of the DGPs and use this to calibrate the parameters
used in the simulations. Next, Section \ref{sec:MCestimators} provides
details of how we implement the estimators used in our analysis. Section \ref%
{addMC} provides additional simulation and empirical results.

\section{Derivation of the pooled R-squared $\mathit{PR}_{N}^{2}$}

\label{app:R2}

Consider the panel data model%
\begin{equation}
y_{it}=\alpha _{i}+\beta _{i}y_{i,t-1}+\gamma _{i}x_{it}+\varepsilon _{it},
\label{eq:ARDL}
\end{equation}%
\begin{equation*}
x_{it}=\mu _{xi}+\xi _{it},\text{ \ }\xi _{it}=\rho _{xi}\xi _{i,t-1}+\sigma
_{xi}\sqrt{1-\rho _{xi}^{2}}\nu _{it}.
\end{equation*}%
Further, Var($\varepsilon _{it})=1$, and Var($\nu _{it})=1$ as set out in
further detail in Section~\ref{sec:MC} of the paper. To simplify the derivations, we
treat $x_{it}$ as strictly exogenous (no feedback from $y_{i,t-1})$ and
assume that $y_{it}$ is stationary and started a long time in the past. To
deal with the heterogeneity across the different equations in the panel, we
use the following average measure of fit, for a given $N$, 
\begin{equation}
\mathit{PR}_{N}^{2}=1-\frac{N^{-1}\sum_{i=1}^{N}\mathrm{Var}\left(
\varepsilon _{it}\left\vert \boldsymbol{\theta }_{i}\text{, }x_{it}\right.
\right) }{N^{-1}\sum_{i=1}^{N}\mathrm{Var}(y_{it}\left\vert \boldsymbol{%
\theta }_{i}\text{, }x_{it},\right. )},  \label{PR2}
\end{equation}%
where as before $\boldsymbol{\theta }_{i}=(\alpha _{i},\beta _{i},\gamma
_{i})^{\prime }$. For the numerator we have 
\begin{equation}
\mathrm{Var}\left( \varepsilon _{it}\left\vert \boldsymbol{\ \theta }_{i}%
\text{, }\sigma _{i}^{2}\text{, }x_{it}\text{ }\right. \right) =\sigma
_{i}^{2}.  \label{v1a}
\end{equation}

To derive $\mathrm{Var}(y_{it}\left\vert \boldsymbol{\theta }_{i}\text{, }%
x_{it}\right. )$, we note that 
\begin{eqnarray*}
\mathrm{Var}(y_{it}\left\vert \boldsymbol{\theta }_{i},\sigma
_{i}^{2},x_{it}\right. ) &=&\mathrm{E}\left[ \mathrm{Var}(y_{it}\left\vert 
\boldsymbol{\theta }_{i},\sigma _{i}^{2}\text{,}y_{i,t-1},x_{it}\right. )%
\right] +\mathrm{Var}\left[ \mathrm{E}(y_{it}\left\vert \boldsymbol{\theta }%
_{i},\sigma _{i}^{2},y_{i,t-1},x_{it}\right. )\right] , \\
\mathrm{E}(y_{it}\left\vert \boldsymbol{\theta }_{i},\sigma
_{i}^{2},y_{i,t-1},x_{it}\right. ) &=&\alpha _{i}+\beta _{i}y_{i,t-1}+\gamma
_{i}x_{it},\text{ }\mathrm{Var}(y_{it}\left\vert \boldsymbol{\theta }%
_{i},\sigma _{i}^{2},y_{i,t-1},x_{it}\right. )=\sigma _{i}^{2}, \\
\mathrm{Var}\left[ \mathrm{E}(y_{it}\left\vert \boldsymbol{\theta }%
_{i},\sigma _{i}^{2},y_{i,t-1},x_{it}\right. )\right] &=&\beta _{i}^{2}%
\mathrm{Var}(y_{it}\left\vert \boldsymbol{\theta }_{i},\sigma
_{i}^{2},x_{it}\right. )+\gamma _{i}^{2}\mathrm{Var}\left( x_{it}\right) .
\end{eqnarray*}%
Noting that $\mathrm{Var}(x_{it}) = \mathrm{Var}(\xi_{it})$, we have
\begin{equation}
\mathrm{Var}(y_{it}\left\vert \boldsymbol{\theta }_{i},\sigma
_{i}^{2},x_{it}\right. )=\frac{\gamma _{i}^{2}\mathrm{Var}(\xi _{it})+\sigma
_{i}^{2}}{1-\beta _{i}^{2}}.  \label{v1b}
\end{equation}

Now using (\ref{v1a}) and (\ref{v1b}) in (\ref{PR2}), we obtain 
\begin{equation*}
\mathit{PR}_{N}^{2}=1-\left( \frac{N^{-1}\sum_{i=1}^{N}\sigma _{i}^{2}}{%
N^{-1}\sum_{i=1}^{N}\frac{\gamma _{i}^{2}\sigma _{xi}^{2}+\sigma _{i}^{2}}{%
1-\beta _{i}^{2}}}\right) ,
\end{equation*}%
where $\sigma _{xi}^{2}=\mathrm{Var}(\xi _{it})$. After some simplifications
we have 
\begin{equation}
\mathit{PR}_{N}^{2}=\frac{b_{N}+\left( c_{N}-a_{N}\right) }{b_{N}+c_{N}},
\label{PRN}
\end{equation}%
where $a_{N}=N^{-1}\sum_{i=1}^{N}\sigma _{i}^{2},\quad
b_{N}=N^{-1}\sum_{i=1}^{N}\frac{\gamma _{i}^{2}\sigma _{xi}^{2}}{1-\beta
_{i}^{2}},$ and $c_{N}=N^{-1}\sum_{i=1}^{N}\frac{\sigma _{i}^{2}}{1-\beta
_{i}^{2}}$.

When these parameters are distributed independently, as $N\rightarrow \infty 
$, we obtain 
\begin{eqnarray*}
a_{N} &\overset{p}{\rightarrow }&\mathrm{E}(\sigma _{i}^{2}),\text{ }b_{N}%
\overset{p}{\rightarrow }\mathrm{E}(\gamma _{i}^{2})\mathrm{E}(\sigma
_{xi}^{2})\mathrm{E}\left( \frac{1}{1-\beta _{i}^{2}}\right) , \\
c_{N} &\overset{p}{\rightarrow }&\mathrm{E}(\sigma _{i}^{2})\mathrm{E}\left( 
\frac{1}{1-\beta _{i}^{2}}\right) .
\end{eqnarray*}%
Hence, using (\ref{PRN}), we note that (as $N\rightarrow \infty $) 
\begin{equation*}
\mathit{PR}_{N}^{2}\rightarrow \mathit{PR}^{2}=\frac{\mathrm{E}(\gamma
_{i}^{2})\mathrm{E}(\sigma _{xi}^{2})\mathrm{E}\left( \frac{1}{1-\beta
_{i}^{2}}\right) +\left[ \mathrm{E}(\sigma _{i}^{2})\mathrm{E}\left( \frac{1%
}{1-\beta _{i}^{2}}\right) -\mathrm{E}(\sigma _{i}^{2})\right] }{\mathrm{E}%
(\gamma _{i}^{2})\mathrm{E}(\sigma _{xi}^{2})\mathrm{E}\left( \frac{1}{%
1-\beta _{i}^{2}}\right) +\mathrm{E}(\sigma _{i}^{2})\mathrm{E}\left( \frac{1%
}{1-\beta _{i}^{2}}\right) }.
\end{equation*}%
Under our design $\mathrm{E}(\sigma _{i}^{2})=1$, $\mathrm{E}(\sigma
_{xi}^{2})=1$, and the above expression further simplifies to 
\begin{equation}
\mathit{PR}^{2}=\frac{\mathrm{E}(\gamma _{i}^{2})\mathrm{E}\left( \frac{1}{%
1-\beta _{i}^{2}}\right) +\left[ \mathrm{E}\left( \frac{1}{1-\beta _{i}^{2}}%
\right) -\text{\textrm{1}}\right] }{\mathrm{E}(\gamma _{i}^{2})\mathrm{E}%
\left( \frac{1}{1-\beta _{i}^{2}}\right) +\mathrm{E}\left( \frac{1}{1-\beta
_{i}^{2}}\right) }.  \label{eq:ExpPR2}
\end{equation}

In the general case where $\sigma _{i}^{2}$ is not distributed independently
of $\beta _{i}$ and $N$ is finite, we have 
\begin{equation*}
\mathit{PR}_{N}^{2}>1-a_{N}/c_{N}=1-\frac{N^{-1}\sum_{i=1}^{N}\sigma _{i}^{2}%
}{N^{-1}\sum_{i=1}^{N}\frac{\sigma _{i}^{2}}{1-\beta _{i}^{2}}}.
\end{equation*}

In the case where $\beta _{i}=\beta _{0}+\eta _{i\beta }$, and $\eta
_{i\beta }\sim \mathrm{iid}\ \mathrm{Uniform}(-\alpha _{\beta }/2,\alpha _{\beta }/2)$, $%
\alpha _{\beta }>0$, we have (see also (\ref{Evar}) in the Appendix to the
paper): 
\begin{equation}
\begin{array}{ll}
\mathrm{E}\left( \frac{1}{1-\beta _{i}^{2}}\right) & =\frac{1}{a_{\beta }}%
\int_{-a_{\beta }/2}^{a_{\beta }/2}\frac{1}{1-\left( \beta _{0}+\eta _{\beta
}\right) ^{2}}d\eta _{\beta } \\ 
& =\frac{1}{2a_{\beta }}\int_{-a_{\beta }/2}^{a_{\beta }/2}\left[ \frac{1}{%
1+\beta _{0}+\eta _{\beta }}+\frac{1}{1-\beta _{0}-\eta _{\beta }}\right]
d\eta _{\beta } \\ 
& =\frac{1}{2a_{\beta }}\left[ ln(1+\beta _{0}+\eta _{\beta })-ln(1-\beta
_{0}-\eta _{\beta })\right] _{-a_{\beta }/2}^{a_{\beta }/2} \\ 
& =\frac{1}{2a_{\beta }}\left[ ln\left( \frac{1+\beta _{0}+a_{\beta }/2}{%
1+\beta _{0}-a_{\beta }/2}\right) -ln\left( \frac{1-\beta _{0}-a_{\beta }/2}{%
1-\beta _{0}+a_{\beta }/2}\right) \right] ,%
\end{array}
\label{eq:expectation1}
\end{equation}%
assuming that 
\begin{equation*}
\left( 1+\beta _{0}+a_{\beta }/2\right) \left( 1+\beta _{0}-a_{\beta
}/2\right) >0\text{ and }\left( 1-\beta _{0}-a_{\beta }/2\right) \left(
1-\beta _{0}+a_{\beta }/2\right) >0,
\end{equation*}%
or if%
\begin{equation}
0\leq a_{\beta }<2\left( 1-\left\vert \beta _{0}\right\vert \right) .
\label{Conrho}
\end{equation}%
It is easily established that $\mathrm{E}\left( \frac{1}{1-\beta _{i}^{2}}%
\right) \rightarrow \frac{1}{1-\beta _{0}^{2}}$, as $a_{\beta }\rightarrow 0$%
.

Our Monte Carlo simulations target an $\mathit{PR}^{2}$ of 0.6. We do so by calibrating the values of
the $a_{\beta}$ and $\beta_ {0}$ parameters. The values of the parameters
used to this end are reported in Table~\ref{tbl:parametersMCARX}.

\begin{table}[h!]
\caption{\textit{PR}$^2$ for parameters of Monte Carlo models}
\label{tbl:parametersMCARX}\centering
\begin{tabular}{llcc}
\hline\hline
$a_\beta$ & $\beta_0$ & $\mathit{PR}_{ARX}^2(\rho_{\gamma x}=0)$ & $\mathit{%
PR} _{ARX}^2(\rho_{\gamma x}=0.5)$ \\ \hline
0 & 0.775 & 0.605 & 0.605 \\ 
0.5 & 0.688 & 0.640 & 0.651 \\ 
1 & 0.486 & 0.669 & 0.686 \\ \hline\hline
\multicolumn{4}{p{8.5cm}}{{\footnotesize {Note: The table reports the
parameters for $a_\beta$ and $\beta_0$ in the first two columns and the
implied values for \textit{PR}$^2$ in the remaining columns.}}}%
\end{tabular}%
\end{table}


\section{Details of the estimators\label{sec:MCestimators}}

This section provides details on the implementation of the estimators and
forecasts used in the Monte Carlo experiments and empirical applications.
Recall that the DGP, equation~(\ref{eq:MC1}) in the paper, in the Monte Carlo experiments is%
\begin{equation}
y_{it}=\alpha _{i}+\beta _{i}y_{i,t-1}+\gamma _{i}x_{it}+\varepsilon
_{it}=\alpha _{i}+\boldsymbol{\beta }_{i}^{\prime }\mathbf{x}%
_{it}+\varepsilon _{it}=\boldsymbol{\theta }_{i}^{\prime }\boldsymbol{w}%
_{it}+\varepsilon _{it},\quad \varepsilon _{it}\sim (0,\sigma _{i}^{2}),
\label{eq:model_forecasting_methods}
\end{equation}%
for $t=1,2,\ldots ,T$ and $i=1,2,\ldots ,N$, where $\boldsymbol{\beta }%
_{i}=(\beta _{i},\gamma _{i})^{\prime }$, $\boldsymbol{\theta }_{i}=(\alpha
_{i},\boldsymbol{\beta }_{i}^{\prime })^{\prime }$, $\mathbf{x}%
_{it}=(y_{i,t-1},x_{it})^{\prime }$, and $\boldsymbol{w}_{it}=(1,\mathbf{x}%
_{it}^{\prime })^{\prime }$. Here we consider a more general case where the
dimension of $\boldsymbol{x}_{it}$ is $k\times 1$ and that of $\boldsymbol{w}%
_{it}$ is $K\times 1$, where $K=k+1$. In principle, $\mathbf{x}_{it}$ could
include higher order lags of $y_{it}$ and $x_{it}$, and other covariates. As
in the paper, for simplicity we do not explicitly refer to the
forecast horizon, $h$, but it is assumed that $\boldsymbol{x}_{it}$ contains
information known at time $t-h$. Below we assume a forecast horizon of $h=1$.

\begin{description}
\item[Individual forecasts] \ The individual-specific forecasts based on the
data of a given cross-sectional unit are 
\begin{equation}
\hat{y}_{i,T+1}=\hat{\alpha}_{i,T}+\hat{\boldsymbol{\beta }}_{i,T}^{\prime }%
\boldsymbol{x}_{i,T+1}=\boldsymbol{\hat{\theta}}_{i,T}^{\prime }\boldsymbol{w%
}_{i,T+1}  \label{eq:indForeMC}
\end{equation}%
The parameters are estimated using the estimation sample containing $T$
observations: $\boldsymbol{y}_{i}=(y_{i1},y_{i2},\ldots ,y_{iT})^{\prime }$
and $\boldsymbol{X}_{i}=(\boldsymbol{x}_{i1},\boldsymbol{x}_{i2},\ldots ,%
\boldsymbol{x}_{iT})^{\prime }$. In matrix notation, the model is 
\begin{equation*}
\boldsymbol{y}_{i}=\alpha _{i}\boldsymbol{\tau }_{T}+\boldsymbol{X}_{i}%
\boldsymbol{\beta }_{i}+\boldsymbol{\varepsilon }_{i}=\boldsymbol{W}_{i}%
\boldsymbol{\theta }_{i}+\boldsymbol{\varepsilon }_{i},
\end{equation*}%
where $\boldsymbol{\tau }$ is a $T\times 1$ unit vector, $\boldsymbol{W}%
_{i}=(\boldsymbol{w}_{i1},\boldsymbol{w}_{i2},\ldots ,\boldsymbol{w}%
_{iT})^{\prime }$, $\boldsymbol{w}_{it}=(1,\mathbf{x}_{it}^{\prime
})^{\prime }$, and $\boldsymbol{\varepsilon }_{i}=(\varepsilon
_{i1},\varepsilon _{i2},\ldots ,\varepsilon _{iT})^{\prime }$.

The parameters are estimated as 
\begin{equation*}
\hat{\boldsymbol{\beta }}_{i,T}=\left( \boldsymbol{X}_{i}^{\prime }%
\boldsymbol{M}_{T}\boldsymbol{X}_{i}\right) ^{-1}\boldsymbol{X}_{i}%
\boldsymbol{M}_{T}\boldsymbol{y}_{i},
\end{equation*}%
\begin{equation*}
\hat{\alpha}_{i,T}=\left( \boldsymbol{\tau }_{T}^{\prime }\boldsymbol{M}_{ix}%
\boldsymbol{\tau }_{T}\right) ^{-1}\boldsymbol{\tau }_{T}^{\prime }%
\boldsymbol{M}_{ix}\boldsymbol{y}_{i},
\end{equation*}%
\begin{equation*}
\boldsymbol{M}_{T}=\boldsymbol{I}_{T}-\boldsymbol{\tau }_{T}\left( 
\boldsymbol{\tau }_{T}^{\prime }\boldsymbol{\tau }_{T}\right) ^{-1}%
\boldsymbol{\tau }_{T}^{\prime }\text{, }\boldsymbol{M}_{ix}=\boldsymbol{I}%
_{T}-\boldsymbol{X}_{i}\left( \boldsymbol{X}_{i}^{\prime }\boldsymbol{X}%
_{i}\right) ^{-1}\boldsymbol{X}_{i}^{\prime }.
\end{equation*}%
Written in more compact form, we have 
\begin{equation}
\boldsymbol{\hat{\theta}}_{i,T}=\left( \boldsymbol{W}_{i}^{\prime }%
\boldsymbol{W}_{i}\right) ^{-1}\boldsymbol{W}_{i}^{\prime }\boldsymbol{y}%
_{i}.  \label{eq:indbeta}
\end{equation}%
The \textquotedblleft individual\textquotedblright\ forecasts in (\ref%
{eq:indForeMC}), for $i=1,2,\ldots ,N$, will be used as the reference
forecast and the MSFE of all other methods are reported as ratios relative
to the MSFE of this forecast, defined by 
\begin{equation}
\mathrm{MSFE}_{\mathit{ref}}=N^{-1}\sum_{i=1}^{N}\left( y_{i,T+1}-%
\boldsymbol{\hat{\theta}}_{i,T}^{\prime }\boldsymbol{w}_{i,T+1}\right) ^{2}.
\label{eq:referenceMSFE}
\end{equation}

\item[Pooled forecasts] \ The forecasts that use the pooled information of
all units in the panel are 
\begin{equation}
\tilde{y}_{i,T+1}=\boldsymbol{\tilde{\theta}}_{\text{pool}}^{\prime }%
\boldsymbol{w}_{i,T+1},  \label{eq:poolForeMC}
\end{equation}%
where 
\begin{equation}
\boldsymbol{\tilde{\theta}}_{\text{pool}}=\left( \boldsymbol{W}^{\prime }%
\boldsymbol{W}\right) ^{-1}\boldsymbol{Wy}=\left( \sum_{i=1}^{N}\boldsymbol{W%
}_{i}^{\prime }\boldsymbol{W}_{i}\right) ^{-1}\sum_{i=1}^{N}\boldsymbol{W}%
_{i}^{\prime }\boldsymbol{y}_{i},  \label{thetapool}
\end{equation}%
and $\boldsymbol{W}=(\boldsymbol{W}_{1}^{\prime },\boldsymbol{W}_{2}^{\prime
},\ldots ,\boldsymbol{W}_{N}^{\prime })^{\prime }$ and $\boldsymbol{y}=(%
\boldsymbol{y}_{1}^{\prime },\boldsymbol{y}_{2}^{\prime },\ldots ,%
\boldsymbol{y}_{N}^{\prime })^{\prime }$.

\item[Fixed effects forecast] \ The FE forecasts are given by 
\begin{equation}
\hat{y}_{i,T+1}^{FE}=\hat{\alpha}_{i,\text{FE}}+\boldsymbol{\hat{\beta}}_{%
\text{FE}}^{\prime }\boldsymbol{x}_{i,T+1},  \label{ForFESup}
\end{equation}%
where%
\begin{equation*}
\boldsymbol{\hat{\beta}}_{\text{FE}}=\left( \sum_{i=1}^{N}\boldsymbol{X}%
_{i}^{\prime }\boldsymbol{M}_{T}\boldsymbol{X}_{i}\right) ^{-1}\sum_{i=1}^{N}%
\boldsymbol{X}_{i}^{\prime }\boldsymbol{M}_{T}\boldsymbol{y}_{i},
\end{equation*}%
and 
\begin{equation*}
\hat{\alpha}_{i,\text{FE}}=\boldsymbol{\tau }_{T}^{\prime }(\boldsymbol{y}%
_{i}-\boldsymbol{\hat{\beta}}_{\text{FE}}^{\prime }\boldsymbol{X}_{i})/T
\end{equation*}

\item[Goldberger's random effects BLUP] \ This forecast uses the best linear
unbiased predictor (BLUP) of Goldberger (1962). For this forecast, the model
is given by: 
\begin{equation*}
y_{i,t+1}=\alpha +\boldsymbol{\beta }^{\prime }\boldsymbol{x}%
_{i,t+1}+\varepsilon _{i,t+1},
\end{equation*}%
where $\varepsilon _{i,t+1}=\eta _{i}+u_{i,t+1}$. The BLUP forecasts are
given as 
\begin{equation}
\hat{y}_{i,T+1}^{RE}=\hat{\alpha}_{\text{RE}}+\hat{\boldsymbol{\beta }}_{%
\text{RE}}^{\prime }\boldsymbol{x}_{i,T+1}+\frac{T\hat{\sigma}_{\eta }^{2}}{T%
\hat{\sigma}_{\eta }^{2}+\hat{\sigma}_{u}^{2}}\bar{\hat{\varepsilon}}_{i},
\label{ForGoldberger}
\end{equation}%
where $\overline{\widehat{\varepsilon }}_{i}$ $=T^{-1}\sum_{t=1}^{T}\widehat{%
\varepsilon }_{it}$ and $\widehat{\varepsilon }_{it}=y_{it}-\hat{\alpha}_{%
\text{RE}}-\boldsymbol{x}_{it}^{\prime }\hat{\boldsymbol{\beta }}_{\text{RE}%
} $. $\hat{\alpha}_{\text{RE}},$ and $\hat{\boldsymbol{\beta }}_{\text{RE}}$
are estimated by GLS using 
\begin{equation*}
\boldsymbol{\hat{\Sigma}}^{-1}=\hat{\sigma}_{u}^{-2}\left( \boldsymbol{M}%
_{T}+\hat{\rho}\boldsymbol{P}_{T}\right)
\end{equation*}%
where $\boldsymbol{P}_{T}=\boldsymbol{I}_{T}-\boldsymbol{M}_{T}$, \quad $%
\hat{\rho}=\hat{\sigma}_{u}^{2}/(T\hat{\sigma}_{\eta }^{2}+\hat{\sigma}%
_{u}^{2})$, 
\begin{equation*}
\hat{\sigma}_{u}^{2}=\frac{1}{N(T-1)-K}\sum_{i=1}^{N}(\boldsymbol{y}_{i}-%
\hat{\alpha}_{i,\mathrm{FE}}-\boldsymbol{X}_{i}\hat{\boldsymbol{\beta }}_{%
\mathrm{FE}})^{\prime }\boldsymbol{M}_{T}(\boldsymbol{y}_{i}-\hat{\alpha}_{i,%
\mathrm{FE}}-\boldsymbol{X}_{i}\hat{\boldsymbol{\beta }}_{\mathrm{FE}})
\end{equation*}%
\begin{equation*}
\hat{\sigma}_{\eta }^{2}=\frac{1}{N-K}\sum_{i=1}^{N}(\bar{y}_{i}-\hat{%
\boldsymbol{\beta }}_{\mathrm{FE}}^{\prime }\bar{\boldsymbol{x}}_{i})^{2}-%
\hat{\sigma}_{u}^{2}/T,
\end{equation*}%
\begin{eqnarray*}
\hat{\boldsymbol{\beta }}_{\text{RE}} &=&\left[ \frac{1}{NT}\sum_{i=1}^{N}%
\boldsymbol{X}_{i}^{\prime }\boldsymbol{M}_{T}\boldsymbol{X}_{i}+\frac{\hat{%
\rho}}{N}\sum_{i=1}^{N}\left( \bar{\boldsymbol{x}}_{i}-\bar{\boldsymbol{x}}%
\right) \left( \bar{\boldsymbol{x}}_{i}-\bar{\boldsymbol{x}}\right) ^{\prime
}\right] ^{-1}\times \\
&&\left[ \frac{1}{NT}\sum_{i=1}^{N}\boldsymbol{X}_{i}^{\prime }\boldsymbol{M}%
_{T}\boldsymbol{y}_{i}+\frac{\hat{\rho}}{N}\sum_{i=1}^{N}\left( \bar{%
\boldsymbol{x}}_{i}-\bar{\boldsymbol{x}}\right) \left( \bar{y}_{i}-\bar{y}%
\right) ^{\prime }\right] ,
\end{eqnarray*}%
and $\hat{\alpha}_{\text{RE}}=\bar{y}-\hat{\boldsymbol{\beta }}_{\text{RE}%
}^{\prime }\bar{\boldsymbol{x}}, $ where%
\begin{equation*}
\bar{\boldsymbol{x}}_{i}=T^{-1}\sum_{t=1}^{T}\boldsymbol{x}_{i,t}\text{, }%
\quad \bar{\boldsymbol{x}}=N^{-1}\sum_{i=1}^{N}\bar{\boldsymbol{x}}_{i},%
\text{ \ \ }\bar{y}_{i}=T^{-1}\sum_{t=1}^{T}y_{it}\text{, }\quad \bar{y}%
=N^{-1}\sum_{i=1}^{N}\bar{y}_{i}.
\end{equation*}%
See Baltagi (2013, pp.\ 999--1001) and Pesaran (2015, pp.~646--649) for
further details.

\item[Combination of individual and pooled forecasts] \ 
\begin{equation*}
\hat{y}_{i,T+1}^{c}=\hat{\omega}_{NT}^{\ast }\hat{y}_{i,T+1}+(1-\hat{\omega}%
_{NT}^{\ast })\tilde{y}_{i,T+1},  \label{ComIP}
\end{equation*}%
where $\hat{y}_{i,T+1}$ and $\tilde{y}_{i,T+1}$ are the individual and
pooled forecasts in~\eqref{eq:indForeMC} and~\eqref{eq:poolForeMC} with
weights (see equation~\eqref{w*NT}
in the paper)
\begin{equation*}
\hat{\omega}_{NT}^{\ast }=\frac{\hat{\Delta}_{NT} - T^{-1}\hat{\psi}_{NT}}{%
\hat{\Delta}_{NT}+T^{-1}\hat{h}_{NT} -2T^{-1}\hat{\psi}_{NT}},
\label{wComIP}
\end{equation*}%
where $\hat{\Delta}_{NT}$, $\hat{h}_{NT}$, and $\hat{\psi}_{NT}$ are given
by (\ref{eq:Deltahat}), (\ref{eq:hhat}) and (\ref{eq:psihat}).

\item[Combination of individual and FE forecasts ] 
\begin{equation*}
y_{i,T+1}^{\ast }(\hat{\omega}_{\text{FE},NT}^{\ast })=\hat{\omega}_{\text{FE%
},NT}^{\ast }\hat{y}_{i,T+1}+(1-\hat{\omega}_{\text{FE},NT}^{\ast })\hat{y}%
_{i,T+1,\text{FE}},  \label{ComIFE}
\end{equation*}%
where $\hat{y}_{i,T+1}$ and $\hat{y}_{i,T+1,\text{FE}}$ are the individual
and FE forecasts in~\eqref{eq:indForeMC} and (\ref{ForFESup}) with the
weight 
\begin{equation*}
\hat{\omega}_{\text{FE},NT}^{\ast }=\frac{\hat{\Delta}_{NT}^{\text{FE}}
-T^{-1}\hat{\psi}_{NT}^{\text{FE}} - (\hat{c}^{\text{FE}}_{NT} -\hat{c}_{NT,\beta})}
{\hat{\Delta}_{NT}^{\text{FE}}+T^{-1}\hat{%
h}_{NT,\beta}-2 T^{-1}\hat{\psi}_{NT}^{\text{FE}}}.  \label{wComIFE2}
\end{equation*}%
$\hat{\Delta}_{NT}^{\text{FE}}$, $\hat{h}_{NT,\beta }$ and 
$\hat{\psi}_{NT}^{\text{FE}}$, are given by (\ref{DeltaFE}), 
(\ref{hhatFE}), and (\ref{psihatFE}), and $\hat{c}^{\text{FE}}_{NT}$ and
$\hat{c}_{NT,\beta}$ are defined below equation~\eqref{psihatFE} in the paper.

\item[Combination with individual weights]
\begin{equation*}
\hat{y}_{i,T+1}^{c}=\hat{\omega}_{i}^{\ast }\hat{y}_{i,T+1}+(1-\hat{\omega}%
_{i}^{\ast })\tilde{y}_{i,T+1},  \label{ComIP}
\end{equation*}%
where $\hat{y}_{i,T+1}$ and $\tilde{y}_{i,T+1}$ are the individual and
pooled forecasts in~\eqref{eq:indForeMC} and~\eqref{eq:poolForeMC} with
weights 
\begin{equation*}
\omega _{i}^{\ast} = \frac{ \boldsymbol{w}_{i,T+1}^\prime \hat{\boldsymbol{%
\Omega}}_\eta\boldsymbol{w}_{i,T+1} }{ \boldsymbol{w}_{i,T+1}^\prime(T^{-1}%
\hat\sigma^2_i\boldsymbol{Q}_{iT}^{-1} + \hat{\boldsymbol{\Omega}}_\eta)%
\boldsymbol{w}_{i,T+1} }
\end{equation*}
where $\hat{\boldsymbol{\Omega}}_\eta = N^{-1}\sum_{i=1}^N (\hat{\boldsymbol{%
\theta}}_{i,T}-\bar{\hat{\boldsymbol{\theta}}})(\hat{\boldsymbol{\theta}}_{i,T}-\bar{%
\hat{\boldsymbol{\theta}}})^\prime$, $\bar{\hat{\boldsymbol{\theta}}}%
=N^{-1}\sum_{i=1}^N \hat{\boldsymbol{\theta}}_i$, and the estimator of $%
\hat\sigma^2_i$ is given in Pesaran et al.~(2022).

\item[Empirical Bayes forecast] 

The empirical Bayes forecast using the estimator of Hsiao et al.~(1999) is $
\hat{y}_{i,T+1}^{EB}=\hat{\boldsymbol{\theta }}_{i,\mathit{EB}}^{\prime }%
\boldsymbol{w}_{i,T+1},\label{ForEB}$ 
where 
\begin{equation*}
\hat{\boldsymbol{\theta }}_{i,\mathit{EB}}^{\prime }=(\hat{\sigma}_{i}^{-2}%
\boldsymbol{W}_{i}^{\prime }\boldsymbol{W}_{i}+\boldsymbol{\hat{\Omega}}%
_{\eta }^{-1})^{-1}(\hat{\sigma}_{i}^{-2}\boldsymbol{W}_{i}^{\prime }%
\boldsymbol{y}_{i}+\boldsymbol{\hat{\Omega}}_{\eta }^{-1}\text{ }\bar{\hat{%
\boldsymbol{\theta }}}),
\end{equation*}%
\begin{equation*}
\bar{\hat{\boldsymbol{\theta }}}=\frac{1}{N}\sum_{i=1}^{N}\hat{\boldsymbol{%
\theta }}_{i,T},\quad \hat{\sigma}_{i}^{2}=\widehat{\boldsymbol{\varepsilon }%
}_{i}^{\prime }\widehat{\boldsymbol{\varepsilon }}_{i}/(T-K),
\end{equation*}%
$\boldsymbol{\hat{\Omega}}_{\eta }$ is given above,
and $\widehat{\boldsymbol{\varepsilon }}=\boldsymbol{y}_{i}-\boldsymbol{W}%
_{i}\hat{\boldsymbol{\theta }}_{i,T}$ with $\hat{\boldsymbol{\theta }}_{i,T}$
given in~\eqref{eq:indbeta}.

\item[Hierarchical Bayesian forecast] 

In this supplement, we additionally apply the hierarchical Bayesian model of
Lindley and Smith (1972) which assumes $\varepsilon _{it}\sim iid\mathrm{N}%
(0,\sigma ^{2})$, using the following priors: 
\begin{eqnarray*}
\boldsymbol{\theta }_{i} &\sim &\mathrm{N}(\bar{\boldsymbol{\theta }},%
\boldsymbol{\Sigma }_{\boldsymbol{\theta }}), \\
\bar{\boldsymbol{\theta }} &\sim &\mathrm{N}(\boldsymbol{d},{\boldsymbol{S}}%
_{\bar{\theta}}), \\
\boldsymbol{\Sigma }_{\boldsymbol{\theta }}^{-1} &\sim &\mathrm{Wishart}(\nu
_{\Sigma },(\nu _{\Sigma }\boldsymbol{S}_{\Sigma })^{-1}), \\
\sigma ^{2} &\sim &\mathrm{invGamma}(\nu _{\sigma }/2,\nu _{\sigma }s^{2}/2).
\end{eqnarray*}

The Gibbs sampler uses the conditional posteriors (Gelfand et al., 1990) as
set out below, where $|\cdot $ denotes conditional on the other parameters
in the Gibbs sampler, for $r_{b}=1,2,\ldots ,R_{b}$, where $R_{b}$ denotes
the number of random draws used in the Gibbs sampler:

\begin{itemize}
\item $\boldsymbol{\theta }_{i,r_{b}}|\cdot \sim \mathrm{N(\boldsymbol{b}%
_{i},\boldsymbol{S}_{i})}$, where $\boldsymbol{b}_{i}=\boldsymbol{S}%
_{i}\left( \sigma _{r_{b}-1}^{-2}\boldsymbol{W}_{i}^{\prime }\boldsymbol{y}%
_{i}+\boldsymbol{\Sigma }_{\boldsymbol{\theta },\text{ }r_{b}-1}^{-1}\bar{%
\boldsymbol{\theta }}_{r_{b}-1}\right) $,\newline
and $\boldsymbol{S}_{i}=\left( \sigma _{r_{b}-1}^{-2}\boldsymbol{W}%
_{i}^{\prime }\boldsymbol{W}_{i}+\boldsymbol{\Sigma }_{\boldsymbol{\theta },%
\text{ }r_{b}-1}^{-1}\right) ^{-1} $

\item $\sigma _{r_{b}}^{2}|\cdot \sim \mathrm{invGamma}\left( [NT+\nu
_{\sigma }]/2,\frac{1}{2}\left[ \sum_{i=1}^{N}(\boldsymbol{y}_{i}-%
\boldsymbol{W}_{i}\boldsymbol{\theta }_{i,r_{b}})^{\prime }(\boldsymbol{y}%
_{i}-\boldsymbol{W}_{i}\boldsymbol{\theta }_{i,r_{b}})+\nu _{\sigma }s^{2}%
\right] \right) $

\item $\bar{\boldsymbol{\theta }}_{r_{b}}|\cdot \sim \mathrm{N}(\boldsymbol{h%
},\boldsymbol{S}_{h}),$ where $\boldsymbol{h}=\boldsymbol{S}_{h}\left( 
\boldsymbol{\Sigma }_{\boldsymbol{\theta },r_{b}-1}^{-1}\sum_{i=1}^{N}%
\boldsymbol{\theta }_{i,r_{b}}+{\boldsymbol{S}}_{\bar{\theta}}^{-1}{%
\boldsymbol{d}}\right) $ and $\boldsymbol{S}_{h}=\left( N\boldsymbol{\Sigma }%
_{\boldsymbol{\theta },r_{b}-1}^{-1}+{\boldsymbol{S}}_{\bar{\theta}%
}^{-1}\right) ^{-1}$

\item $\boldsymbol{\Sigma }_{\boldsymbol{\theta },r_{b}}^{-1}|\cdot \sim 
\mathrm{Wishart}\left( N+\nu _{\Sigma },\text{ }\left[ \sum_{i=1}^{N}\left( 
\boldsymbol{\theta }_{i,r_{b}}-\bar{\boldsymbol{\theta }}_{r_{b}}\right)
\left( \boldsymbol{\theta }_{i,r_{b}}-\bar{\boldsymbol{\theta }}%
_{r_{b}}\right) ^{\prime }+\nu _{\Sigma }\boldsymbol{S}_{\Sigma }\right]
^{-1}\right) $
\end{itemize}

The Gibbs sampler draws iteratively from the conditional posterior
distributions, starting with the following initial values ($r_b=0$) 
\begin{equation*}
\sigma _{0}^{2}=\widehat{\boldsymbol{\varepsilon }}^{\prime }\widehat{%
\boldsymbol{\varepsilon }}/(NT-K),\quad \widehat{\boldsymbol{\varepsilon }}=(%
\widehat{\boldsymbol{\varepsilon }}_{1},\widehat{\boldsymbol{\varepsilon }}%
_{2},\ldots ,\widehat{\boldsymbol{\varepsilon }}_{N})^{\prime },\quad 
\widehat{\boldsymbol{\varepsilon }}_{i}=\boldsymbol{y}_{i}-\boldsymbol{W}_{i}%
\hat{\boldsymbol{\theta }}_{i,T}
\end{equation*}%
\begin{equation*}
\bar{\boldsymbol{\theta }}_{0}=\frac{1}{N}\sum_{i=1}^{N}\hat{\boldsymbol{%
\theta }}_{i,T},\quad \text{ and }\quad \boldsymbol{\Sigma }_{\boldsymbol{%
\theta },0}^{-1}=\frac{1}{N}\sum_{i=1}^{N}(\hat{\boldsymbol{\theta }}_{i,T}-%
\bar{\boldsymbol{\theta }}_{0})(\hat{\boldsymbol{\theta }}_{i,T}-\bar{%
\boldsymbol{\theta }}_{0})^{\prime }.
\end{equation*}%
Estimates from the Gibbs sampler are obtained from 1500 iterations with the
first 500 discarded as a burn-in sample. In each iteration, we calculate 
\begin{equation}
\hat{y}_{i,T+1,r_{b}}^{HB}=\hat{\boldsymbol{\theta }}_{i,\mathit{r}%
_{b}}^{\prime }\boldsymbol{w}_{i,T+1},  \label{ForHB}
\end{equation}%
for $i=1,2,\ldots ,N$ and the forecast is then $\hat{y}_{i,T+1}^{HB}=\frac{1%
}{R_{b}}\sum_{r_{b}=1}^{R_{b}}\hat{y}_{i,T+1,r_{b}}^{HB}$.

We use the following hyperpriors: $\boldsymbol{d}=\boldsymbol{0}$, $\nu
_{\Sigma }=K$, $\nu _{\sigma }=0.1$, and $s^{2}=0.1$. For the prior
covariance matrices $\boldsymbol{S}_{\bar{\theta}}$ and $\boldsymbol{S}%
_{\Sigma}$ we provide the results for three settings: 
(1) $\boldsymbol{S}_{%
\bar{\theta}}=\boldsymbol{I}_{K}10^{6}$, $\boldsymbol{S}_{\Sigma }=\boldsymbol{I}_{K}10$, 
(2) $\boldsymbol{S}_{\bar{\theta}} = \boldsymbol{I}%
_{K}10^{2}$, $\boldsymbol{S}_{\Sigma }=\boldsymbol{I}_{K}10^{2}$, and 
(3) 
$\boldsymbol{S}_{\bar{\theta}} = \boldsymbol{I}_{K}$, $\boldsymbol{S}_{\Sigma
}=\boldsymbol{I}_{K}$. These are proper, weakly informative priors that
avoid the use of uninformative priors that appear to be difficult to attain
in hierarchical models (Gelman, 2006).
\end{description}

Monte Carlo results for the hierarchical Bayesian model are given in Table~%
\ref{tbl:MC_hierBayes}. Since the MCMC approach to the hierarchical Bayesian
model is computationally quite expensive, we restrict the Monte Carlo
experiments to 1000 iterations and report some of the remaining methods as a
reference. It can be seen from the table that the accuracy of the forecasts
largely depends on the serendipitous choice of the prior.

\begin{sidewaystable}\thisfloatpagestyle{empty}
\caption{Monte Carlo results including hierarchical Bayesian forecasts}
\label{tbl:MC_hierBayes}
\hspace{-3em} 
\resizebox{25.4cm}{!}{
\setlength\tabcolsep{3pt}
\begin{tabular}{llllllllllllllllllllllllllllllllllllllllll}
\hline\hline
$a_\beta$&$\sigma^2_\alpha$  &&
\multicolumn{3}{c}{Pooled}&&\multicolumn{3}{c}{RE}&&\multicolumn{3}{c}{FE} &&
\multicolumn{3}{c}{Empirical Bayes}&&
\multicolumn{3}{c}{hier.\ Bayes: prior 1} &&
\multicolumn{3}{c}{hier.\ Bayes: prior 2} &&
\multicolumn{3}{c}{hier.\ Bayes: prior 3} &&
\multicolumn{3}{c}{Comb.\ (pool)} &&
\multicolumn{3}{c}{Comb.\ (FE)}\\
\cline{1-2}\cline{4-6}\cline{8-10}\cline{12-14}\cline{16-18}\cline{20-22}\cline{24-26}\cline{28-30}\cline{32-34}\cline{36-38}
&$T$&&\multicolumn{1}{c}{20}&\multicolumn{1}{c}{50}&\multicolumn{1}{c}{100}&&\multicolumn{1}{c}{20}&\multicolumn{1}{c}{50}&\multicolumn{1}{c}{100}&&\multicolumn{1}{c}{20}&\multicolumn{1}{c}{50}&\multicolumn{1}{c}{100}&&\multicolumn{1}{c}{20}&\multicolumn{1}{c}{50}&\multicolumn{1}{c}{100}&&\multicolumn{1}{c}{20}&\multicolumn{1}{c}{50}&\multicolumn{1}{c}{100}&&\multicolumn{1}{c}{20}&\multicolumn{1}{c}{50}&\multicolumn{1}{c}{100}&&\multicolumn{1}{c}{20}&\multicolumn{1}{c}{50}&\multicolumn{1}{c}{100}&&\multicolumn{1}{c}{20}&\multicolumn{1}{c}{50}&\multicolumn{1}{c}{100}&&\multicolumn{1}{c}{20}&\multicolumn{1}{c}{50}&\multicolumn{1}{c}{100}\\
\hline\rowsep{-0.99}\\
\multicolumn{38}{c}{Conditional on $\kappa_i=0$}\\
\hline
\multicolumn{38}{c}{$N=100$, $\rho_{\gamma x}=0$}\\
0.0 & 0.5 && 0.864 & 0.984 & 1.008 && 0.912 & 0.984 & 0.996 && 0.925 & 0.987 & 0.997 && 0.936 & 0.989 & 0.997 && 0.937 & 0.992 & 0.998 && 0.970 & 0.997 & 0.999 && 0.906 & 0.986 & 0.997 && 0.913 & 0.982 & 0.995 && 0.956 & 0.992 & 0.998\\
0.5 & 0.5 && 0.860 & 0.980 & 1.004 && 0.956 & 1.003 & 1.006 && 0.982 & 1.009 & 1.007 && 0.946 & 0.992 & 0.998 && 0.932 & 0.992 & 0.998 && 0.963 & 0.997 & 0.999 && 0.908 & 0.988 & 0.998 && 0.911 & 0.981 & 0.994 && 0.981 & 0.999 & 1.000\\
1.0 & 1.0 && 0.815 & 0.963 & 0.992 && 1.085 & 1.096 & 1.064 && 1.162 & 1.119 & 1.072 && 0.932 & 0.991 & 0.998 && 0.891 & 0.987 & 0.998 && 0.918 & 0.992 & 0.999 && 0.878 & 0.986 & 0.997 && 0.893 & 0.974 & 0.991 && 1.018 & 1.005 & 1.001
\rowsep{0.01}\\
\multicolumn{38}{c}{$N=100$, $\rho_{\gamma x}=0.5$}\\
0.0 & 0.5 && 0.862 & 0.981 & 1.005 && 0.911 & 0.984 & 0.996 && 0.925 & 0.987 & 0.997 && 0.937 & 0.989 & 0.997 && 0.934 & 0.991 & 0.998 && 0.968 & 0.997 & 0.999 && 0.903 & 0.986 & 0.997 && 0.912 & 0.981 & 0.994 && 0.956 & 0.992 & 0.998\\
0.5 & 0.5 && 0.855 & 0.974 & 1.000 && 0.953 & 1.003 & 1.006 && 0.981 & 1.009 & 1.008 && 0.947 & 0.992 & 0.998 && 0.927 & 0.991 & 0.998 && 0.959 & 0.996 & 0.999 && 0.902 & 0.987 & 0.997 && 0.909 & 0.979 & 0.994 && 0.980 & 0.999 & 1.000\\
1.0 & 1.0 && 0.818 & 0.962 & 0.992 && 1.098 & 1.104 & 1.068 && 1.174 & 1.127 & 1.076 && 0.940 & 0.991 & 0.998 && 0.897 & 0.987 & 0.997 && 0.924 & 0.992 & 0.999 && 0.884 & 0.985 & 0.997 && 0.894 & 0.974 & 0.991 && 1.022 & 1.006 & 1.001
\rowsep{0.01}\\
\multicolumn{38}{c}{$N=1000$, $\rho_{\gamma x}=0$}\\
0.0 & 0.5 && 0.857 & 0.981 & 1.006 && 0.904 & 0.984 & 0.996 && 0.918 & 0.987 & 0.996 && 0.934 & 0.989 & 0.997 && 0.901 & 0.986 & 0.997 && 0.932 & 0.992 & 0.998 && 0.879 & 0.983 & 0.996 && 0.908 & 0.981 & 0.995 && 0.951 & 0.992 & 0.998\\
0.5 & 0.5 && 0.847 & 0.975 & 0.998 && 0.935 & 1.001 & 1.004 && 0.966 & 1.007 & 1.006 && 0.939 & 0.992 & 0.998 && 0.895 & 0.987 & 0.997 && 0.919 & 0.990 & 0.998 && 0.881 & 0.985 & 0.997 && 0.904 & 0.979 & 0.993 && 0.972 & 0.998 & 1.000\\
1.0 & 1.0 && 0.837 & 0.965 & 0.990 && 1.044 & 1.072 & 1.050 && 1.117 & 1.094 & 1.058 && 0.941 & 0.990 & 0.997 && 0.893 & 0.985 & 0.997 && 0.904 & 0.986 & 0.997 && 0.890 & 0.985 & 0.997 && 0.900 & 0.975 & 0.991 && 0.998 & 1.000 & 1.000
\rowsep{0.01}\\
\multicolumn{38}{c}{$N=1000$, $\rho_{\gamma x}=0.5$}\\
0.0 & 0.5 && 0.857 & 0.982 & 1.006 && 0.904 & 0.984 & 0.996 && 0.918 & 0.987 & 0.996 && 0.935 & 0.989 & 0.997 && 0.900 & 0.986 & 0.997 && 0.931 & 0.991 & 0.998 && 0.877 & 0.982 & 0.996 && 0.908 & 0.981 & 0.995 && 0.951 & 0.992 & 0.998\\
0.5 & 0.5 && 0.842 & 0.969 & 0.993 && 0.930 & 0.999 & 1.004 && 0.965 & 1.007 & 1.006 && 0.942 & 0.992 & 0.998 && 0.892 & 0.986 & 0.997 && 0.917 & 0.990 & 0.998 && 0.876 & 0.985 & 0.997 && 0.902 & 0.977 & 0.992 && 0.972 & 0.998 & 1.000\\
1.0 & 1.0 && 0.835 & 0.963 & 0.987 && 1.040 & 1.070 & 1.049 && 1.116 & 1.093 & 1.057 && 0.948 & 0.992 & 0.998 && 0.890 & 0.985 & 0.997 && 0.902 & 0.987 & 0.997 && 0.887 & 0.985 & 0.997 && 0.899 & 0.975 & 0.990 && 0.999 & 1.000 & 1.000
\rowsep{0.5}\\
\multicolumn{38}{c}{Conditional on $\kappa_i=\pm 1$}\\
\hline
\multicolumn{38}{c}{$N=100$, $\rho_{\gamma x}=0$}\\
0.0 & 0.5 && 0.743 & 0.997 & 1.062 && 0.733 & 0.924 & 0.972 && 0.751 & 0.928 & 0.973 && 0.803 & 0.945 & 0.979 && 0.834 & 0.971 & 0.993 && 0.920 & 0.989 & 0.998 && 0.756 & 0.949 & 0.987 && 0.806 & 0.948 & 0.984 && 0.842 & 0.951 & 0.981\\
0.5 & 0.5 && 0.872 & 1.164 & 1.239 && 0.917 & 1.117 & 1.163 && 0.956 & 1.126 & 1.166 && 0.850 & 0.970 & 0.992 && 0.839 & 0.974 & 0.994 && 0.915 & 0.989 & 0.998 && 0.788 & 0.964 & 0.992 && 0.832 & 0.965 & 0.991 && 0.914 & 0.985 & 0.996\\
1.0 & 1.0 && 1.051 & 1.459 & 1.567 && 1.273 & 1.514 & 1.537 && 1.386 & 1.551 & 1.549 && 0.874 & 0.978 & 0.995 && 0.806 & 0.971 & 0.994 && 0.868 & 0.983 & 0.997 && 0.788 & 0.970 & 0.994 && 0.854 & 0.975 & 0.995 && 0.980 & 0.999 & 1.000
\rowsep{0.01}\\
\multicolumn{38}{c}{$N=100$, $\rho_{\gamma x}=0.5$}\\
0.0 & 0.5 && 0.763 & 1.023 & 1.090 && 0.730 & 0.923 & 0.971 && 0.751 & 0.928 & 0.973 && 0.801 & 0.944 & 0.979 && 0.829 & 0.971 & 0.994 && 0.916 & 0.988 & 0.998 && 0.752 & 0.949 & 0.987 && 0.808 & 0.950 & 0.985 && 0.842 & 0.951 & 0.981\\
0.5 & 0.5 && 0.898 & 1.196 & 1.273 && 0.918 & 1.123 & 1.169 && 0.962 & 1.133 & 1.171 && 0.847 & 0.968 & 0.992 && 0.831 & 0.972 & 0.994 && 0.907 & 0.988 & 0.998 && 0.779 & 0.961 & 0.992 && 0.836 & 0.966 & 0.991 && 0.916 & 0.986 & 0.997\\
1.0 & 1.0 && 1.073 & 1.481 & 1.591 && 1.290 & 1.525 & 1.542 && 1.404 & 1.563 & 1.555 && 0.875 & 0.974 & 0.993 && 0.809 & 0.968 & 0.993 && 0.867 & 0.980 & 0.996 && 0.788 & 0.965 & 0.993 && 0.856 & 0.976 & 0.995 && 0.984 & 0.999 & 1.000
\rowsep{0.01}\\
\multicolumn{38}{c}{$N=1000$, $\rho_{\gamma x}=0$}\\
0.0 & 0.5 && 0.746 & 1.001 & 1.065 && 0.732 & 0.925 & 0.971 && 0.753 & 0.929 & 0.972 && 0.810 & 0.947 & 0.979 && 0.759 & 0.951 & 0.987 && 0.833 & 0.972 & 0.993 && 0.720 & 0.933 & 0.978 && 0.804 & 0.949 & 0.984 && 0.840 & 0.952 & 0.980\\
0.5 & 0.5 && 0.880 & 1.188 & 1.265 && 0.910 & 1.140 & 1.191 && 0.956 & 1.151 & 1.194 && 0.835 & 0.965 & 0.990 && 0.772 & 0.959 & 0.990 && 0.819 & 0.969 & 0.993 && 0.758 & 0.957 & 0.989 && 0.829 & 0.965 & 0.991 && 0.908 & 0.986 & 0.997\\
1.0 & 1.0 && 1.120 & 1.516 & 1.619 && 1.323 & 1.583 & 1.616 && 1.456 & 1.628 & 1.632 && 0.856 & 0.969 & 0.991 && 0.788 & 0.962 & 0.991 && 0.804 & 0.964 & 0.991 && 0.788 & 0.962 & 0.991 && 0.863 & 0.978 & 0.995 && 0.976 & 0.998 & 0.999
\rowsep{0.01}\\
\multicolumn{38}{c}{$N=1000$, $\rho_{\gamma x}=0.5$}\\
0.0 & 0.5 && 0.760 & 1.020 & 1.086 && 0.730 & 0.925 & 0.971 && 0.753 & 0.929 & 0.972 && 0.811 & 0.947 & 0.979 && 0.756 & 0.950 & 0.987 && 0.830 & 0.971 & 0.993 && 0.721 & 0.933 & 0.978 && 0.806 & 0.951 & 0.985 && 0.840 & 0.952 & 0.980\\
0.5 & 0.5 && 0.877 & 1.184 & 1.262 && 0.888 & 1.119 & 1.169 && 0.941 & 1.131 & 1.172 && 0.837 & 0.965 & 0.989 && 0.765 & 0.957 & 0.989 && 0.813 & 0.968 & 0.992 && 0.754 & 0.954 & 0.989 && 0.827 & 0.964 & 0.991 && 0.905 & 0.985 & 0.996\\
1.0 & 1.0 && 1.090 & 1.478 & 1.579 && 1.275 & 1.531 & 1.560 && 1.413 & 1.578 & 1.576 && 0.864 & 0.971 & 0.991 && 0.780 & 0.961 & 0.990 && 0.798 & 0.964 & 0.991 && 0.779 & 0.961 & 0.990 && 0.859 & 0.977 & 0.995 && 0.976 & 0.998 & 0.999\\
\hline\hline
\multicolumn{38}{p{32.5cm}}{\footnotesize{Notes: The hierarchical Bayesian forecasts differ in the following 
priors: 
(1) $\boldsymbol{S}_{\bar{\theta}}=\boldsymbol{I}_{K}10^{6}$, $\boldsymbol{S}_{\Sigma }=\boldsymbol{I}_{K}10$, 
(2)  $\boldsymbol{S}_{\bar{\theta}} = \boldsymbol{I}_{K}10^{2}$, $\boldsymbol{S}_{\Sigma }=\boldsymbol{I}_{K}10^{2}$, 
and 
(3) $\boldsymbol{S}_{\bar{\theta}} = \boldsymbol{I}_{K}$, $\boldsymbol{S}_{\Sigma }=\boldsymbol{I}_{K}$. The results are based on
1000 replications. For further
details see the footnote of Table~\ref{tbl:MC_ARX_N100} in the paper.}}
\end{tabular}
}
\end{sidewaystable}

Results for the applications are reported in Table~\ref%
{tbl:applications_msfe_appendix}. The results suggest that the choice of
prior for the error variance has relatively little influence, whereas the
prior choices for the parameter covariances can substantially alter the
forecast accuracy.

\section{Additional Monte Carlo applications and empirical results}

\label{addMC}

In Section~\ref{sec:MC} of the paper, we restricted our analysis to the case of $N=100$.
The results for $N=1000$ can be found in Table~\ref{tbl:MC_ARX_N1000}. The
results from $N=100$ clearly carry over and the influence of the number of
cross-section units is small.

\begin{sidewaystable}\thisfloatpagestyle{empty}
\caption{Monte Carlo results, $N=1000$}
\label{tbl:MC_ARX_N1000} 
\hspace{-3em} 
\resizebox{25cm}{!}{
\setlength\tabcolsep{3pt}
\begin{tabular}{lllllllllllllllllllllllllllllll}
\hline\hline
$a_\beta$&$\sigma^2_\alpha$  &&
\multicolumn{3}{c}{Pooled}&&\multicolumn{3}{c}{RE}&&\multicolumn{3}{c}{FE} &&
\multicolumn{3}{c}{Empirical Bayes}&&
\multicolumn{3}{c}{Comb.\ (pool)} &&
\multicolumn{3}{c}{Comb.\ (FE)}&& 
\multicolumn{3}{c}{Comb.\ $\omega_i^*$}\\
\cline{1-2}\cline{4-6}\cline{8-10}\cline{12-14}\cline{16-18}\cline{20-22}\cline{24-26}\cline{28-30}
&$T$&&\multicolumn{1}{c}{20}&\multicolumn{1}{c}{50}&\multicolumn{1}{c}{100}&&\multicolumn{1}{c}{20}&\multicolumn{1}{c}{50}&\multicolumn{1}{c}{100}&&\multicolumn{1}{c}{20}&\multicolumn{1}{c}{50}&\multicolumn{1}{c}{100}&&\multicolumn{1}{c}{20}&\multicolumn{1}{c}{50}&\multicolumn{1}{c}{100}&&\multicolumn{1}{c}{20}&\multicolumn{1}{c}{50}&\multicolumn{1}{c}{100}&&\multicolumn{1}{c}{20}&\multicolumn{1}{c}{50}&\multicolumn{1}{c}{100}&&\multicolumn{1}{c}{20}&\multicolumn{1}{c}{50}&\multicolumn{1}{c}{100}\\
\hline
\rowsep{-0.975}\\
\multicolumn{30}{c}{Conditional on $\kappa_i=0$}\\
\hline
\multicolumn{30}{c}{$\rho_{\gamma x}=0$}\\
0.0&0.5&&0.859&0.981&1.006&&0.905&0.984&0.996&&0.919&0.986&0.997&&0.935&0.989&0.997&&0.909&0.981&0.995&&0.951&0.992&0.998&&0.951&0.995&0.999\\
0.5&0.5&&0.849&0.974&0.999&&0.936&1.000&1.004&&0.966&1.007&1.006&&0.939&0.991&0.998&&0.905&0.979&0.994&&0.972&0.998&1.000&&0.942&0.994&0.999\\
1.0&1.0&&0.839&0.964&0.990&&1.042&1.071&1.050&&1.114&1.094&1.058&&0.942&0.990&0.998&&0.901&0.975&0.992&&0.998&1.000&1.000&&0.930&0.991&0.999%
\rowsep{0.01}\\
\multicolumn{30}{c}{$\rho_{\gamma x}=0.5$}\\
0.0&0.5&&0.859&0.981&1.006&&0.905&0.984&0.996&&0.919&0.986&0.997&&0.936&0.989&0.997&&0.909&0.981&0.995&&0.951&0.992&0.998&&0.953&0.995&0.999\\
0.5&0.5&&0.845&0.968&0.993&&0.931&0.999&1.004&&0.965&1.007&1.006&&0.943&0.992&0.998&&0.903&0.977&0.993&&0.972&0.998&1.000&&0.947&0.994&0.999\\
1.0&1.0&&0.837&0.962&0.988&&1.039&1.070&1.049&&1.114&1.094&1.057&&0.949&0.991&0.998&&0.900&0.974&0.991&&0.998&1.000&1.000&&0.935&0.992&0.999%
\rowsep{0.25}\\
\multicolumn{30}{c}{Conditional on $\kappa_i=\pm 1$}\\
\hline
\multicolumn{30}{c}{$\rho_{\gamma x}=0$}\\
0.0&0.5&&0.750&1.000&1.065&&0.735&0.924&0.971&&0.755&0.929&0.972&&0.812&0.947&0.980&&0.806&0.949&0.984&&0.841&0.951&0.980&&0.819&0.963&0.990\\
0.5&0.5&&0.883&1.186&1.264&&0.912&1.138&1.190&&0.957&1.149&1.193&&0.836&0.964&0.990&&0.830&0.964&0.991&&0.909&0.985&0.997&&0.820&0.966&0.992\\
1.0&1.0&&1.121&1.514&1.617&&1.323&1.579&1.613&&1.456&1.625&1.628&&0.856&0.968&0.991&&0.863&0.977&0.995&&0.977&0.997&0.999&&0.827&0.968&0.994%
\rowsep{0.01}\\
\multicolumn{30}{c}{$\rho_{\gamma x}=0.5$}\\
0.0&0.5&&0.764&1.019&1.085&&0.733&0.924&0.971&&0.755&0.929&0.972&&0.813&0.947&0.980&&0.808&0.950&0.985&&0.841&0.951&0.980&&0.821&0.964&0.990\\
0.5&0.5&&0.880&1.182&1.261&&0.890&1.116&1.168&&0.942&1.129&1.171&&0.838&0.964&0.990&&0.828&0.963&0.991&&0.906&0.985&0.996&&0.828&0.966&0.992\\
1.0&1.0&&1.091&1.476&1.576&&1.275&1.527&1.557&&1.412&1.573&1.573&&0.864&0.970&0.992&&0.859&0.976&0.995&&0.977&0.998&0.999&&0.836&0.968&0.993\\
\hline\hline
\multicolumn{30}{p{25.3cm}}{\footnotesize{Notes: The results are for $N=1000$. For further details see Table~1 in the paper.}}
\end{tabular}
}
\end{sidewaystable}  

As a practical alternative to the combination forecasts in Section~\ref%
{sec:combinations}, which are based on estimates of the optimal combination
weights, forecast combinations using equal weights have a long history in
the literature (Timmermann, 2006). We therefore considered how this forecast
combination scheme performs both in the Monte Carlo simulations and for the
empirical applications. As in the paper, we separately consider combination
schemes for the individual-pooled forecasts and for the individual-FE
forecasts.

\begin{table}\thisfloatpagestyle{empty}
\caption{Monte Carlo results for equally weighted and Oracle forecasts}
\label{tbl:MC_Appendix} 
\setlength\tabcolsep{5pt}
\begin{tabular}{llllllllllllllllll}
\hline\hline
$a_\beta$&$\sigma^2_\alpha$ &&
\multicolumn{3}{c}{eq.weight(pool)}&&
\multicolumn{3}{c}{eq.weight(pool)}&&
\multicolumn{3}{c}{Oracle(pool)} &&
\multicolumn{3}{c}{Oracle(FE)}\\
 \cline{1-2}\cline{4-6}\cline{8-10}\cline{12-14}\cline{16-18}
&$T$&&\multicolumn{1}{c}{20}&\multicolumn{1}{c}{50}&\multicolumn{1}{c}{100}&&\multicolumn{1}{c}{20}&\multicolumn{1}{c}{50}&\multicolumn{1}{c}{100}&&\multicolumn{1}{c}{20}&\multicolumn{1}{c}{50}&\multicolumn{1}{c}{100}&&\multicolumn{1}{c}{20}&\multicolumn{1}{c}{50}&\multicolumn{1}{c}{100}\\
\hline
\rowsep{-0.975}\\
\multicolumn{18}{c}{Conditional on $\kappa_i=0$}\\
\hline
\multicolumn{18}{c}{$N=100$, $\rho_{\gamma x}=0$}\\
0.0 & 0.5 && 0.888 & 0.977 & 0.995 && 0.950 & 0.992 & 0.998 && 0.872 & 0.983 & 0.997 && 0.944 & 0.995 & 0.999\\
0.5 & 0.5 && 0.886 & 0.976 & 0.994 && 0.970 & 0.999 & 1.001 && 0.869 & 0.980 & 0.996 && 0.968 & 0.999 & 1.003\\
1.0 & 1.0 && 0.860 & 0.968 & 0.990 && 1.038 & 1.036 & 1.020 && 0.857 & 0.974 & 0.994 && 1.017 & 1.021 & 1.016%
\rowsep{0.1}\\
\multicolumn{18}{c}{$N=100$, $\rho_{\gamma x}=0.5$}\\
0.0 & 0.5 && 0.887 & 0.976 & 0.995 && 0.950 & 0.992 & 0.998 && 0.870 & 0.982 & 0.997 && 0.944 & 0.995 & 0.999\\
0.5 & 0.5 && 0.885 & 0.975 & 0.993 && 0.970 & 0.999 & 1.001 && 0.870 & 0.980 & 0.996 && 0.968 & 0.999 & 1.003\\
1.0 & 1.0 && 0.860 & 0.968 & 0.990 && 1.041 & 1.038 & 1.022 && 0.853 & 0.973 & 0.994 && 1.020 & 1.022 & 1.017%
\rowsep{0.1}\\
\multicolumn{18}{c}{$N=1000$, $\rho_{\gamma x}=0$}\\
0.0 & 0.5 && 0.886 & 0.976 & 0.994 && 0.947 & 0.991 & 0.998 && 0.870 & 0.983 & 0.997 && 0.938 & 0.993 & 0.999\\
0.5 & 0.5 && 0.881 & 0.974 & 0.992 && 0.963 & 0.997 & 1.000 && 0.861 & 0.979 & 0.996 && 0.961 & 0.998 & 1.003\\
1.0 & 1.0 && 0.877 & 0.970 & 0.990 && 1.016 & 1.024 & 1.015 && 0.876 & 0.975 & 0.993 && 1.006 & 1.015 & 1.015%
\rowsep{0.1}\\
\multicolumn{18}{c}{$N=1000$, $\rho_{\gamma x}=0.5$}\\
0.0 & 0.5 && 0.886 & 0.976 & 0.994 && 0.947 & 0.991 & 0.998 && 0.868 & 0.982 & 0.997 && 0.938 & 0.993 & 0.999\\
0.5 & 0.5 && 0.880 & 0.972 & 0.991 && 0.963 & 0.998 & 1.000 && 0.864 & 0.978 & 0.995 && 0.961 & 0.998 & 1.003\\
1.0 & 1.0 && 0.877 & 0.970 & 0.989 && 1.017 & 1.024 & 1.015 && 0.895 & 0.978 & 0.993 && 1.006 & 1.015 & 1.015%
\rowsep{0.25}\\
\multicolumn{18}{c}{Conditional on $\kappa_i=\pm 1$}\\
\hline
\multicolumn{18}{c}{$N=100$, $\rho_{\gamma x}=0$}\\
0.0 & 0.5 && 0.744 & 0.935 & 0.989 && 0.826 & 0.948 & 0.980 && 0.712 & 0.934 & 0.982 && 0.758 & 0.929 & 0.973\\
0.5 & 0.5 && 0.773 & 0.974 & 1.032 && 0.885 & 0.997 & 1.026 && 0.771 & 0.966 & 0.994 && 0.881 & 0.987 & 1.008\\
1.0 & 1.0 && 0.806 & 1.042 & 1.113 && 1.037 & 1.120 & 1.129 && 0.861 & 0.991 & 1.002 && 0.997 & 1.045 & 1.049%
\rowsep{0.1}\\
\multicolumn{18}{c}{$N=100$, $\rho_{\gamma x}=0.5$}\\
0.0 & 0.5 && 0.745 & 0.939 & 0.994 && 0.826 & 0.948 & 0.980 && 0.719 & 0.939 & 0.984 && 0.758 & 0.929 & 0.973\\
0.5 & 0.5 && 0.778 & 0.981 & 1.040 && 0.888 & 1.000 & 1.028 && 0.782 & 0.972 & 0.996 && 0.884 & 0.989 & 1.009\\
1.0 & 1.0 && 0.810 & 1.048 & 1.119 && 1.043 & 1.125 & 1.131 && 0.847 & 0.989 & 1.002 && 1.003 & 1.048 & 1.051%
\rowsep{0.1}\\
\multicolumn{18}{c}{$N=1000$, $\rho_{\gamma x}=0$}\\
0.0 & 0.5 && 0.748 & 0.936 & 0.988 && 0.829 & 0.949 & 0.980 && 0.717 & 0.936 & 0.982 && 0.761 & 0.930 & 0.973\\
0.5 & 0.5 && 0.774 & 0.978 & 1.036 && 0.884 & 1.003 & 1.033 && 0.777 & 0.969 & 0.995 && 0.883 & 0.990 & 1.011\\
1.0 & 1.0 && 0.831 & 1.059 & 1.123 && 1.042 & 1.133 & 1.146 && 0.914 & 1.126 & 1.163 && 0.997 & 1.045 & 1.052%
\rowsep{0.1}\\
\multicolumn{18}{c}{$N=1000$, $\rho_{\gamma x}=0.5$}\\
0.0 & 0.5 && 0.748 & 0.939 & 0.993 && 0.829 & 0.949 & 0.980 && 0.723 & 0.939 & 0.984 && 0.761 & 0.930 & 0.973\\
0.5 & 0.5 && 0.771 & 0.976 & 1.035 && 0.881 & 0.998 & 1.028 && 0.785 & 0.995 & 1.024 && 0.879 & 0.988 & 1.009\\
1.0 & 1.0 && 0.823 & 1.049 & 1.113 && 1.032 & 1.121 & 1.132 && 0.918 & 1.105 & 1.136 && 0.993 & 1.041 & 1.047\\
\hline\hline
\multicolumn{18}{p{17.5cm}}{\footnotesize{Notes: The results are for equal weighted combinations of individual and pooled
forecasts and for individual and FE forecasts, and for combinations using oracle weights, which use the
disturbances and parameteres for the construction of the weights. 
For further details see the footnote of Table~\ref{tbl:MC_ARX_N100} in the paper.}}
\end{tabular}
\end{table}

In Table S.4 we also report
a complete suite of Monte Carlo simulation results based on an
equal-weighted combination scheme for our two combination schemes. The
predictive accuracy of the equal-weighted combination scheme is comparable
to that of the combinations based on estimated weights in the presence of
modest levels of parameter heterogeneity. Conversely, equal-weighted
combinations underperform forecast combinations with estimated weights when
the level of parameter heterogeneity is either very low or very high. In
either case, one approach (individual estimation or pooling) dominates the
other by a sufficiently large margin that equal-weighting becomes
sub-optimal.

We also considered the performance of an (infeasible) oracle combination
scheme that uses the true parameter values to compute the optimal
combination weights. Compared against our feasible estimates of the
combination weights, this oracle scheme shows the impact of parameter
estimation error on forecasting performance. We find that the cost of
estimation error is only sizable if $T$ is small ($T=20$) and the parameters
are homogeneous. For this case, the oracle scheme reduces the MSFE of the
pooled-individual combination by 0.051 (0.906 versus 0.856) and by 0.037 for
the FE-individual combination. Differences are much smaller (0.005 and
0.011) in the heterogeneous case even when $T=20$\ and are further reduced
for $T=100$ where, in many cases, only the third decimal of the MSFE ratio
is affected.

Overall, we conclude from these Monte Carlo simulations that the optimal
forecast combination scheme introduced in our paper produces more accurate
forecasts that are notably more robust to parameter heterogeneity than the
equal-weighted combination schemes considered here.

Table~\ref{tbl:applications_msfe_appendix} shows the performance of the
equal-weighted forecasts for the application to house price inflation. For
comparison, we also show the forecasting results for our optimal combination
scheme. In this application pooling beats individual forecasts, which
suggests a low degree of parameter heterogeneity. The equal-weighted
forecast combinations perform correspondingly well. In fact, the combination
of individual and pooled forecasts has the lowest average MSFE, offers the
most precise forecasts for 10.2\% (SAR model) and 14.9\% (SARX) of MSAs and
never produces the worst forecast. This performance is marginally better
than that of the optimal combination schemes with estimated weights.

\begin{sidewaystable}\thisfloatpagestyle{empty}
\caption{Results for the applications, including hierarchical Bayesian and equal weights forecasts}
\label{tbl:applications_msfe_appendix}
\centering
{\footnotesize
\begin{tabular}{llllllllllllllll}
\hline\hline
 & \multicolumn{3}{l}{Ratio of} && \multicolumn{3}{l}{Freq.\ beating} &&\multicolumn{3}{l}{Freq.\ smallest}&&\multicolumn{3}{l}{Freq.\ largest}\\
 & \multicolumn{3}{l}{ave. MSFE} &&\multicolumn{3}{l}{benchmark} &&\multicolumn{3}{l}{MSFE}&&\multicolumn{3}{l}{MSFE}\\
\cline{2-4}\cline{6-8}\cline{10-12}\cline{13-16}
Observations                 & all & $\kappa_i=0$ & $\kappa_i=\pm 1$ && all & $\kappa_i=0$ & $\kappa_i=\pm 1$&& all & $\kappa_i=0$ & $\kappa_i=\pm 1$ && all& $\kappa_i=0$ & $\kappa_i=\pm 1$ \\
\hline
\rowsep{-.5}\\
\multicolumn{16}{l}{House price inflation forecasts}\\
\hline
Individual              & 2.822 & 2.520 & 3.542 &&  --   &  --   &  --   && 0.008 & 0.218 & 0.122 && 0.564 & 0.238 & 0.381\rowsep{.2}\\
Pooled                  & 0.920 & 1.162 & 0.947 && 0.613 & 0.381 & 0.536 && 0.113 & 0.157 & 0.246 && 0.157 & 0.188 & 0.152\\
RE                      & 0.924 & 1.166 & 0.960 && 0.619 & 0.376 & 0.528 && 0.099 & 0.022 & 0.044 && 0.003 & 0.017 & 0.008\\
FE                      & 0.936 & 1.186 & 0.980 && 0.591 & 0.381 & 0.517 && 0.039 & 0.108 & 0.099 && 0.251 & 0.370 & 0.290\\
Emp.Bayes               & 0.901 & 0.955 & 0.881 && 0.942 & 0.519 & 0.652 && 0.083 & 0.066 & 0.055 && 0.000 & 0.039 & 0.017\\
Hier.Bayes (1)          & 0.939 & 0.972 & 0.939 && 0.939 & 0.550 & 0.674 && 0.022 & 0.052 & 0.028 && 0.000 & 0.025 & 0.022\\
Hier.Bayes (2)          & 0.978 & 0.988 & 0.976 && 0.939 & 0.550 & 0.671 && 0.003 & 0.047 & 0.028 && 0.006 & 0.039 & 0.030\\
Hier.Bayes (3)          & 0.908 & 0.970 & 0.914 && 0.914 & 0.500 & 0.622 && 0.257 & 0.105 & 0.094 && 0.003 & 0.017 & 0.041\\
Comb.\ (pool)           & 0.920 & 0.961 & 0.932 && 0.939 & 0.522 & 0.688 && 0.119 & 0.058 & 0.086 && 0.000 & 0.011 & 0.014\\
Comb.\ (FE)             & 0.937 & 0.977 & 0.940 && 0.917 & 0.494 & 0.677 && 0.014 & 0.064 & 0.064 && 0.011 & 0.033 & 0.041\\
indiv.weights           & 0.921 & 0.957 & 0.909 && 0.936 & 0.541 & 0.713 && 0.030 & 0.022 & 0.044 && 0.006 & 0.003 & 0.006\\
eq.weights (pool)       & 0.878 & 0.982 & 0.879 && 0.914 & 0.453 & 0.638 && 0.160 & 0.036 & 0.064 && 0.000 & 0.000 & 0.000\\
eq.weights (FE)         & 0.887 & 0.994 & 0.898 && 0.906 & 0.431 & 0.624 && 0.052 & 0.030 & 0.039 && 0.000 & 0.000 & 0.000\rowsep{.5}\\
\multicolumn{16}{l}{CPI inflation forecasts}\\
\hline
 Individual                 &15.501& 10.451& 11.295&&  --  &  --  &  --  && 0.000 & 0.064 & 0.037 && 0.433 & 0.134 & 0.278\rowsep{.2}\\
 Pooled                     &0.878 & 1.013 & 0.971 && 0.444 & 0.374 & 0.417 && 0.193 & 0.112 & 0.091 && 0.396 & 0.406 & 0.374\\
 RE                         &0.880 & 1.001 & 0.957 && 0.508 & 0.390 & 0.401 && 0.005 & 0.070 & 0.064 && 0.000 & 0.037 & 0.032\\
 FE                         &0.883 & 0.992 & 0.959 && 0.508 & 0.401 & 0.401 && 0.000 & 0.059 & 0.096 && 0.166 & 0.225 & 0.219\\
 Emp.Bayes                  &0.892 & 0.991 & 0.926 && 0.984 & 0.652 & 0.818 && 0.225 & 0.187 & 0.193 && 0.000 & 0.064 & 0.005\\
 Hier.Bayes (1)             &0.970 & 0.982 & 0.979 && 0.904 & 0.610 & 0.738 && 0.027 & 0.064 & 0.037 && 0.000 & 0.016 & 0.016\\
 Hier.Bayes (2)             &0.987 & 0.993 & 0.992 && 0.909 & 0.551 & 0.706 && 0.005 & 0.075 & 0.032 && 0.005 & 0.070 & 0.032\\
 Hier.Bayes (3)             &0.951 & 0.973 & 0.962 && 0.872 & 0.599 & 0.695 && 0.155 & 0.096 & 0.118 && 0.000 & 0.021 & 0.005\\
 Comb.\ (pool)              &0.930 & 0.987 & 0.953 && 0.733 & 0.481 & 0.572 && 0.037 & 0.064 & 0.064 && 0.000 & 0.016 & 0.016\\
 Comb.\ (FE)                &0.935 & 0.980 & 0.967 && 0.791 & 0.524 & 0.583 && 0.021 & 0.043 & 0.032 && 0.000 & 0.011 & 0.021\\
 indiv.weights              &0.897 & 0.972 & 0.931 && 0.973 & 0.695 & 0.813 && 0.128 & 0.102 & 0.118 && 0.000 & 0.000 & 0.000\\
 eq.weights (pool)          &0.895 & 0.984 & 0.945 && 0.802 & 0.519 & 0.647 && 0.166 & 0.053 & 0.096 && 0.000 & 0.000 & 0.000\\
 eq.weights (FE)            &0.900 & 0.982 & 0.942 && 0.856 & 0.535 & 0.647 && 0.037 & 0.011 & 0.021 && 0.000 & 0.000 & 0.000\\
 \hline\hline
\multicolumn{16}{p{20.1cm}}{\footnotesize{%
Notes:
The table reports the results for the methods in the paper and for five additional forecasts:
hierarchical Bayesian forecasts for three priors: (1) $\boldsymbol{S}_{\bar{\theta}}=\boldsymbol{I}_{K}10^{6}$, $\boldsymbol{S}%
_{\Sigma }=\boldsymbol{I}_{K}10$, (2)  $\boldsymbol{S}_{\bar{\theta}} = \boldsymbol{I}_{K}10^{2}$, $\boldsymbol{S}_{\Sigma }=\boldsymbol{I}_{K}10^{2}$, and (3) $\boldsymbol{S}_{\bar{\theta}} = \boldsymbol{I}_{K}$, $\boldsymbol{S}_{\Sigma }=\boldsymbol{I}_{K}$,
 the forecast that is an equal weighted average of the individual and the pooled forecasts and, finally,
the forecast that is an equal weighted average of the individual and the FE forecasts.
For further details see the footnote of Table~\ref{tbl:applications_msfe} in the paper. 
}}
\end{tabular}
}
\end{sidewaystable}

The results for the CPI application in Table~\ref%
{tbl:applications_msfe_appendix} show that in a similar fashion the
equal-weighted combination provides precise forecasts, which are more
accurate, on average, than the optimal forecast combination, though beaten
by a small margin by the empirical Bayes forecasts.

Table~\ref{tbl:DM_appendix} shows the results from the panel and individual
DM test statistics. For both applications, the panel DM test show
significant improvements over the individual forecasts. For the house price
applications, somewhat fewer forecasts for MSAs are significantly better
than the individual forecast compared to what we find for the optimal
combination scheme. For the CPI application, in contrast, the pooled
forecast with equal weights is significantly more precise than the benchmark
for slightly more series than under the optimal combination scheme.





\begin{table}
\caption{Diebold-Mariano test statistics for equal predictive accuracy: equal weights forecasts}
\label{tbl:DM_appendix}\centering
\hspace*{-1.5cm}
\begin{tabular}{lrrrrr}
\hline\hline
  & hier.Bayes (1) & hier.Bayes (2) & hier.Bayes (3) & eq.weights(pool)& eq.weights(FE)\\
 \hline
\multicolumn{3}{l}{House Prices: all forecasts}\\
\hline
 Panel DM & $-$29.68 & $-$30.12 & $-$27.88 & $-$26.76 & $-$25.20\\
$\text{DM}<-1.96$/$\text{DM}>1.96$  & 204/1 & 213/0 & 182/1 & 158/0 & 148/1
\rowsep{0.25}\\
\multicolumn{3}{l}{CPI: all forecasts}\\
\hline
Panel DM   & $-$13.15 & $-$10.12 & $-$17.78 & $-$11.53 & $-$10.76\\
$\text{DM}<-1.96$/$\text{DM}>1.96$ & 112/8 & 90/3 & 94/8 & 79/14 & 79/9\\
\hline\hline
\multicolumn{6}{p{18.5cm}}{\footnotesize{Notes:
	The table reports the DM statistics for the three hierarchical Bayesian forecasts and the two equally weighted forecasts, 
	where the first combines individual and pooled forecasts and the second individual and FE forecasts.
	For further details see the footnote of Table~\ref{tbl:DM} in the paper.
}}
\end{tabular}
\end{table}


\section{Derivation of results for fixed effect estimation}

\label{app:FE_estimator}

Following the derivations for the pooled estimates, it is easily seen that%
\begin{equation*}
\hat{\boldsymbol{\beta }}_{\text{FE}}-\boldsymbol{\beta }_{i}=-\boldsymbol{%
\eta }_{i,\beta }+\bar{\boldsymbol{Q}}_{NT,\beta }^{-1}\bar{\boldsymbol{q}}%
_{NT,\beta }+\bar{\boldsymbol{Q}}_{NT,\beta }^{-1}\bar{\boldsymbol{\xi }}%
_{NT,\beta },
\end{equation*}%
where $\boldsymbol{\eta }_{i,\beta }=\boldsymbol{\beta }_{i}-\boldsymbol{%
\beta },$ $\bar{\boldsymbol{\xi }}_{NT,\beta }=N^{-1}\sum_{i=1}^{N}T^{-1}%
\boldsymbol{X}_{i}^{\prime }\boldsymbol{M}_{T}\boldsymbol{\varepsilon }_{i}$,%
\begin{equation*}
\ \bar{\boldsymbol{Q}}_{NT,\beta }=N^{-1}\sum_{i=1}^{N}T^{-1}\boldsymbol{X}%
_{i}^{\prime }\boldsymbol{M}_{T}\boldsymbol{X}_{i},\text{ and }\bar{%
\boldsymbol{q}}_{NT,\beta }=N^{-1}\sum_{i=1}^{N}\left( T^{-1}\boldsymbol{X}%
_{i}^{\prime }\boldsymbol{M}_{T}\boldsymbol{X}_{i}\right) \boldsymbol{\eta }%
_{i,\beta }\text{.}
\end{equation*}%
With one exception, the derivation of the average MSFE for the FE estimation
closely parallels the case of the pooled estimator with $\boldsymbol{\eta }%
_{i,\beta }$ in place of $\boldsymbol{\eta }_{i}$, $\bar{\boldsymbol{Q}}%
_{NT,\beta }$ replacing $\bar{\boldsymbol{Q}}_{NT}$, $\bar{\boldsymbol{q}}%
_{NT,\beta }$ replacing $\bar{\boldsymbol{q}}_{NT}$, $\bar{\boldsymbol{\xi }}%
_{NT,\beta }$ replacing $\bar{\boldsymbol{\xi }}_{NT}$, and $\bar{\bar{%
\boldsymbol{x}}}_{i,T+1}=\boldsymbol{x}_{i,T+1}-\bar{\boldsymbol{x}}_{iT}$
in place of $\boldsymbol{x}_{i,T+1}$. The exception arises due to the fact
that in the case of weakly exogenous regressors, $\bar{\varepsilon}_{iT}$
(and hence $\bar{\bar{\varepsilon}}_{i,T+1}$) is not distributed
independently of $(\hat{\boldsymbol{\beta }}_{\text{FE}}-\boldsymbol{\beta }%
_{i})^{\prime }\bar{\bar{\boldsymbol{x}}}_{i,T+1}.$ To account for this
dependence, we first note that, under Assumption \ref{ass:2.2}, $\bar{%
\boldsymbol{\xi }}_{NT,\beta }=O_{p}\left( N^{-1/2}T^{-1/2}\right) $, and%
\begin{eqnarray*}
&&N^{-1}\sum_{i=1}^{N}\left( \hat{\boldsymbol{\beta }}_{\text{FE}}-%
\boldsymbol{\beta }_{i}\right) ^{\prime }\bar{\bar{\boldsymbol{x}}}_{i,T+1}%
\bar{\varepsilon}_{iT}=N^{-1}\sum_{i=1}^{N}\left( -\boldsymbol{\eta }%
_{i,\beta }+\bar{\boldsymbol{Q}}_{NT,\beta }^{-1}\bar{\boldsymbol{q}}%
_{NT,\beta }+\bar{\boldsymbol{Q}}_{NT,\beta }^{-1}\bar{\boldsymbol{\xi }}%
_{NT,\beta }\right) ^{\prime }\bar{\bar{\boldsymbol{x}}}_{i,T+1}\bar{%
\varepsilon}_{iT} \\
&=&-N^{-1}\sum_{i=1}^{N}\boldsymbol{\eta }_{i,\beta }^{\prime }\bar{\bar{%
\boldsymbol{x}}}_{i,T+1}\bar{\varepsilon}_{iT}+\bar{\boldsymbol{q}}%
_{NT,\beta }^{\prime }\bar{\boldsymbol{Q}}_{NT,\beta }^{-1}\left(
N^{-1}\sum_{i=1}^{N}\bar{\bar{\boldsymbol{x}}}_{i,T+1}\bar{\varepsilon}%
_{iT}\right) +O_{p}\left( N^{-1/2}\right) .
\end{eqnarray*}%
Also, under Assumptions \ref{ass:weak_exogeneity_b} and \ref{ass:6} we have%
\begin{equation}
N^{-1}\sum_{i=1}^{N}\left( \hat{\boldsymbol{\beta }}_{\text{FE}}-%
\boldsymbol{\beta }_{i}\right) ^{\prime }\bar{\bar{\boldsymbol{x}}}_{i,T+1}%
\bar{\varepsilon}_{iT}=c_{NT}^{\text{FE}}+O_{p}(N^{-1/2}),  \label{biasFE}
\end{equation}%
where $c_{NT}^{\text{FE}}$ is given in Section~B.4 of the paper.

The expression for $c_{NT}^{\text{FE}}$ simplifies somewhat by noting that
under Assumption \ref{ass:weak_exogeneity_a}, $\mathrm{E}\left( \boldsymbol{x%
}_{iT+1}\bar{\varepsilon}_{iT}\right) =\boldsymbol{0}$, and using Lemma \ref%
{Lemma_1_VTEX1} we have $\bar{\boldsymbol{q}}_{NT,\beta }^{\prime }\bar{%
\boldsymbol{Q}}_{NT,\beta }^{-1}=\bar{\boldsymbol{q}}_{N,\beta }^{\prime }%
\bar{\boldsymbol{Q}}_{N,\beta }^{-1}+O_{p}\left( N^{-1/2}\right) $. Note
that under Assumption \ref{ass:5}, $\boldsymbol{\eta }_{i,\beta }$ and $%
\varepsilon _{it}$ are independently distributed. Using these results, the
MSFE under fixed effects estimation in~(\ref{FEmsfe}) follows.

\subsubsection*{A comparison of forecasts based on individual and fixed
effects estimates}

Since,
\begin{equation}
\hat{e}_{i,T+1}=\bar{\bar{\varepsilon}}_{i,T+1}-\bar{\bar{\boldsymbol{x}}}%
_{i,T+1}^{\prime }(\boldsymbol{\hat{\beta}}_{i}-\boldsymbol{\beta }_{i}).
\label{ei}
\end{equation}%
The derivation of the average MSFE, $N^{-1}\sum_{i=1}^{N}\hat{e}_{i,T+1}^{2}$
can now proceed as before, except that under weak exogeneity the two
components of $\hat{e}_{i,T+1},$ in (\ref{ei}), are no longer independently
distributed and, as in the FE estimation, we need to consider the additional
term%
\begin{eqnarray*}
&&N^{-1}\sum_{i=1}^{N}\bar{\bar{\boldsymbol{x}}}_{i,T+1}^{\prime }(%
\boldsymbol{\hat{\beta}}_{i}-\boldsymbol{\beta }_{i})\bar{\bar{\varepsilon}}%
_{i,T+1}=N^{-1}\sum_{i=1}^{N}\bar{\bar{\boldsymbol{x}}}_{i,T+1}^{\prime }(%
\boldsymbol{X}_{i}^{\prime }\boldsymbol{M}_{T}\boldsymbol{X}_{i})^{-1}%
\boldsymbol{X}_{i}^{\prime }\boldsymbol{M}_{T}\boldsymbol{\varepsilon }_{i}%
\bar{\bar{\varepsilon}}_{i,T+1} \\
&=&-N^{-1}\sum_{i=1}^{N}\mathrm E\left[ \bar{\bar{\boldsymbol{x}}}_{i,T+1}^{\prime }(%
\boldsymbol{X}_{i}^{\prime }\boldsymbol{M}_{T}\boldsymbol{X}_{i})^{-1}%
\boldsymbol{X}_{i}^{\prime }\boldsymbol{M}_{T}\boldsymbol{\varepsilon }_{i}%
\bar{\varepsilon}_{i,T+1}\right]+O_{p}(N^{-1/2}).
\end{eqnarray*}%
Using this, we have%
\begin{equation}
N^{-1}\sum_{i=1}^{N}\bar{\bar{\boldsymbol{x}}}_{i,T+1}^{\prime }(\boldsymbol{%
\hat{\beta}}_{i}-\boldsymbol{\beta }_{i})\bar{\bar{\varepsilon}}%
_{i,T+1}=c_{NT,\beta }+O_{p}(N^{-1/2}),  \label{biasIND}
\end{equation}%
where%
\begin{equation}
c_{NT,\beta }=N^{-1}\sum_{i=1}^{N}\mathrm{E}\left[ \bar{\bar{\boldsymbol{x}}}%
_{i,T+1}^{\prime }(\boldsymbol{X}_{i}^{\prime }\boldsymbol{M}_{T}\boldsymbol{%
X}_{i})^{-1}\boldsymbol{X}_{i}^{\prime }\boldsymbol{M}_{T}\boldsymbol{%
\varepsilon }_{i}\bar{\varepsilon}_{iT}\right] .  \label{cNTbeta_app}
\end{equation}%
Taking this term into account we obtain%
\begin{equation}
N^{-1}\sum_{i=1}^{N}\hat{e}_{i,T+1}^{2}=N^{-1}\sum_{i=1}^{N}\bar{\bar{%
\varepsilon}}_{i,T+1}^{2}+T^{-1}h_{NT,\beta }-2c_{NT,\beta }+O_{p}(N^{-1/2}),
\end{equation}%
where%
\begin{equation}
h_{NT,\beta }=N^{-1}\sum_{i=1}^{N}\mathrm{E}\left[ \bar{\bar{\boldsymbol{x}}}%
_{i,T+1}^{\prime }\boldsymbol{Q}_{iT,\beta }^{-1}\left( \frac{\boldsymbol{X}%
_{i}^{\prime }\boldsymbol{M}_{T}\boldsymbol{\varepsilon }_{i}\boldsymbol{%
\varepsilon }_{i}^{\prime }\boldsymbol{M}_{T}\boldsymbol{X}_{i}}{T}\right) 
\boldsymbol{Q}_{iT,\beta }^{-1}\bar{\bar{\boldsymbol{x}}}_{i,T+1}\right] ,
\label{eq:hNT_FEapp}
\end{equation}%
and $\boldsymbol{Q}_{iT,\beta }=T^{-1}\left( \boldsymbol{X}_{i}^{\prime }%
\boldsymbol{M}_{T}\boldsymbol{X}_{i}\right) $. As with the term $c_{NT}^{%
\text{FE}}$ in the average MSFE of the FE forecasts, $c_{NT,\beta }=0$ when $%
\boldsymbol{x}_{it}\,$is strictly exogenous. To see why this is so, note
that in this case, $\mathrm{E}\left( \boldsymbol{\varepsilon }_{i}\bar{%
\varepsilon}_{iT}\left\vert \boldsymbol{X}_{i}\right. \right) =(\sigma
_{i}^{2}/T)\boldsymbol{\tau }_{T}$ and%
\begin{equation*}
\mathrm{E}\left[ \bar{\bar{\boldsymbol{x}}}_{i,T+1}^{\prime }(\boldsymbol{X}%
_{i}^{\prime }\boldsymbol{M}_{T}\boldsymbol{X}_{i})^{-1}\boldsymbol{X}%
_{i}^{\prime }\boldsymbol{M}_{T}\boldsymbol{\varepsilon }_{i}\bar{\varepsilon%
}_{iT}\left\vert \boldsymbol{X}_{i}\right. \right] =\bar{\bar{\boldsymbol{x}}%
}_{i,T+1}^{\prime }(\boldsymbol{X}_{i}^{\prime }\boldsymbol{M}_{T}%
\boldsymbol{X}_{i})^{-1}\boldsymbol{X}_{i}^{\prime }\boldsymbol{M}_{T}%
\mathrm{E}\left[ \boldsymbol{\varepsilon }_{i}\bar{\varepsilon}%
_{iT}\left\vert \boldsymbol{X}_{i},\bar{\bar{\boldsymbol{x}}}_{i,T+1}\right. %
\right] =0,
\end{equation*}%
so unconditionally $\mathrm{E}\left[ \bar{\bar{\boldsymbol{x}}}%
_{i,T+1}^{\prime }(\boldsymbol{X}_{i}^{\prime }\boldsymbol{M}_{T}\boldsymbol{%
X}_{i})^{-1}\boldsymbol{X}_{i}^{\prime }\boldsymbol{M}_{T}\boldsymbol{%
\varepsilon }_{i}\bar{\varepsilon}_{iT}\right] =0$, and $c_{NT,\beta }=0$.

Apart from the error term, $\varepsilon _{i,T+1}-\bar{\varepsilon}_{iT}$,
which is common to the individual and FE forecasts, the squared forecast
errors are analogous to those in the comparison of individual and pooled
forecasts except that we work with demeaned data and allow for the
additional terms $c_{NT}^{\text{FE}}$ and $c_{NT,\beta }$ if the regressors
are weakly exogenous.

{\ifx \undefined \bysame \fi}

\end{document}